\newif\ifarxiv
  \providecommand\BibTeX{{%
    \normalfont B\kern-0.5em{\scshape i\kern-0.25em b}\kern-0.8em\TeX}}}
\definecolor{Dark Ruby Red}{HTML}{7c1b1e}
\definecolor{Dark Blue Sapphire}{HTML}{004452} %
\definecolor{Dark Gamboge}{HTML}{be7c00}
\knowledgenewrobustcmd{\expfun}{\cmdkl{\textit{exp}}}
\knowledgenewrobustcmd{\polyfun}{\cmdkl{\textit{poly}}}
\knowledgenewrobustcmd{\uGraph}[1]{\cmdkl{\ensuremath{\textup{\bf G}_{#1}}}}
\knowledgenewrobustcmd{\uGraphU}[1]{\cmdkl{\ensuremath{\textup{\bf G}_{#1}}}}
\knowledgenewrobustcmd{\uGraphC}[1]{\cmdkl{\ensuremath{\textup{\bf G}_{#1}}}}%
\knowledgenewrobustcmd{\SigmaE}[1][]{\cmdkl{\Sigma}^{#1}_{\cmdkl{E}}}
\knowledgenewrobustcmd{\barSigmaE}[1][]{\cmdkl{\overline{\Sigma}}^{#1}_{\cmdkl{E}}}
\knowledgenewrobustcmd{\SigmaN}{\cmdkl{\Sigma_{N}}}
\knowledgenewrobustcmd{\domT}{\cmdkl{\textit{dom}}}
\knowledgenewrobustcmd{\tree}{\cmdkl{\textit{tree}}}
\knowledgenewrobustcmd{\tcdepth}{\cmdkl{\textit{\TC-depth}}}
\knowledgenewrobustcmd{\posB}{\cmdkl{B^+}}
\knowledgenewrobustcmd{\translatedC}{\cmdkl{\hat \pi_{T_C,f}}}%
\knowledgenewrobustcmd{\negdepth}{\cmdkl{\textit{neg-depth}}}
\knowledgenewrobustcmd{\ICPDLp}{\ensuremath{\cmdkl{\textup{ICPDL}^{\!+}}}\xspace}
\knowledgenewrobustcmd{\IUCPDLp}{\ensuremath{\cmdkl{\textup{IUCPDL}^{\!+}}}\xspace}
\knowledgenewrobustcmd{\allStates}{\cmdkl{\+Q}}
\knowledgenewrobustcmd{\kSimGame}[1][k]{\cmdkl{\mathbf{G}[\pebblesim{#1}\!]}}%
\knowledgenewrobustcmd{\kBisimGame}[1][k]{\cmdkl{\mathbf{G}[\pebblebisim{#1}\!]}}%
\knowledgenewrobustcmd{\kSimGameE}[1][k]{\cmdkl{\mathbf{G}^{\Univ}[\pebblesim{#1}\!]}}%
\knowledgenewrobustcmd{\kBisimGameE}[1][k]{\cmdkl{\mathbf{G}^{\Univ}[\pebblebisim{#1}\!]}}%
\renewcommand{\epsilon}{\varepsilon}
\newif\ifproofappendix
\newrobustcmd\labelwithproof[1]{%
\label{#1}%
\ifproofappendix%
\marginnote{\footnotesize{%
  \textnormal{First stated in page~\pageref{#1}.}%
}}
\else%
\marginnote{\footnotesize{%
  \textnormal{See the proof of \Cref{#1} in page~\pageref{proof-#1}.}%
}}%
\fi%
}
\newrobustcmd\introinrestatable[1]{%
\ifproofappendix%
\kl{#1}%
\else%
\intro{#1}%
\fi%
}
\newrobustcmd\introinrestatableopt[1]{%
\ifproofappendix%
\kl[#1]{#1}%
\else%
\intro[#1]{#1}%
\fi%
}
\newrobustcmd\recall[1]{
  \proofappendixtrue%
    #1*
  \proofappendixfalse%
}
\definecolor{diegochange}{RGB}{0,0,0}
\definecolor{santichange}{RGB}{0,0,0}
\definecolor{green}{RGB}{0,120,0}
\definecolor{hlyellow}{RGB}{250, 250, 190}
\definecolor{diegoeditcolor}{RGB}{210,210,255}
\definecolor{remieditcolor}{RGB}{210,255,210}
\definecolor{edwineditcolor}{RGB}{150,200,55}
\newcommand{\sidediego}[1]{}
\newcommand{\sidesanti}[1]{}
\newcommand{\sideedwin}[1]{}
\newcommand{\santi}[1]{}
\newcommand{\edwin}[1]{}
\newcommand{\diego}[1]{}
\definecolor{light-gray}{gray}{0.9}
\newcommand{\proofcase}[1]{\noindent\colorbox{light-gray}{#1}~~}
\newcommand{\proofsubcase}[1]{\noindent$\rhd$~\underline{~#1~}~}
\newrobustcmd{\wrote}{\color{wrote}\scriptsize\text{wrote}}
\newrobustcmd{\advised}{\color{advised}\scriptsize\text{advised}}
\renewcommand{\phi}{\varphi}
\newcommand{\set}[1]{\{#1\}}
\newrobustcmd{\Nat}{\mathbb{N}}
\newrobustcmd{\Rat}{\mathbb{Q}}
\newcommand{\dcup}{\mathop{\dot\cup}} %
\knowledgenewrobustcmd\pset[1]{\cmdkl{\wp}(#1)} %
\newcommand{\resp}[1]{(resp.~#1)}
\newcommand{\ie}{\textit{i.e.}}
\newcommand{\aka}{\textit{a.k.a.}\ }
\knowledgenewrobustcmd{\A}{\mathbb{A}} %
\knowledgenewrobustcmd{\Aext}{\cmdkl{\mathbb{A}^\pm}} %
\knowledgenewrobustcmd\vertex[1]{\cmdkl{V}(#1)}
\knowledgenewrobustcmd\edges[1]{\cmdkl{E}(#1)}
\knowledgenewrobustcmd{\Graphloop}{\cmdkl{\+G_\circlearrowleft}}
\knowledgenewrobustcmd{\Graphcap}{\cmdkl{\+G_\cap}}
\knowledgenewrobustcmd\worlds[1]{\cmdkl{W}\!(#1)} %
\knowledgenewrobustcmd\dom[1]{\cmdkl{dom}(#1)} %
\knowledgenewrobustcmd\Kripke{\cmdkl{\mathbb{K}}} %
\knowledgenewrobustcmd\qvar{\footnotesize\bullet} %
\newcommand{\eqdef}{\mathrel{{\mathop:}}=}
\newcommand{\eqqdef}{\mathrel{{\mathop:}{\mathop:}}=}
\knowledgenewrobustcmd{\kHoms}[1][k]{\cmdkl{\textit{Hom}_{#1}}}%
\knowledgenewrobustcmd{\Prop}{\cmdkl{\mathbb{P}}}
\knowledgenewrobustcmd{\Rels}{\cmdkl{\mathbf{\sigma}}}
\knowledgenewrobustcmd{\arity}{\cmdkl{\textup{arity}}}
\knowledgenewrobustcmd{\Prog}{\cmdkl{\mathbb{A}}}
\knowledgenewrobustcmd\Vars{\cmdkl{\mathbb{V}}}
\knowledgenewrobustcmd{\ICPDLg}[1]{\ensuremath{\cmdkl{\textup{ICPDL}^{\!+}}\!}\cmdkl{(}#1\cmdkl{)}}
\newcommand{\twticpdl}{\ICPDLg{\Tw[2]}}
\knowledgenewrobustcmd{\subexpr}{\ensuremath{\cmdkl{\textup{sub}}}}
\knowledgenewrobustcmd{\dbracket}[1]{\cmdkl{\llbracket} #1 \cmdkl{\rrbracket}}
\knowledgenewrobustcmd{\CPDL}{\ensuremath{\cmdkl{\textup{CPDL\xspace}}}}
\knowledgenewrobustcmd{\PDL}{\ensuremath{\cmdkl{\textup{PDL\xspace}}}}
\knowledgenewrobustcmd{\ICPDL}{\ensuremath{\cmdkl{\textup{ICPDL}}}\xspace}
\knowledgenewrobustcmd{\loopCPDL}{\ensuremath{\cmdkl{\textup{loop-CPDL}}}\xspace}
\knowledgenewrobustcmd{\CPDLp}{\cmdkl{\ensuremath{\textup{CPDL}^{\!+}}}}
\knowledgenewrobustcmd{\UCPDLp}{\cmdkl{\ensuremath{\textup{UCPDL}^{\!+}}}}
\knowledgenewrobustcmd{\CPDLg}[1]{\ensuremath{\cmdkl{\textup{CPDL}^{\!+}}\!}\cmdkl{(}#1\cmdkl{)}}
\knowledgenewrobustcmd{\UCPDLg}[1]{\ensuremath{\cmdkl{\textup{UCPDL}^{\!+}}\!}\cmdkl{(}#1\cmdkl{)}}
\knowledgenewrobustcmd{\Univ}{\ensuremath{\cmdkl{\textsf{U}}}}
\knowledgenewrobustcmd{\CPDLpp}{\cmdkl{\ensuremath{\textup{CPDL}^{\!++}}}}
\knowledgenewrobustcmd{\UNFO}{\cmdkl{\ensuremath{\textup{UNFO}}}}
\knowledgenewrobustcmd{\UNFOreg}{\cmdkl{\ensuremath{\textup{UNFO}^{{\rm reg}}}}}
\knowledgenewrobustcmd{\UNFOregOne}{\cmdkl{\ensuremath{\textup{UNFO}^{{\rm reg}}_1}}}
\knowledgenewrobustcmd{\UNTC}{\cmdkl{\ensuremath{\textup{UNTC}}}}
\knowledgenewrobustcmd{\GNTC}{\cmdkl{\ensuremath{\textup{GNTC}}}}
\knowledgenewrobustcmd{\GNFO}{\cmdkl{\ensuremath{\textup{GNFO}}}}
\knowledgenewrobustcmd{\GNFP}{\cmdkl{\ensuremath{\textup{GNFP}}}}
\knowledgenewrobustcmd{\GNFPUP}{\cmdkl{\ensuremath{\textup{GNFP-UP}}}}
\knowledgenewrobustcmd{\GFO}{\cmdkl{\ensuremath{\textup{GFO}}}}
\knowledgenewrobustcmd{\CQPDL}{\cmdkl{\ensuremath{\textup{CQPDL}}}}
\knowledgenewrobustcmd{\UCQPDL}{\cmdkl{\ensuremath{\textup{UCQPDL}}}}
\knowledgenewrobustcmd{\iwidth}{\cmdkl{\textup{iw}}}
\knowledgenewrobustcmd{\Iwidth}{\cmdkl{\textup{IW}}}
\knowledgenewrobustcmd{\cqsize}[1]{\cmdkl{\textup{cw}}(#1)}
\knowledgenewrobustcmd{\Cqsize}[1]{\cmdkl{\textup{CW}}(#1)}
\knowledgenewrobustcmd{\Cqsizealt}{\cmdkl{\textup{CW}}}
\knowledgenewrobustcmd{\untcsize}[1]{\cmdkl{\|}#1\cmdkl{\|}}
\knowledgenewrobustcmd{\cqsizenf}[1]{\cmdkl{\textup{cw}_{\rm NF}}(#1)}
\knowledgenewrobustcmd{\Cqsizenf}[1]{\cmdkl{\textup{CW}_{\rm NF}}(#1)}
\knowledgenewrobustcmd{\dbracketaut}[1]{\cmdkl{\llbracket} #1 \cmdkl{\rrbracket}}
\knowledgenewrobustcmd{\dbracketnfa}[1]{\cmdkl{\llbracket} #1 \cmdkl{\rrbracket}}
\knowledgenewrobustcmd{\dbrackett}[1]{\cmdkl{\llbracket} #1 \cmdkl{\rrbracket}}
\knowledgenewrobustcmd{\indexAut}[1]{\cmdkl{\textbf{\textit{i}}(}#1\cmdkl{)}}
\knowledgenewrobustcmd{\sizeStates}[1]{\cmdkl{|}#1\cmdkl{|_{\textup{st}}}}
\knowledgenewrobustcmd{\sizeTrans}[2][]{\cmdkl{|}#2\cmdkl{|}^{#1}_{\cmdkl{\textup{tr}}}}
\knowledgenewrobustcmd{\sizeStatesNFA}[1]{\cmdkl{|}#1\cmdkl{|_{\textup{st}}}}
\knowledgenewrobustcmd{\pebblesimE}[1]{\mathrel{\cmdkl{\rightharpoonup^{\Univ}_{#1}}}}
\knowledgenewrobustcmd{\notpebblesimE}[1]{\mathrel{\cmdkl{{\not\rightharpoonup^{\Univ}}_{#1}}}}
\knowledgenewrobustcmd{\pebblebisimE}[1]{\mathrel{\cmdkl{\rightleftharpoons^{\Univ}_{#1}}}}
\knowledgenewrobustcmd{\pebblequasibisimE}[1]{\mathrel{\cmdkl{\rightharpoondown^{\!\!\leftrightarrow\Univ}_{#1}}}}
\knowledgenewrobustcmd{\pebblesim}[1]{\mathrel{\cmdkl{\rightharpoonup_{#1}}}}
\knowledgenewrobustcmd{\notpebblesim}[1]{\mathrel{\cmdkl{{\not\rightharpoonup}_{#1}}}}
\knowledgenewrobustcmd{\pebblebisim}[1]{\mathrel{\cmdkl{\rightleftharpoons_{#1}}}}
\knowledgenewrobustcmd{\pebblequasibisim}[1]{\mathrel{\cmdkl{\rightharpoondown^{\!\!\leftrightarrow}_{#1}}}}
\knowledgenewrobustcmd{\notpebblequasibisim}[1]{\mathrel{\cmdkl{\not\rightharpoondown^{\!\!\leftrightarrow}_{#1}}}}
\knowledgenewrobustcmd{\notpebblebisim}[1]{\mathrel{\cmdkl{{\not\rightleftharpoons}_{#1}}}}
\knowledgenewrobustcmd{\pebblesimneg}[1]{\mathrel{\cmdkl{\rightharpoonup^\lnot_{#1}}}}
\knowledgenewrobustcmd{\mlsim}{\mathrel{\cmdkl{\rightharpoonup}}}
\knowledgenewrobustcmd{\mlbisim}{\mathrel{\cmdkl{\rightleftharpoons}}}
\knowledgenewrobustcmd{\pathl}{\cmdkl{\mathbf{P}_{\!l}}} %
\knowledgenewrobustcmd\subaut[3]{#1\cmdkl{[#2,#3]}}
\knowledgenewrobustcmd\bagmap{\cmdkl{\mathbf{v}}}
\knowledgenewrobustcmd\tagmap{\cmdkl{\mathbf{t}}}
\knowledgenewrobustcmd\tagmappath[1]{\cmdkl{\mathbf{t}[#1]}}
\newrobustcmd\tagmappathprime[1]{%
  \withkl{\kl[\tagmappath]}{%
    \cmdkl{\mathbf{t}'[#1]}%
  }%
}
\knowledgenewrobustcmd{\atom}[1]{\,\xrightarrow{\smash{#1}}\,}
\knowledgenewrobustcmd{\coatom}[1]{\,\xleftarrow{\smash{#1}}\,}
\knowledgenewrobustcmd{\atoms}[1]{\cmdkl{\textnormal{Atoms}}(#1)}
\knowledgenewrobustcmd{\contained}{\mathrel{\cmdkl{\subseteqq}}}
\newrobustcmd{\strcontained}{
  \mathrel{\withkl{\kl[\contained]}{\cmdkl{%
    \subsetneqq
  }}}
}
\suggestcommand\equiv{Use instead \semequiv for semantical equivalence.}
\knowledgenewrobustcmd{\semequiv}{\mathrel{\cmdkl{\LaTeXequiv}}} %
\knowledgenewrobustcmd{\langsemequiv}[1][]{\mathrel{\cmdkl{\LaTeXequiv_{#1}}}} %
\knowledgenewrobustcmd{\lleq}[1][]{\mathrel{\cmdkl{\leqq_{#1}}}}
\knowledgenewrobustcmd{\notlleq}[1][]{\mathrel{\cmdkl{\nleqq_{#1}}}}
\knowledgenewrobustcmd{\lleqs}[1][]{\mathrel{\cmdkl{\lneqq_{#1}}}}
\knowledgenewrobustcmd{\vars}{\cmdkl{\textit{vars}}} %
\knowledgenewrobustcmd{\CRPQ}{\cmdkl{\textnormal{CRPQ}}}
\knowledgenewrobustcmd{\UCtwoRPQ}{\cmdkl{\textnormal{UC2RPQ}}}
\newrobustcmd{\CtwoRPQ}{%
  \withkl{\kl[\UCtwoRPQ]}{\cmdkl{%
    \textnormal{C2RPQ}
  }}
}
\knowledgenewrobustcmd{\UCRPQSRE}{\ensuremath{\cmdkl{\textup{UCRPQ}(\textup{SRE})}}}
\newrobustcmd{\CRPQSRE}{%
  \withkl{\kl[\UCRPQSRE]}{\cmdkl{%
    \textup{CRPQ}(\textup{SRE})
  }}
}
\newcommand{\xrightarrowdbl}[2][]{%
  \xrightarrow[#1]{#2}\mathrel{\mkern-14mu}\rightarrow
}
\knowledgenewrobustcmd\surj{%
    \mathrel{\cmdkl{%
      \xrightarrowdbl{\textit{\tiny hom}}
    }}
}
\knowledgenewrobustcmd{\fun}{f}
\knowledgenewrobustcmd{\homto}{\mathrel{\cmdkl{\xrightarrow{\textit{\tiny hom}}}}} %
\knowledgenewrobustcmd{\class}{\mathcal{C}}
\knowledgenewrobustcmd{\Tw}[1][k]{\cmdkl{\textup{TW\!}_{#1\!}}}
\knowledgenewrobustcmd{\Refin}[1][]{\cmdkl{\textnormal{Ref}^{\smash{#1}}}}
\knowledgenewrobustcmd{\MUA}[2]{\cmdkl{\ensuremath{\textnormal{App}_{#2}(#1)}}}
\knowledgenewrobustcmd{\MUAHom}[2]{\cmdkl{\ensuremath{\textnormal{App}_{#2}^{\smash{\star}}(#1)}}}
\knowledgenewrobustcmd{\MUAHomBounded}[3]{\cmdkl{\ensuremath{\textnormal{App}_{#2}^{\smash{\star,#3}}(#1)}}}
\knowledgenewrobustcmd{\type}{\cmdkl{\textnormal{type}}}
\knowledgenewrobustcmd{\Qapp}{\cmdkl{\ensuremath{\textnormal{App}_{\Tw}^{\textup{zip}}(\gamma)}}}
\knowledgenewrobustcmd{\contract}[1]{\cmdkl{[}#1\cmdkl{]}}
\newcommand{\complexityclass}[1]{\textup{\textsf{#1}}\xspace}
\newcommand{\ptime}{\complexityclass{PTime}}
\newcommand{\wone}{\complexityclass{W[1]}}
\newcommand{\fpt}{\complexityclass{FPT}}
\newrobustcmd\pitwo{\ensuremath{\Pi^p_2}}
\newrobustcmd\sigmatwo{\ensuremath{\Sigma^p_2}}
\knowledgenewrobustcmd{\tw}{\ensuremath{\cmdkl{\text{tw}}}}
\knowledgenewrobustcmd{\lo}{\cmdkl{\textnormal{loop}}\xspace}
\knowledgenewrobustcmd{\mapcoord}{\cmdkl{\mapsto}}
\knowledgenewrobustcmd{\dimtup}{\cmdkl{\dim}}
\knowledgenewrobustcmd{\transUNTCECPDL}{\ensuremath{\cmdkl{\tr_4}}}
\knowledgenewrobustcmd{\transECPDLUNTC}{\ensuremath{\cmdkl{\tr_5}}}
\knowledgenewrobustcmd{\transATOM}{\ensuremath{\cmdkl{\rm atom}}}
\newcommand{\logicOp}[1]{\textup{\textsf{#1}}}
\knowledgenewrobustcmd{\TC}{\cmdkl{\logicOp{TC}}}
\knowledgenewrobustcmd{\NFARightarrow}[2][]{\mathrel{\cmdkl{\Rightarrow}^{#1}_{\!#2}}}
\knowledgenewrobustcmd{\ShapesC}[1][C[x_s,x_t]]{\cmdkl{\+S}_{#1}}%
\knowledgenewrobustcmd{\translatedAtomsC}{\cmdkl{C_{T_C,f}}}
\knowledgenewrobustcmd{\langTWAPTA}[1]{\cmdkl{L(}#1\cmdkl{)}}
\knowledgenewrobustcmd{\pdlsize}[1]{\cmdkl{\|}#1\cmdkl{\|}}
\knowledgenewrobustcmd{\ksplit}[1][k]{\cmdkl{#1\textit{-split}}}
\knowledgenewrobustcmd{\vconcat}{\mathrel{\cmdkl{{+}\!{+}}}}
\knowledgenewrobustcmd{\vmap}{\mathrel{\cmdkl{\odot}}}
\newenvironment{nestedproof}[1][Proof]{%
  \begin{proof}[#1]%
  \renewcommand{\qedsymbol}{$\lhd$}%
}{%
  \end{proof}%
}
\renewcommand{\phi}{\varphi}
\newtheorem{open}{Open question}
\newtheorem{claim}{Claim}
\newtheorem{lemma}{Lemma}
\newtheorem{corollary}{Corollary}
\newtheorem{proposition}{Proposition}
\newtheorem{remark}{Remark}
\NewCommandCopy{\proofqedsymbol}{\qedsymbol}%
\newcommand{\exampleqedsymbol}{{$\triangle$}}%
\renewcommand{\qedsymbol}{\exampleqedsymbol}%
\renewcommand{\qedsymbol}{\exampleqedsymbol}%
\crefname{thm}{Theorem}{Theorems}
\crefname{defi}{Definition}{Definitions}
\let\endexample\endexa
\crefname{exa}{Example}{Examples}
\crefname{rem}{Remark}{Remarks}
\crefname{obs}{Observation}{Observations}
\crefname{cor}{Corollary}{Corollaries}
\crefname{lemma}{Lemma}{Lemmata}
\crefname{lem}{Lemma}{Lemmata}
\crefname{prop}{Proposition}{Propositions}
\crefname{claim}{Claim}{Claims}
\crefname{clm}{Claim}{Claims}
\crefname{fact}{Fact}{Facts}
\crefname{nota}{Notation}{Notations}
\crefname{qu}{Question}{Questions}
\newcommand{\tup}[1]{\langle #1 \rangle}
\begin{document}
\title{}
\title[A Common Ancestor of PDL, Conjunctive Queries, and UNFO]{A Common Ancestor of PDL, Conjunctive Queries, \\and Unary Negation First-order}

\author[D.~Figueira]{Diego Figueira\lmcsorcid{0000-0003-0114-2257}}
\address{Univ. Bordeaux, CNRS, Bordeaux INP, LaBRI, UMR5800, F-33400 Talence, France}
\email{diego.figueira@cnrs.fr}
\author[S.~Figueira]{Santiago Figueira%
}
\address{Univ. Buenos Aires, FCEN, DC \& CONICET-UBA, ICC, Argentina}
\email{santiago@dc.uba.ar}

\ifarxiv
  \begin{abstract}
  We introduce and study $\UCPDLp$, a family of expressive logics rooted in Propositional Dynamic Logic ($\PDL$) with converse (\underline{C}PDL) and universal modality (\underline{U}CPDL).
In terms of expressive power, $\UCPDLp$ strictly contains $\PDL$ extended with intersection and converse (\aka~$\ICPDL$), as well as Conjunctive Queries (CQ), Conjunctive Regular Path Queries (CRPQ), or some known extensions thereof (Regular Queries and CQPDL). 
Further, it is equivalent to the extension of the unary-negation fragment of first-order logic ($\UNFO$) with unary transitive closure, which we denote by $\UNTC$, which in turn strictly contains a previously studied extension of $\UNFO$ with regular expressions known as $\UNFOreg$.

We investigate the expressive power, indistinguishability via bisimulations and satisfiability for $\UCPDLp$ and $\CPDLp$.
We argue that natural subclasses of $\CPDLp$ can be defined in terms of the "tree-width" of the underlying graphs of the formulas.
We show that the class of $\CPDLp$ formulas of "tree-width" 2 is equivalent to $\ICPDL$, and that it also coincides with $\CPDLp$ formulas of "tree-width" 1. However, beyond "tree-width" 2, incrementing the "tree-width" strictly increases the expressive power. We characterize the expressive power for every class of fixed "tree-width" formulas in terms of a bisimulation game with pebbles. Based on this characterization, we show that $\CPDLp$ has a tree-like model property.
We prove that the satisfiability problem for $\UCPDLp$ is decidable in "2ExpTime", coinciding with the complexity of $\ICPDL$. As a consequence, the satisfiability problem for $\UNTC$ is shown to be "2ExpTime"-complete as well.
We also exhibit classes for which satisfiability is reduced to "ExpTime".

\color{black}

  \end{abstract}
  \maketitle

\bigskip

\noindent
\raisebox{-.4ex}{\HandRight}\ \ This pdf contains internal links: clicking on a "notion@@notice" leads to its \AP ""definition@@notice"".\footnote{\url{https://ctan.org/pkg/knowledge}}

\bigskip

\noindent
\raisebox{-.4ex}{\HandRight}\ \ 
This is the long version of the conference LICS'23 paper \cite{thispaper}, see \Cref{ssec:confdelta} for added material.

\newpage
\tableofcontents

\section{Introduction}

Some fundamental formalisms featuring simple recursion, studied in the areas of modal logics and graph databases, revolve around PDL (Propositional Dynamic Logics) and CRPQ (Conjunctive Regular Path Queries), respectively. Both have been extensively studied in their respective fields and are considered fundamental yardsticks for expressive power and complexity, from which many extensions and variants have been derived.

Although they have evolved through different paths motivated by separate applications, they share several common features: 
    (i) the models of `Kripke structures' (for PDL) and `graph databases' (for CRPQ) are essentially the same, "ie", labeled directed graphs; 
    (ii) the recursive features considered in each formalism are also similar, based on building regular expressions over simpler expressions (`RPQ atoms' in the case of CRPQ, and `programs' in the case of PDL).
In this work we study the question: 
\begin{center}
    \it Is it possible to amalgamate these expressive logics on "Kripke structures" and query languages on "graph databases" into one, well-behaved, framework? 
\end{center}
As we shall see, our results suggest that the answer is `yes', and that the resulting logic is equivalent to the unary-negation fragment of first-order logic, extended with transitive closure.
\color{black}

Our point of departure is $\ICPDL$, which is $\PDL$ extended with intersection and converse. This is a well-studied logic, and amongst the most expressive decidable logics on "Kripke structures". But on the other hand, we want to be able to test for "conjunctive queries", or more generally "CRPQs", which are the basic building block for query languages for "graph databases". 
The outcome is an arguably natural and highly expressive logic, which inherits all good computational and model-theoretical behaviors of $\PDL$ and "CRPQs", which has the flavor of $\PDL$ but allows for richer `conjunctive' tests.
We name this logic $\reintro*\CPDLp$.

$\PDL$ was originally conceived as a logic for reasoning about programs \cite{DBLP:journals/jcss/FischerL79}. However, variants of $\PDL$ are nowadays used in various areas of computer science, in particular in description logics, epistemic logics, program verification, or for querying datasets (see \cite{DBLP:journals/japll/Lange06,DBLP:conf/csl/GollerL06} for more applications).
One of the most studied extensions of $\PDL$ is the addition of converse navigation, "program intersection" and program complement. 
In particular, adding converse and "program intersection" operators results in a well-behaved logic, known as $\ICPDL$ (`I' for "intersection@program intersection", `C' for converse). $\ICPDL$ has decidable "satisfiability@satisfiability problem", polynomial time "model checking", and enjoys a ``tree-like'' model property \cite{DBLP:conf/csl/GollerL06}.
A simpler version of "program intersection" studied before is $\CPDL$ extended with "program looping@\loopCPDL", which can state that a program starts and ends at the same point, known as $\loopCPDL$, and it is strictly less expressive than $\ICPDL$.
However, adding the complement of programs results in a logic with a highly undecidable satisfiability problem \cite{harel2001dynamic,DBLP:conf/csl/GollerL06}, although its model checking is still polynomial time \cite{DBLP:journals/japll/Lange06}.

In this work we explore a family of logics which in particular generalizes "program intersection". 
The intuition is that the intersection of two $\PDL$ programs $\pi$ and $\pi'$ could be viewed as the conjunction of two atoms $R_\pi(x,y) \land R_{\pi'}(x,y)$ over the binary relations denoted by the programs $\pi$ and $\pi'$ (\ie, essentially a "conjunctive query"). Pursuing this idea, $\CPDLp$ is the extension of $\CPDL$ allowing expressions which test for any arbitrary number of atoms. For example, a formula can test for an $n$-clique of $\pi$-related elements with $\bigwedge_{1 \leq i < j \leq n} R_\pi(x_i,x_j)$. Further, these tests can be nested, composed, or iterated just like conventional $\PDL$ programs. 
The resulting logic seems appealing from an expressiveness point of view: it captures not only $\ICPDL$ but also several "graph database" query languages studied lately in the quest for finding well-behaved expressive query languages. These include "C2RPQ", "Regular Queries" and $\CQPDL$. 

We also investigate the addition of the "universal program" ($\Univ$), allowing to quantify over all worlds of the model, leading to the logic that we called $\reintro*\UCPDLp$. %
\color{black}
While this is a rather standard addition -- easy to handle algorithmically and innocuous in terms of complexity -- it allows us to capture the expressive power of FO logic based languages.

See \Cref{fig:expressive-power} for a general idea of where the resulting logics $\CPDLp$ and $\UCPDLp$ sit.
\color{black}

Further, subclasses of $\CPDLp$ can be naturally defined by restricting the allowed underlying graphs (\aka Gaifman graph) of these new kind of tests. Thus, $\CPDLg{\+G}$ is the restriction to tests whose underlying graphs are in the class of graphs $\+G$. In particular, for suitable (and very simple) classes we find $\ICPDL$ and $\loopCPDL$.

\begin{figure}
    \centering
        \includegraphics[scale =.65]{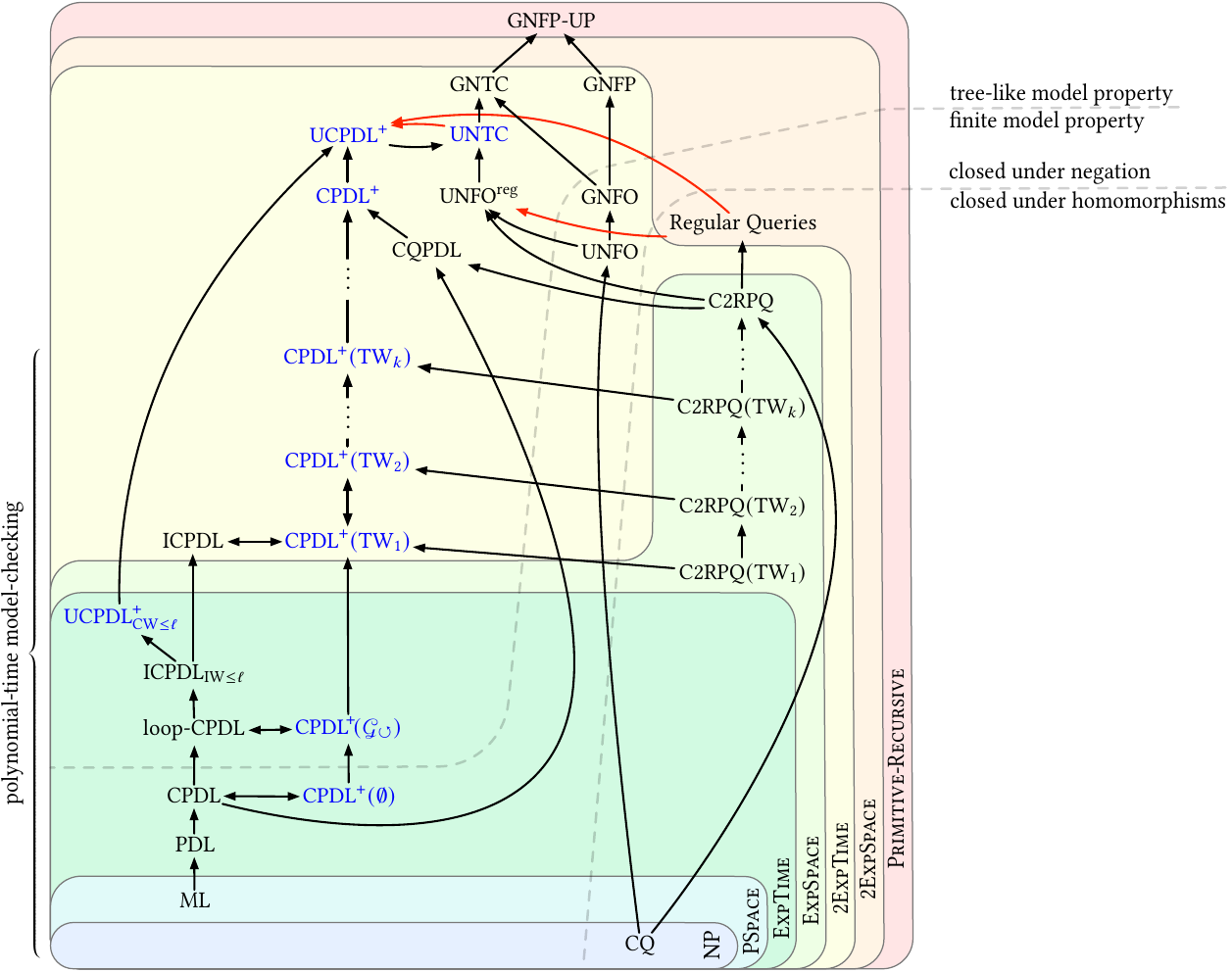}
    
    \caption{{\bf The landscape of expressive power and complexity.} 
    { The arrow goes in the direction of the more expressive language, and all arrows are witnessed via polynomial time translations, except the \color{red}red \color{black} ones, which are exponential. For the query languages "CQ", $\CtwoRPQ$, "Regular Queries", and $\CQPDL$ we restrict to unary queries in order to be able to compare the expressive power. 
    By $\CtwoRPQ(\Tw)$ we denote 
    $\CtwoRPQ$s whose underlying graph has "tree-width"~$\leq k$.
    $\ICPDL_{\Iwidth\leq\ell}$ for any $\ell$, is the set of all $\ICPDL$ "formulas" $\phi$ s.t.\ $\Iwidth(\phi) \leq \ell$, introduced by Göller et al.~\cite{DBLP:journals/jsyml/GollerLL09} and defined in \Cref{sec:solving-omega-reg-sat}.
    $\UCPDLp_{\Cqsizealt \leq \ell}$ for any $\ell$, is the set of all $\UCPDLp$ "formulas" $\phi$ s.t.\ $\Cqsize{\phi} \leq \ell$, as defined in \Cref{sec:sat}. 
    (We assume any fixed $\ell\geq 2$ and $k \geq 2$ for the interpretation of expressiveness arrows.) \color{black} 
    The complexities correspond to the basic reasoning problem for each formalism: satisfiability for logics closed under negation, and containment for query languages of the "CQ"/$\CtwoRPQ$ family. In \color{blue}blue \color{black} we highlight the family of logics introduced in the present work.}
    }
    \label{fig:expressive-power}
\end{figure}
\subsection{Related Logics}

As mentioned earlier, $\CPDLp$ can be seen as an ``umbrella logic'' capturing both expressive extensions of $\PDL$, such as $\ICPDL$, and  expressive extensions of "conjunctive queries" studied in the context of "graph databases". We next provide some details on some query languages related to $\ICPDLp$.

\AP
A ""graph database"" is often abstracted as a finite, edge-labeled graph, where labels come from some finite alphabet of relations~(see "eg",~\cite{DBLP:conf/pods/Baeza13} for more details), but via trivial adaptations one can use "graph@graph database" query languages on "Kripke structures", or vice-versa.

\AP
""Conjunctive queries"", or CQs, are conjunctions of relational atoms, closed under projection. For example, the "CQ" 

``$q(x,y) = R(x,y) \land S(y,z) \land T(z,x)$'' returns, on an edge-labeled graph, all pairs of vertices $(u,v)$ such that $u,v$ participate in an $R$-$S$-$T$-cycle.
\color{black}

\AP
In analogy to "conjunctive queries", the class of ""conjunctive regular path queries"", or 
$\intro* \CRPQ$s,
allow for atoms of the form $x \xrightarrow{L} y$, where $L$ is a regular expression over the alphabet of the relations, with the semantics that there exists a path from $x$ to $y$ reading a word of relation names in $L$ \cite{DBLP:conf/kr/CalvaneseGLV00}. 
This language is often considered as the backbone of  any reasonable graph query language \cite{DBLP:conf/pods/Baeza13} and it is embedded in the basic navigational mechanism of the new ISO standard Graph Query Language (GQL) \cite{isoGQL} and the SQL extension for querying graph-structured data SQL/PGQ \cite{isoPGQ}.
The extension of 
\AP
""conjunctive two-way regular path queries"", or $\intro* \CtwoRPQ$, 
allows $L$ to be built over the alphabet augmented with the converse of the relations.

\AP
""Regular queries"" \cite{DBLP:journals/mst/ReutterRV17}, also known as \emph{Nested Positive }2RPQ~\cite{DBLP:conf/dlog/BourhisKR14}, can be seen as the closure under Kleene star, union and converse of $\CtwoRPQ$s. Its containment problem  has been shown to be decidable in elementary time, contrary to some more ambitious generalizations of $\CtwoRPQ$  \cite{DBLP:conf/pods/RudolphK13,DBLP:conf/dlog/BourhisKR14}.

\AP
Perhaps the language closest in spirit to this work is $\intro*\CQPDL$ 
\cite{DBLP:conf/lics/BenediktBB16}, denoting "conjunctive queries" whose atom relations can be defined by $\CPDL$ "programs". It subsumes $\CtwoRPQ$s in expressive power, but it is incomparable with "Regular queries". 

\AP ""Unary negation first-order logic""  ($\intro*\UNFO$) is the fragment of first-order over relational structures where negation only occurs in front of formulas with at most one free variable (see \cite{segoufin2013unary} for a formal definition).

\AP
The addition of regular expressions over binary relations to $\UNFO$, called $\intro*\UNFOreg$, was studied by Jung et al.\ \cite{jung2018querying}. It allows for expressions of the form $E(x,y)$, where $E$ is a $\CPDL$-program including tests of the form $\phi(x)?$ 
(see Section \ref{sec:separation} for a formal definition).

\AP
As we will show, the logic we introduce subsumes all previous formalisms in terms of expressive power. We can further characterize its expressive power in terms of an extension of $\UNFOreg$, corresponding to $\UNFO$ with monadic transitive closure, which we will denote by $\UNTC$.

\AP
Beyond the formalisms discussed above, there is also closely related work on path expressions in description logics and guarded fragments. A particularly close development is the recently introduced Regular Guarded Fragment (RGF) of Bednarczyk and Kiero\'nski~\cite{BednarczykKieronski-kr25}, which is also motivated in part by connections with path logics such as $\ICPDL$, though it is incomparable in terms of expressive power with our logic. On the description-logic side, one should also mention the $Z$ family and its tame fragments~\cite{bednarczyk2024database}, as well as automata-theoretic work on positive two-way regular path queries over expressive DL knowledge bases~\cite{calvanese2009regular}. Another related reference is the guarded fixed-point logic with unguarded parameters (\intro*$\GNFPUP$) of Benedikt, Bourhis, and Vanden Boom~\cite{DBLP:conf/lics/BenediktBB16}, which is more expressive than  our proposal and shows that one can add a controlled form of transitive-closure-like expressivity to guarded fixed-point logic while preserving decidability: in particular, $\GNFPUP$ can express the transitive closure of a binary relation and more generally path properties, and it comes with automata-based decidability and boundedness results. These works are clearly close in spirit to the present paper, since they all investigate expressive but decidable formalisms for path navigation and recursion. 
\color{black}

\AP
A recent manuscript by Nakamura \cite{DBLP:journals/corr/abs-2501-15303} further investigates guarded-negation first-order logic with (guarded) transitive closure, coined $\intro*\GNTC$, which extends $\UNTC$ and is contained in $\GNFPUP$. 
It is shown that the satisfiability problem for $\GNTC$ is "2ExpTime"-complete via an exponential-time reduction from the
satisfiability problem for $\GNTC$ to the non-emptiness problem for 2-way alternating parity tree automata. 
Since there is a polynomial-time translation from the satisfiability of $\UCPDLp$ to the satisfiability of $\GNTC$  ("cf" \Cref{sec:UCPDLplus-equi-UNTC}), his work provides alternative proofs of the "2ExpTime" upper bounds proofs of the satisfiability of $\UCPDLp$ (\Cref{thm:sat-cpdlp}-(1)).
The proof by Nakamura \cite{DBLP:journals/corr/abs-2501-15303} is independent to ours and of a very different nature, relying on a so-called `local model-checker'. 
To the best of our ability, we do not think that the result on $\GNTC$ can be used to derive the "ExpTime" bound for the satisfiability problem on classes of $\UCPDLp$ with bounded `"conjunctive width"' (\Cref{thm:sat-cpdlp}-(2)).
Nakamura also shows that the model-checking problems for $\GNTC$ and $\UNTC$ are complete for the class "PNPlog2", coinciding also with the model-checking complexity of $\GNFO$.
\color{black}

\subsection{Contributions}

We give next an overview of the main results of this work.

\paragraph{Expressivity}
As already mentioned, $\CPDLp$ contains some previously studied graph database query languages (see \Cref{fig:expressive-power}). Further, depending on the class $\+G$ of graphs, $\CPDLg{\+G}$ can be identified with different previous formalisms. In particular, $\CPDL$ corresponds to $\CPDLg{\emptyset}$, namely $\CPDLp$ on the empty class of graphs.

In \Cref{sec:relation-to-loop-icpdl} we show that, on the other hand, the logic $\loopCPDL$ can also be expressed as $\CPDLg{\Graphloop}$, where $\Graphloop$ is the class with just one "graph" having a self-looping vertex (\Cref{prop:loopCPDL-eq-CPDLgLoop}).

We will also see that $\ICPDL$ and  $\CPDLg{\Graphcap}$ are trivially "equi-expressive", where $\Graphcap$ is the class having just one "graph"  containing two vertices and one "edge" between them.
\color{black}
What is more, we also show that $\ICPDL$ is "equi-expressive" to $\CPDLg{\Tw[2]}$. Hence, "formulas" and "programs" of $\CPDLg{\Tw[1]}$, $\CPDLg{\Tw[2]}$, $\CPDLg{\Graphcap}$, and $\ICPDL$ define the same unary and binary relations, respectively (\Cref{thm:ICPDL_equals_TW1_equals_TW2}).

However, as we show in \Cref{sec:separation}, beyond $\Tw[2]$ the expressive power starts increasing in an infinite hierarchy, where $\CPDLg{\Tw[k]}$ is strictly contained in $\CPDLg{\Tw[k+1]}$ in terms of expressive power, for every $k \geq 2$ (\Cref{thm:separation}).

On the other hand, we show that $\UCPDLp$ has the same expressive power as the unary-negation fragment of first-order logic extended with transitive closure, which we denote by $\UNTC$ (\Cref{cor:UCPDLp-equiv-UNTC}). However, the type of transitive closure is very restrictive: it can be applied only on binary formulas with no parameters.

\paragraph{Indistinguishability}
We characterize model-indis\-tin\-gui\-shability for $\CPDLg{\Tw[k]}$ in terms of a $k$-pebble local consistency game. That is, we study the conditions under which a pair of models cannot be distinguished by a "formula" or "program" from $\CPDLg{\Tw[k]}$.
In Section~\ref{sec:caracterization} we characterize both the indistinguishability via "positive" "formulas" (\Cref{thm:sim-char}) or arbitrary "formulas"  (\Cref{them:bisim-charac}).
These notions correspond, in some sense, to the lifting of the simulation relation from Modal Logics to $\CPDLg{\Tw[k]}$. 
We show that for $k\geq 2$, $\CPDLg{\Tw}$ enjoys the "$\Tw$-model property": if a "formula" of $\CPDLg{\Tw}$ is satisfiable, it is satisfied in a "model@Kripke structure" of "tree-width" $k$ (\Cref{cor:treewidth-k-model-property}).

\paragraph{Satisfiability}
In \Cref{sec:sat} we study the "satisfiability problem", that is, whether for a given formula there exists a "model@Kripke structure" 
which satisfies it. 
We show that the satisfiability for $\UCPDLp$ is "2ExpTime"-complete (\Cref{thm:sat-cpdlp}), \ie, the same complexity as $\ICPDL$ \cite{DBLP:journals/jsyml/GollerLL09}. 
We also identify a hierarchy, based on what we call `"conjunctive width"' instead of "tree-width", so that any class of bounded "conjunctive width" is decidable in "ExpTime", \ie, the complexity of $\PDL$ or $\loopCPDL$.
These results exploit the "bounded tree-width model property@$\Tw$-model property" of $\CPDLp$ above.

We show that the "satisfiability problem" for $\UNTC$ is "2ExpTime"-complete (\Cref{thm:UNTC-sat}), via a reduction to $\UCPDLp$-"satisfiability@satisfiability problem".
This result is obtained via a translation to $\UCPDLp$ which, despite being exponential, produces formulas of polynomial "conjunctive width" (\Cref{prop:UNTC-ECPDL}).
\color{black}

\subsection{Conference Version of this Work}
\label{ssec:confdelta}
Part of the results of this work were presented in LICS'23 \cite{thispaper}. This is the extended version with detailed proofs as well as:
\begin{enumerate}
    \item the extension of the logic with "universal programs" { and $n$-ary relations};
    \item { the extension of the pebble games and notions of bisimulation for dealing with such "universal programs"};
    \item the proof for the "satisfiability problem" has been extended (to account for the new features above) and  simplified by minimizing the repetition of developments in prior work;
    \item the upper bound for satisfiability has been improved from "3ExpTime" to "2ExpTime"(-complete);
    \item the inclusion of $\UNTC$ in our study, showing the equivalence of $\UCPDLp$ with $\UNTC$ in terms of expressive power and a tight bound for $\UNTC$ satisfiability;
    \item several inexpressibility results "wrt" related logics such as $\GNFO$ and $\GNFP$;
    
    \item  we include further comparisons with relevant prior works as well as with recent works appeared since the preceding version of this manuscript (most notably \cite{DBLP:journals/corr/abs-2501-15303}).
    \color{black}
\end{enumerate}
\color{black}

\subsection{Organization}

In \Cref{sec:prelim} we fix the basic notation used throughout the manuscript.
In \Cref{sec:pdl-defn} we define the family of $\PDL$ logics and we introduce the logic of $\UCPDLp$ and its semantics.
\Cref{sec:relation-to-loop-icpdl} shows the equivalence, in terms of expressive power, of $\CPDLg{\Tw[1]}$, $\CPDLg{\Tw[2]}$ and $\ICPDL$ on the one hand, and of $\loopCPDL$ with $\CPDLg{\Graphloop}$ on the other.
In \Cref{sec:caracterization} we define (bi)simulation relations for the fragments $\CPDLg{\Tw}$ of $\UCPDLp$ and show that they characterize model-indistinguishability.
\Cref{sec:separation} shows that $\CPDLg{\Tw[k+1]}$ is strictly more expressive than $\CPDLg{\Tw[k]}$ for every $k \geq 2$, as well as the incomparability of $\CPDLp$ with respect to $\UNFOreg$ and $\UNFO$.
In \Cref{sec:sat} we solve the satisfiability problem for $\UCPDLp$.
In \Cref{sec:UNTC} we introduce the extension of $\UNFO$ with monadic transitive closure and we show that it is equivalent to $\UCPDLp$ in terms of expressive power via computable translations, and we study its satisfiability problem.
Finally, in \Cref{sec:conclusions} we give some concluding remarks and also state a  model-checking result.

\section{Preliminaries}\label{sec:prelim}
We use the standard notation $\bar u$ to denote a tuple of elements from some set $X$ and $\bar u[i]$ to denote the element in its $i$-th component. 
\AP
If $\bar q=(q_1, \dotsc, q_k)$ is a tuple and $1\leq i\leq k$, by $\bar q[i\intro*\mapcoord r]$ we denote the tuple $(q_1, \dotsc, q_{i-1}, r, q_{i+1}, \dotsc, q_k)$, and by $\intro*\dimtup(\bar q)$ we denote its ""dimension"" $k$.
For relations $R,S\subseteq X\times X$ we define the ""composition"" $R\circ S:=\set{(u,v)\mid \exists w. uRw, wSv}$.
For a function $f\colon X \to Y$, we often write $f(\bar u)$ to denote $(f(\bar u[1]), \dotsc, f(\bar u[k]))$, where $k$ is the dimension of $\bar u$. 
\AP
By $\intro*\pset{X}$ we denote the set of all subsets of $X$.

\AP
A (simple, undirected) ""graph"" is a tuple $G=(V,E)$ where $V$ is a set of ""vertices"" and $E$ is a set of ""edges"", where each "edge" is a set of vertices of size at most two. We refer to $V$ and $E$ by $V(G)$ and $E(G)$, respectively. 
We henceforth assume that "graphs" have a finite set of "vertices" unless otherwise stated. 
\AP
A ""tree"" is a connected "graph" with no cycles (in particular no singleton edges).

\AP
A ""tree decomposition"" of a "graph" $G$ is a
pair $(T, \intro*\bagmap)$ where $T$ is a "tree" and $\bagmap\colon V(T) \to \pset{V(G)}$ is a function that associates to each node of $T$, called ""bag"",
a set of vertices of $G$. When $x \in \bagmap(b)$ we shall say that the bag $b \in V(T)$
""contains@@tw"" the vertex $x$. 
\AP
Further, it must satisfy the following three properties:
    ""A@@treedec"". each vertex $x$ of $G$ is "contained@@tw" in at least one "bag" of $T$;
    ""B@@treedec"". if $G$ has an edge between $x$ and $y$, there is at least one "bag" of $T$
        that "contains@@tw" both $x$ and $y$; and 
    ""C@@treedec"".
    for each vertex $x$ of $G$, the set of bags of $T$ "containing@@tw" $x$ is a 
        connected subtree of $V(T)$.

The "tree decomposition" of an infinite "graph" is an infinite tree and a mapping satisfying the conditions "A@@treedec", "B@@treedec", "C@@treedec" above.
\AP
The ""width"" of $(T, \bagmap)$ is the maximum (or supremum if $T$ is infinite) of $|\bagmap(b)|-1$ when $b$ ranges over
$V(T)$. The ""tree-width"" of $G$, notated $\intro*\tw(G)$, is the minimum of the "width" of all "tree decompositions" of $G$.
\AP 
We denote by $\intro*\Tw$ the set of all (finite) "graphs" of "tree-width" at most $k$. See Figure \ref{fig:example_ugraph} (B and C) for "graphs" of different "tree-width" and Figure \ref{fig:example_treedecomp} for an example of a "tree decomposition". 
\AP A "tree decomposition" $(T, \bagmap)$ is ""good"" if $T$ is binary and for every pair of nodes $u,v$ of $T$ such that $v$ is a child of $u$ we have $\bagmap(u) \subseteq \bagmap(v)$ or $\bagmap(v) \subseteq \bagmap(u)$.

\AP
Let $\intro*\Rels$ and $\intro*\Vars$ be countably infinite, pairwise disjoint, sets of ""relation names"" and ""variables"", respectively, and let $\intro*\arity\colon \sigma \to \Nat \setminus \set 0$ be a function assigning an ""arity"" to each "relation name".
We will denote by $\intro*\Prop \eqdef \set{R \in \Rels : \arity(R)=1}$ the set of unary relations, which we call ""atomic propositions""; 
and by $\intro*\Prog \eqdef \set{R \in \Rels : \arity(R)=2}$ the set of binary relations, which we call ""atomic programs"".
\AP
A ""$\Rels$-structure"" (or simply a ""structure"") $K$ is a pair $(X,\iota)$ where $X$ is the ""domain"" and $\iota$ is a function so that for each relation $R \in \Rels$ of "arity" $n$ we have that $\iota(R) \subseteq X^n$. We will henceforth write $\intro*\dom K$ and $R^K$ to denote $X$ and $\iota(R)$, respectively. 
\AP
We call each element of $X$ a ""domain element"" or ""world"", borrowing the jargon from modal logics.
\AP
A ""Kripke structure"" is a "structure" $K$ such that $R^K=\emptyset$ for every $R \in \Rels$ with $\arity(R)>2$. We denote by $\intro*\Kripke$ the class of all "Kripke structures".

\AP
The ""Gaifman graph"" of a "structure" $K$ is the "graph" $(V,E)$ where $V = \dom K$ and $E$ contains a pair $\set{u,v}$ if there is some $R \in \Rels$ and some tuple $\bar w \in R^K$ containing both $u$ and $v$.
\AP
The ""distance"" between two "domain elements" of $K$ is simply the length of the shortest path in the "Gaifman graph" of $K$.
\AP
We say that $K$ is of ""finite degree"" if every "world" of $K$ has a finite number of `neighbors' at "distance" $1$.
\AP
A ""homomorphism"" from a "structure" $K$ to another "structure" $K'$ is a function $f \colon \dom K \to \dom{K'}$ such that for every "relation" $R \in \Rels$ of "arity" $n$ and 
$\bar w \in (\dom K)^n$ we have that $\bar w \in R^K$ implies $f(\bar w) \in R^{K'}$.\footnote{We use the standard notation where $f((u_1,\dotsc, u_n)) \eqdef (f(u_1),\dotsc, f(u_n))$.}
\AP
A function $f \colon \hat X \to \dom{K'}$ with $\hat X \subseteq \dom K$ is a ""partial homomorphism"" if it is a "homomorphism" from $\hat K$ to $K'$, where $\hat K$ is the substructure of $K$ induced by $\hat X$.
\AP
Let $\intro*\kHoms(K,K')$ be the set of all $(\bar u, \bar v) \in \dom{K}^k \times \dom{K'}^k$ such that $\set{\bar u[i] \mapsto \bar v[i] \mid 1 \leq i \leq k}$ is a "partial homomorphism" from $K$ to $K'$. 
When $\bar u'$ and $\bar v'$ are tuples of dimension $k' < k$, we will often abuse notation and write $(\bar u', \bar v') \in \kHoms(K,K')$ to denote $(\bar u', \bar v') \in \kHoms[k']$.

\AP
The ""tree-width@@structure"" of a "structure" is the "tree-width" of its "Gaifman graph".

\section{PDL and its Extensions}
\label{sec:pdl-defn}

\subsection{Known Extensions of \texorpdfstring{$\PDL$}{PDL}: \texorpdfstring{$\ICPDL$}{ICPDL} and \texorpdfstring{$\loopCPDL$}{loop-CPDL}}
We first define $\AP\intro*\CPDL$, "ie", $\PDL$ with converse.
\AP""Expressions"" of $\CPDL$ can be either
""formulas"" $\phi$ or ""programs"" $\pi$, defined by the following grammar, where $p$ ranges over $\Prop$ and $a$ over $\Prog$:
\begin{align*}
    \phi &\eqqdef p \mid \lnot \phi \mid \phi\land\phi \mid \tup{\pi} \\
    \pi &\eqqdef \epsilon \mid a \mid \bar a \mid \pi \cup \pi \mid \pi \circ \pi \mid \pi^* \mid \phi? 
\end{align*}
where $\phi?$ "programs" are often called \AP""test programs"".
\AP
We define the semantics of "programs" $\intro*\dbracket{\pi}_K$ and of "formulas" $\intro*\dbracket{\phi}_K$ in a "Kripke structure" $K=(X,\iota)$, where  $\dbracket{\pi}_K\subseteq X\times X$ and $\dbracket{\phi}_K\subseteq X$, as follows:
\begin{align*}
    \dbracket{p}_K \eqdef{}& p^K &\text{ for $p \in \Prop$}, \\
    \dbracket{\lnot\phi}_K \eqdef{}&  X\setminus \dbracket{\phi}_K,\\
    \dbracket{\phi_1\land\phi_2}_K \eqdef{}&  \dbracket{\phi_1}_K\cap\dbracket{\phi_2}_K,\\
    \dbracket{\tup{\pi}}_K \eqdef{}&  \{u\in X\mid \exists v\in X.(u,v)\in\dbracket{\pi}_K\},\\
    \dbracket{\epsilon}_K \eqdef{}&  \{(u,u) \mid u\in X\}, \\
    \dbracket{a}_K \eqdef{}& a^K  \text{ for $a \in \Prog$,}\\
    \dbracket{\bar a}_K \eqdef{}&  \{(v,u)\in X^2\mid (u,v) \in a^K\} &\text{ for $a \in \Prog,$}\\
    \dbracket{\pi_1\star\pi_2}_K \eqdef{}&  \dbracket{\pi_1}_K\star\dbracket{\pi_2}_K &\text{ for $\star \in \set{\cup,\circ}$},\\
    \dbracket{\pi^*}_K \eqdef{}&  \mbox{the reflexive transitive closure of $\dbracket{\pi}_K$},\\
    \dbracket{\phi?}_K \eqdef{}&  \{(u,u) \mid u\in X,u\in\dbracket{\phi}_K\}.
\end{align*}
We write $K,u\models\phi$ for $u\in\dbracket{\phi}_K$ and $K,u,v\models\pi$ for $(u,v)\in\dbracket{\pi}_K$.
\AP
We write $\phi_1\intro*\semequiv\phi_2$ \resp{$\pi_1\reintro*\semequiv\pi_2$} 
if $\dbracket{\phi_1}_K=\dbracket{\phi_2}_K$ \resp{$\dbracket{\pi_1}_K=\dbracket{\pi_2}_K$} 
\AP
for every "structure" $K$ (or, equivalently, every "Kripke structure" $K$), in which case we say that $\phi_1,\phi_2$ \resp{$\pi_1,\pi_2$} are ""equivalent"".
\AP
$\intro*\PDL$ is the fragment of $\CPDL$ "expressions" which do not use "atomic programs" of the form $\bar a$.

$\AP\intro*\ICPDL$ is defined as the extension of $\CPDL$ with ""program intersection""
$\pi \eqqdef \pi \cap \pi$
with semantics defined by
$    \dbracket{\pi_1\cap\pi_2}_K \eqdef \dbracket{\pi_1}_K\cap\dbracket{\pi_2}_K.
$
\knowledgenewrobustcmd{\reverseof}[1]{#1^{\cmdkl{\textit{rev}}}}
Observe that $\ICPDL$ "programs" are closed under converse, that is, for every "program" $\pi$ there is a `reverse' program $\reverseof{\pi}$ such that $\dbracket{\pi}_K$ contains $(u,v)$ if{f} $\dbracket{\reverseof{\pi}}_K$ contains $(v,u)$, where $\AP\intro*\reverseof{\pi}$ is defined as follows:
\begin{align*}
\reverseof{\star}&\eqdef{}\star&\mbox{for $\star\in\{\epsilon,\phi?\}$,}\\
\reverseof{a}& \eqdef{} \overline a &\mbox{for $a\in\A$,}\\
\reverseof{\overline a}& \eqdef{}a  &\mbox{for $a\in\A$,}\\
\reverseof{(\pi_1\star\pi_2)}&\eqdef{}\reverseof{\pi_1}\star \reverseof{\pi_2}&\mbox{for $\star\in\{\cap,\cup\}$,}\\
\reverseof{(\pi_1 \circ \pi_2)}&\eqdef{}\reverseof{\pi_2} \circ \reverseof{\pi_1},\\
\reverseof{(\pi^*)}&\eqdef{}(\reverseof{\pi})^*.
\end{align*}
\begin{remark}Notice that 
    $\pi$ is an $\ICPDL$-"program" iff $\reverseof{\pi}$ is an $\ICPDL$-"program", and $(u,v)\in\dbracket{\pi}_K$ iff $(v,u)\in\dbracket{\reverseof{\pi}}_K$.
\end{remark}

Finally, $\AP\intro*\loopCPDL$ is defined as the extension of $\CPDL$ resulting from adding
\begin{align*}
    \phi & \eqqdef  \AP\intro*\lo(\pi) 
\end{align*}
to its grammar, with semantics defined by
    $\dbracket{\lo(\pi)}_K  \eqdef \{x \mid (x,x)\in \dbracket{\pi}_K \}.$

\subsection{A New Logic: \texorpdfstring{$\UCPDLp$}{UCPDL⁺}}
We will now define a new kind of "program" which enables testing for conjunction of first-order atoms on arbitrary "$\Rels$-structures". Such atoms may in turn be "programs".
\AP
Concretely, we call an ``""atom""'' to either (a) an expression of the form $\pi(x,x')$, where $\pi$ is a {\CPDLp} "program" and $x,x' \in \Vars$, or (b) an expression $R(x_1, \dotsc, x_n)$, for $R\in\Rels$, where $\arity(R)=n$, $n>2$, and $x_1, \dotsc, x_n \in \Vars$.
We will denote by ""p-atom"" (for `\underline{p}rogram') and ""r-atom"" (for `\underline{r}elation') the "atoms" of the first (a) and second (b) kind, respectively.
\color{black}
\AP
For an "atom" $\tau(x_1, \dotsc, x_n)$ we define $\intro*\vars(\tau(x_1, \dotsc, x_n)) \eqdef \set{x_1, \dotsc, x_n}$, and for a set of "atoms" $C$ we define $\reintro*\vars(C) \eqdef \bigcup_{A \in C} \vars(A)$.

We define $\intro*\CPDLp$ as an extension of $\CPDL$ allowing also "programs" of the form 
    \[\pi \eqqdef C[x_s,x_t]\]
where:
\AP
(1) $C$ is a finite set of "atoms";  
(2) $x_s,x_t\in\vars(C)$;\footnote{Note that $x_s$ and $x_t$ may be equal or distinct "variables".}
and
(3) the ""underlying graph@@C"" 
$\intro*\uGraphC{C}$ of $C$ is connected, where $\uGraphC{C}$ is defined as $V(\uGraphC{C})=\vars(C)$, and $E(\uGraphC{C}) = \set{\set{u,v} \mid \set{u,v} \subseteq \vars(A), A \in C}$.\footnote{The fact that we restrict our attention to ``connected'' "conjunctive programs" is unessential for any of our results, but it will allow us to easily relate to previously defined logics, such as $\ICPDL$ and $\loopCPDL$.}
\AP
We call these "programs" ""conjunctive programs"" (as they generalize "conjunctive queries").
Observe that $\set{x_s,x_t}\subseteq{\vars(C)}$, and hence  we also define $\reintro*\vars(C[x_s,x_t]) \eqdef {\vars(C)}$.

\AP
Fix a "$\Rels$-structure" $K$. A function $f \colon \vars(C) \to \dom{K}$ is a ""$C$-satisfying assignment"" if $(f(x),f(x')) \in \dbracket{\pi'}_K$ for every "p-atom" $\pi'(x,x')\in C$, and 
$f(\bar x) \in R^K$ for every "r-atom" $R(\bar x)\in C$.
We define the semantics of "conjunctive programs" on arbitrary "$\Rels$-structures".
The semantics $\dbracket{C[x_s,x_t]}_K$ of a "conjunctive program" on a "$\Rels$-structure" $K$ is the set of all pairs $(w_s,w_t) \in \dom{K} \times \dom{K}$ such that $f(x_s) = w_s$ and $f(x_t) = w_t$ for some "$C$-satisfying assignment" $f$.

\AP
For a class $\+G$ of "graphs" we define $\intro*\CPDLg{\+G}$ as the fragment of $\CPDLp$ whose "programs" have one of the shapes allowed in $\+G$. 
Formally, for any "conjunctive program" $\pi = C[x_s,x_t]$ we consider the ""underlying graph"" $\intro*\uGraph{C[x_s,x_t]}$ of $C[x_s,x_t]$ as having $V(\uGraph{C[x_s,x_t]}) = V(\uGraphC{C})$ and $E(\uGraph{C[x_s,x_t]}) = \set{\set{x_s,x_t}} \cup E(\uGraphC{C})$. 
Observe that $x_s$ and $x_t$ are always connected via an "edge" in $\uGraph{C[x_s,x_t]}$ but not necessarily in $\uGraphC{C}$. $\CPDLg{\+G}$ is the fragment of $\CPDLp$ whose only allowed "conjunctive programs" are of the form $C[x_s,x_t]$ where $\uGraph{C[x_s,x_t]}\in\+G$. See Figure~\ref{fig:example_ugraph} for an example of "underlying graphs".

\begin{figure}
\ifarxiv
    \includegraphics[scale=0.25]{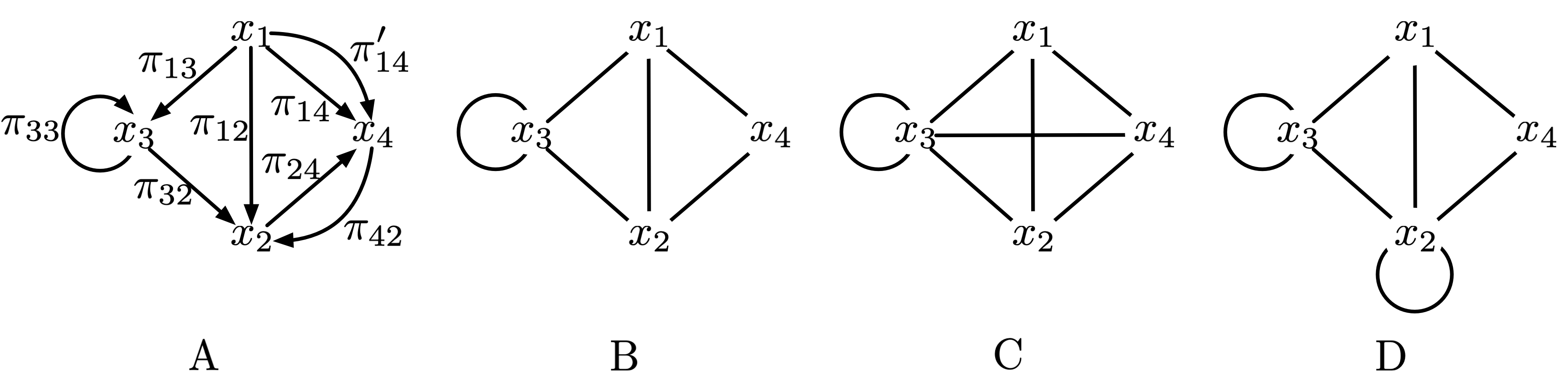}
\else
    \includegraphics[width=.47\textwidth]{img/ugraphs.png}
\fi
\caption{A. graphical representation of the "atoms" contained in the set 
$C=\{
\pi_{13}(x_1,x_3),$ $
\pi_{12}(x_1,x_2),$ $
\pi_{14}(x_1,x_4),$ $
\pi'_{14}(x_1,x_4),$ $
\pi_{33}(x_3,x_3),$ $
\pi_{32}(x_3,x_2),$ $
\pi_{24}(x_2,x_4),$ $
\pi_{42}(x_4,x_2)
\}$; B. $\uGraph{C[x_1,x_2]}$; C. $\uGraph{C[x_3,x_4]}$; D. $\uGraph{C[x_2,x_2]}$. Observe that $\uGraph{C[x_1,x_2]},\uGraph{C[x_2,x_2]}\in\Tw[2]$ but $\uGraph{C[x_3,x_4]}\in\Tw[3]\setminus\Tw[2]$.
}
\label{fig:example_ugraph}
\end{figure}

\AP
We will consider $\CPDLp$ extended with a ""universal program"" $\intro* \Univ$ whose semantics over arbitrary "$\Rels$-structures" $K$ is $\dbracket{\Univ}_K = \dom{K}\times\dom{K}$. Unlike $\CPDLp$ -- which, like $\PDL$, only allows to move inside a connected component of a "structure" -- the operator $\Univ$ allows us to ``jump'' between components: $\tup{\Univ\circ\phi?}$ is true at a "world" of $K$ if and only if there is some "world" in $K$ satisfying $\phi$. We denote by $\intro*\UCPDLp$ and $\intro*\UCPDLg{\+G}$ the previously defined logics extended with $\Univ$. Observe that condition (3) on the "underlying graph@@C" $\uGraphC{C}$ of $C$ being connected is trivial in $\UCPDLp$, since if needed one can add "p-atoms" $\Univ(x,x')$ to $C$ to get the connectedness condition.

\AP
We say that an "expression" of $\UCPDLp$ is ""positive"" if it does not contain any subformula of the form $\lnot \psi$.

When restricting our attention to "Kripke structures", we shall always assume that "conjunctive programs" do not contain "r-atoms". Indeed, by definition, "Kripke structures" only interpret relation symbols of arity at most $2$, and hence there are no relations of arity greater than $2$ available to form such atoms. Thus, over "Kripke structures", we can assume without any loss of generality that "conjunctive programs" are built only from "p-atoms".
\color{black}

\AP
Sometimes we will need to use $\CPDLp$ or $\UCPDLp$ extended with "program intersection" ($\cap$), that we denote by $\intro*\ICPDLp$ and $\intro*\IUCPDLp$. The semantics of $\pi_1\cap\pi_2$ is that of $\ICPDL$. Observe that any expression of $\ICPDL$ is also an expression of $\ICPDLp$ (with no "conjunctive programs") with equal semantics; hence $\ICPDLp$ extends $\ICPDL$.
\AP
Analogously, $\intro*\ICPDLg{\+G}$ is defined as $\CPDLg{\+G}$ extended with "program intersection". 
\begin{remark}\label{rk:intersecion_does_not_add_expressive_power}
 $\ICPDLp$ and $\CPDLp$ \resp{$\IUCPDLp$ and $\UCPDLp$} are equivalent in terms of expressive power. Furthermore, for any class $\+G$ containing the graph which consists of an edge between two distinct nodes we have that $\ICPDLg{\+G}$ and $\CPDLg{\+G}$ are also equivalent in terms of expressive power.
\end{remark}
\begin{proof}
$\pi_1\cap\pi_2$ is "equivalent" to $\set{\pi_1(x,y),\pi_2(x,y)}[x,y]$.
\end{proof}
\AP
For an $\ICPDLp$ "expression" $e$, let $\reintro*\subexpr(e)$ be the set of all ""subexpressions"" of $e$, defined in the usual way. That is, $\AP\intro*\subexpr(e)$ is the smallest set satisfying the following:
\begin{itemize}
    \item $e \in \subexpr(e)$,
    \item if $\psi \land \psi' \in \subexpr(e)$ then $\set{\psi, \psi'} \subseteq \subexpr(e)$
    \item if $\lnot\psi \in \subexpr(e)$ then $\psi \in \subexpr(e)$
    \item if $\tup{\pi} \in \subexpr(e)$, then $\pi \in \subexpr(e)$
    \item if $\pi \star \pi' \in \subexpr(e)$, then $\set{\pi,\pi'} \subseteq \subexpr(e)$ for every $\star \in \set{\circ,\cup,\cap}$,
    \item if $\pi^* \in \subexpr(e)$, then $\pi \in \subexpr(e)$,
    \item if $\psi? \in \subexpr(e)$, then $\psi \in \subexpr(e)$,
    \item if $C[x_s,x_t] \in \subexpr(e)$ and $\pi(x,y) \in C$, then $\pi \in \subexpr(e)$.
\end{itemize}
We also define the \AP""size@@pdl"" $\intro*\pdlsize{e}$ of an $\ICPDLp$ "expression" $e$:
\begin{align*}
    \pdlsize{e} = \pdlsize{\bar a} = \pdlsize{\epsilon}&\eqdef 1  &&\text{for $e \in \Prog \cup \Prop$, $a \in \Prog$,}\\
    \pdlsize{e^*} = \pdlsize{e?} = \tup{e} = \lnot e &\eqdef 1 + \pdlsize{e}, \\
    \pdlsize{e_1 \star e_2} &\eqdef\pdlsize{e_1} + \pdlsize{e_2} &&\text{for $\star \in \set{\cup,\cap, \circ, \land}$, and} \\
    \pdlsize{C[x_s,x_t]} &\eqdef \sum \set{\arity(R) : R(\bar x)  \text{ "r-atom" of } C} + {}\\&\hspace{.4cm} \sum \set{1{+}\pdlsize{\pi} : \pi(x,y) \text{ "p-atom" of } C } &&
\end{align*}
\hfill for a "conjunctive program" $C[x_s,x_t]$.

\newcommand{\twtcpdl}{\CPDLg{\Tw[2]}}
\newcommand{\twocpdl}{\CPDLg{\Tw[1]}}
\newcommand{\twncpdl}{\CPDLg{\Tw[n]}}

\newcommand{\tr}{{\rm Tr}\xspace}

\section{Relation to \texorpdfstring{$\loopCPDL$}{loop-CPDL} and \texorpdfstring{$\ICPDL$}{ICPDL}}
\label{sec:relation-to-loop-icpdl}
For logics $\+L_1$ and $\+L_2$ with expressions consisting of "formulas" and "programs", we say that $\+L_2$ is ""at least as expressive as"" $\+L_1$, notated $\+L_1{\AP\intro*\lleq}\+L_2$, 
\phantomintro{\notlleq}%
if there is a translation 
mapping $\+L_1$-"formulas" to $\+L_2$-"formulas" and $\+L_1$-"programs" to $\+L_2$-"programs" which preserves "equivalence".
\AP
We write $\+L_1 \AP\intro*\lleqs \+L_2$ to denote that $\+L_1 \lleq \+L_2$ and $\+L_2 \not\lleq \+L_1$. We write $\+L_1 \intro*\langsemequiv \+L_2$ to denote that $\+L_1 \lleq \+L_2$ and $\+L_2 \lleq \+L_1$, in which case we say that $\+L_1$ and $\+L_2$ are ""equi-expressive"".
We will also restrict these notions over the subclass $\Kripke$ of "Kripke structures", in which case we write $\reintro*\+L_1 \lleq[\Kripke] \+L_2$, $\reintro*\+L_1 \lleqs[\Kripke] \+L_2$ and $\+L_1 \reintro*\langsemequiv[\Kripke] \+L_2$.

We will also compare the expressive power of logics of different nature: on the one hand, a logic $\+L_1$ being an extension of first-order logic (in particular, possibly with free variables), and on the other hand a logic $\+L_2$ with "formulas" and "programs". In this case we say that $\+L_2$ is \reintro{at least as expressive as} $\+L_1$, notated $\+L_1\mathrel{\AP\intro*\lleq}\+L_2$, if there is 
\begin{enumerate}
    \item a translation $T$ mapping $\+L_1$-formulas $\phi(x)$ with free variable $x$ to an $\+L_2$-"formula" $T(\phi(x))$ preserving "equivalence" in the sense that for any $\Rels$-structure and $u\in\dom{K}$, we have $K\models\phi(x)[x\mapsto u]$ iff $K,u\models T(\varphi(x))$ and 
    \item a translation $T$ mapping $\+L_1$-formulas $\phi(x,y)$ with free variables $x$ and $y$ ($x\neq y$) to an $\+L_2$-"program" $T(\phi(x,y))$ preserving "equivalence" in the sense that for any $\Rels$-structure and $u,v\in\dom{K}$, we have $K\models\phi(x,y)[x\mapsto u,y\mapsto v]$ iff $K,u,v\models T(\varphi(x,y))$.
\end{enumerate}
 In the same way, we say that $\+L_1$ is \reintro{at least as expressive as} $\+L_2$, notated $\+L_2 \mathrel{\AP\intro*\lleq}\+L_1$, if there is 
 \begin{enumerate}
    \item a translation $T$ mapping $\+L_2$-formulas $\phi$ to $\+L_1$-"formulas" $T(\phi)$ with a free a variable $x$  preserving "equivalence" in the sense that for any $\Rels$-structure and $u\in\dom{K}$, we have $K,u\models\phi$ iff $K\models T(\varphi)[x\mapsto u]$ and 
    \item a translation $T$ mapping $\+L_2$-programs $\pi$ to $\+L_1$-"formulas" $T(\pi)$ with free variables $x$ and $y$ preserving "equivalence" in the sense that for any $\Rels$-structure and $u,v\in\dom{K}$, we have $K,u,v\models\pi$ iff $K\models T(\pi)[x\mapsto u,y\mapsto v]$.
 \end{enumerate}
In this setting we also use the notation $\+L_1 \lleq[\Kripke] \+L_2$ and $\+L_2 \lleq[\Kripke] \+L_1$ with the obvious meaning.

\AP
Let $\intro*\Graphloop \eqdef \set{G}$ be the class having just one "graph" $G=(\set{v},\set{\set{v}})$ consisting of just one self-looping "edge", and
\AP
$\intro*\Graphcap \eqdef \set{G}$ be the class having just one "graph" $G = (\set{v,v'}, \set{\set{v,v'}})$  containing one "edge". Observe that $\Graphloop$ and $\Graphcap$ are contained in $\Tw[1]$.

We next show that $\loopCPDL\langsemequiv[\Kripke]\CPDLg{\Graphloop}$, and $\ICPDL\langsemequiv[\Kripke]\CPDLg{\Graphcap}\langsemequiv[\Kripke]\CPDLg{\Tw[1]}\langsemequiv[\Kripke]\CPDLg{\Tw[2]}$ via polynomial time translations.

\subsection{\texorpdfstring{$\loopCPDL$}{loop-CPDL} and \texorpdfstring{$\CPDLg{\Graphloop}$}{CPDL⁺(𝒢↺)} are Equi-expressive}

\begin{proposition}\AP\label{prop:loopCPDL-eq-CPDLgLoop}
    $\loopCPDL\langsemequiv[\Kripke]\CPDLg{\Graphloop}$ via polynomial time translations.
\end{proposition}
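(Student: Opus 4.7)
The plan is to establish the equivalence by giving two mutually inverse (in terms of semantics) polynomial-time translations $T_1: \loopCPDL \to \CPDLg{\Graphloop}$ and $T_2: \CPDLg{\Graphloop} \to \loopCPDL$, defined by mutual recursion on "formulas" and "programs". The key observation is that the "underlying graph" of a "conjunctive program" $C[x_s,x_t]$ belongs to $\Graphloop$ iff $\vars(C)=\{x\}$ for a single "variable" $x$ and $x_s=x_t=x$, which forces every "atom" in $C$ to have the shape $\pi(x,x)$. Thus such a "conjunctive program" is exactly a conjunction of self-loop tests.

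For the direction $\loopCPDL \lleq[\Kripke] \CPDLg{\Graphloop}$, I would translate each subexpression homomorphically, with the only nontrivial clause being $T_1(\lo(\pi)) \eqdef \tup{\{T_1(\pi)(x,x)\}[x,x]}$. A short semantic check shows that the "$C$-satisfying assignment" for $\{T_1(\pi)(x,x)\}[x,x]$ at a "world" $u$ exists iff $(u,u) \in \dbracket{T_1(\pi)}_K$, which matches the semantics of $\lo(\pi)$. Every other clause is a direct homomorphic pass, so the translation is clearly polynomial.

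For the direction $\CPDLg{\Graphloop} \lleq[\Kripke] \loopCPDL$, the only non-homomorphic clause concerns a "conjunctive program" $\pi = C[x,x]$ with $C=\{\pi_1(x,x),\dots,\pi_n(x,x)\}$. Since $\dbracket{C[x,x]}_K = \{(u,u) : (u,u) \in \dbracket{\pi_i}_K \text{ for all } i\}$, I would set $T_2(C[x,x]) \eqdef \bigl(\bigwedge_{i=1}^n \lo(T_2(\pi_i))\bigr)?$, exploiting the fact that a $\loopCPDL$ "test program" produces precisely the diagonal pairs $(u,u)$ satisfying its test "formula". The correctness of this clause again reduces to checking that both sides contain the same diagonal pairs $(u,u)$.

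Correctness in full generality then follows by a straightforward simultaneous induction on the structure of "formulas" and "programs", invoking the inductive hypothesis whenever an inner $\pi_i$ or inner "formula" is translated. The main subtlety —and the only place where anything nontrivial happens— is realizing that $C[x,x]$ is used syntactically as a "program" (hence can appear inside compositions, stars, etc.) but denotes only diagonal pairs, so it must be translated into a "test program" rather than a "formula"; this is why the recursion is mutual and why both directions go through seamlessly. Polynomiality of both $T_1$ and $T_2$ is immediate from the fact that each clause increases the "size@@pdl" by only a constant factor.
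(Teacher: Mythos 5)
Your proposal is correct and follows essentially the same route as the paper: the paper's translations $\trOne$ and $\trTwo$ use exactly your two key clauses, namely $\lo(\pi)\mapsto\{\cdot(x,x)\}[x,x]$ in one direction and $C[x,x]\mapsto\bigl(\bigwedge_{\pi(x,x)\in C}\lo(\cdot)\bigr)?$ in the other, with all remaining cases handled homomorphically. Your observation that $\uGraph{C[x_s,x_t]}\in\Graphloop$ forces a single variable and only self-loop "atoms" is precisely the paper's justification as well (and wrapping the conjunctive program in $\tup{\cdot}$ to obtain a "formula" is the type-correct reading of the paper's clause).
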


\paragraph{Translation from $\loopCPDL$ to $\CPDLg{\Graphloop}$}\label{sec:from_loopCPDL}

\knowledgenewrobustcmd{\trOne}{\cmdkl{\tr_1}}
We define the following translation $\AP\intro*\trOne$ from $\loopCPDL$-"formulas" to $\CPDLg{\Graphloop}$-"formulas" and from 
$\loopCPDL$-"programs" to $\CPDLg{\Graphloop}$-"programs":
\begin{align}
\trOne(\star)&\eqdef \star&&\text{for $\star\in\Prop\cup\Prog\cup\{\epsilon\}$,}\nonumber\\
\trOne(\bar a)&\eqdef \bar a&&\text{for $a\in\Prog$,}\nonumber\\
\trOne(\varphi\land\psi)&\eqdef \trOne(\varphi)\land\trOne(\psi),\nonumber\\
\trOne(\lnot\varphi)&\eqdef \lnot\trOne(\varphi),\nonumber\\
\trOne(\langle\pi\rangle)&\eqdef \langle\trOne(\pi)\rangle,\nonumber\\
\trOne(\varphi?)&\eqdef \trOne(\varphi)?,\nonumber\\
\trOne(\pi_1\star\pi_2)&\eqdef \trOne(\pi_1)\star\trOne(\pi_2)&&\text{for $\star\in\{\cup,\circ\},$}\nonumber\\
\trOne(\pi^*)&\eqdef \trOne(\pi)^*,\nonumber\\
\trOne(\lo(\pi))&\eqdef \{\trOne(\pi)(x,x)\}[x,x].\label{eqn:translation_loop}
\end{align}

\begin{proposition}\AP
$\loopCPDL\lleq[\Kripke]\CPDLg{\Graphloop}$ via $\trOne$.
\end{proposition}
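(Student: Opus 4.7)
The plan is to prove both equivalence preservation and the syntactic constraint on the output of $\trOne$ simultaneously by a single structural induction on the "expression" (either a "formula" or a "program") of $\loopCPDL$.

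First I would state the induction hypothesis as a conjunction: for every $\loopCPDL$-"formula" $\phi$ and every "Kripke structure" $K$, (i)~$\trOne(\phi)$ is a $\CPDLg{\Graphloop}$-"formula" and $\dbracket{\phi}_K = \dbracket{\trOne(\phi)}_K$; and analogously for "programs" $\pi$, $\trOne(\pi)$ is a $\CPDLg{\Graphloop}$-"program" and $\dbracket{\pi}_K = \dbracket{\trOne(\pi)}_K$. The base cases ($p \in \Prop$, $a \in \Prog$, $\bar a$, and $\epsilon$) are immediate since $\trOne$ acts as the identity and no "conjunctive programs" are introduced. The inductive cases for the Boolean combinators ($\land, \lnot$), the modality $\tup{\cdot}$, the test $?$, the program operators $\cup$, $\circ$, and the Kleene star $*$ all follow by unfolding the semantic rules given in \Cref{sec:pdl-defn}, which are homomorphic with respect to these constructors, and then applying the induction hypothesis.

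The only interesting case is $\lo(\pi)$. Here I would compute: by definition of the semantics, $u \in \dbracket{\lo(\pi)}_K$ iff $(u,u) \in \dbracket{\pi}_K$, which by the induction hypothesis is equivalent to $(u,u) \in \dbracket{\trOne(\pi)}_K$. By the semantics of "conjunctive programs", this holds iff the assignment $f \colon x \mapsto u$ is a "$C$-satisfying assignment" for $C = \set{\trOne(\pi)(x,x)}$, which is equivalent to $(u,u) \in \dbracket{\set{\trOne(\pi)(x,x)}[x,x]}_K$. (Here I read the right-hand side of \eqref{eqn:translation_loop} through the standard convention identifying $\lo(\pi)$ with the "formula" $\tup{\set{\trOne(\pi)(x,x)}[x,x]}$, so that the types match; the equivalence above yields exactly $u \in \dbracket{\trOne(\lo(\pi))}_K$.)

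To finish, I would verify the syntactic side-condition that $\trOne(\lo(\pi))$ is in $\CPDLg{\Graphloop}$. The "conjunctive program" $C[x_s,x_t]$ produced in \eqref{eqn:translation_loop} has $\vars(C) = \set{x}$ with $x_s = x_t = x$ and a single "p-atom" $\trOne(\pi)(x,x)$ whose variables form the singleton $\set{x}$. Therefore the "underlying graph" $\uGraph{C[x,x]}$ has vertex set $\set{x}$ and edge set $\set{\set{x,x}} = \set{\set{x}}$, which is exactly the graph in $\Graphloop$. Since $\trOne$ does not introduce any other "conjunctive programs", all "conjunctive programs" appearing in $\trOne(e)$ belong to $\Graphloop$, completing the induction. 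The main (mild) obstacle is making the formula/program type-mismatch in \eqref{eqn:translation_loop} precise, which is handled by the convention above; the rest is bookkeeping.
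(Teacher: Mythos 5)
Your proof is correct and is exactly the routine structural induction that the paper leaves implicit (the proposition is stated without proof there); the only non-trivial case, $\lo(\pi)$, is handled as intended, and your verification that $\uGraph{\set{\trOne(\pi)(x,x)}[x,x]}$ is the single self-looping vertex of $\Graphloop$ is what makes the syntactic side-condition go through. Your reading of clause~\eqref{eqn:translation_loop} as $\tup{\set{\trOne(\pi)(x,x)}[x,x]}$ is the right one: as literally written the clause maps a "formula" to a "program", and wrapping in $\tup{\cdot}$ is the intended repair.
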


\paragraph{Translation from $\CPDLg{\Graphloop}$ to $\loopCPDL$}\label{sec:to_loopCPDL}

In $\CPDLg{\Graphloop}$ the only allowed "conjunctive programs" $C[x,y]$ are such that $x=y$ and all "atoms" in $C$ are of the form $\pi(x,x)$, for $\pi$ a $\CPDLg{\Graphloop}$-"program".

\knowledgenewrobustcmd{\trTwo}{\cmdkl{\tr_2}}
Consider the following translation $\AP\intro*\trTwo$ from $\CPDLg{\Graphloop}$-"formulas" to $\loopCPDL$-"formulas" and from $\CPDLg{\Graphloop}$-"programs" to $\loopCPDL$-"programs": take all clauses of $\trOne$ above
except \eqref{eqn:translation_loop} replacing $\trOne$ by $\trTwo$ and adding
\begin{align*}
\trTwo(C[x,x])&=\left(\bigwedge_{\pi(x,x)\in C}\lo({\trTwo(\pi)})\right)? 
\end{align*}

\begin{proposition}\AP
$\CPDLg{\Graphloop}\lleq[\Kripke]\loopCPDL$ via $\trTwo$.
\end{proposition}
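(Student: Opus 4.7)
The proof proceeds by a straightforward mutual induction on the structure of $\CPDLg{\Graphloop}$-"expressions", showing simultaneously that for every $\CPDLg{\Graphloop}$-"formula" $\phi$ we have $\phi \semequiv \trTwo(\phi)$, and for every $\CPDLg{\Graphloop}$-"program" $\pi$ we have $\pi \semequiv \trTwo(\pi)$. All cases except the "conjunctive program" case are immediate by induction hypothesis, since on those constructs $\trTwo$ acts homomorphically (exactly as $\trOne$ did) and the corresponding semantic clauses are defined compositionally in terms of $\dbracket{\cdot}_K$. In particular, clauses for $\epsilon$, $a$, $\bar a$, $p$, $\lnot$, $\land$, $\tup{\cdot}$, $?$, $\cup$, $\circ$, and $(\cdot)^*$ pose no difficulty.

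The heart of the proof is the "conjunctive program" clause. Fix a "Kripke structure" $K$ and a "conjunctive program" $C[x,x] \in \CPDLg{\Graphloop}$. By definition of $\Graphloop$, every "atom" in $C$ is a "p-atom" of the form $\pi(x,x)$, and $\vars(C) = \set x$. Thus a function $f : \set x \to \dom K$ is a "$C$-satisfying assignment" if{f} $(f(x), f(x)) \in \dbracket{\pi}_K$ for every $\pi(x,x) \in C$. Consequently,
\[
    \dbracket{C[x,x]}_K \;=\; \set{(w,w) \mid \forall \pi(x,x)\in C.\ (w,w) \in \dbracket{\pi}_K}.
\]
By induction hypothesis $\dbracket{\pi}_K = \dbracket{\trTwo(\pi)}_K$, so $(w,w) \in \dbracket{\pi}_K$ if{f} $w \in \dbracket{\lo(\trTwo(\pi))}_K$ by the semantic clause of $\lo$. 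Therefore
\[
    \dbracket{C[x,x]}_K \;=\; \set{(w,w) \mid w \in \dbracket{\bigwedge_{\pi(x,x)\in C} \lo(\trTwo(\pi))}_K} \;=\; \dbracket{\trTwo(C[x,x])}_K,
\]
using the semantic clause for "test programs" in the last equality. This closes the inductive step.

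Since the translation $\trTwo$ is defined by structural recursion and each clause adds only constant-size syntactic material, it is immediate that $\trTwo$ is computable in polynomial time. The only mildly subtle point is that $\trTwo$ must be defined by mutual recursion between "formulas" and "programs" (because of $\phi?$ inside "programs" and $\tup{\pi}$ inside "formulas"), but this is already reflected in the definition and does not pose a real obstacle; there is no need for any case analysis on the shape of $C$ beyond noting that the $\Graphloop$-restriction forces $x_s = x_t$ and all "atoms" to be self-loops on the unique "variable".
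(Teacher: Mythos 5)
Your proof is correct and follows exactly the argument the paper leaves implicit (the proposition is stated without proof there): a mutual structural induction in which every clause except the "conjunctive program" one is homomorphic, and the $C[x,x]$ case reduces to unfolding the semantics of "$C$-satisfying assignments" over the single variable $x$ against the semantics of $\lo$ and "test programs". The observation that the $\Graphloop$ restriction forces $x_s=x_t$ and all "atoms" to be self-loops on one "variable" is precisely the point the paper records just before defining $\trTwo$, so nothing is missing.
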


\subsection{\texorpdfstring{$\ICPDL$}{ICPDL}, \texorpdfstring{$\CPDLg{\Tw[1]}$}{CPDL⁺(TW₁)}, and \texorpdfstring{$\CPDLg{\Tw[2]}$}{CPDL⁺(TW₂)} are Equi-expressive}
\label{ssec:ICPDL-CPDL-equivalence}

As a consequence of \Cref{rk:intersecion_does_not_add_expressive_power}, the next \Cref{prop:translation_from_ICPDL,prop:translation_from_TW2} and the fact that $\Graphcap\subseteq\Tw[1]$, we obtain the following corollary:
\begin{thm}\label{thm:ICPDL_equals_TW1_equals_TW2}
$\ICPDL\langsemequiv[\Kripke]\CPDLg{\Tw[1]}\langsemequiv[\Kripke]\CPDLg{\Tw[2]}$ via polynomial time translations.
\end{thm}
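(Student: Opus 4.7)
The plan is to establish the theorem by assembling a cycle of polynomial-time translations witnessing $\ICPDL\lleq[\Kripke]\CPDLg{\Tw[1]}\lleq[\Kripke]\CPDLg{\Tw[2]}\lleq[\Kripke]\ICPDL$. Three of these four reductions are essentially immediate, so the real content resides in the last one. Since we already have $\Graphcap\subseteq\Tw[1]\subseteq\Tw[2]$, the inclusions $\CPDLg{\Graphcap}\lleq[\Kripke]\CPDLg{\Tw[1]}\lleq[\Kripke]\CPDLg{\Tw[2]}$ are trivial (the identity translation works and is clearly polynomial).

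For $\ICPDL\lleq[\Kripke]\CPDLg{\Graphcap}$, I would invoke \Cref{rk:intersecion_does_not_add_expressive_power} applied to the class $\Graphcap$, which consists of a single edge between two distinct vertices: since $\Graphcap$ contains this graph, that remark yields $\ICPDLg{\Graphcap}\langsemequiv[\Kripke]\CPDLg{\Graphcap}$. Any $\ICPDL$ expression is already an $\ICPDLg{\Graphcap}$ expression (it contains no "conjunctive programs" at all), and the translation in the remark, which replaces any $\pi_1\cap\pi_2$ by $\{\pi_1(x,y),\pi_2(x,y)\}[x,y]$, is clearly polynomial.

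The main obstacle — and what will be carried out in the forthcoming \Cref{prop:translation_from_TW2} — is the translation $\CPDLg{\Tw[2]}\lleq[\Kripke]\ICPDL$. Given a "conjunctive program" $C[x_s,x_t]$ with $\uGraph{C[x_s,x_t]}\in\Tw[2]$, I would fix a "tree decomposition" $(T,\bagmap)$ of width at most $2$ for $\uGraph{C[x_s,x_t]}$, rooted at a "bag" containing both $x_s$ and $x_t$ (which is possible since $\{x_s,x_t\}$ is an edge in the underlying graph). I would then process $T$ bottom-up: for each bag $B$ with parent bag $B'$, the interface $B\cap B'$ has at most two elements, so the subtree rooted at $B$ (together with the restriction of $C$ to variables appearing in that subtree) determines a binary relation on that interface, which is exactly the kind of object that an $\ICPDL$ program can denote. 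The induction step combines the children's $\ICPDL$ programs with the atoms inside the current bag; the presence of a third (non-interface) vertex is handled by composing the converse of an $\ICPDL$ program with another and intersecting along the shared endpoints — this is the step that truly uses the intersection operator.

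The tricky point will be keeping the translation polynomial: a naive bottom-up recursion may duplicate subprograms when a bag is used by several siblings. I would control this either by using a "good" tree decomposition so that each bag introduces one child at a time, or by sharing common subprograms through intermediate "formulas" (which $\ICPDL$ handles via test programs). Since $\Tw[2]$-graphs have tree decompositions of size linear in the graph and each step of the recursion only introduces a constant blow-up driven by the size of a bag, the resulting $\ICPDL$ expression has size polynomial in $\pdlsize{C[x_s,x_t]}$. For arbitrary $\CPDLg{\Tw[2]}$ expressions, I apply this construction recursively to every "conjunctive program" occurring as a subexpression, producing an $\ICPDL$ expression in overall polynomial time.
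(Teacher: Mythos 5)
Your proposal is correct and follows essentially the same route as the paper: the trivial inclusions from $\Graphcap\subseteq\Tw[1]\subseteq\Tw[2]$, the remark-based elimination of $\cap$ for $\ICPDL\lleq[\Kripke]\CPDLg{\Graphcap}$, and a bottom-up traversal of a width-$2$ tree decomposition that collapses each leaf bag into an $\ICPDL$ program on its $\leq 2$-element interface, using intersection together with composition and converse to absorb the third bag vertex (the paper's \Cref{lemita}). The only detail you gloss over that the paper handles explicitly is that one must first saturate each bag into a clique of the underlying graph (\Cref{lem:bags_are_cliques}, by adding harmless $((\bigcup_a a)\cup(\bigcup_a \bar a))^*$ atoms) so that every pair of bag variables is linked by some atom and the per-bag $\ICPDL$ program is well defined; your polynomiality analysis is otherwise in line with the paper's.
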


\paragraph{Translation from $\ICPDL$ to $\CPDLg{\Graphcap}$}
\knowledgenewrobustcmd{\trThree}{\cmdkl{\tr_3}}
Consider the following translation $\AP\intro*\trThree$ from $\ICPDL$-"formulas" to $\CPDLg{\Graphcap}$-"formulas" and from $\ICPDL$-"programs" to $\CPDLg{\Graphcap}$-"programs": take all clauses of $\trOne$ from \S\ref{sec:from_loopCPDL} replacing $\trOne$ by $\trThree$ except \eqref{eqn:translation_loop} and adding
\begin{align*}
\trThree(\pi_1\cap\pi_2)&=\{\trThree(\pi_1)(x,y),\trThree(\pi_2)(x,y)\}[x,y].
\end{align*}
Observe that the "underlying graph" of any  $C[x,y]\in\subexpr(\trThree(\pi))$ is in $\Graphcap$, and hence $\trThree(\pi)$ is in $\CPDLg{\Graphcap}$.
\color{black} %
\begin{proposition}\AP\label{prop:translation_from_ICPDL}
$\ICPDL\lleq[\Kripke]\CPDLg{\Graphcap}$ via $\trThree$.
\end{proposition}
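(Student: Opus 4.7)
The plan is to prove the proposition by a straightforward mutual induction on the structure of the $\ICPDL$ "expression" $e$, showing that for every "Kripke structure" $K$, $\dbracket{\trThree(e)}_K = \dbracket{e}_K$ (as a subset of $\dom K$ if $e$ is a "formula", and as a subset of $\dom K \times \dom K$ if $e$ is a "program"). The fact that $\trThree(e)$ actually lies in $\CPDLg{\Graphcap}$ has already been observed inline, since the only "conjunctive program" introduced by the translation is $\{\trThree(\pi_1)(x,y),\trThree(\pi_2)(x,y)\}[x,y]$, whose "underlying graph" consists of two "vertices" $x,y$ with one "edge", which belongs to $\Graphcap$.

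For the base cases ($e \in \Prop \cup \Prog \cup \{\epsilon\}$ and $e = \bar a$) the translation $\trThree$ is the identity, so the equality holds by definition. The inductive cases for the Boolean connectives ($\lnot$, $\land$), the modality $\tup{\cdot}$, the "test programs" $\phi?$, the program operators $\cup$, $\circ$, and $^*$ are immediate from the corresponding semantic clauses and the inductive hypothesis; each of these operators is homomorphically preserved by $\trThree$.

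The only nontrivial case is the intersection $\pi_1 \cap \pi_2$. Here $\trThree(\pi_1 \cap \pi_2) = \{\trThree(\pi_1)(x,y), \trThree(\pi_2)(x,y)\}[x,y]$. By the semantics of "conjunctive programs", $(u,v) \in \dbracket{\trThree(\pi_1 \cap \pi_2)}_K$ iff there exists a "$C$-satisfying assignment" $f\colon\{x,y\}\to\dom K$ with $f(x)=u$ and $f(y)=v$, which happens iff $(u,v)\in\dbracket{\trThree(\pi_1)}_K$ and $(u,v)\in\dbracket{\trThree(\pi_2)}_K$. By the inductive hypothesis this is equivalent to $(u,v)\in\dbracket{\pi_1}_K \cap \dbracket{\pi_2}_K = \dbracket{\pi_1\cap\pi_2}_K$, as required.

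No serious obstacle arises: the translation is essentially the identity except at $\cap$, and the definition of "conjunctive program" with two "variables" and two "p-atoms" on the same pair $(x,y)$ directly matches the semantics of "program intersection". The only point worth checking carefully is that the "underlying graph" condition $\uGraph{C[x,y]} \in \Graphcap$ is met (it is, trivially) and that the connectedness of $\uGraphC{C}$ holds (it does, since the two "p-atoms" share both "variables").
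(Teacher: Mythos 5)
Your proof is correct and matches the paper's intent exactly: the paper gives no explicit proof for this proposition beyond defining $\trThree$ and observing that the "underlying graph" of each introduced "conjunctive program" lies in $\Graphcap$, treating the semantic equivalence as immediate. Your structural induction, with the only substantive case being $\pi_1\cap\pi_2$ where the two-variable, two-atom "conjunctive program" semantics visibly coincides with "program intersection", is precisely the routine argument being left implicit.
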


\paragraph{Translation from $\twtcpdl$ to $\ICPDL$}\label{sec:to_ICPDL}
\AP
We start with a technical result. For a graph $G$ and $B\subseteq V(G)$, we say that $B$ ""is a clique in"" $G$ if 
$\set{\set{x,y}\mid x,y\in B,x\neq y}\subseteq E(G)$.
The following lemma 
states that when dealing with $C'[x,y]$ in $\CPDLg{\Tw[n]}$, one may suppose that $\uGraph{C[x_1,x_2]}$ has a "tree decomposition" of "tree-width" $\leq n$ with the additional property that each bag "is a clique in" $\uGraph{C[x_1,x_2]}$. 
\begin{lemma}\AP\label{lem:bags_are_cliques}
Given a $\CPDLg{\Tw[n]}$-"conjunctive program" $C'[x,y]$ one can compute in polynomial time a $\CPDLg{\Tw[n]}$-"conjunctive program" $C[x,y]\semequiv C'[x,y]$ with $\vars(C)=\vars(C')$, and a "tree decomposition" $(T, \bagmap)$ of "tree-width" $\leq n$ of $\uGraph{C[x,y]}$ such that for any $b\in V(T)$, $\bagmap(b)$ "is a clique in" $\uGraph{C[x,y]}$.
\end{lemma}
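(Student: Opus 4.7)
My plan is to compute a "tree decomposition" of $\uGraph{C'[x,y]}$ and then turn each "bag" into a clique by adjoining semantically redundant "p-atoms". First, since $\uGraph{C'[x,y]} \in \Tw[n]$ and $n$ is fixed, compute a "tree decomposition" $(T,\bagmap)$ of "width" $\leq n$ in polynomial time (e.g., via Bodlaender's algorithm). Second, for every $b \in V(T)$ and every pair of distinct variables $u,v \in \bagmap(b)$ with $\{u,v\} \notin E(\uGraph{C'[x,y]})$, adjoin the "p-atom" $C'[u,v](u,v)$ (the same set of atoms $C'$ with $u,v$ as designated endpoints). Let $C$ be the resulting set of "atoms"; note that $\vars(C) = \vars(C')$ since the new atoms mention only variables already occurring in $C'$.

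For the "equivalence" $C[x,y] \semequiv C'[x,y]$, the inclusion $\dbracket{C[x,y]}_K \subseteq \dbracket{C'[x,y]}_K$ is immediate from $C \supseteq C'$. For the converse, if $f$ is a "$C'$-satisfying assignment" with $f(x)=w_s$ and $f(y)=w_t$, then $f$ also satisfies every new "atom" $C'[u,v](u,v)$: indeed $f$ itself witnesses $(f(u),f(v)) \in \dbracket{C'[u,v]}_K$, since it is a "$C'$-satisfying assignment" mapping $u \mapsto f(u)$ and $v \mapsto f(v)$. Hence $(w_s,w_t) \in \dbracket{C[x,y]}_K$.

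For the tree-decomposition requirements: conditions "A@@treedec" and "C@@treedec" of $(T,\bagmap)$ are unaffected in passing from $\uGraph{C'[x,y]}$ to $\uGraph{C[x,y]}$, while condition "B@@treedec" still holds because every new "edge" $\{u,v\}$ of $\uGraph{C[x,y]}$ is witnessed by the very bag $b$ that triggered its addition. By construction every bag "is a clique in" $\uGraph{C[x,y]}$. It remains to check that each introduced "p-atom" $C'[u,v](u,v)$ is a legitimate $\CPDLg{\Tw[n]}$ "conjunctive program", \ie, $\uGraph{C'[u,v]} \in \Tw[n]$: this "graph" differs from $\uGraph{C'[x,y]}$ by at most swapping the edge $\{x,y\}$ for $\{u,v\}$, and since $u,v$ both belong to the bag $b$ of $(T,\bagmap)$, the same decomposition still witnesses "tree-width" $\leq n$.

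Finally, the transformation is polynomial: one adds $O(n^2)$ atoms per bag and $O(|\vars(C')|)$ bags, each of size bounded by $\pdlsize{C'[x,y]}$, so the output is of polynomial size and is constructed in polynomial time. The main subtlety --- and the place where care is needed --- is that polynomial-time computation of a "tree decomposition" of "width" $\leq n$ presupposes that $n$ is a fixed parameter, so that Bodlaender's algorithm applies; this is the standard setting in which the lemma will be invoked (notably for the translation from $\twtcpdl$ to $\ICPDL$ via the $n=2$ case).
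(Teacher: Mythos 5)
Your proof is correct and follows the same basic plan as the paper's: compute a tree decomposition of $\uGraph{C'[x,y]}$ of width $\leq n$ and then adjoin, for every non-adjacent pair of variables inside a bag, a semantically redundant atom so that the bag becomes a clique; the decomposition itself is untouched and each new edge is covered by the very bag that created it. The only real difference is the choice of padding atom. You use $C'[u,v](u,v)$, a full copy of $C'$ with re-designated endpoints, whose redundancy is immediate (any $C'$-satisfying assignment witnesses it directly) and whose membership in $\CPDLg{\Tw[n]}$ you correctly justify via the same decomposition. The paper instead pads with $((\bigcup_{a\in A} a) \cup (\bigcup_{a\in A}\bar a))^*(z_1,z_2)$, where $A$ is the set of atomic programs occurring in $C'$; the redundancy of that atom is less self-evident --- it relies on the requirement that $\uGraphC{C'}$ be connected, which forces every satisfying assignment to map all of $\vars(C')$ into a single $A$-connected component --- but each padding atom then has size $O(|A|)$ rather than $\Theta(\pdlsize{C'[x,y]})$. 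This matters downstream: your construction multiplies the size by a factor of order $|\vars(C')|$, which is harmless for a single application (so the lemma as stated is fully proved), but it would compound to a superpolynomial blowup when the lemma is invoked once per nesting level in the recursive translation of \Cref{prop:translation_from_TW2}, whereas the paper's additive, small padding keeps that whole translation polynomial. Your closing caveat that polynomial-time computation of the decomposition presupposes fixed $n$ matches how the paper uses the lemma.
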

\begin{proof}
Let $(T, \bagmap)$ be a "tree decomposition" of "tree-width" at most $n$ of $\uGraph{C'[x_1,x_2]}$. Let $A = \Prog\cap\subexpr(C'[x,y])$ be the set of "atomic programs" in $C'$ or its "subexpressions". Define $C$ as $C'\cup D$, where 
\[
D=\{((\cup_{a \in A} a) \cup (\cup_{a \in A} \bar a))^*(z_1,z_2)\mid b\in V(T), z_1,z_2\in\bagmap(b),z_1\neq z_2\}.
\]
Notice that $(T, \bagmap)$ is a "tree decomposition" of $\uGraph{D}$ where every bag is a clique in $\uGraph{D}$. Since $\uGraphC{C'}$ is connected and since the semantics of $\CPDLp$ are compositional and hence indifferent for "atoms" not occurring in the expression, then $C[x,y]\semequiv C'[x,y]$ and it is clear that $\vars(D)\subseteq\vars(C')$. Furthermore, $(T, \bagmap)$ is also a "tree decomposition" of $\uGraph{C[x,y]}$. Since $D$ has at most $|\vars(C')|^2$ "atoms" and computing a "tree decomposition" of "tree-width" at most $n$ can be done in linear time \cite{bodlaender1993linear}, then the whole construction remains in polynomial time.
\end{proof}

The next lemma is the key ingredient of the translation from $\twtcpdl$ to $\ICPDL$: 
\begin{lemma}\AP\label{lemita}
Let $C$ be a finite set of "atoms" of the form $\pi(z_1,z_2)$, 
where $\pi$ is an \ICPDL-"program" and $z_1,z_2 \in \Vars$, 
$\uGraphC{C}$ is a clique, and $|\vars(C)|\leq 3$. Then:
\begin{enumerate}
\item For any $x,y\in\vars(C)$, $x\neq y$, there is an $\ICPDL$-"program" $\pi_{C[x,y]}$ such that $\pi_{C[x,y]}\semequiv C[x,y]$.

\item For any $x\in\vars(C)$ there is an $\ICPDL$-"program" $\pi_{C[x,x]}$ such that $\pi_{C[x,x]}\semequiv C[x,x]$.
\end{enumerate}
Further, these translations are in polynomial time.
\end{lemma}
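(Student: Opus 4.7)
The plan is to proceed by case analysis on $|\vars(C)|$, systematically translating the pairwise and self-loop constraints into $\ICPDL$ via "program intersection" and "test programs". As a first preparatory step, for each ordered pair $(a,b)\in\vars(C)^2$ I collect the atoms of $C$ whose variables lie in $\{a,b\}$ and define an $\ICPDL$ "program"
\[
P_{ab} \;\eqdef\; \bigcap_{\pi(a,b)\in C}\pi \;\cap\; \bigcap_{\pi(b,a)\in C,\; a\neq b}\reverseof{\pi},
\]
designed so that $(u,v)\in\dbracket{P_{ab}}_K$ iff every such atom of $C$ is satisfied under the assignment $a\mapsto u,\,b\mapsto v$. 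For each $a\in\vars(C)$, I also encode the self-loop constraints at $a$ as a "formula"
\[
\phi_a \;\eqdef\; \bigwedge_{\pi(a,a)\in C}\tup{\pi\cap\epsilon},
\]
which holds at $u$ iff $(u,u)\in\dbracket{\pi}_K$ for every $\pi(a,a)\in C$, and is taken as $\top$ if the conjunction is empty.

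With these building blocks, I split into cases. When $|\vars(C)|=1$ with $\vars(C)=\{x\}$, only $C[x,x]$ is meaningful and $\pi_{C[x,x]}\eqdef P_{xx}\cap\epsilon$ works. When $|\vars(C)|=2$ with $\vars(C)=\{x,y\}$, I set $\pi_{C[x,y]}\eqdef \phi_x?\circ P_{xy}\circ\phi_y?$ and $\pi_{C[x,x]}\eqdef(\phi_x\land\tup{P_{xy}\circ\phi_y?})?$, using the clique hypothesis to ensure $P_{xy}$ is a non-empty intersection. The substantive case is $|\vars(C)|=3$, say $\vars(C)=\{x,y,z\}$; here the clique hypothesis guarantees that each of $P_{xy}$, $P_{xz}$, $P_{zy}$ is non-empty. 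I then define
\[
\pi_{C[x,y]} \;\eqdef\; \phi_x?\circ\bigl(P_{xy}\cap(P_{xz}\circ\phi_z?\circ P_{zy})\bigr)\circ\phi_y?,
\]
and
\[
\pi_{C[x,x]} \;\eqdef\; \bigl(\phi_x\land\tup{(P_{xy}\cap(P_{xz}\circ\phi_z?\circ P_{zy}))\circ\phi_y?}\bigr)?.
\]

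The main technical point — and the step I expect to require the most care — is showing that the $3$-variable construction faithfully captures the existential quantification over the assignment of the middle variable. The intuition is that the subprogram $P_{xz}\circ\phi_z?\circ P_{zy}$ quantifies existentially over a witness $w$ for $z$ while enforcing the $\{x,z\}$-, the $z$-self-loop, and the $\{z,y\}$-constraints; intersecting with the direct program $P_{xy}$ then pins down a pair of endpoints $(u,v)$ that simultaneously satisfies the $\{x,y\}$-constraints, and the outer $\phi_x?$, $\phi_y?$ enforce the remaining self-loops. Unfolding $\dbracket{\cdot}_K$ and matching conditions with the definition of "$C$-satisfying assignment" yields $\pi_{C[x,y]}\semequiv C[x,y]$, and symmetrically for $C[x,x]$. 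Finally, polynomial time is immediate: each $P_{ab}$ has size linear in the sum of the sizes of atoms in $C$ (noting that $\reverseof{\pi}$ can be built in linear time from $\pi$), each $\phi_a$ likewise, and the outer combinators add only constant overhead.
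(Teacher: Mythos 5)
Your proposal is correct and follows essentially the same route as the paper: the same per-pair intersection programs (your $P_{ab}$ are the paper's $\Pi_{z_1z_2}$, using $\reverseof{\pi}$ for reversed atoms), the same self-loop tests (your $\phi_a?$ is the paper's $\Pi_a$), and the identical three-variable construction $\Pi_x\circ(\Pi_{xy}\cap(\Pi_{xz}\circ\Pi_z\circ\Pi_{zy}))\circ\Pi_y$ with the diagonal case obtained by wrapping it in a test. The only cosmetic differences are that the paper defines $\pi_{C[x,x]}$ as $\tup{\pi_{C[x,y]}}?$ rather than inlining the equivalent formula, and handles empty conjunctions via $\tup{\epsilon}$.
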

\begin{proof}
    For $z_1,z_2\in\vars(C)$, $z_1\neq z_2$, let
    \begin{align*}
    \Pi_{z_1z_2}&=\left(\bigcap_{\pi(z_1,z_2)\in C}\pi\right) \cap \left(\bigcap_{\pi(z_2,z_1)\in C} \reverseof{\pi}\right)\\
    \Pi_{z_1}&=\left(\bigwedge_{\pi(z_1,z_1)\in C} \tup{\pi\cap\epsilon}\right)?
    \end{align*}
    where a conjunction with empty range is defined by $\tup{\epsilon}$.
    
    For $x\neq y$ define the $\ICPDL$-"programs" $\pi_{C[x,y]}$ and $\pi_{C[x,x]}$ as follows:
    \begin{itemize}
    \item If $\vars(C)=\{x,y,z\}$ with $x\neq z \neq y$ then
    \begin{align*}
    \pi_{C[x,y]}&=\Pi_x\circ( \Pi_{xy}\cap(\Pi_{xz}\circ\Pi_z\circ\Pi_{zy}))\circ\Pi_y\\
    \pi_{C[x,x]}&=\tup{\pi_{C[x,y]}}?
    \end{align*}
    
    \item If $\vars(C)=\{x,y\}$ then
    \begin{align*}
    \pi_{C[x,y]}&=\Pi_x\circ \Pi_{xy} \circ\Pi_y\\
    \pi_{C[x,x]}&=\tup{\pi_{C[x,y]}}?
    \end{align*}
    
    \item If $\vars(C)=\{x\}$ then
    \begin{align*}
    \pi_{C[x,x]}&=\Pi_x
    \end{align*}
    \end{itemize}
    It can be shown that $\pi_{C[x,y]}\semequiv C[x,y]$ and $\pi_{C[x,x]}\semequiv C[x,x]$.
    \end{proof}

Figure \ref{fig:lemita} illustrates an example of $\pi_{C[x,y]}$ and of $\pi_{C[x,x]}$ in Lemma \ref{lemita}. The general case is more involved, as $C$ may contain -- as illustrated in Figure \ref{fig:example_ugraph}.A -- several "atoms" $\pi_1(z_1,z_2),\dots\pi_n(z_1,z_2)$ for the same variables $z_1,z_2$ and also contain "atoms" of the form $\pi(z_1,z_1)$ not shown in the example. Furthermore, one would sometimes need to `reverse' the direction of the "programs" to obtain the desired $\ICPDL$-"program". 

Using \Cref{lemita}, one can show the translation which, for technical reasons, as it simplifies the proof, translates $\twticpdl$ "expressions" (instead of $\twtcpdl$) to $\ICPDL$.
\begin{figure}
\ifarxiv
    \includegraphics[scale=0.25]{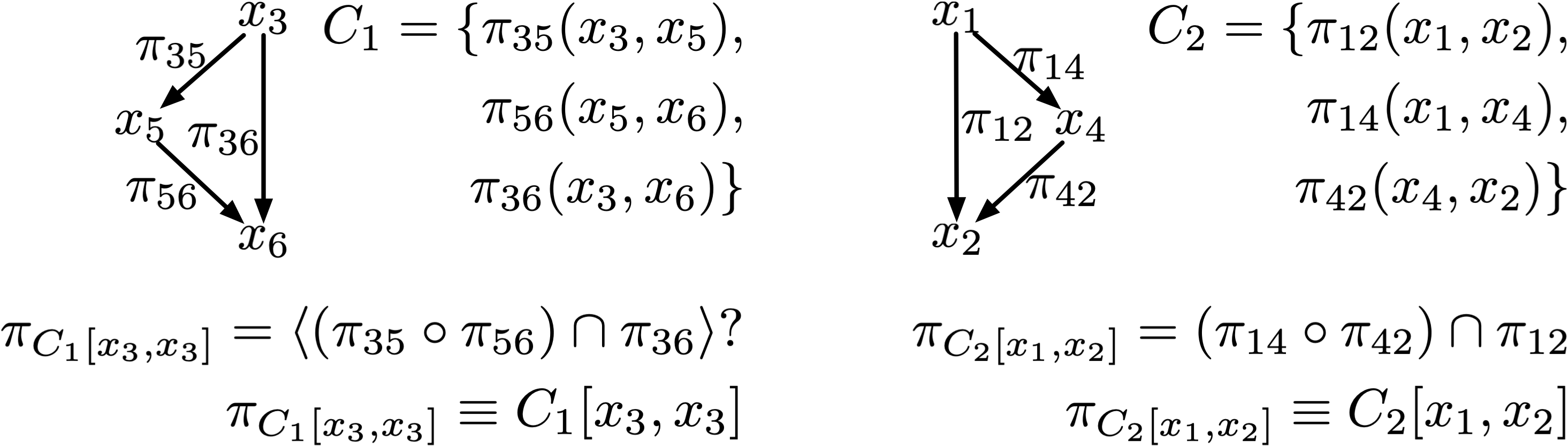}
\else
    \includegraphics[width=.47\textwidth]{img/lemita.png}
\fi
\caption{An example of translation of a "conjunctive program" made up of only $\ICPDL$ "atoms" and 3 variables to \ICPDL. The name of the $\ICPDL$-"programs" $\pi_{C[x_3,x_3]}$ and $\pi_{C[x_1,x_2]}$ are the ones used in Lemma \ref{lemita}.}
\label{fig:lemita}
\end{figure}

\begin{figure}
\ifarxiv
    \includegraphics[scale=0.25]{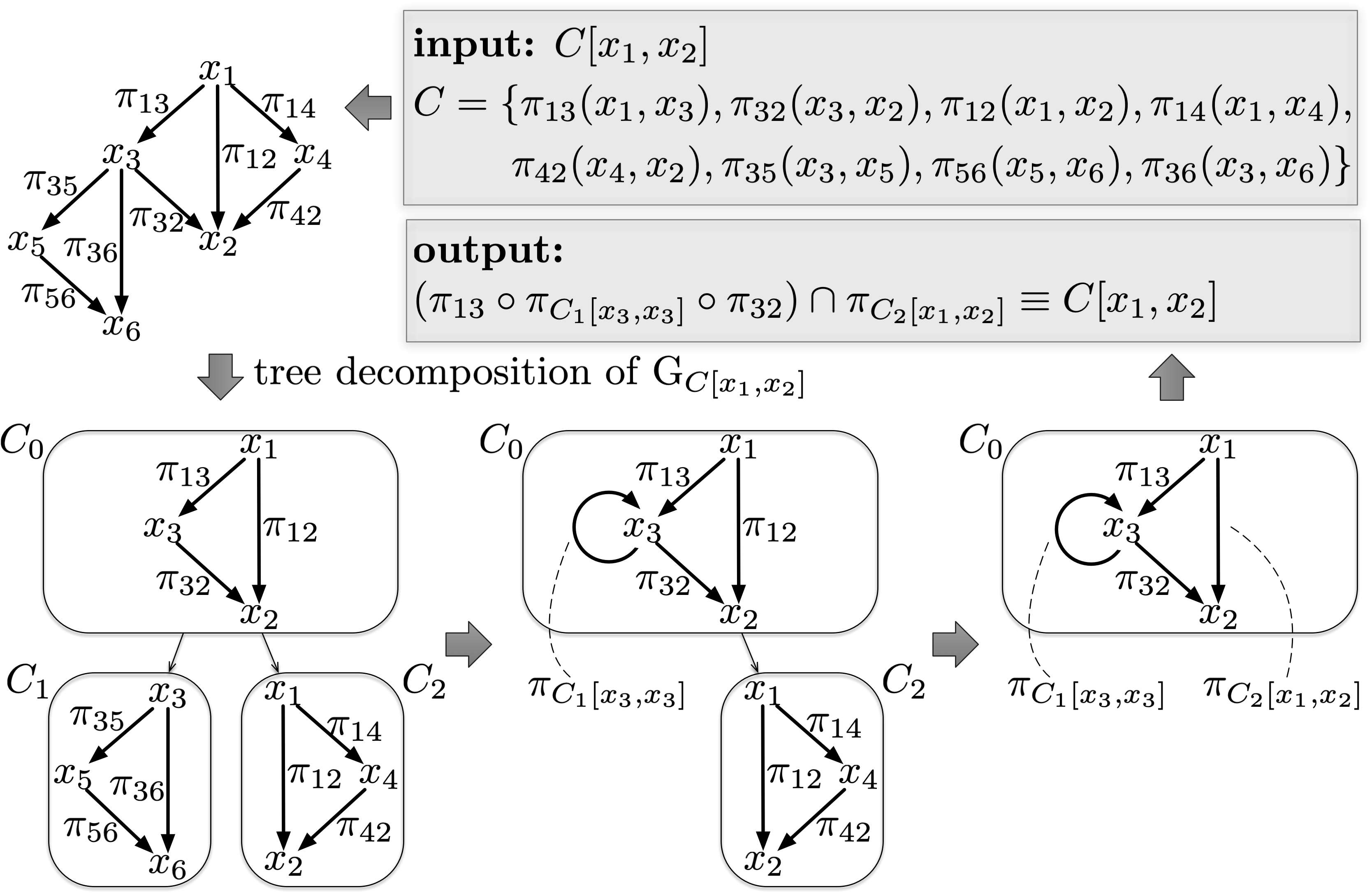}
\else
    \includegraphics[scale=0.20]{img/example_treedecomp.png}
\fi
\caption{A "conjunctive program" $C[x_1,x_2]$ made up of only $\ICPDL$ "programs" and the steps of the algorithm to construct a semantically equivalent $\ICPDL$ formula from a "tree decomposition" of "tree-width" 2 of $\uGraph{C[x_1,x_2]}$. Rounded boxes represent the bags of such "tree decomposition" where variables $x_1,\dots,x_6$ are grouped (additional information such as edges and labels is not part of the "tree decomposition"). See Figure~\ref{fig:lemita} for the definition of $\pi_{C[x_3,x_3]}$ and $\pi_{C[x_1,x_2]}$.}
\label{fig:example_treedecomp}
\end{figure}

\begin{proposition}\AP\label{prop:translation_from_TW2}
$\twticpdl\lleq[\Kripke]\ICPDL$ via a polynomial time translation.
\end{proposition}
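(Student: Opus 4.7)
My plan is to prove this by structural induction on $\twticpdl$-expressions. All cases other than a "conjunctive program" $C[x_s,x_t]$ with $\uGraph{C[x_s,x_t]}\in\Tw[2]$ translate compositionally via the inductive hypothesis combined with the corresponding $\ICPDL$ constructor (negation, conjunction, $\tup{\cdot}$, union, composition, Kleene star, test, intersection, $\epsilon$, and the (converses of) atomic programs). So the meat of the argument is translating a single such "conjunctive program" $C[x_s,x_t]$, where I may inductively assume that every $\pi$ appearing in a "p-atom" $\pi(y,z)\in C$ is already an $\ICPDL$-"program".

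I then apply \Cref{lem:bags_are_cliques} to obtain an equivalent $C[x_s,x_t]$ together with a "tree decomposition" $(T,\bagmap)$ of "tree-width" at most $2$ in which every "bag" is a clique of $\uGraph{C[x_s,x_t]}$. Since $\{x_s,x_t\}\in E(\uGraph{C[x_s,x_t]})$, Property B of tree decompositions ensures that some "bag" $b_0$ contains both $x_s$ and $x_t$, and I root $T$ at $b_0$. Because bags are now cliques of size at most $3$, we are exactly in the regime where \Cref{lemita} applies. I compile $T$ bottom-up, maintaining the following invariant: for each "bag" $b$, I produce a set $C_b$ of $\ICPDL$-"atoms" with variables in $\vars(b)$ such that, for every $y,z\in\vars(b)$, $C_b[y,z]$ is semantically equivalent to the existential projection onto $\{y,z\}$ of all "atoms" of $C$ whose variables lie in the subtree of $T$ rooted at $b$. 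For a leaf $b$, $C_b$ is simply the restriction of $C$ to "atoms" on $\vars(b)$. For an internal "bag" $b$ with child $b'$, let $S\eqdef\vars(b')\cap\vars(b)$; by connectedness of $\uGraph{C[x_s,x_t]}$ together with the subtree property of tree decompositions, $|S|\in\{1,2\}$ (the degenerate case $\vars(b')=\vars(b)$ being absorbed trivially). Invoking \Cref{lemita} on $C_{b'}$ yields an $\ICPDL$-"program" $\pi_{b'}\semequiv C_{b'}[y,z]$ with $\{y,z\}=S$ (or $y=z$ if $|S|=1$), which I then add to $C_b$ as a new "p-atom" $\pi_{b'}(y,z)$. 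This step existentially abstracts away the variables in $\vars(b')\setminus S$ while preserving the projection onto $\vars(b)$.

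At the root $b_0$, the set $C_{b_0}$ consists of $\ICPDL$-"atoms" over at most three variables forming a clique, with $x_s,x_t\in\vars(b_0)$, so a final application of \Cref{lemita} yields the desired $\ICPDL$-"program" $\pi_{C[x_s,x_t]}\semequiv C_{b_0}[x_s,x_t]\semequiv C[x_s,x_t]$. Polynomial running time is maintained throughout: the "tree decomposition" is polynomial-time computable, the number of "bags" is linear in $|\vars(C)|$, and each application of \Cref{lemita} is polynomial. The main obstacle is justifying the bottom-up invariant---specifically, that folding a child subtree into its parent via \Cref{lemita} correctly realizes the existential projection onto the shared variables. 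This rests on the "connected subtree" property of tree decompositions: variables in $\vars(b')\setminus\vars(b)$ appear only in the subtree rooted at $b'$, so existentially abstracting them there is semantically sound and does not interfere with constraints originating from sibling subtrees, which allows the local applications of \Cref{lemita} to compose into a global translation equivalent to $C[x_s,x_t]$.
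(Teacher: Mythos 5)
Your proof is correct and follows essentially the same route as the paper: handle non-conjunctive constructors compositionally, use \Cref{lem:bags_are_cliques} to obtain a clique-bag "tree decomposition" rooted at a "bag" containing $x_s,x_t$, and collapse the tree via repeated applications of \Cref{lemita} on the shared variables between a "bag" and its parent. The paper phrases this as iteratively deleting leaf "bags" (with an explicit well-founded measure and a separate polynomial-time algorithm), whereas you phrase it as a bottom-up fold with a projection invariant, but these are the same algorithm and the same correctness argument.
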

\begin{proof}
    
    For an $\twticpdl$ "expression" $e$, 
    let: 
    \begin{itemize}
        \item $c_1(e)=\bigcup_{C[x,y]\in\subexpr(e)}\vars(C)$, 
        \item $c_2(e)$ be the set of expressions in $\subexpr(e)$ which are not $\ICPDL$-expressions, and let 
        \item $c(e)=(|c_1(e)|+|c_2(e)|,\pdlsize{e})$.
    \end{itemize}
     We define a translation $\tr(e)$ from $\twticpdl$-"formulas" to $\ICPDL$-"formulas" and from $\twticpdl$-"programs" to $\ICPDL$-"programs" by recursion. We guarantee that the base case of the definition is reached because in each recursive call the value of $c$ strictly decreases in the lexicographic order of $\Nat\times\Nat$. 
    \color{black}
    
    For the cases other than "conjunctive programs", $\tr$ is defined as $\trOne$ from \S\ref{sec:from_loopCPDL} replacing $\trOne$ by $\tr$, and $\star\in\{\cup,\circ\}$ by $\star\in\{\cup,\circ,\cap\}$. Suppose that $\tr(e)=\dots\tr(e')\dots$ according to one of these rules. If $c(e)=(m,n)$ then $c(e')$ is of the form $(m',n')$ with $m'\leq m$ and $n'<n$. Therefore $c(e')<c(e)$. 
    
    For the case of a "conjunctive program" $C[x,y]$, suppose
    \begin{align*}
    C&=\set{\pi_1[x_1,y_1],\dots,\pi_n[x_n,y_n]}.
    \end{align*}
    be such that $\uGraph{C[x,y]}\in\Tw[2]$.
    If all $\pi_i$ are $\ICPDL$-"programs" and $|\vars(C)|\leq3$, then $C$ satisfies the hypothesis of Lemma \ref{lemita}. In case $x=y$, there is an $\ICPDL$-"program" $\pi_{C[x,x]}\semequiv C[x,x]$ and we define $\tr(C[x,y])=\pi_{C[x,x]}$. In case $x\neq y$, there is an $\ICPDL$-"program" $\pi_{C[x,y]}\semequiv C[x,y]$ and we define $\tr(C[x,y])=\pi_{C[x,y]}$.
    
    Else there is some $\pi_i$ that is not an $\ICPDL$-"program" or $|\vars(C)|>3$. We define $\tr(C[x,y])=\tr(D[x,y])$ for a set of "atoms" $D$ to be defined next.
    
    \begin{enumerate}
    \item\label{trad:case:1} If there is $i$ such that $\pi_i$ is not an $\ICPDL$-"program", then $D=C\setminus\{\pi_i[x_i,y_i]\}\cup\{\tr(\pi_i)[x_i,y_i]\}$. It is straightforward that $\tr(C[x,y])\semequiv\tr(D[x,y])$. Observe that in this case 
    (1) $c_1(D[x,y])\subseteq c_1(C[x,y])$, since by construction we do not add new variables to $D$; and
    (2) $c_2(D[x,y])\subsetneq c_2(C[x,y])$ since in $D$ we removed at least one expression which was not an $\ICPDL$-"program" and we do not add new ones.
    Hence $c(D[x,y])<c(C[x,y])$.

    \item\label{trad:case:2} Else, all $\pi_i$ are already $\ICPDL$-"programs" and $|\vars(C)|>3$. Let $(T, \bagmap)$ be a "tree decomposition" of the "underlying graph" of $C$ with the root containing $x$ and $y$ (observe that this is possible by the fact that $\uGraph{C[x,y]}$ contains the "edge" $\{x,y\}$), and of width $\leq 2$ such that any leaf of $T$ is a bag of size at least 2 (and at most 3), and is not included in its parent bag. If $T$ is a single bag $B$, then we are done: in case $x=y$, define $\tr(C[x,y])=\pi_{C[x,x]}$ and otherwise define $\tr(C[x,y])=\pi_{C[x,y]}$.
    If $T$ has at least two bags, pick any leaf $B$ from $T$ and let $B'$ be the parent of $B$ in $T$. Let $C_B$ be the set of all "atoms" $A\in C$ such that $\vars(A)\subseteq B$. Observe that $C_B$ satisfies the hypothesis of Lemma \ref{lemita}.
    \begin{itemize}
        \item  Suppose that $B'\cap B=\{z_1\}$ and $B\supseteq\{z_1,z_2\}$ ($z_1\neq z_2$). We define $D$ as the set of all "atoms" in $C$ except those 
        in $C_B$
        plus the \ICPDL-"program" $\pi_{C_B[z_1,z_1]}\semequiv C_B[z_1,z_1]$ from Lemma \ref{lemita}. Notice that $z_2\in \vars(C)\setminus\vars(D)$.
    
        \item  Suppose that $B'\cap B=\{z_1,z_2\}$ ($z_1\neq z_2$) and $B=\{z_1,z_2,z_3\}$ ($z_1\neq z_3\neq z_2$). We define $D$ as the set of all "atoms" in $C$ except those 
        in $C_B$
        plus the \ICPDL-"program" $\pi_{C_B[z_1,z_2]} \semequiv C_B[z_1,z_2]$ from Lemma \ref{lemita}. Notice that $z_3\in \vars(C)\setminus\vars(D)$.
    \end{itemize}
    It is straightforward that $\tr(C[x,y])\semequiv\tr(D[x,y])$. Furthermore, observe that 
    $c_1(D[x,y])\subsetneq c_1(C[x,y])$, and
    $c_2(D[x,y])\subseteq c_2(C[x,y])$.
    Hence $c(D[x,y])<c(C[x,y])$.
    \end{enumerate}
    We are only left with the task of showing that the translation is in polynomial time, which we show next.

We first analyse the following algorithm \ref{algo_particular} emerging from case \ref{trad:case:2} 
that translates a $\twtcpdl$-"conjunctive program" of the form $C[x,y]$, where $C$ is a set of "atoms" of the form $\pi(z_1,z_2)$, for $\pi$ an \ICPDL-"program" and $z_1,z_2 \in \Vars$, into a semantically equivalent \ICPDL-"program".

\begin{algorithm}
\SetKwData{Left}{left}\SetKwData{This}{this}\SetKwData{Up}{up}
\SetKwFunction{Union}{Union}\SetKwFunction{FindCompress}{FindCompress}
\SetKwInOut{Input}{input}\SetKwInOut{Output}{output}
\Input{A $\twtcpdl$-"program" $C'[x,y]$, where $C'$ is a 
set of "atoms" of the form $\pi(z_1,z_2)$, 
for $\pi$ an \ICPDL-"program" and $z_1,z_2 \in \Vars$}
\Output{An \ICPDL-"program" $\pi$ such that $\pi\semequiv C'[x,y]$}
\BlankLine

Let $C[x,y]\semequiv C'[x,y]$ with $\vars(C)=\vars(C')$ and let $(T, \bagmap)$ be a "tree decomposition" of "tree-width" at most $2$ of $\uGraph{C[x,y]}$ such that any $b\in V(T)$, $\bagmap(b)$ "is a clique in" $\uGraph{C[x,y]}$ (Lemma \ref{lem:bags_are_cliques}).

\While{$T$ is not a single bag}
    {
    let $B$ be a leaf of $T$\; 

    let $B'$ be the parent of $B$ in $T$\;

    $C_B:=\{\pi(z_1,z_2)\in C\mid \{z_1,z_2\}\subseteq B\}$\;

    \If{$B'\cap B=\{z_1\}$ and $B\setminus B'\neq\emptyset$}
    {

    $C:=C\setminus C_B \cup\{\pi_{C_B[z_1,z_1]}\}$\;

    }
    \If{$B'\cap B=\{z_1,z_2\}$ ($z_1\neq z_2$) and $B=\{z_1,z_2,z_3\}$ ($z_1\neq z_3\neq z_2$)}
    {

    $C:=C\setminus C_B\cup\{\pi_{C_B[z_1,z_2]}\}$\;
    }
    Remove $B$ from $T$
    }
    {

\eIf{$x\neq y$}
    {\Return $\pi_{C[x,y]}$}
    {\Return $\pi_{C[x,x]}$}
}

\caption{Computing the translation from $\twticpdl$ to $\ICPDL$ in a special case}\label{algo_particular}
\end{algorithm}

Observe that the first line can be done in polynomial time by Lemma \ref{lem:bags_are_cliques}, and that $|C|=|C'|+O(|\vars(C')^2|)$.

For $\pi$ an \ICPDL-"program", let $c(\pi)$ be the syntactic complexity of $\pi$ and for a set $D$ of "atoms" of the form $\pi(z,z')$, where $\pi$ is an \ICPDL-"program", let $c(D)=\sum_{\pi(z,z')\in D}c(\pi)$.

The construction of the \ICPDL-"programs" $\pi_{C_B[z_1,z_1]}$ and $\pi_{C_B[z_1,z_2]}$ from Lemma \ref{lemita} can be done in time $O(c(C_B))$. Observe that the above algorithm removes from $C$ the set of "atoms" in $C_B$ but adds a new \ICPDL-"program". We check that the syntactic complexity of this "program" does not grow too much along the iterations of the cycle.
By inspecting the definition of $\pi_{C_B[z_1,z_1]}$ and $\pi_{C_B[z_1,z_2]}$ from Lemma \ref{lemita}, one can see that there is a constant $k$ such that for any $B$ and $C_B$ in any iteration of the cycle we have 
$
c(\pi_{C_B[z_1,z_1]}),c(\pi_{C_B[z_1,z_2]})\leq k|C_B|+c(C_B).
$
Hence the calculation of $\pi_{C_B[z_1,z_1]}$ and $\pi_{C_B[z_1,z_2]}$ can be done in time $O(k|C|+c(C))$; also the output $\pi$ of algorithm \ref{algo_particular} satisfies $c(\pi)\leq k|C|+c(C)$.

As mentioned in the proof of Lemma \ref{lem:bags_are_cliques}, the "tree decomposition" $T$ of "tree-width" at most $2$ can be computed in polynomial time and hence there are polynomially many bags in $T$. Since at each step of the cycle one bag of $T$ is removed, there are polynomially many iterations. Hence algorithm \ref{algo_particular} runs in polynomial time.

We next see how to use algorithm \ref{algo_particular} to compute the translation of any $\twtcpdl$-"program" into a semantically equivalent \ICPDL-"program". Given a $\twtcpdl$-"program" $\pi$ we proceed as follows. If there is no $C[x,y]\in\subexpr(\pi)$, then $\pi$ is already an \ICPDL-"program" and we are done. Else, pick $C[x,y]\in\subexpr(\pi)$ where $C$ is a set of \ICPDL-"programs". Apply algorithm \ref{algo_particular} to obtain an \ICPDL-"program" $\pi$ that is semantically equivalent to $C[x,y]$. Replace $C[x,y]$ by $\pi'$ in $\pi$ and repeat the procedure. This algorithm is clearly polynomial and computes the desired \ICPDL-"program".
\end{proof}

Figure \ref{fig:example_treedecomp} illustrates an example of a "conjunctive program" $C[x_1,x_2]$ 
where $C$ contains only "atoms" of the form $\pi(z_1,z_2)$ for $\pi$ an $\ICPDL$-"program". 
Given a "tree decomposition" $(T, \bagmap)$ of $\uGraph{C[x_1,x_2]}$ of "tree-width" at most 2 whose bags are all cliques in $\uGraph{C[x_1,x_2]}$ (this can be assumed without loss of generality by Lemma \ref{lem:bags_are_cliques}), one can successively remove its leaves until a single bag is obtained. Each time a leaf $B$ is removed, all "atoms" in $C$ using (only) variables from $B$ are removed, but -- thanks to Lemma \ref{lemita} -- an $\ICPDL$-"program" with the information of $B$ is added instead. The general scenario is again more complex, since $C$ may contain "programs" $\pi$ which are not $\ICPDL$-"programs" (namely, "conjunctive programs"). In this case, we need to recursively eliminate such $\pi$s starting with those containing only $\ICPDL$-"programs".

\section{Indistinguishability}\label{sec:caracterization}
In this section we characterize the expressive power of $\CPDLg{\Tw}$ via a restricted form of (bi)simulation using pebbles, in what resembles the $k$-pebble game for characterizing finite variable first-order fragments.

\subsection{Simulation Relation}\label{sec:simulation}
\AP
We will define the notion of "$k$-simulation" between pairs $(K,K')$ of 
"$\Rels$-structures"
via a two-player zero-sum graph game $\intro*\kSimGame$. 
\AP
The arena of the game has a set of positions $S \cup D$, %
where
\begin{align*}
    S &= \set{s} \times \kHoms(K,K')\\
    D &= \set{d_1,\dotsc, d_k} \times (\dom{K}^k \times \dom{K'}^k)
\end{align*}
where Spoiler owns all positions from $S$ and Duplicator all positions from $D$.
The set of moves of $\kSimGame$ is the smallest set satisfying the following:\footnote{For ease of notation we write $(s,\bar u,\bar v)$ instead of $(s,(\bar u,\bar v))$ and the same for $(d_i,\bar u,\bar v)$.}
\begin{enumerate}
    \item There is a move from $(s,\bar u, \bar v)$ to $(d_i,\bar u',\bar v)$ 
    if $\bar u' = \bar u[i \mapcoord w]$, where $w$ is a "world" from $K$ at "distance" $\leq 1$ from $\bar u[j]$, for some $1 \leq j \leq k$ with $i\neq j$; and 
    \item There is a move from $(d_i,\bar u',\bar v)$ to $(s,\bar u',\bar v')$ 
    if $\bar v' = \bar v[i \mapcoord w]$, where $w$ is a "world" from $K'$ at "distance" $\leq 1$ from $\bar v[j]$, for some $1 \leq j \leq k$ with $i\neq j$.
\end{enumerate}
The winning condition for Duplicator is just any infinite play, which is a form of ``Safety condition'', which implies (positional) determinacy of the game.\footnote{These games are often presented as having infinite duration. Note that Spoiler is never `stuck' as he can always play by putting in component 1 the "world" contained in component 2, for example. However, Duplicator can get stuck.
\newcommand{\spoilerWin}{(s,\textit{Win})}%
The infinite duration variant would be adding a new position `$\spoilerWin$' owned by Spoiler, and having moves from $\spoilerWin$ to $\spoilerWin$, and from every position of $D$ to $\spoilerWin$. In this way, the winning condition of Duplicator are all plays which avoid going through position $\spoilerWin$. Since it is a Safety (hence parity) condition, it follows that the game is positionally determined \cite{DBLP:conf/focs/EmersonJ91}.}
For more information on this kind of games we refer the reader to~\cite[Chapter~2]{gradel2003automata}.

Inspired by the pebble games \cite{immerman1982upper} and the existential pebble games \cite{DBLP:journals/jcss/KolaitisV95,DBLP:journals/jcss/KolaitisV00}, the intuition behind the game $\kSimGame$ is that Spoiler owns $k$ numbered pebbles placed over the elements of $K$ and Duplicator owns $k$ pebbles placed over the elements of $K'$. In each round, Spoiler can move one of her pebbles, provided that the destination position is at distance at most 1 of some other pebble owned by her in $K$. Duplicator answers in the same way with his corresponding pebble (with the same restriction of moving close to another pebble owned by him) but in structure $K'$.

The notion of ""$k$-$\Univ$-simulation"" is defined analogously via the game $\intro*\kSimGameE$, defined as $\kSimGame$ but with the following modifications:
The arena of the game has now a set of positions $S \cup D$, %
where $S$ is as before and
\begin{align*}
    D &= \set{d_0,d_1,\dotsc, d_k} \times (\dom{K}^k \times \dom{K'}^k)
\end{align*}
Besides rules (1) and (2) above, two more rules are added:
\begin{enumerate}
    \item[(3)] There is a move from $(s,\bar u, \bar v)$ to $(d_0,\bar u',\bar v)$ 
    if $\bar u'[1] = \dots = \bar u'[k]$; and 
    \item[(4)] There is a move from $(d_0,\bar u', \bar v)$ to $(s,\bar u',\bar v')$ (observe that since $\bar u'[1] = \dots = \bar u'[k]$, then $\bar v'[1] = \dots = \bar v'[k]$)
\end{enumerate}
The intuition of these new rules is that at any round, Spoiler can grab all her pebbles and stack them together in the same destination element $u'$ of $K$, and Duplicator has to respond in the same way, stacking all his pebbles together in the same destination element $v'$ of $K'$.

Let  $K, K'$ be 
"$\Rels$-structures", and let $\bar v\in \dom{K}^k$ and $\bar v'\in \dom{K'}^k$.

\AP
We say that $K',\bar v'$ ""$k$-simulates"" $K,\bar v$, notated $K,\bar v \intro*\pebblesim k K',\bar v'$, \phantomintro{\notpebblesim} if 
the following holds: in case all "worlds" in $\bar v$ are in the same connected component of $K$, then 1) $(s,\bar v, \bar v')$ is a valid position of $\kSimGame$ on $(K,K')$ (\ie, $\bar v, \bar v'$ induce a "partial homomorphism"),
and 2) Duplicator has a winning strategy in $\kSimGame$ from $(s,\bar v, \bar v')$; in case $\bar v$ has "worlds" in different connected component of $K$ then $K,\bar v \pebblesim k K',\bar v'$ holds automatically. See \Cref{fig:game} for an example. 

We say that $K',\bar v'$ ""$k$-$\Univ$-simulates"" $K,\bar v$, notated $K,\bar v \intro*\pebblesimE k K',\bar v'$, \phantomintro{\notpebblesimE} if 
1) $(s,\bar v, \bar v')$ is a valid position of $\kSimGameE$ on $(K,K')$ (\ie, $\bar v, \bar v'$ induce a "partial homomorphism"),
and 2) Duplicator has a winning strategy in $\kSimGameE$ from $(s,\bar v, \bar v')$. 

\begin{remark}\label{rem:connected}
Observe that if all the "worlds" in $\bar v$ are in the same connected component of $K$, any move of $\kSimGame$ [resp.\ of $\kSimGameE$] preserves this property regarding the connected component of $K$.
\end{remark}

\begin{figure*}
\ifarxiv
    \includegraphics[width=1\textwidth]{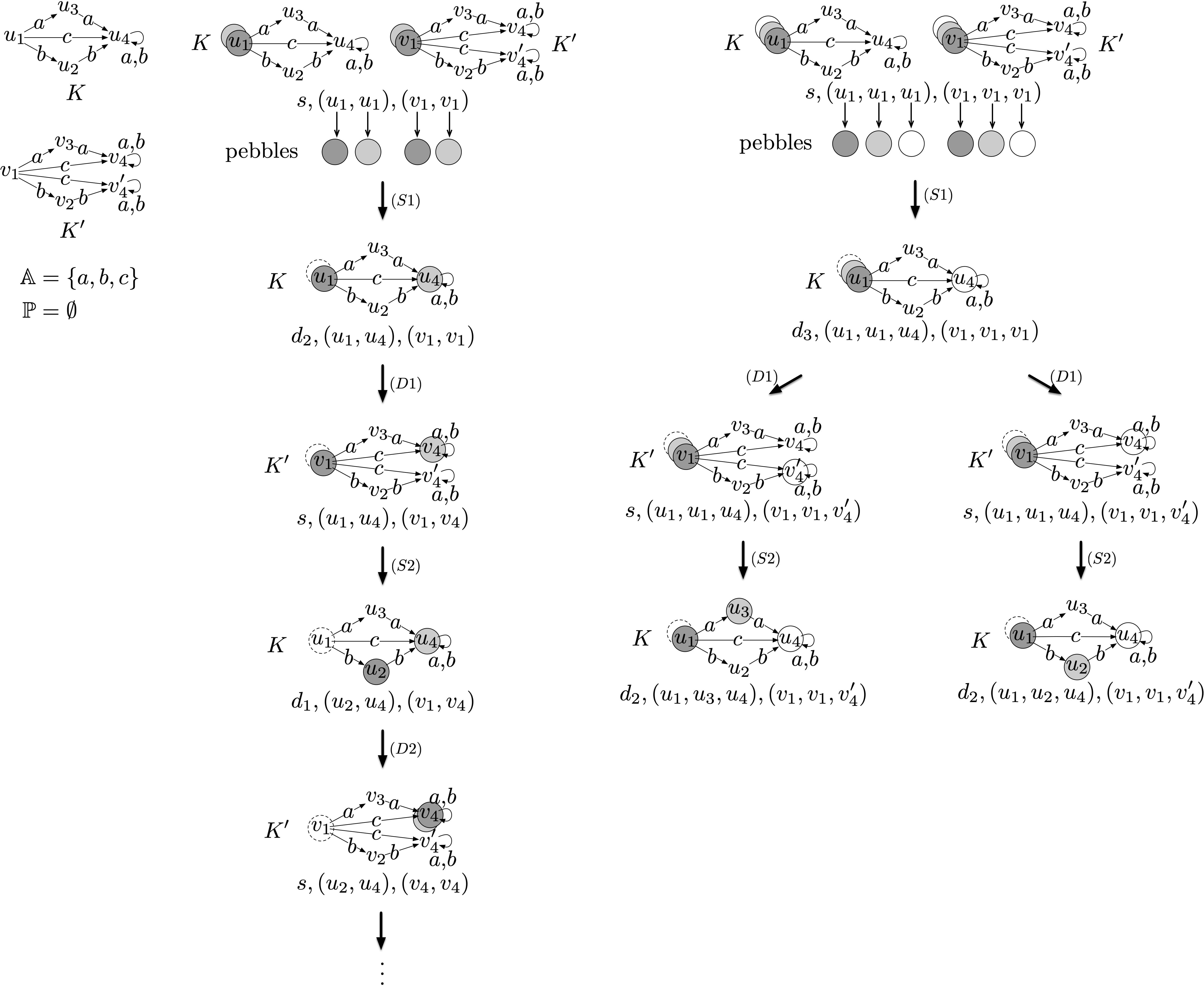}
\else
    \includegraphics[width=1\textwidth]{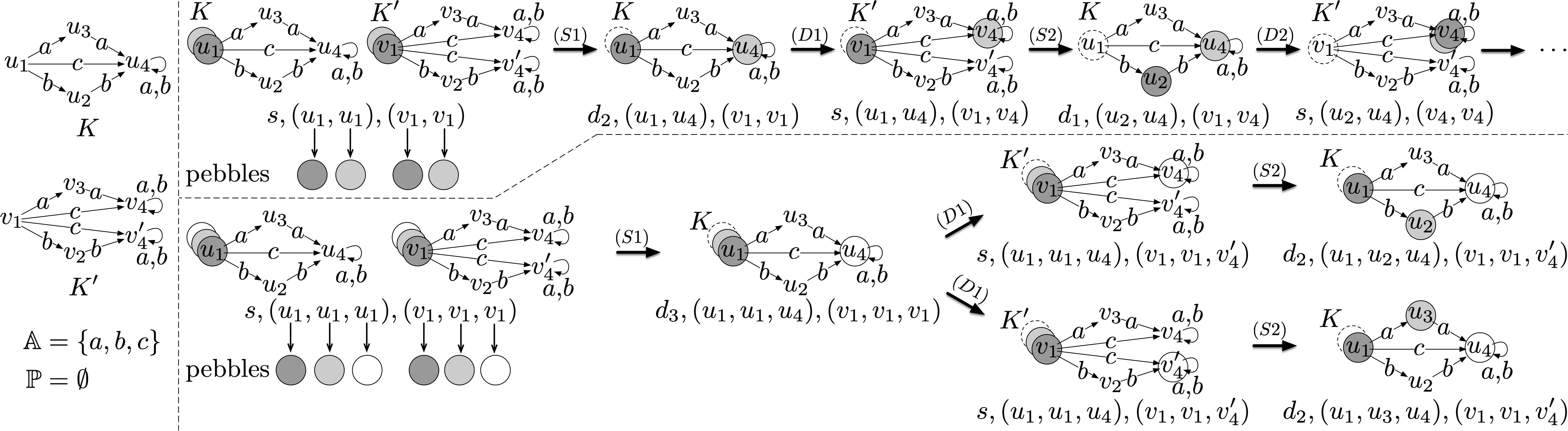}
\fi

    \caption{
    Left: Kripke structures $K$ and $K'$. 
    \ifarxiv
        Center: 
    \else
        Top:
    \fi
    an example of a game $\kSimGame[2]$ over $K$ and $K'$. The order of moves is: (S1), (D1), (S2), (D2). In fact, Duplicator can always answer to any Spoiler's move in $\kSimGame[2]$, and so $K,u_1 \pebblesim{2} K',v_1$. 
    \ifarxiv
        Right: 
    \else
        Bottom:
    \fi
    a tree representation of a winning strategy for Spoiler in $\kSimGame[3]$ over $K$ and $K'$. All possible answers by Duplicator are shown. Duplicator cannot answer after (S2) and so $K,u_1 \not\pebblesim{3} K',v_1$. Observe that $K,u_1,u_1\models\pi$ and $K,v_1,v_1\not\models\pi$ for $\pi=\{c(x_s,x_4),a(x_s,x_3),a(x_3,x_4),b(x_s,x_2),b(x_2,x_4)\}[x_s,x_s]$. Graphically the vectors in the positions can be thought as moving color pebbles on the "Kripke structures". Only "Kripke structures" that change pebble positions with respect to the previous move are shown.}

\label{fig:game}
\end{figure*}

For any $k' < k$, we will usually write $K,(v_1, \dotsc, v_{k'}) \pebblesim k K',(v'_1, \dotsc, v'_{k'})$ to denote 
\[
K,(v_1, \dotsc, v_{k'}, \underbrace{v_{k'}, \dotsc, v_{k'}}_{\text{$k-k'$ times}}) \pebblesim k K',(v'_1, \dotsc, v'_{k'},\underbrace{v'_{k'}, \dotsc, v'_{k'}}_{\text{$k-k'$ times}}).
\]
We use the same notation for $\pebblesimE k$ instead of $\pebblesim k$, and also for other notions to be introduced later on such as $\pebblequasibisim k$, $\pebblequasibisimE k$, $\pebblebisim k$ and $\pebblebisimE k$. Furthermore, for simplicity of notation we will drop parenthesis and write $K,v_1, \dotsc, v_k$ instead of $K,(v_1, \dotsc, v_k)$.

\begin{thm}\label{thm:sim-char}
    Let $k\geq 2$.
    Given 
    "$\Rels$-structures"
    $K,K'$ where $K'$ is of "finite degree" [resp.\ finite] and "worlds" $u,v\in\dom{K}$ and $u',v'\in\dom{K'}$, the following are equivalent
    \begin{enumerate}
        \item for every "positive" $\CPDLg{\Tw[k]}$-"formula" [resp.\ positive $\UCPDLg{\Tw[k]}$-"formula"] $\phi$, we have $K,v \models \phi$ implies $K',v' \models \phi$; and
        \item $K,v \pebblesim{k+1} K',v'$ [resp.\ $K,v \pebblesimE{k+1} K',v'$];
    \end{enumerate}
    and the following are equivalent
    \begin{enumerate}
        \item for every "positive" $\CPDLg{\Tw[k]}$-"program" [resp.\ positive $\UCPDLg{\Tw[k]}$-"program"] $\pi$, we have $K,u,v \models \pi$ implies $K',u',v' \models \pi$; and
        \item $K,u,v \pebblesim{k+1} K',u',v'$ [resp.\ $K,u,v \pebblesimE{k+1} K',u',v'$].
    \end{enumerate}
Furthermore, the hypothesis of "finite degree" [resp.\ finiteness] is only needed for the $1$-to-$2$ implications.
\end{thm}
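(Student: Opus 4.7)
The four implications (positive formulas vs.\ positive programs, $\CPDLg{\Tw[k]}$ vs.\ $\UCPDLg{\Tw[k]}$) share a common blueprint, so we focus on positive $\CPDLg{\Tw[k]}$-programs; the $\UCPDLg{\Tw[k]}$-variant is obtained by matching the stacking moves $(d_0)$ of $\kSimGameE$ with the atomic program $\Univ$, which can appear as an atom in a conjunctive program without increasing tree-width.

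Direction $(2) \Rightarrow (1)$ is proved by simultaneous induction on the structure of positive $\CPDLg{\Tw[k]}$-expressions. The atomic cases ($p \in \Prop$, $a \in \Prog$, $\bar a$, $\epsilon$) are immediate because every position $(s,\bar u,\bar v)$ of $\kSimGame$ induces a partial homomorphism by definition. Boolean conjunction, $\tup{\pi}$, and the operators $\cup, \circ, *, ?$ follow a standard modal-logic template, with Spoiler's moves chosen to witness the outermost connective. The critical case is a conjunctive program $\pi = C[x_s,x_t]$ with $\uGraph{C[x_s,x_t]} \in \Tw[k]$. Given a $C$-satisfying $f : \vars(C) \to \dom{K}$ with $f(x_s)=u,\, f(x_t)=v$, fix a tree decomposition $(T,\bagmap)$ of $\uGraph{C[x_s,x_t]}$ of width $\leq k$ rooted at a bag containing $\{x_s, x_t\}$; every bag then has at most $k+1$ variables. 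Traverse $T$ in DFS order using the $k+1$ pebbles: at each visited bag $b$, Spoiler moves his pebbles (one at a time) onto the values $f(\bagmap(b))$, each such move being legal since the connectedness condition of the tree decomposition guarantees a pebble still in place to serve as the anchor required by the game rule; Duplicator's winning strategy then produces matching pebbles that define $f'(\bagmap(b))$ in $K'$. The partial-homomorphism invariant at every position ensures every atom of $C$ is preserved (its variables jointly lie in some bag by the edge-covering property of tree decompositions), yielding a $C$-satisfying $f'$ in $K'$ as required.

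Direction $(1) \Rightarrow (2)$, the only implication where finite degree of $K'$ is used, is proved by contraposition. Suppose $K,v \notpebblesim{k+1} K',v'$. Positional determinacy of the underlying safety game produces a Spoiler winning strategy whose tree has only finite branches; since $K'$ is finitely branching at each position (finite degree bounds the Duplicator-choices), K\"onig's Lemma turns this strategy tree into a finite one. Now induct on this finite tree to synthesize a positive $\CPDLg{\Tw[k]}$-formula (or program) that is satisfied by $K,v$ (resp.\ $K,u,v$) but not by its $K'$-counterpart. Syntactically, the witness is a single outer conjunctive program whose variables index the pebble slots visited along Spoiler's strategy, whose atoms record the atomic facts Spoiler enforces together with the inductively produced modal subexpressions, and whose tree decomposition is Spoiler's strategy tree itself, automatically of width $\leq k$ since only $k+1$ pebbles are ever in play.

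The main obstacle is to make the two constructions precisely dual: in $(2) \Rightarrow (1)$, coordinating the DFS over the tree decomposition of $\uGraph{C[x_s,x_t]}$ with Duplicator's responses so that the partial-homomorphism invariant is preserved at every intermediate configuration; conversely, in $(1) \Rightarrow (2)$, extracting from Spoiler's strategy tree a conjunctive program whose atoms and variables match the pebble history and whose tree-width does not exceed $k$. Corner cases where $\bar v$ has worlds in distinct connected components of $K$ are handled vacuously by the definition of $\pebblesim{k+1}$ (\Cref{rem:connected}), and the case $x_s = x_t$ reduces to the general case by identifying two pebble slots.
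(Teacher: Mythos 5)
Your overall architecture matches the paper's: structural induction for $(2)\Rightarrow(1)$ with the conjunctive-program case driven by a tree decomposition, and contraposition plus a finite Spoiler strategy tree (finite degree plus K\"onig) converted into a separating conjunctive program for $(1)\Rightarrow(2)$. Two points, however, are genuine gaps rather than omitted routine detail. In $(2)\Rightarrow(1)$ you justify Spoiler's DFS moves by saying an anchor pebble remains in place, but the game rule also requires the \emph{destination} to be at Gaifman-distance $\leq 1$ from that anchor. Since the atoms of $C$ are programs $\pi'(x,x')$ whose witnesses are paths rather than edges, the value $f(x)$ of a freshly entered bag variable need not be adjacent to any currently pebbled world, so the move may be illegal. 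The same obstruction defeats the ``standard template'' for $\circ$ and $*$ when $k=2$: you cannot keep both endpoints anchored and still move a pebble to a distant intermediate witness with only $k+1=3$ pebbles. The paper avoids pebble-walking here entirely: it proves a strengthened claim for programs $C[\bar z]$ by induction on $|C|$, splitting $C$ into the root-bag atoms (handled by the \emph{structural} induction hypothesis on their subprograms, since all their variables are already pebbled and subtuples of simulating tuples simulate) and the subtree atom sets $C_i$ (handled by the $|C|$-induction at the interface tuples); $\circ$ and $*$ are then reduced to conjunctive programs of tree-width $\leq 2$ rather than treated as modal connectives. Note also that preservation of a p-atom needs the structural IH on its subprogram, not merely the partial-homomorphism invariant, which only covers atomic relations.

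Second, ``matching the $d_0$ moves with the atom $\Univ$'' does not suffice for the $\UCPDLg{\Tw[k]}$ half of $(1)\Rightarrow(2)$. After Spoiler stacks and jumps, Duplicator may answer with \emph{any} world of $K'$, so the separating expression must be a program true at Spoiler's landing world in $K$ but false at \emph{every} world of $K'$ --- a universal requirement that an existential $\Univ$-atom inside a conjunctive program cannot express. The paper builds such a program by combining the subprograms refuting each Duplicator answer that occurs in the strategy tree with, for every remaining world $v'$ of $K'$, an atom witnessing that mapping Spoiler's landing world to $v'$ is not a partial homomorphism, and then wraps the result as $\Univ\circ\pi\circ\Univ$; this is precisely where finiteness of $K'$ (not just finite degree) enters. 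Relatedly, your extraction of $C$ from the strategy tree should address that $C$ may a priori contain infinitely many atoms over the infinite signature $\Rels$; the paper uses finite degree a second time to cut $C$ down to a finite refuting subset.
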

\color{black}
\begin{proof}

    We remark first that the $k$-simulation relation is closely related to the  ``existential $k$-pebble game'' \cite{DBLP:journals/jcss/KolaitisV95,DBLP:journals/jcss/KolaitisV00} on first-order structures with binary and unary signatures. It is known \cite[Theorem~4]{DBLP:conf/cp/AtseriasKV04} that for every pair $K,K'$ of \emph{finite} "structures@@kripke", if for every Boolean "conjunctive query" $q$ of "tree-width" $<k$ we have that $K \models q$ implies that $K' \models q$, then Duplicator wins the existential $k$-pebble game on $K,K'$. Since "conjunctive programs" are basically "conjunctive queries", it is not surprising that one can adapt this result to our setting.

\proofcase{1-to-2}
    We first show it for $\CPDLg{\Tw[k]}$. \color{black} By contrapositive, assume $K,u,v \notpebblesim{k+1} K',u',v'$. 
    For ease of notation we write $(w_1,w_2)$ instead of the $k$-tuple $(w_1,w_2,\dots,w_2)$. 
    Then $u$ and $v$ are in the same connected component of $K$ and either $((u,v), (u',v')) \not\in \kHoms(K,K')$ or 
    Spoiler has a strategy to win in a bounded number of rounds in $\kSimGame[k+1]$ starting from $(s,(u,v),(u',v'))$. 

    Suppose $((u,v),(u',v')) \not\in \kHoms(K,K')$. We show that there is a "positive" $\CPDLg{\Tw[k]}$-"program" $\pi$, such that $K,u,v \models \pi$ and $K',u',v' \not\models \pi$. By hypothesis there is $R\in\sigma$ such that $(u,v,\dots,v)\in R^K$ and $(u',v',\dots,v')\notin R^{K'}$. If $R\in\Prop$ ("ie", $\arity(R)=1$) then we define $\pi=R?\circ a_1\circ\dots\circ a_\ell$ where $a_i$ are of the form $b_i$ or $\bar b_i$, for $b_i\in\Prog$ and correspond to the labels of a path of length $\ell$ from $u$ to $v$ in $K$. If $R\in\Prog$ ("ie", $\arity(R)=2$) then we define $\pi=R$. If $\arity(R)>2$ then we define $\pi=\{R(x,y,\dots,y)\}[x,y]$. 

Suppose that Spoiler has a strategy to win in a bounded number of rounds in $\kSimGame[k+1]$ starting from $(s,(u,v),(u',v'))$. 
    Consider the winning strategy of Spoiler, described by a "tree" of finite height whose vertices are labeled with positions from $\kSimGame[k+1]$. In particular: (i) the root must be labeled with $(s,(u,v),(u',v'))$, (ii) any vertex labeled $(d_i,\bar u, \bar v)$ has a child labeled $(s,\bar u, \bar v')$ for every possible move of $\kSimGame[k+1]$, (iii) any vertex labeled $(s,\bar u, \bar v)$ has exactly one child 
    and it is of the form
    $(d_i,\bar u', \bar v)$,
    whose label is consistent with a move of $\kSimGame[k+1]$,
    (iv) for every leaf (labeled $(d_i,\bar u, \bar v)$) there is no possible move from that position in $\kSimGame[k+1]$.
    Observe that, since $K'$ is of "finite degree", there is a finite number of moves departing from any position of $\kSimGame[k+1]$ owned by Duplicator, and thus the branching of the "tree" is finite. Therefore, the strategy "tree" is finite.
\color{black}
    For such a winning strategy for Spoiler, consider the "tree" $T$ and labeling $\lambda \colon V(T) \to \dom{K}^{k+1}$ resulting from: 
     (1) removing all vertices $y$ labeled by Spoiler positions except the root (\ie, the initial configuration, owned by Spoiler), and adding an edge between the parent of $y$ and the (sole) child of $y$, 
    and (2) projecting the labeling of each vertex onto its $\dom{K}^{k+1}$ component (that is, $\lambda(x)=\bar u$ if the label of $x$ in the strategy "tree" was $(d_i,\bar u, \bar v)$). 
    From this "tree" one can construct a "conjunctive program" $C[x_s,x_t]$ of "tree-width" $k$ containing all the homomorphism types of the tuples.
    Let $Q$ be the set of all pairs $(y,i)$ where $y$ is a vertex from $T$ and $1 \leq i \leq k+1$, and let $\approx$ be the finest equivalence relation on $Q$ such that: (a) $(y,i) \approx (y,i')$ if $\lambda(y)[i] = \lambda(y)[i']$, and (b) $(y,i) \approx (y',i)$ if $y'$ is a child of $y$ and $\lambda(y)[i] = \lambda(y')[i]$. The "conjunctive program" $C$ will use a variable for each equivalence class of $\approx$, which we denote by $[y,i]_\approx$, and it contains: 
    (i)  a "p-atom" \color{black} $a([y,i]_\approx,[y',j]_\approx)$ if $y=y'$ and $(\lambda(y)[i],\lambda(y)[j]) \in \dbracket{a}_K$, 
    (ii) a "p-atom" \color{black} $p?([y,i]_\approx,[y,i]_\approx)$ if $\lambda(y)[i] \in \dbracket{p}_K$, 
    and (iii) an "r-atom" $R([y_1,i_1]_\approx,\dots,[y_n,i_n]_\approx)$ if $y_1=\dots=y_n$ and $(\lambda(y_1)[i_1],\dots,\lambda(y_1)[i_n]) \in R^K$. 
    The variable $x_s$ \resp{$x_t$} is the class $[y,i]_\approx$ where $y$ is the root of $T$ and $\lambda(y)[i] =u$ \resp{$\lambda(y)[i] = v$}. 
    See Figure \ref{fig:1-to-2} for an illustration.
    To ensure that $x_s,x_t\in\vars(C)$ and that $\uGraphC{C}$ is connected, we add the "p-atom" $a_1\circ\dots\circ a_\ell(x_s,x_t)$ where $a_i$ are of the form $b_i$ or $\bar b_i$, for $b_i\in\Prog$ and correspond to the labels of a path of length $\ell$ from $u$ to $v$ in $K$. 
    \begin{claim}\AP
        $C[x_s,x_t] \in \CPDLg{\Tw}$.
    \end{claim}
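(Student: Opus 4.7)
The plan is to exhibit an explicit tree decomposition of $\uGraph{C[x_s,x_t]}$ of width at most $k$ by using the tree $T$ itself. For each node $y\in V(T)$ define the bag
\[
B_y := \{[y,1]_\approx, \ldots, [y,k+1]_\approx\}.
\]
Since each bag is the image of a $(k+1)$-element set under the quotient map by $\approx$, we immediately get $|B_y|\leq k+1$, so the width is at most $k$. It then remains to verify the three defining conditions of a "tree decomposition".

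Conditions \kl{A@@treedec} and \kl{B@@treedec} are straightforward. Every variable of $C$ is by construction an equivalence class $[y,i]_\approx$ and hence lies in $B_y$, giving \kl{A@@treedec}. For \kl{B@@treedec}, inspect the three sources of atoms of $C$: a p-atom $a([y,i]_\approx,[y,j]_\approx)$, a test $p?([y,i]_\approx,[y,i]_\approx)$, and an r-atom $R([y,i_1]_\approx,\ldots,[y,i_n]_\approx)$ all have their variables drawn from coordinates of a single node $y$, hence contained in $B_y$. The extra path atom $a_1\circ\cdots\circ a_\ell(x_s,x_t)$ only introduces the edge $\{x_s,x_t\}$, and since $x_s$ and $x_t$ are by definition the classes of pairs $(\mathrm{root},i_s)$ and $(\mathrm{root},i_t)$, both sit in the root bag.

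The main obstacle is condition \kl{C@@treedec}: for each variable $v$, the set $S_v=\{y\in V(T)\mid v\in B_y\}$ must be a connected subtree of $T$. The key observation is that the strategy tree was obtained by contracting Spoiler positions, so each parent--child step in $T$ corresponds to Spoiler moving a single pebble; hence at most one coordinate of $\lambda$ changes between a parent $y$ and a child $y'$. Fix $v=[y_0,i_0]_\approx$ and take $y_1,y_2\in S_v$ with witnesses $(y_1,j_1)\approx(y_0,i_0)\approx(y_2,j_2)$. Unravelling the definition of $\approx$ as the transitive closure of its two generators, we obtain a chain of primitive steps from $(y_1,j_1)$ to $(y_2,j_2)$, each step either staying at the same node (rule (a)) or moving between a parent and a child in $T$ (rule (b)). Projecting the chain onto its first component yields a walk in $T$ from $y_1$ to $y_2$, and each node visited by the walk has some coordinate still in the equivalence class of $v$, so it belongs to $S_v$.

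To conclude, one uses the standard fact that a walk between two vertices of a tree must traverse every node on the unique simple path joining them. Consequently every node on the $T$-path between $y_1$ and $y_2$ lies in $S_v$, proving that $S_v$ is connected. This establishes \kl{C@@treedec}, and hence $(T,\bagmap)$ with $\bagmap(y)=B_y$ is a tree decomposition of $\uGraph{C[x_s,x_t]}$ of width at most $k$, showing that $C[x_s,x_t]\in\CPDLg{\Tw[k]}$ as required.
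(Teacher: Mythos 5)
Your tree decomposition is exactly the one the paper uses --- the bags are $\lambda'(y)=\{[y,i]_\approx : 1 \leq i \leq k+1\}$ over the contracted strategy tree $T$ --- and your verification is correct and considerably more detailed than the paper's, which simply asserts that $(T,\lambda)$ induces a decomposition of width $k$. In particular, your treatment of condition C (projecting a chain of $\approx$-generators onto a walk in $T$, observing that every vertex visited by the walk retains a coordinate in the class, and then using that a walk in a tree covers the unique simple path between its endpoints) is precisely the right way to fill in what the paper leaves implicit.

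The one thing missing is the other half of what membership in $\CPDLg{\Tw}$ requires: $C[x_s,x_t]$ must be a syntactically legal conjunctive program, which by the paper's definition means that $\uGraphC{C}$ (not merely $\uGraph{C[x_s,x_t]}$) is connected. Bounded tree-width of the decomposition does not give you this. The paper's proof devotes its first sentence to it, invoking \Cref{rem:connected}: since $u$ and $v$ lie in the same connected component of $K$, every position reached in the game keeps all pebbles inside that component, and each move places a pebble at distance at most $1$ from some other pebble, so the corresponding fresh variable is either identified with an existing one (distance $0$, via generator (a) of $\approx$) or joined to one by an atom of $C$ (distance $1$); together with the auxiliary atom $a_1\circ\dots\circ a_\ell(x_s,x_t)$ linking the two root variables, this yields connectedness of $\uGraphC{C}$. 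Add that short argument and your proof is complete.
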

    \begin{proof}
        Due to \Cref{rem:connected} and the fact that $u$ and $v$ are in the same component of $K$, $\uGraph{C}$ is connected. On the other hand, $\uGraph{C[x_s,x_t]}$ has "tree-width" $k$ because $(T,\lambda)$ induces a "tree decomposition" $(T,\lambda')$ of "width" $k$ of $\uGraph{C[x_s,x_t]}$, where $\lambda'(y) = \set{[y,i]_\approx : 1 \leq i \leq k+1}$ for every $y \in V(T)$.
    \end{proof}
    \begin{claim}\AP
        $K,u,v \models C[x_s,x_t]$ and $K',u',v' \not\models C[x_s,x_t]$.
    \end{claim}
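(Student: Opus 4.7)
My plan is to prove the two halves of the claim separately. For $K,u,v\models C[x_s,x_t]$, I would define $f:\vars(C)\to\dom K$ by $f([y,i]_\approx)\eqdef\lambda(y)[i]$. Clauses (a) and (b) in the definition of $\approx$ are precisely the identifications this map must respect, so $f$ is well-defined. We have $f(x_s)=u$ and $f(x_t)=v$ by how $x_s$ and $x_t$ were chosen at the root. For each "p-atom" $a([y,i]_\approx,[y,j]_\approx)$, each test "p-atom" $p?([y,i]_\approx,[y,i]_\approx)$, and each "r-atom" added to $C$, the satisfaction in $K$ of the corresponding relation on the tuple $\lambda(y)$ is exactly the condition under which the atom was included; the extra connectivity "p-atom" $a_1\circ\cdots\circ a_\ell(x_s,x_t)$ holds in $K$ by definition as a witnessed path from $u$ to $v$.

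For the converse $K',u',v'\not\models C[x_s,x_t]$, I argue by contradiction: suppose a "$C$-satisfying assignment" $g:\vars(C)\to\dom{K'}$ exists with $g(x_s)=u'$ and $g(x_t)=v'$. I use $g$ to extract a Duplicator response that refutes Spoiler's winning strategy in $\kSimGame[k+1]$. The strategy traverses the simplified tree $T$: maintaining the invariant that, whenever the game reaches a position $(d_i,\lambda(y),\bar v)$ associated to a node $y\in V(T)$, one has $\bar v[j]=g([y,j]_\approx)$ for every $j\neq i$, Duplicator plays pebble $i$ to $g([y,i]_\approx)$. The base case at the unique $d$-position child of the root is verified using clauses (a) and (b) of $\approx$ together with $g(x_s)=u'$ and $g(x_t)=v'$. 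The inductive step follows because Spoiler's move from $y$ to a child $y'$ alters exactly one coordinate $i'$ of $\lambda$, so clause (b) yields $[y,j]_\approx=[y',j]_\approx$ for $j\neq i'$, which keeps the invariant intact.

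The essential verifications at each round are (i) the distance constraint on Duplicator's move and (ii) that the successor $s$-position lies in $\kHoms(K,K')$. For (i), Spoiler's move forces $\lambda(y)[i]$ to be at distance $\leq 1$ in $K$ from some $\lambda(y)[j']$, hence some relation of $K$ relates them (or they coincide, handled by clause (a) of $\approx$); the construction of $C$ then contains the matching atom, which $g$ respects, so $g([y,i]_\approx)$ and $\bar v[j']=g([y,j']_\approx)$ are Gaifman-adjacent (or equal) in $K'$. For (ii), after Duplicator's move one has $\bar v'[j]=g([y,j]_\approx)$ for all $j$; since every relational tuple in $K$ among the entries of $\lambda(y)$ was turned into an atom of $C$, and $g$ satisfies all atoms, the induced map $\lambda(y)[j]\mapsto g([y,j]_\approx)$ is a "partial homomorphism". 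Since Duplicator never gets stuck, she plays forever while staying inside the finite tree $T$, contradicting the assumption that Spoiler's strategy was winning.

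The main subtlety I expect is the bookkeeping for the invariant, combined with checking that every Gaifman edge within each bag $\lambda(y)$ is witnessed by an atom of $C$, so that both the distance and partial-homomorphism conditions reduce to $g$ being a "$C$-satisfying assignment"; once this correspondence is spelled out, the remaining verifications are routine.
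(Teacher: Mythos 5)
Your proposal is correct and follows essentially the same route as the paper: the first half via the assignment $[y,i]_\approx\mapsto\lambda(y)[i]$, and the second half by using a hypothetical satisfying assignment on $K'$ to trace a branch of Spoiler's (finite) strategy tree along which Duplicator always has a legal answer, contradicting that every leaf is a dead end for Duplicator. Your write-up is in fact slightly more explicit than the paper's on two points it leaves implicit — the invariant relating the pebble positions to $g$ along the traversal, and the verification of the distance-$\leq 1$ side condition on Duplicator's moves via the atoms of $C$ — but the underlying argument is the same.
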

    \begin{proof}
    The fact that $K,u,v \models C[x_s,x_t]$ is a consequence of  $(y,i) \approx (y',i')$ implying $\lambda(y)[i] = \lambda(y')[i']$ by definition. Indeed, $f=\set{[y,i]_\approx \mapsto \lambda(y)[i] : y \in V(T), 1 \leq i \leq k+1}$ is a "$C$-satisfying assignment".

    We show $K',u',v' \not\models C[x_s,x_t]$ by  contradiction. If $K',u',v' \models C[x_s,x_t]$, consider the corresponding "$C$-satisfying assignment" $f'\colon \vars(C) \to \dom{K'}$. Using $f'$ we can select a branch of Spoiler's strategy as follows:

    for every vertex $y$ labeled $(d_i, \bar u, \bar v)$ of the strategy tree, pick the child $y'$ of $y$ such that $y'$ is labeled $(s, \bar u, \bar v')$ where $\bar v' = \bar v[i \mapcoord f'([y',i]_\approx)]$ and for every vertex $y$ labeled $(s, \bar u, \bar v)$ we pick the only child of $y$.

    It remains to show that there is a move from $(d_i, \bar u, \bar v)$ to $(s, \bar u, \bar v')$ in $\kSimGame[k+1]$.
    By definition of $C$, it contains all possible "atoms" derived from the substructure of $K$ induced by $\bar u$ using the variables $[y',1]_\approx, \dotsc, [y',k+1]_\approx$. Thus, $\set{\bar u[t] \mapsto f'([y',t]_\approx)}_t = \set{\bar u[t] \mapsto \bar v'[t]}_t$ witnesses a "partial homomorphism" from $K$ to $K'$, and hence $(d_i, \bar u, \bar v) \rightarrow (s, \bar u, \bar v')$ is a valid move of $\kSimGame[k+1]$. 
    This means that Spoiler's strategy tree cannot be finite since by descending the tree in this way we end up in a leaf $(d_i, \bar u, \bar v)$ of the strategy tree which has an outgoing move in $\kSimGame[k+1]$, contradicting the fact that the strategy is winning for Spoiler (more precisely point (iv)).
    \end{proof}
    Observe that, strictly speaking, $C$ may have an \emph{infinite} number of "atoms" (over the \emph{finite} set $\vars(C)$ of "variables"). 
        This is because for any tuple $\bar u$ of "worlds" of a "finite degree" $\Rels$-structure $K$, there may be infinitely many symbols $R\in\Rels$ such that $\bar u\in R^K$.
    However, from the fact that $K',u',v' \not\models C[x_s,x_t]$ and that $K'$ has "finite degree", one can extract a finite set of "atoms" $C' \subseteq C$ such that $K',u',v' \not\models C'[x_s,x_t]$ as we describe next.
    
    Consider the set $F$ of mappings $f \colon \vars(C) \to \dom{K'}$ such that 
        (i) $f(x_s,x_t) = (u',v')$, and 
        (ii) $f(x)$ is at "distance" at most $|\vars(C)|$ from $u'$ or $v'$.
    Notice that, since $K'$ has "finite degree", $F$ is finite.

    For every $f \in F$, choose an "atom"
    $a^f(x^f_1,\dots,x^f_n)$ of $C$ which does not map to $K'$, \ie, such that $(f(x^f_1),\dots,f(y^f_n)) \not\in \dbracket{a^f}_{K'}$.\footnote{
        Observe that if $a^f$ is a "p-atom" then $n=2$. If $a^f=R\in\sigma$ and $\arity(R)>2$, then by $\dbracket{a^f}_{K'}$ we mean $R^{K'}$} Let $C'$ be the set of all such "atoms", plus some (finite) set of "atoms" from $C$ to assure (a) $\vars(C') = \vars(C)$ and (b) that $\uGraphC{C'}$ is connected. Notice that $C'$ is now finite.
    We claim that $K',u',v' \not\models C'[x_s,x_t]$. Indeed, if there was a "$C'$-satisfying assignment" $f \colon \vars(C') \to \dom{K'}$, then $f \in F$ and $C'$ should include the "atom" $a^f(x^f_1,\dots,y^f_n)$ which is not mapped by $f$ to $K'$. Hence, there is no "$C'$-satisfying assignment" and $K',u',v' \not\models C'[x_s,x_t]$.
\begin{figure}\centering
\ifarxiv
    \includegraphics[scale=0.25]{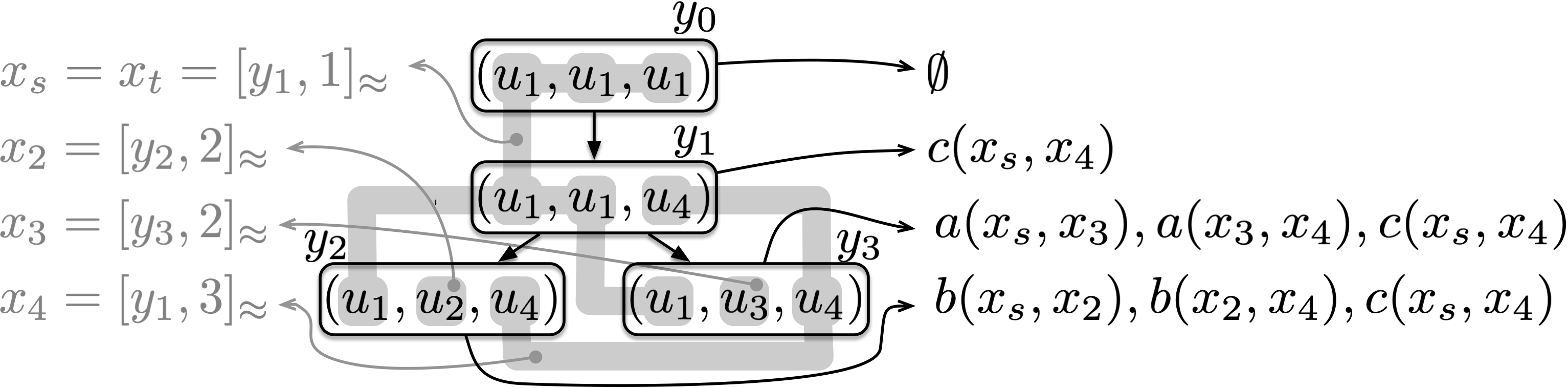}
\else
    \includegraphics[width=.45\textwidth]{img/1-to-2.png}
\fi
\caption{An example of the tree $T$ with nodes $y_0,y_1,y_2,y_3$ in the proof of Theorem \ref{thm:sim-char}\proofcase{1-to-2} built from Spoiler's strategy of Figure \ref{fig:game} (right). 
It holds that $K,u_1,u_1\models\pi$ and $K',v_1,v_1\not\models\pi$ for $\pi=C[x_s,x_t]$. The grey area represents the partition induced by $\approx$.}
\label{fig:1-to-2}
\end{figure}

\smallskip
The case of $K,u \notpebblesim{k+1} K',u'$ is analogous. In this case, $C[x_s,x_t]$ is built just as before, but we set $x_t$ to be equal to $x_s$. The final "formula" is then $\tup{C[x_s,x_t]}$.

    Let us now show it for $\UCPDLg{\Tw[k]}$ when $K$ and $K'$ are finite. As for the case for $\CPDLg{\Tw[k]}$, consider the winning strategy of Spoiler starting in position $(s,(u,v),(u',v'))$, described by a "tree" $W$ of finite height whose vertices are this time labeled with positions from $\kSimGameE[k+1]$ instead of $\kSimGame[k+1]$ satisfying items (i)-(iv) as above but where index $i$ ranges over $\{0,\dots,k\}$ instead of $\{1,\dots,k\}$. Notice that we now allow for moves of the form $(s,\bar w_1, \bar w_1')$ to $(d_0,\bar w_2,\bar w_1')$, where $\bar w_2=(w_2,\dots,w_2)$ for $w_2\in\dom{K}$. 
    \AP
    We call an edge ``""jumping""'' if it connects such parent $(s,\bar w_1, \bar w_1')$ to  child $(d_0,\bar w_2,\bar w_1')$ (see Figure \ref{fig:caseU}).
    In principle, a node of $W$ labeled $(d_0,\bar w_2,\bar w_1')$ may have a child labeled $(s,\bar w_2,\bar w_2')$ for $\bar w_2'=(w_2',\dots,w_2')$ for any $w_2'\in\dom{K'}$.  
    There are finitely many such children because of the hypothesis of $K$ being finite; hence Spoiler's strategy tree $W$ is finite. 

    \begin{figure}\centering
        \includegraphics[scale=0.25]{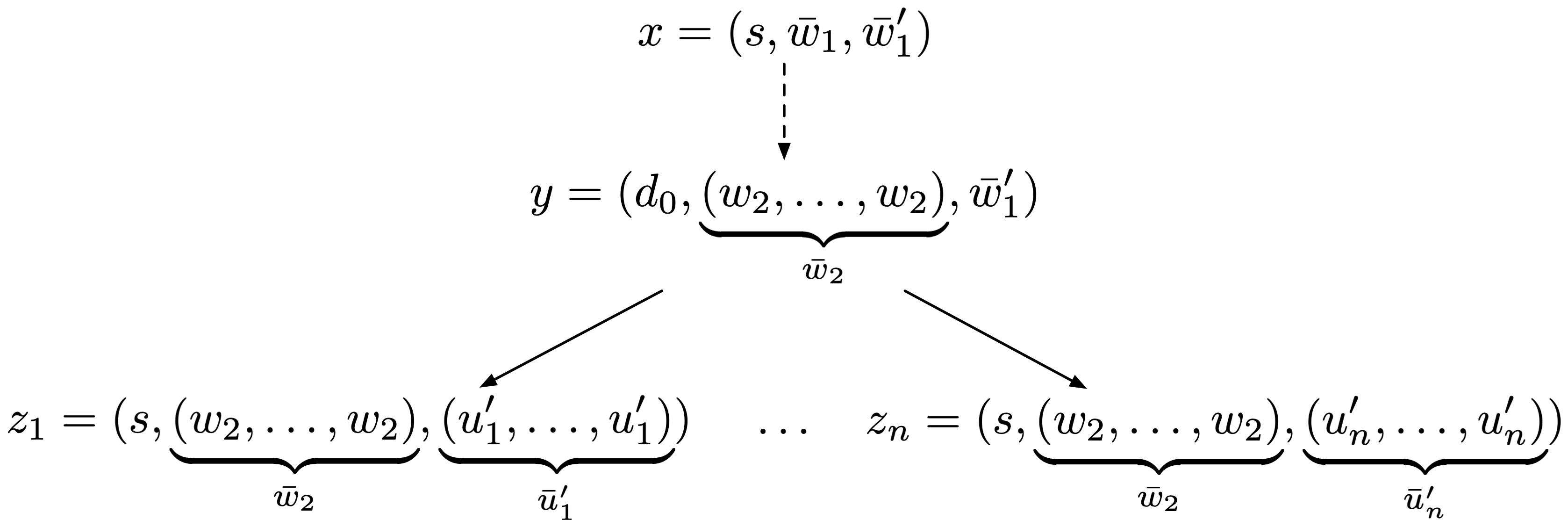}
    \caption{
    The dotted arrow represents a "jumping" edge in Spoiler's winning strategy}
    \label{fig:caseU}
    \end{figure}

    If $W$ has no "jumping" edges we proceed as in the case for $\CPDLg{\Tw}$, namely we construct a "conjunctive program" $C[x_s,x_t] \in \CPDLg{\Tw}$ such that $K,u,v \models C[x_s,x_t]$ and $K',u',v' \not\models C[x_s,x_t]$. 
    Suppose that there is a "jumping" edge in $W$ connecting $x$ to its only child $y$. Suppose that $x$ is labeled $(s,\bar w_1,\bar w_1')$, and that $y$ is labeled  $(d_0,\bar w_2,\bar w_1')$, where $\bar w_2=(w_2,\dots,w_2)$. Furthermore suppose that there are no "jumping" edges below $y$. Suppose that $z_1,\dots,z_n$ are the children of $y$ in $W$ for some $n\geq 0$ and that $z_i$ is labeled $(s,\bar w_2,\bar u'_i)$, for $\bar u'_i=(u'_i,\dots,u'_i)$, and $u'_i\in\dom{K'}$.  See Figure \ref{fig:caseU}.
    Since each subtree of $W$ with root $z_i$ encodes a winning strategy of Spoiler free of "jumping" edges witnessing $K,w_2,w_2 \not\pebblesim{k+1} K',u'_i,u'_i$, we can apply the above result for $\CPDLg{\Tw}$ to obtain a "conjunctive program" $C_i[z,z]\in \CPDLg{\Tw}$ such that $K,w_2,w_2\models C_i[z,z]$ and $K',u'_i,u'_i\not\models C_i[z,z]$ for all $i=1,\dots,n$. Let $C=\{C_i[z,z](z,z)\mid i=1,\dots,n\}$. We have $K,w_2,w_2\models C[z,z]$ and $K',u'_i,u'_i\not\models C[z,z]$ for all $i=1,\dots,n$. Let $U'$ be the (possibly empty) set $\{u'_i\mid i=1,\dots,n\}$ and consider the set $V'=\dom{K'}\setminus U'$, which is finite.
    For any $v'\in V'$ there is an "atom" $\tau$ such that $K,w_2,\dots,w_2\models \tau(z,\dots,z)$ and $K',v',\dots,v'\not\models \tau(z,\dots,z)$. Let $D$ be the set of all such atoms $\tau(z,\dots,z)$. Then $K,w_2,w_2\models D[z,z]$ and $K',v',v'\not\models D[z,z]$ for all $v'\in V'$. Finally for the "conjunctive program" $\pi=\{ C[z,z](z,z), D[z,z](z,z)\}[z,z]$ we have $K,w_2,w_2\models\pi$
    and $K',w',w'\not\models \pi$ for all $w'\in \dom{K'}$. 
    This implies that $\pi'=\Univ\circ\pi\circ\Univ\in\UCPDLg{\Tw}$ is true at any pair of nodes of $K$ and false at any pair of nodes of $K'$. 
    In particular $K,u,v\models\pi'$ and $K'u',v'\not\models\pi'$.

\color{black}

    \medskip

\proofcase{2-to-1}
We now show the 2-to-1 implication 
for the case $\CPDLg{\Tw[k]}$
by structural induction on the expression.
For any $(\bar u,\bar v)\in\set{(v,v'),((u,v),(u',v'))}$, assuming $K, \bar v \pebblesim {k+1} K', \bar v'$ %
and $K,\bar v \models e$ 
    for a "positive" $\CPDLg{\Tw[k]}$ expression $e$, we show $K',\bar v' \models e$.
    Observe that if $\bar u=(u,v)$ and $u$ and $v$ are in different connected components of $K$ the result is trivial since there is no "positive" $\CPDLg{\Tw[k]}$-"program" true in $K,u,v$. So we assume that all "worlds" of $\bar u$ (and so of $\bar v$) are in the same connected component of $K$.

    \proofsubcase{$e=p$ for $p\in\Prop$} If $K,v \models p$ then it follows that since there is a "partial homomorphism" mapping $v$ to $v'$ we have $K',v' \models p$.

    \proofsubcase{$e=a$ for $a\in\Prog$} If $K,u,v \models a$ then it follows that since there is a "partial homomorphism" mapping $u\mapsto u'$ and $v\mapsto v'$ we have $K',u',v' \models a$.
        
    \proofsubcase{$e=\bar a$ for $a\in\Prog$} Analogous to the case $e=a$.

    \proofsubcase{$e=\phi_1\land\phi_2$} If $K,v \models \phi_1\land\phi_2$ then for $i=1,2$ we have $K,v \models \phi_i$ and by inductive hypothesis we have $K',v' \models \phi_i$ for $i=1,2$, which implies $K',v' \models \phi_1\land\phi_2$.

    \proofsubcase{$e=\pi_1\cup\pi_2$} Analogous to the previous case.

    \proofsubcase{$e=\phi?$} If $K,u,v \models \phi?$ then $u=v$ and $K,u \models \phi$. By inductive hypothesis $K',u' \models \phi$ and so $K',u',v' \models \phi$.

    \proofsubcase{$e=\tup{\pi}$} 
    If $K,v \models \tup{\pi}$ then there is some $\tilde v$ in $K$ such that $(v,\tilde v) \in \dbracket{\pi}_K$. Since we have: (1) $v$ and $\tilde v$ are part of the same connected component in $K$, (2) $(K, v) \pebblesim{k+1} (K', v')$, 
    and (3) $k+1 \geq 3$, it follows that there is some $\tilde v'$ such that  
    $(K, v,\tilde v) \pebblesim{k+1} (K', v', \tilde v')$.
    This can be obtained by ``navigating'' from $v$ to $\tilde v$ in $K$ using the definition of $\pebblesim{k+1}$.
    We can now apply the inductive hypothesis obtaining that $K',v',\tilde v' \models \pi$, and thus that $K',v' \models \tup{\pi}$.

    \proofsubcase{$e = C[x_s,x_t]$} 
    We will actually work with the more general version of "programs" $C[\bar z]$ where $\bar z$ may have any number $1 \leq t \leq k+1$ of variables from $\vars(C)$, and the "underlying graph" $\uGraph{C[\bar z]}$ of $C[\bar z]$ is $\uGraph{C}$ plus the edges $\set{y,y'}$ for every pair of distinct variables $y,y'$ from $\bar z$. Hence,  $\uGraph{C[\bar z]} \in \Tw[k]$ means that all variables $\bar z$ appear in a bag of its "tree decomposition".
     The semantics $\dbracket{C[\bar z]}_{\tilde K}$ on a "Kripke structure" $\tilde K$ is as expected: the set of $t$-tuples $\bar u$ of "worlds" from $\tilde K$ such that $f(\bar z) = \bar u$ for some "$C$-satisfying assignment" $f$.

    Suppose we have $e = \hat C[x_s,x_t]$. We show the following:
    \begin{claim}\AP
        For every $C \subseteq \hat C$ and every tuple $x_1, \dotsc, x_n$ of variables from $\vars(C)$ with $n \leq k+1$ such that $\uGraph{\hat C[x_1, \dotsc, x_n]} \in \Tw[k]$, for all $\bar v\in\dom{K}^n$ and $\bar v'\in\dom{K'}^n$: if $(K, \bar v) \pebblesim{k+1} (K', \bar v')$ and $\bar v  \in \dbracket{C[x_1, \dotsc, x_n]}_K$, then $\bar v'  \in \dbracket{C[x_1, \dotsc, x_n]}_{K'}$.
    \end{claim}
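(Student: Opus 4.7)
The plan is to prove the claim by induction on $|\vars(C)\setminus\{x_1,\dots,x_n\}|$, using the outer structural induction on $e$ to dispose of subprograms $\pi$ appearing inside p-atoms of $C$.

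For the base case $\vars(C)\subseteq\{x_1,\dots,x_n\}$, I would define $f':\vars(C)\to\dom{K'}$ by $f'(x_i):=\bar v'[i]$. This is well-defined because $K,\bar v\pebblesim{k+1}K',\bar v'$ forces $\bar v\mapsto\bar v'$ to be a partial homomorphism (so equal coordinates of $\bar v$ match equal coordinates of $\bar v'$). Each r-atom of $C$ is immediately satisfied by this partial homomorphism. For a p-atom $\pi(x_i,x_j)\in C$, a routine projection/duplication property of the game yields $K,\bar v[i],\bar v[j]\pebblesim{k+1}K',\bar v'[i],\bar v'[j]$, and the outer structural induction applied to the strict sub-expression $\pi$ gives $(\bar v'[i],\bar v'[j])\in\dbracket{\pi}_{K'}$.

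For the inductive step, $\uGraph{C[x_1,\dots,x_n]}$ is a subgraph of $\uGraph{\hat C[x_1,\dots,x_n]}\in\Tw[k]$ and thus has tree-width $\leq k$. Fix a tree decomposition $(T,\bagmap)$ of it of width $\leq k$, rooted at a bag $b_0$ containing $\{x_1,\dots,x_n\}$ (such a bag exists since those variables form a clique in $\uGraph{C[x_1,\dots,x_n]}$). Pick a leaf bag $b^*\neq b_0$ together with a variable $y$ appearing only in $b^*$; then $y\notin\{x_1,\dots,x_n\}$, and every atom of $C$ mentioning $y$ has its variables entirely in $\bagmap(b^*)$, hence uses at most $k+1$ variables. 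Writing $C_y$ for those atoms and $C^{-y}:=C\setminus C_y$, the induction hypothesis applied to $C^{-y}$ yields an assignment witnessing $\bar v'\in\dbracket{C^{-y}[x_1,\dots,x_n]}_{K'}$.

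It remains to pick $f'(y)$ and verify satisfaction of $C_y$. I would run the game $\kSimGame[k+1]$ so that immediately before this choice the pebbles lie on $f(\bagmap(b^*)\setminus\{y\})$ in $K$ and on the corresponding values in $K'$; Spoiler then navigates a free pebble to $f(y)$ via a sequence of Gaifman-local steps (feasible because programs of $\CPDLg{\Tw[k]}$ keep images of $\uGraph{C}$-adjacent variables in one Gaifman component of $K$), and Duplicator's reply becomes $f'(y)$. Satisfaction of each atom in $C_y$ then follows as in the base case, distinguishing r-atoms and atomic p-atoms (partial homomorphism read from the resulting game state) from complex p-atoms (outer structural induction). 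The main obstacle I foresee is engineering the inductive statement so that the assignment supplied for $C^{-y}$ can genuinely be extended by a single coherent play of the game rather than just existing abstractly; this likely requires strengthening the induction to run along the tree decomposition itself, with the invariant that the witnessing assignment at each stage coincides with Duplicator's responses along a specific play.
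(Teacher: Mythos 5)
Your base case is fine and matches the paper's handling of the situation where every atom lives in the root bag. The inductive step has a genuine gap, and it is exactly the one you flag at the end --- flagging it does not discharge it. The induction hypothesis you set up for $C^{-y}$ only asserts that \emph{some} satisfying assignment $h'$ of $C^{-y}$ in $K'$ with $h'(x_i)=\bar v'[i]$ exists; it gives no control over the values of $h'$ on interior variables, and in particular no guarantee that the pair formed by $h$ and $h'$ on $\bagmap(b^*)\setminus\{y\}$ is a winning position for Duplicator in $\kSimGame[k+1]$. Without that relation you can neither ``run the game'' to produce $f'(y)$ nor apply the outer structural induction to a complex p-atom $\pi(y,z)\in C_y$, since that application also requires $K,h(y),h(z)\pebblesim{k+1}K',f'(y),h'(z)$ as a hypothesis. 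There is also a secondary problem with the navigation itself: when $|\bagmap(b^*)\setminus\{y\}|=k$ only one pebble is free, and a pebble may only be placed at distance $1$ from a \emph{different} pebble, so a single free pebble cannot walk an arbitrarily long path to $h(y)$; your parenthetical justification of the navigation is too quick.

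The paper's proof is organized precisely so as to avoid this re-entry problem: it inducts on the number of atoms $|C|$ and splits at the \emph{root} of the tree decomposition rather than peeling leaves. The atoms $C_0$ contained in the root bag are handled as in your base case, and each block $C_i$ of atoms living below the $i$-th child is passed to the induction hypothesis with interface $\bar y_i=\bagmap(b_i)\cap\{x_1,\dots,x_n\}$. Since $\bar y_i$ consists of root variables, $h(\bar y_i)$ and the corresponding subtuple of $\bar v'$ are coordinates of the given pair, so the hypothesis $\pebblesim{k+1}$ is inherited by mere projection, and the sub-assignments returned by the recursive calls glue into one assignment because any variable shared by two subtrees must lie in the root bag, where all of them agree with $\bar v'$ by construction. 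To rescue your leaf-peeling scheme you would have to strengthen the statement so that the witnessing assignment is produced by Duplicator's responses along a play descending the decomposition --- which amounts to rewriting the argument top-down, i.e., essentially as the paper does it.
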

    First note that the claim above implies $K',u',v' \models \hat C[x_s,x_t]$. We now prove the claim by induction on the size $|C|$ of $C$. 
    Let $(T,\bagmap)$ be a "tree decomposition" of $\uGraph{\pi}$ of "width" $k$ for $\pi = C[x_1, \dotsc, x_n]$. Without loss of generality, suppose that the root bag $b$ of $T$ is such that $\bagmap(b) = \set{x_1, \dotsc, x_n}$. 
    
    Suppose $(K, \bar v) \pebblesim{k+1} (K', \bar v')$ and $\bar v  \in \dbracket{\pi}_K$. 
    We will show $\bar v'  \in \dbracket{\pi}_{K'}$.
    Let $h$ be a "$C$-satisfying assignment" on $K$ such that $\bar v = (h(x_1), \dotsc, h(x_n))$. We need to exhibit a "$C$-satisfying assignment" $h'$ on $K'$ such that $\bar v' = (h'(x_1), \dotsc, h'(x_n))$.

    Suppose that the root $b$ of $T$ has $\ell$ children $b_1, \dotsc, b_\ell$, and let $T_1, \dotsc, T_\ell$ be the corresponding "subtrees@trees". Let $\bar y_i$ be any vector of the (pairwise distinct) variables from $\bagmap(b_i) \cap \bagmap(b)$, that is, the variables which are both in the root "bag" of $T$ and in the "bag" of its $i$-th child. We can assume that $\bar y_i$ is of dimension $\leq k$ (otherwise parent and child would have exactly the same "bag").
    Let $\bar v_0 = \bar v$, and $\bar v_i = h(\bar y_i)$ for every $1 \leq i \leq \ell$.
    Let $C_0$ be the set of all "atoms" of $C$ in the root bag (\ie, such that 
    variables are in $\bagmap(b)$). 
    Let $C_i \subseteq (C \setminus C_0)$ be the set of all "atoms" of $C \setminus C_0$ contained in "bags" of $T_i$ (\ie, such that 
    all
    variables are in some "bag" of $T_i$). 
    
    It then follows that $K,\bar v \models C[x_1, \dotsc, x_n]$ if{f} $K,\bar v_i \models C_i[\bar y_i]$ for every $0 \leq i \leq \ell$.
    Now we have two possibilities, either $C_0 = C$ or $C_0 \subsetneq C$.

        Case $C_0 = C$:~~ If $C_0 = C$,  
        this means that all the "atoms" of $C$ use variables from $\set{x_1, \dotsc, x_n}$.
        Consider the function $h' \colon \vars(C)\to K'$ such that $h'(x_i)=\bar v'[j]$ if $h(x_i)=\bar v[j]$. Let us show that $h'$ is a "$C$-satisfying assignment" on $K'$.
        For every 
        "p-atom" 
        $\pi'(x_i,x_j) \in C=C_0$, we have that $(h(x_i),h(x_j)) = (\bar v[i],\bar v[j])$ 
        is in $\dbracket{\pi'}_K$ by hypothesis on $h$.
        On the other hand we also have $K,\bar v[i], \bar v[j] \pebblesim{k+1} K',\bar v'[i],\bar v'[j]$. Hence, $(\bar v'[i], \bar v'[j]) \in \dbracket{\pi'}_{K'}$ by inductive hypothesis on $\pi'$. 
        The reasoning for "r-atoms" is analogous.
        This means that $h'$ is a "$C$-satisfying assignment" on $K'$.

        Case $C_0 \subsetneq C$:~~
        If $C_0 \subsetneq C$, then $|C_i| < |C|$ for every $i$.\footnote{There is one degenerate case in which $C_0=\emptyset$, but since the "graph" $\uGraphC{C}$ of $C$ is connected, we can also assume that the "tree decomposition" is such that there is at least one "atom" in $C_0$.} We can then apply the inductive hypothesis on each $C_i$ since we have $K,\bar v_i \models C_i[\bar y_i]$ and $K,\bar v_i \pebblesim{k+1} K',\bar v'_i$, obtaining $K',\bar v'_i \models C_i[\bar y_i]$ for some "$C$-satisfying assignment" $h'_i$ on $K'$ sending $\bar y_i$ to $\bar v'_i$. This means that $h' \eqdef h'_0 \cup \dotsb \cup h'_\ell$ is a "$C$-satisfying assignment" on $K'$ and thus that $K',h'(x_1), \dotsc, h'(x_n) \models C[x_1, \dotsc, x_n]$.

    \proofsubcase{$e = \pi^*$} Suppose $K, u,v \models \pi^*$ 
    and $K,u,v \pebblesim{k+1} K',u',v'$. We proceed by induction on the number of iterations. The base case $0$ corresponds to $K, u,v \models \pi^0$ which happens if and only if $u=v$, in which case $u'=v'$ by definition of $\pebblesim{k+1}$ and thus $K, u',v' \models \pi^0$.
    For the inductive case, $K, u,v \models \pi^n$. Note that $\pi^n$ is equivalent to $\tilde \pi = C[x_0,x_n]$ for $C = \set{\pi(x_i,x_{i+1}) \mid 0 \leq i < n}$. Since the "subexpressions" of $\tilde \pi$ are the same as those of $\pi^*$ (namely, $\pi$ and the "subexpressions" therein), we can replace $\pi^n$ with $\tilde \pi$ and invoke the previous case, obtaining  $K',u',v' \models \pi^n$.

    \proofsubcase{$\pi \circ \pi'$} This case is similar to the case $\pi^*$.

The implication 2-to-1 for the case $\UCPDLg{\Tw[k]}$ is analogous  observing that the case
 
    \proofsubcase{$e=\Univ$} is trivial, since $\Univ$ is true at any pair of nodes.

This concludes the proof.

\end{proof}

The following claim aims at stating necessary conditions for getting the logical implication from one pointed structure to another. The logic $\+L$ is a general logic whose semantics are based on $\sigma$-structures and which expresses properties of elements of such structures (namely, the formulas are unary).
\begin{claim}\AP\label{cla:conditions_for_logical_implication}
Let $K$ and $K'$ be $\sigma$-structures of a logic $\+L$, and let $u\in\dom{K}$ and $u'\in\dom{K'}$. If
\begin{enumerate}
    \item for any $\+L$-formula $\varphi$, if $K,u\models\varphi$ then there is a finite substructure $K_f$ of $K$ with $u\in\dom{K_f}$ such that $K_f,u\models\varphi$;
    \item any $\+L$-formula is closed under homomorphisms; and
    \item if $K_f$ is a finite substructure of $K$ with $u\in\dom{K_f}$ then there is an embedding from $K_f$ to $K'$ that maps $u$ to $u'$
\end{enumerate}
then any formula of $\+L$ true in $K,u$ is also true in $K',u'$.
\end{claim}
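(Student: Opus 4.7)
The proof will be an essentially trivial chain of three implications, one per hypothesis, so the plan is simply to line them up and make sure the quantifier order is correct.

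Fix an $\+L$-formula $\varphi$ with $K,u\models\varphi$. First I would apply hypothesis~(1) to obtain a finite substructure $K_f$ of $K$ containing $u$ such that $K_f,u\models\varphi$. Next I would apply hypothesis~(3) to this particular $K_f$ to obtain an embedding $h\colon K_f\to K'$ with $h(u)=u'$. Since an embedding is in particular a homomorphism, I would then invoke hypothesis~(2): closure under homomorphisms transports the satisfaction $K_f,u\models\varphi$ along $h$, yielding $K',h(u)\models\varphi$, \ie\ $K',u'\models\varphi$.

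There is no real obstacle here; the only thing to be careful about is the order in which the hypotheses are applied. In particular, hypothesis~(3) must be applied \emph{after} the finite substructure $K_f$ has been produced by hypothesis~(1), because the embedding is only guaranteed to exist for finite substructures of $K$. Also, closure under homomorphisms in~(2) should be read in the appropriate direction, namely that $K_f,u\models\varphi$ and the existence of a homomorphism $K_f\to K'$ sending $u$ to $u'$ imply $K',u'\models\varphi$; embeddings are homomorphisms so this applies directly. The whole argument should fit in a couple of lines of prose.
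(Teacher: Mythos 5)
Your argument is correct and is exactly the intended one; the paper states this claim without proof, treating the chain (1)$\to$(3)$\to$(2) as immediate. Your care about applying (3) only after the finite substructure is produced, and about reading homomorphism-closure in the forward direction along the embedding, is precisely what makes the claim go through.
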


\begin{lemma}\AP\label{lem:hyp_oneway_charact_needed}
The $1$-to-$2$ implications of Theorem \ref{thm:sim-char} fail if the corresponding hypothesis of finite degree [resp.\ finiteness] is dropped.
\end{lemma}
\color{black}

\begin{proof}
We show $\sigma$-structures $A$ and $B$ and worlds $u\in\dom{A},u'\in\dom{B}$ such that $A,u\models\phi\Leftrightarrow B,u'\models\phi$ for all positive $\UCPDLp$-formulas $\phi$ (it implies item (1) of Theorem  \ref{thm:sim-char}) but $A,u \not\pebblesim{2} B,u'$ (which in turn implies the negation of item (2) of Theorem  \ref{thm:sim-char}).

Consider the well known example of a tree $B$ with one branch of length $n$ of each length $n>0$, and the tree $A$ with one branch of length $n$ of each length $n>0$ and also an infinite branch (see "eg",~\cite[Fig.\ 2.5]{BRV01}). Formally, let $\sigma=\{a\}$, where $\arity(a)=2$, 
let $B$ be the $\sigma$-structure with domain $\{0\} \cup \{(i,j)\colon i\geq 1, j\in\{1,\dots,i\}\}$ and $a^{B}=\{(0,(i,1))\colon i\geq 1\} \cup \{((i,j),(i,j+1))\colon i,j\geq 1,j+1\leq i\}$ and let 
$A$ be the $\sigma$-structure with domain $\{0\} \cup \{(i,j)\colon i\geq 1, j\in\{1,\dots,i\}\}\cup\{(\omega,j)\colon j\in\Nat\}$ and $a^{A}=\{(0,(i,1))\colon i\geq 1,i\in\Nat\cup\{\omega\}\} \cup \{((i,j),(i,j+1))\colon i,j\geq 1,j+1\leq i\} \cup \{((\omega,j),(\omega,j+1))\colon j\geq 1\}$. 
One can check that the "positive" fragment of $\UCPDLp$ in the role of $\+L$, the structures $K=A$ and $K'=B$ and the points $u=u'=0$ satisfy conditions (1)--(3) of Claim \ref{cla:conditions_for_logical_implication}. Then for any "positive" $\UCPDLp$-"formula" $\phi$, we have that $A,0 \models \phi$ implies $B,0 \models \phi$; and for every "positive" $\UCPDLp$-"program" $\pi$, we have that $A,0,0 \models \pi$ implies $B,0,0 \models \pi$. 
However $A,0 \not\pebblesim{2} B,0$ (and so $A,0 \not\pebblesimE{2} B,0$). Spoiler's strategy consists in playing successively in the infinite branch $(\omega,0),(\omega,1),(\omega,2),\dots$ in $A$ alternating his two pebbles until Duplicator, forced to successively play in some finite branch $(i,0),(i,1),(i,2),\dots$ in $B$ for a fixed $i$, reaches $(i,i)$; at this point Spoiler wins.
\end{proof}

Lemma~\ref{lem:hyp_oneway_charact_needed}  shows that the finiteness assumptions in Theorem~\ref{thm:sim-char} cannot simply be removed altogether. We do not know whether the finite degree [resp.\ finiteness] is the `optimal' hypothesis for the $1$-to-$2$ implications. In the spirit of the classical Hennessy-Milner theorem for modal logic, it may well be possible to replace these assumptions by weaker model-theoretic conditions, such as suitable saturation hypotheses. We leave this question open, since it falls outside the scope of the present paper.
\color{black}

\subsection{Bisimulation Relation}

We define here the notion of bisimulation on $(K,K')$, as before, using a two-player game
$\intro*\kBisimGame$, as was done for the simulation game $\kSimGame$. Assume, without loss of generality, that $K,K'$ have disjoint sets of "worlds".  This time the arena of $\kBisimGame$ has a set of positions $S \cup D $, where
\begin{align*}
    S &= \set{s} \times (\kHoms(K,K')\dcup \kHoms(K',K)),\\
    D &= \set{d_1,\dotsc, d_k} {\times}\big((\dom{K}^k {\times} \dom{K'}^k) \dcup (\dom{K'}^k {\times} \dom{K}^k)\big),
\end{align*}
Spoiler owns all positions from $S$, and Duplicator all positions from $D$. 
The set of moves of $\kBisimGame$ is the smallest set satisfying the following:
\begin{enumerate}
    \item There is a move from $(s,\bar u, \bar v)$ to $(d_i,\bar u',\bar v)$ 
    if $\bar u' = \bar u[i \mapcoord w]$, where $w$ is a "world" at "distance" $\leq 1$ from $\bar u[j]$, for some $1 \leq j \leq k$ with $i\neq j$; and 
    \item \label{it:kbisim:neg} There is a move from $(s,\bar u, \bar v)$ to $(s,\bar v,\bar u)$ if $\bar u[1] = \dotsb = \bar u[k]$ (and hence $\bar v[1] = \dotsb =\bar v[k]$).
    \item There is a move from $(d_i,\bar u',\bar v)$ to $(s,\bar u',\bar v')$ if $\bar v' = \bar v[i \mapcoord w]$, where $w$ is a "world" at "distance" $\leq 1$ from $\bar v[j]$, for some $1 \leq j \leq k$ with $i\neq j$.
\end{enumerate}
Again, the winning condition for Duplicator is just any infinite play.

The intuition of the game $\kBisimGame$ is like the one of $\kSimGame$, but now Spoiler may decide to switch models at any round, provided all her pebbles (and hence all Duplicator's pebbles) are stacked over the same element. Spoiler's pebbles become Duplicator's and vice-versa. The players keep playing as usual but over the new models until Spoiler decides the switch again or the game terminates.

We also define the game $\intro*\kBisimGameE$, which is analogous to $\kBisimGame$, but with the following modifications:
The arena of the game has now a set of positions $S \cup D$, %
where $S$ is as before and
\begin{align*}
    D &= \set{d_0,\dotsc, d_k} {\times}\big((\dom{K}^k {\times} \dom{K'}^k) \dcup (\dom{K'}^k {\times} \dom{K}^k)\big),
\end{align*}
Besides rules (1), (2) and (3) above, two more rules are added:
\begin{enumerate}
    \item[(4)] There is a move from $(s,\bar u, \bar v)$ to $(d_0,\bar u',\bar v)$ 
    if $\bar u'[1] = \dots = \bar u'[k]$; and 
    \item[(5)] There is a move from $(d_0,\bar u', \bar v)$ to $(s,\bar u',\bar v')$ (observe that since $\bar u'[1] = \dots = \bar u'[k]$, then $\bar v'[1] = \dots = \bar v'[k]$)
\end{enumerate}
As before, the intuition of these new rules is that at any moment, Spoiler can stack all her pebbles in the same node $u'$ of $K$ [resp.\ of $K'$], and Duplicator has to respond in the same way, stacking all his pebbles in the same node $v'$ of $K'$ [resp.\ of $K$].
\color{black}

We now define two related notions.
\AP
\newcommand{\quasibisim}{$k$-half-bisimulation\xspace}
\newcommand{\quasibisimE}{$k$-half-$\Univ$-bisimulation\xspace}
The ""\quasibisim"" notion $\pebblequasibisim k$ is obtained as follows.
Let $K, K'$ be "Kripke structures" with disjoint sets of "worlds" (this is without loss of generality), and let $\bar v\in \dom{K}^k$ and $\bar v'\in \dom{K'}^k$. 

We  say that there is a "\quasibisim" from $K,\bar v$ to $K',\bar v'$, notated $K,\bar v \intro*\pebblequasibisim k K',\bar v'$, 
$\phantomintro{\notpebblequasibisim}$%
if the following holds: in case all "worlds" in $\bar v$ are in the same connected component of $K$, then 1) $(s,\bar v, \bar v')$ is a valid position of $\kBisimGame$ on $(K,K')$  (\ie, they induce a "partial homomorphism"), 
and 2) Duplicator has a winning strategy in $\kBisimGame$ from $(s,\bar v, \bar v')$; in case $\bar v$ has "worlds" in different connected component of $K$ then $K,\bar v \pebblesim k K',\bar v'$ holds automatically.

\AP
We say that there is a ""$k$-bisimulation"" between $K,\bar v$ and $K',\bar v'$, and we note it $K,\bar v \intro*\pebblebisim k K', \bar v'$ if $K,\bar v \pebblequasibisim k K',\bar v'$ and $K',\bar v' \pebblequasibisim k K,\bar v$.

We  say that there is a ""\quasibisimE"" from $K,\bar v$ to $K',\bar v'$, notated $K,\bar v \intro*\pebblequasibisimE k K',\bar v'$, 
if
1) $(s,\bar v, \bar v')$ is a valid position of $\kBisimGameE$ on $(K,K')$  (\ie, they induce "partial homomorphisms"), 
and 2) Duplicator has a winning strategy in $\kBisimGameE$ from $(s,\bar v, \bar v')$.

We say that there is a ""$k$-$\Univ$-bisimulation"" between $K,\bar v$ and $K',\bar v'$, and we denote it $K,\bar v \intro*\pebblebisimE k K', \bar v'$ if $K,\bar v \pebblequasibisimE k K',\bar v'$ and $K',\bar v' \pebblequasibisimE k K,\bar v$.

\color{black}

\begin{remark}\label{rem:bisim-imples-sim}
    If $K_1,\bar v_1 \pebblequasibisim k K_2,\bar v_2$ then $K_1,\bar v_1 \pebblesim k K_2,\bar v_2$
; and if $K_1,\bar v_1 \pebblequasibisimE k K_2,\bar v_2$ then $K_1,\bar v_1 \pebblesimE k K_2,\bar v_2$.
\color{black}
\end{remark}

A similar characterization result as \Cref{thm:sim-char} can be shown for "\quasibisim" and "\quasibisimE", and its proof follows the same general blueprint:

\begin{lemma}\AP\label{lem:quasibisim-char}
    Let $k\geq 2$.
       Given "$\Rels$-structures"
    $K,K'$ of "finite degree" [resp.\ finite] and "worlds" $u,v\in\dom{K}$ and $u',v'\in\dom{K'}$, the following are equivalent
    \begin{enumerate}
        \item  for every $\CPDLg{\Tw[k]}$ [resp.\ $\UCPDLg{\Tw[k]}$] "formula" $\phi$, we have $K,v \models \phi$ implies $K',v' \models \phi$; and
        \item $K,v \pebblequasibisim{k+1} K',v'$ [resp.\ $K,v \pebblequasibisimE{k+1} K',v'$];
    \end{enumerate}
    and the following are equivalent
    \begin{enumerate}
        \item for every $\CPDLg{\Tw[k]}$ [resp.\ $\UCPDLg{\Tw[k]}$] "program" $\pi$, we have $K,u,v \models \pi$ implies $K',u',v' \models \pi$; and
        \item $K,u,v \pebblequasibisim{k+1} K',u',v'$ [resp.\ $K,u,v \pebblequasibisimE{k+1} K',u',v'$].
    \end{enumerate}
Furthermore, the hypothesis of "finite degree" [resp.\ finite] is only needed for the $1$-to-$2$ implications.
\end{lemma}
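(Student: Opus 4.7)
The plan is to follow the same blueprint as the proof of Theorem \ref{thm:sim-char}, adapting it for two new ingredients: (a) the switching move (rule (\ref{it:kbisim:neg})) of $\kBisimGame$ and $\kBisimGameE$, and (b) the presence of negation in arbitrary (non-positive) formulas of $\CPDLg{\Tw[k]}$ and $\UCPDLg{\Tw[k]}$.

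For the 2-to-1 direction, I would extend the structural induction from the proof of Theorem \ref{thm:sim-char} with an additional case for $\lnot\psi$. Suppose $K,v \pebblequasibisim{k+1} K',v'$ and $K,v \models \lnot\psi$. The stacked position $(s,(v,\dots,v),(v',\dots,v'))$ is won by Duplicator, and since Spoiler can take the switching move to $(s,(v',\dots,v'),(v,\dots,v))$, Duplicator still wins from there, whence $K',v' \pebblequasibisim{k+1} K,v$. If we had $K',v' \models \psi$, the inductive hypothesis applied in the reversed direction would give $K,v \models \psi$, contradicting $K,v \models \lnot\psi$. The remaining cases (atomic propositions and programs, conjunction, tests, existentials, composition, Kleene star, conjunctive programs of bounded tree-width, and the $\Univ$-case for $\UCPDLg{\Tw[k]}$) carry over from Theorem \ref{thm:sim-char} with only cosmetic changes, since the rules of $\kBisimGame$ restricted to non-switching moves coincide with those of $\kSimGame$.

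For the 1-to-2 direction, assume $K,v \notpebblequasibisim{k+1} K',v'$ and fix Spoiler's winning strategy tree, which is finite under the finite-degree (resp.\ finite-structures) assumption but may now contain switching edges (and, in the $\Univ$-variant, $\Univ$-jumping edges). I would proceed by induction on the number of switching edges in this strategy tree. If there are none and no $\Univ$-moves either, the construction from Theorem \ref{thm:sim-char} directly yields a positive conjunctive program of tree-width at most $k$ that distinguishes $K,v$ from $K',v'$. Otherwise, pick an uppermost switching edge, going from $(s,(w,\dots,w),(w',\dots,w'))$ to $(s,(w',\dots,w'),(w,\dots,w))$. The subtree below this edge is itself a winning Spoiler strategy with strictly fewer switching edges, yielding by induction a formula $\psi$ with $K',w' \models \psi$ and $K,w \not\models \psi$; consequently $\lnot\psi$ is satisfied at $w$ in $K$ but not at $w'$ in $K'$. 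I would then plug $\lnot\psi?$ as an additional test p-atom at the variable representing $w$ in the conjunctive program constructed from the portion of the strategy above the switching edge, precisely as $\Univ$-jumps were handled at the end of the proof of Theorem \ref{thm:sim-char}.

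The principal subtlety lies in correctly interleaving switching (and, for $\UCPDLg{\Tw[k]}$, $\Univ$-) edges into the conjunctive program being built: each switching edge contributes one layer of syntactic negation, embedded inside a test subprogram, which is exactly why the full grammar of $\CPDLg{\Tw[k]}$ is needed instead of only its positive fragment. As in Theorem \ref{thm:sim-char}, the hypothesis of finite degree (resp.\ finite structures) is used solely for the $1$-to-$2$ direction, to guarantee finiteness of Spoiler's strategy tree and thus of the resulting distinguishing formula.
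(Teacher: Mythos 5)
Your proposal is correct and follows essentially the same approach as the paper: the $\lnot\psi$ case of the 2-to-1 direction via the switching move and the reversed inductive hypothesis is exactly the paper's argument, and your 1-to-2 construction—recursively turning each switching subtree into a negated test $\lnot\tup{C_i[x,x]}?$ plugged in at the corresponding variable of the parent conjunctive program—is the same construction the paper performs, merely organized as a top-down recursion on switching edges rather than the paper's explicit bottom-up processing of the ``tree of trees'' obtained by deleting switching edges.
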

\color{black}
\begin{proof}
    We show it for $\CPDLg{\Tw[k]}$. The proof for $\UCPDLg{\Tw[k]}$ follows the same adaptation as in the proof of \Cref{thm:sim-char}. \color{black} 

    \proofcase{1-to-2}
    We proceed as in the proof of \Cref{thm:sim-char}, by contrapositive. Assume $K,u,v \notpebblequasibisim{k+1} K',u',v'$. 
    The case where $((u,v), (u',v')) \not\in \kHoms(K,K')$ is  trivial, so let us assume the contrary.
    This means that Spoiler has a strategy to win in a bounded number of rounds in $\kBisimGame[k+1]$ starting from $(s,((u,v),(u',v')))$.

    The winning strategy of Spoiler is again a finite "tree", whose vertices are labeled with positions from $\kBisimGame[k+1]$. In particular: (i) the root must be labeled with $(s,((u,v),(u',v')))$, (ii) any vertex labeled $(d_i,\bar u, \bar v)$ has a child labeled $(s,\bar u', \bar v)$ for every possible move of $\kBisimGame[k+1]$, (iii) any vertex labeled $(s,\bar u, \bar v)$ has exactly one child, whose label is consistent with a move of $\kBisimGame[k+1]$ (remember that Spoiler has two types of moves this time), (iv) for every leaf (labeled $(d_i,\bar u, \bar v)$) there is no possible move from that position in $\kBisimGame[k+1]$.
    Since both $K,K'$ are of "finite degree", there is a finite number of moves departing from any position of $\kBisimGame[k+1]$, and thus the branching of the "tree" is finite. Therefore, the strategy tree is finite.
    Without loss of generality we may assume that there is no path of at most two nodes in the strategy tree labeled with $(s,\cdot,\cdot)$, namely, if a node labeled $(s,\bar u,\bar v)$ is the parent of a node labeled $(s,\bar v,\bar u)$ then this last node has as (only) child a node labeled $(d_i,\bar v,\bar u)$. This is because it is useless for Spoiler to choose to play three or more times in a row the rule (2) of the game.

    From such a winning strategy for Spoiler, consider the "tree" $T$ and labeling $\lambda \colon V(T) \to (\dom{K}^{k+1} \dcup \dom{K'}^{k+1})$ resulting from: 
    (1) replacing all pairs of nodes $x,x'$, such that $x$ and $x'$ are labeled by Spoiler, and $x$ is the parent of $x'$ in the strategy tree by a special new kind of 
\AP
    node called ``""switching node""'' $w$ (by our assumption $x'$ has an only child labeled by Duplicator, and if $x$ has a parent $z$ then $z$ is also labeled by Duplicator) and defining $\lambda(w)=\bar v$ if $x'$ was labeled $(s,\bar v,\bar u)$ (and therefore $x$ was labeled $(s,\bar u,\bar v)$, and furthermore $\bar u = (u, \dotsc, u)$ and $\bar v=(v, \dotsc, v)$),
     (2) removing all remaining vertices $y$ labeled by Spoiler positions except the root (\ie, the initial configuration, either owned by Spoiler or being a "switching node"), and adding an edge between the parent of $y$ and the (sole) child of $y$, 
    and (3) defining $\lambda(x)=\bar u$ if $x$ is labeled by Duplicator (in particular, not a "switching node") and the label of $x$ in the strategy "tree" was $(d_i,\bar u, \bar v)$. See Figure \ref{fig:bisim_proof}.

    Observe that there are no "switching nodes" $u,v$ in $T$ such that $u$ is the parent of $v$, and that if $w$ is a "switching node" then $\lambda(w)=\bar u$, for some $\bar u$ such that $\bar u = (u, \dotsc, u)$. 
    \AP
    An edge of $T$ is called a ``""switching edge""'' if it connects $x$ to $x'$ and $x'$ is a "switching node".
    Then, $T$ can be seen as a ``"tree" of "trees"'', each "tree" being a component of $T$ after removing all "switching edges". Consider all the "trees" $T_1, \dotsc, T_n$ resulting from removing such edges from $T$. The root of each $T_i$ is then a "switching node", except perhaps for the tree $T_i$ containing the root of $T$.
    \begin{figure}
    \includegraphics[scale=0.25]{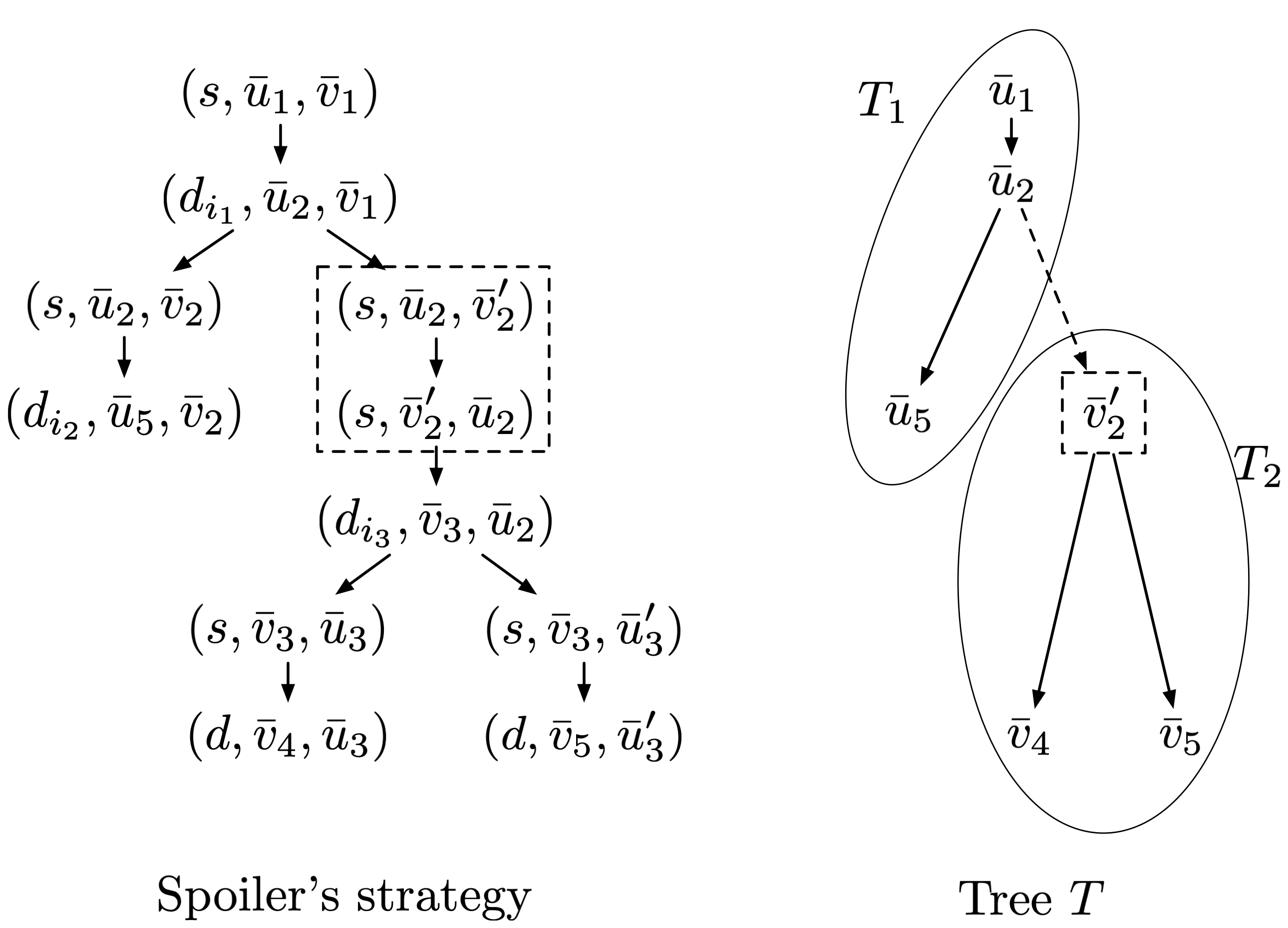}
    \caption{
    An example of a Spoiler's strategy and the tree $T$ constructed from it. Dotted nodes and dotted edges are switching. After removing "switching edges" we obtain trees $T_1$ and $T_2$. Observe that $\bar u_2$ is of the form $(u_2,\dots,u_2)$, and so $\bar v_2$ and $\bar v'_2$ are of the form $(v_2,\dots,v_2)$ and $(v'_2,\dots,v'_2)$ respectively.}
    \label{fig:bisim_proof}
    \end{figure}

    \color{black}
    We now proceed bottom-up: For each $T_i$ such that 
    no leaf has a "switching edge" to another tree,
    we proceed as in the proof of \Cref{thm:sim-char}, and construct a "conjunctive program" $C_i[x,x]$. For the inductive case, suppose we have some tree $T_j$ whose leaves may have switching edges to other trees. Then we construct the "conjunctive program" for $T_j$ as before, but we add some extra "atoms": we add the "atom" $\lnot\tup{C_i[x,x]}? (z,z)$ for the variable $z$ of $C_j$ of the form $[v,\ell]_\approx$ where $v$ is the leaf of $T_j$ incident to the "switching edge" connecting $T_j$ with $T_i$
    \footnote{Note that in this case $[v,\ell]_\approx = [v,\ell']_\approx$ for all $\ell,\ell'$, since the definition of $\kBisimGame[k+1]$ restricts these `"switching@switching node"' moves to be applied only to positions of the form $(s,(u,\dotsb, u), (v, \dotsb, v))$.} 
    By a similar remark as in the proof of \Cref{thm:sim-char}, we can assume that each $C_i$ has a finite number of "atoms" (this time using "finite degree" of the `current' "structure@@kripke").

    Finally, one can show that the "conjunctive program" $C[x_s,x_t]$ resulting from processing all the $T_i$'s faithfully describes the winning strategy for Spoiler, and thus that it is not possible to have $K',u',v' \models C[x_s,x_t]$.

    \proofcase{2-to-1} Assume $K,v \pebblequasibisim{k+1} K',v'$. The proof goes by structural induction on the "formula" as in \Cref{thm:sim-char}. 
    The base cases are just like in \Cref{thm:sim-char} due to \Cref{rem:bisim-imples-sim}.
    If $\phi = \lnot \psi$, then by \Cref{it:kbisim:neg} we have $K',v' \pebblequasibisim{k+1} K,v$ and thus by inductive hypothesis we have $K',v' \models \psi$ implies $K,v \models \psi$. The latter is the same as $K,v \models \phi$ implies $K',v' \models \phi$.
    The remaining cases are exactly as in the proof of \Cref{thm:sim-char}, with the only difference that we use the inductive hypothesis on "formulas" which may contain negation.
\end{proof} 

Observe that, since "formulas" are closed under negation, from the previous characterization \ifarxiv (\Cref{lem:quasibisim-char}) \fi  we obtain that $K,v \pebblequasibisim{k+1} K',v'$ if, and only if, for every $\CPDLg{\Tw[k]}$ "formula" $\phi$, we have $K,v \models \phi$ if{f} $K',v' \models \phi$. In order to have a similar result for "programs" we just need the symmetrical closure of this notion, which is the "$k$-bisimulation":

\begin{thm}
    \label{them:bisim-charac}
    Let $k\geq 2$. Given
    "$\Rels$-structures"
    $K,K'$ of "finite degree" [resp.\ finite] and "worlds" $u,v\in\dom{K}$ and $u',v'\in\dom{K'}$, the following are equivalent
    \begin{enumerate}
        \item for every $\CPDLg{\Tw[k]}$ [resp.\ $\UCPDLg{\Tw[k]}$] "formula" $\phi$, we have $K,v \models \phi$ if{f} $K',v' \models \phi$; and
        \item $K,v \pebblebisim{k+1} K',v'$ [resp.\ $K,v \pebblebisimE{k+1} K',v'$];
    \end{enumerate}
    and the following are equivalent
    \begin{enumerate}
        \item for every $\CPDLg{\Tw[k]}$ [resp.\ $\UCPDLg{\Tw[k]}$] "program" $\pi$, we have $K,u,v \models \pi$ if{f} $K',u',v' \models \pi$; and
        \item $K,u,v \pebblebisim{k+1} K',u',v'$ [resp.\ $K,u,v \pebblebisimE{k+1} K',u',v'$].
    \end{enumerate}
Furthermore, the hypothesis of "finite degree" [resp.\ finite] is only needed for the $1$-to-$2$ implications.
\end{thm}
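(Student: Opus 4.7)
The plan is to derive this theorem directly from \Cref{lem:quasibisim-char} by decomposing each biconditional in item~(1) into a pair of one-directional implications and applying the lemma once for each direction. The observation that makes the reduction work is that a $(k{+}1)$-bisimulation is by definition precisely the conjunction of two half-bisimulations in opposite directions, and that the biconditional ``$K,v \models \phi$ iff $K',v' \models \phi$ for all $\phi$'' is symmetric in the two pointed structures, hence splits into two implications in opposite directions.

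I would first handle the formula case over $\CPDLg{\Tw[k]}$. The condition that $K,v \models \phi$ iff $K',v' \models \phi$ for every "formula" $\phi$ is, by its symmetry in the two structures, logically equivalent to the conjunction of $K,v \models \phi \Rightarrow K',v' \models \phi$ (for all $\phi$) together with its mirror $K',v' \models \phi \Rightarrow K,v \models \phi$ (for all $\phi$). By the formula part of \Cref{lem:quasibisim-char}, the first implication is equivalent to $K,v \pebblequasibisim{k+1} K',v'$ and the second to $K',v' \pebblequasibisim{k+1} K,v$; their conjunction is, by definition, $K,v \pebblebisim{k+1} K',v'$, closing the equivalence.

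The program case is handled identically: the biconditional ``$K,u,v \models \pi$ iff $K',u',v' \models \pi$ for every $\CPDLg{\Tw[k]}$-"program" $\pi$'' splits into two implications in opposite directions, each of which, by the program part of \Cref{lem:quasibisim-char}, is equivalent to one of the two half-bisimulations that together form $\pebblebisim{k+1}$. The $\UCPDLg{\Tw[k]}$ variants go through by exactly the same argument, now substituting $\pebblequasibisimE{k+1}$ and $\pebblebisimE{k+1}$ and invoking the $\Univ$-form of \Cref{lem:quasibisim-char}.

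Regarding the side remark on the hypotheses: \Cref{lem:quasibisim-char} needs finite degree [resp.\ finiteness] only for its 1-to-2 direction, and I would invoke it in that direction exactly when deriving a half-bisimulation from an implication over "formulas" or "programs" — that is, when proving the 1-to-2 direction of the present theorem. Hence the hypothesis propagates transparently to the same direction. I do not foresee any substantive obstacle here: with \Cref{lem:quasibisim-char} already in hand, the bisimulation characterization is an almost immediate corollary obtained by exploiting the symmetry of biconditional indistinguishability.
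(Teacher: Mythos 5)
Your proposal is correct and is essentially the paper's own derivation: the theorem is obtained as an immediate corollary of \Cref{lem:quasibisim-char} by splitting each biconditional into two opposite implications, characterizing each by a half-bisimulation, and noting that $\pebblebisim{k+1}$ (resp.\ $\pebblebisimE{k+1}$) is by definition the conjunction of the two, with the finite-degree [resp.\ finiteness] hypothesis used only when invoking the 1-to-2 direction of the lemma. The only cosmetic difference is that the paper additionally remarks that, formulas being closed under negation, the single half-bisimulation $\pebblequasibisim{k+1}$ already characterizes the formula biconditional — an observation your symmetric splitting does not need, and which is genuinely required only to motivate passing to the symmetric closure for programs.
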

\color{black}

\begin{claim}\AP\label{cla:equiv_positive_implies_equiv_full}
Let $K$ and $K'$ be $\sigma$-structures. If $K,u \models \phi \Leftrightarrow K',u' \models \phi$ for every "positive" $\UCPDLp$-"formula" $\phi$ and every $u\in\dom{K},u'\in\dom{K'}$, then $K,u \models \psi \Leftrightarrow K',u' \models \psi$ for every $\UCPDLp$-"formula" $\psi$ and every $u\in\dom{K},u'\in\dom{K'}$.
\end{claim}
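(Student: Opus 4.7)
The plan is to proceed by a joint structural induction on $\UCPDLp$-expressions, measured by the size function $\pdlsize{\cdot}$, proving simultaneously two statements: (F)~every $\UCPDLp$-formula $\psi$ satisfies $K, u \models \psi$ iff $K', u' \models \psi$ for every $u \in \dom K$, $u' \in \dom{K'}$; and (P)~every $\UCPDLp$-program $\pi$ is in one of two normalised states on $K$ and $K'$: either $\dbracket{\pi}_K$ and $\dbracket{\pi}_{K'}$ are both empty, or there exists a positive $\UCPDLp$-program $\pi^+$ with $\dbracket{\pi}_K = \dbracket{\pi^+}_K$ and $\dbracket{\pi}_{K'} = \dbracket{\pi^+}_{K'}$. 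Statement (F) is the desired conclusion, while (P) is an auxiliary that feeds back into (F) through the case $\psi = \tup{\pi}$: if $\pi$ is empty on both structures then $\tup{\pi}$ is uniformly false, otherwise $\tup{\pi}$ agrees on $K$ and $K'$ with the positive formula $\tup{\pi^+}$, whose uniformity follows immediately from the hypothesis of the claim.

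The formula side of the induction is routine: propositional atoms are positive (base case of the hypothesis), and $\neg$ and $\land$ reduce directly to the IH. On the program side, the atomic programs together with $\epsilon$ and $\Univ$ are already positive, so one may take $\pi^+ = \pi$. The regular-expression connectives $\cup$, $\circ$ and $(\cdot)^{*}$ combine the IH outputs via elementary semantic identities (positive equivalents propagate through $\cup$, $\circ$ and $(\cdot)^{*}$; emptiness propagates through $\circ$; and $\emptyset^{*}\semequiv\epsilon$). The test case $\pi = \phi?$ invokes (F) on the strictly smaller $\phi$: a uniformly true $\phi$ yields $\pi^+ = \epsilon$, while a uniformly false $\phi$ forces empty semantics on both structures.

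The genuinely non-trivial case, and the main obstacle, is the conjunctive program $\pi = C[x_s, x_t]$. The plan is to apply (P) to every p-atom's program $\pi_i$ occurring in $C$. If some $\pi_i$ falls in the empty alternative, then that atom cannot be witnessed by any assignment on either structure, whence $\dbracket{\pi}_K = \dbracket{\pi}_{K'} = \emptyset$ and (P) holds. Otherwise every $\pi_i$ has a positive equivalent $\pi_i^+$, and substituting each $\pi_i$ by $\pi_i^+$ in $C$ (leaving the r-atoms untouched, since they are automatically positive) yields a positive conjunctive program $\tilde C[x_s,x_t]$ semantically equivalent to $C[x_s,x_t]$ on both $K$ and $K'$; this is the required $\pi^+$ for (P), and via the link between (P) and (F) explained above, it settles the uniformity of $\tup{C[x_s,x_t]}$. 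The reason (P) has to be phrased at the relational level rather than only through the existential closure $\tup{\pi_i}$ is precisely this substitution step: replacing a program \emph{inside} a p-atom position cannot be justified from (F) alone, which only controls formulas.
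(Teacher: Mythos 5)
Your proof is correct. The paper gives only a one-line proof idea---induction on the number of negations of $\psi$, exploiting the fact that negation applies only to (unary) formulas so that the inductive hypothesis can be invoked on the negated subformula---whereas you run a mutual structural induction on formulas and programs with the auxiliary program-level invariant (P). The two arguments share the same engine (everything bottoms out in the hypothesis on positive formulas, and the $\lnot$ case is dispatched exactly because the negated subexpression is again a unary formula), but your invariant (P) is a genuine addition, and arguably a necessary one: since $\UCPDLp$ has no negation-free falsum, one cannot in general replace a uniformly false negated subformula by a positive equivalent, so a naive ``eliminate innermost negations one at a time'' induction gets stuck on the uniformly-true case of the negated subformula; your ``empty on both structures'' alternative is precisely the bookkeeping that lets falsity propagate correctly through tests, conjunctive programs, and the regular operators up to the nearest $\tup{\cdot}$, where the positivity hypothesis can finally be applied. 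Two points are worth making explicit, though both are immediate: (i) statement (F) for a subformula $\phi$, together with nonemptiness of the two domains (the empty-domain case renders the whole claim vacuous), forces $\phi$ to have a single uniform truth value across both structures, which is what the $\phi?$ case silently uses; and (ii) substituting $\pi_i^+$ for $\pi_i$ inside a conjunctive program leaves the set of variables and the underlying graph, hence well-formedness and connectedness, unchanged.
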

\begin{proof}[Proof idea.]
We use induction in the number of negations of $\psi$ and the fact that negation can only appear in front of "formulas" (not "programs"). This allows us to apply the inductive hypothesis. The base case corresponds to the hypothesis for the "positive" formulas.
\end{proof}

We obtain an analogue of \Cref{lem:hyp_oneway_charact_needed} for \Cref{them:bisim-charac} 

\begin{lemma}\AP\label{lem:hyp_twoway_charact_needed}
The $1$-to-$2$ implications of Theorem \ref{them:bisim-charac} fail if the corresponding hypothesis of finite degree [resp.\ finiteness] is dropped.
\color{black}
\end{lemma}

\begin{proof}
We show $\sigma$-structures $Q$ and $N$ and worlds $q\in\dom{Q},n\in\dom{N}$ such that $Q,q\models\phi \Leftrightarrow N,n\models\phi$ for all $\UCPDLp$-formulas $\phi$ (it implies item (1) of Theorem  \ref{them:bisim-charac}) but $Q,q \not\pebblesim{3} N,n$ (which in turn implies the negation of item (2) of Theorem  \ref{them:bisim-charac}).

Let $\sigma=\{a\}$, where $\arity(a)=2$, and consider the $\sigma$-structures $N=(\Nat,<)$, $Q=(\Rat,<)$, where $a$ is interpreted as the order $<$ in both structures. One can verify that the hypothesis of Claim \ref{cla:conditions_for_logical_implication} are satisfied for the "positive" fragment of $\UCPDLp$ in the role of $\+L$, $K,u=Q,q$ and $K',u'=N,n$ for any $q\in\Rat$ and $n\in\Nat$ and also $K,u=N,n$ and $K',u'=Q,q$ for any $q\in\Rat$ and $n\in\Nat$. We conclude then that for any "positive" $\UCPDLp$-formula $\varphi$ and any $q\in\Rat$ and $n\in\Nat$ we have 
$Q,q\models\varphi$ iff $N,n\models\varphi$. 
By Claim 
\ref{cla:equiv_positive_implies_equiv_full}
we conclude that $Q,q \models \psi$ iff $N,n \models \psi$ for every $\UCPDLp$-"formula" $\psi$ and any $q\in\Rat$ and $n\in\Nat$.
However, $Q,0 \not\pebblesim{3} N,0$ (and hence $Q,0 \not\pebblebisim{3} N,0$). The idea for Spoiler's strategy is to enclose Duplicator in smaller and smaller intervals. Since $\Nat$ is not dense, Spoiler will win. In more detail: Spoiler leaves her first pebble in $0$ forever. First she moves her second pebble to $1$. Duplicator answers by moving his second pebble to say $m\in\Nat$. Then Spoiler successively uses her second and third pebbles to visit $1/2,1/3,1/4,\dots$. After at most $m$ rounds, Spoiler will force Duplicator to move his second or third pebble into the point $0$, a position that is winning for Spoiler. 
\end{proof}

\color{black}

\subsection{Computing the (bi)Simulation}\label{sec:bisim-calc}
Observe that when $K,K'$ are finite "structures@@kripke", the games $\kSimGame$ and $\kBisimGame$ are finite as well, with $O(|\dom{K}|^k \cdot |\dom{K'}|^k)$ positions. The winning condition for Duplicator is in both cases a ``Safety condition''. Hence, the set of winning positions can then be computed in time linear in the number of moves and positions of the game, for example via the classical McNaughton-Zielonka computation of its attractor decomposition. 
This yields a polynomial algorithm, which runs in time $(|\dom{K}|\cdot|\dom{K'}|)^{O(k)}$, for computing the "$k$-simulation" and "$k$-bisimulation" relations $\pebblesim{k}$ and $\pebblebisim{k}$.

\begin{proposition}\AP
    For every $k$, the relations $\pebblesim k$ and $\pebblebisim k$ on finite "structures@@kripke" $K,K'$ can be computed in polynomial time. So can the relations $\pebblesimE k$ and $\pebblebisimE k$.
\end{proposition}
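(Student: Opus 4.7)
The plan is to observe that each of the four games $\kSimGame$, $\kBisimGame$, $\kSimGameE$, $\kBisimGameE$ is a finite-arena Safety game (from Duplicator's viewpoint) whose arena has size polynomial in $|\dom{K}|$ and $|\dom{K'}|$ for any fixed $k$, and then invoke the standard linear-time algorithm for solving Safety games.

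First I would write down the arena explicitly. For $\kSimGame$ on finite $K,K'$, the Spoiler positions $S$ have cardinality at most $|\dom{K}|^k \cdot |\dom{K'}|^k$, the Duplicator positions $D$ have cardinality at most $k \cdot |\dom{K}|^k \cdot |\dom{K'}|^k$, and the number of moves is bounded by $(|\dom{K}|\cdot|\dom{K'}|)^{O(k)}$ since every move only rewrites one coordinate. For the $\Univ$ variants, we additionally include positions indexed by $d_0$ and moves corresponding to ``stacking all pebbles together''; the number of extra positions and moves remains within the same polynomial bound. For $\kBisimGame$ and $\kBisimGameE$ the arena is essentially doubled (tuples over $K$ paired with tuples over $K'$, and vice versa) but the asymptotic size is unchanged.

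Next I would identify the winning condition. Duplicator wins every infinite play, and Spoiler wins whenever the play reaches a Duplicator position from which no move is available. This is exactly a Safety objective for Duplicator (equivalently, a Reachability objective for Spoiler toward the set of ``stuck'' Duplicator positions). It is well known (see e.g.\ \cite[Chapter~2]{gradel2003automata}) that reachability/safety games are positionally determined and that the set of winning positions for either player can be computed in time linear in the number of positions plus moves, via an attractor computation: iteratively mark as ``Spoiler-winning'' every Spoiler position with some move into the current set, and every Duplicator position whose every move lands in the current set, starting from the stuck Duplicator positions.

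Finally, I would conclude by observing that $K,\bar v \pebblesim{k} K',\bar v'$ holds exactly when the initial position $(s,\bar v,\bar v')$ is winning for Duplicator (modulo the trivial case of $\bar v$ living in different connected components of $K$, which is handled by definition and can be checked in polynomial time by standard connectivity), and similarly for the other three relations. Since the attractor computation runs in time polynomial in the arena size, the whole procedure runs in time $(|\dom{K}|\cdot|\dom{K'}|)^{O(k)}$. There is no real obstacle here; the only point worth mentioning is that one must verify that the connected-component side-condition in the definitions of $\pebblesim{k}$ and $\pebblequasibisim{k}$ is easy to handle, and that the edges of the game arena involving ``distance $\le 1$'' can be enumerated in polynomial time from the Gaifman graphs of $K$ and $K'$.
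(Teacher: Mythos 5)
Your proposal is correct and follows essentially the same route as the paper: view each game as a finite Safety game with $(|\dom{K}|\cdot|\dom{K'}|)^{O(k)}$ positions and moves, solve it by the standard linear-time attractor computation, and read off the relations from the winning region. The only (welcome) additions are your explicit remarks on the connected-component side-condition and the $d_0$ positions for the $\Univ$ variants, which the paper leaves implicit.
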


\section{Separation}
\label{sec:separation}

In this section we show that, while $\CPDLg{\Tw[1]}$ and $\CPDLg{\Tw[2]}$ are "equi-expressive", we have a strict hierarchy $\CPDLg{\Tw[2]} \lleqs \CPDLg{\Tw[3]} \lleqs \dotsb$ from $\Tw[2]$. 
We also separate $\CPDLp$ from other studied logics.

\subsection{Separation of the Tree-width Bounded Hierarchy of \texorpdfstring{$\CPDLp$}{CPDL⁺}}

\color{black}
\begin{thm}\label{thm:separation}
    For every $k \geq 2$, $\CPDLg{\Tw[k]} \lleqs \CPDLg{\Tw[k+1]}$. Furthermore, $\CPDLg{\Tw[k]} \lleqs[\Kripke] \CPDLg{\Tw[k+1]}$.
\end{thm}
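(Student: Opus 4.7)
The inclusion $\CPDLg{\Tw[k]} \lleq \CPDLg{\Tw[k+1]}$ is immediate, since $\Tw[k] \subseteq \Tw[k+1]$ and the identity translation works. For the strict separation, my plan is to exhibit a $\CPDLg{\Tw[k+1]}$-formula $\phi$ and two pointed "Kripke structures" that are $\CPDLg{\Tw[k]}$-equivalent but distinguished by $\phi$, invoking the bisimulation characterization of \Cref{them:bisim-charac}.

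Concretely, I would take $\phi \eqdef \tup{C[x_1,x_1]}$ where $C \eqdef \set{a(x_i,x_j) \mid 1 \leq i \neq j \leq k+2}$ for some $a \in \Prog$; this formula says ``there is a $(k+2)$-clique of $a$-edges at the current world''. Its "underlying graph" is the complete graph on $k+2$ vertices, which has "tree-width" exactly $k+1$, so $\phi \in \CPDLg{\Tw[k+1]}$. As witnesses I would take, for each $n$, the "Kripke structure" $K_n$ with $\dom{K_n} = \set{1,\dotsc,n}$ and $a^{K_n} = \set{(i,j) \mid i \neq j}$. Clearly $K_{k+2}, 1 \models \phi$ (send $x_i$ to $i$) while $K_{k+1}, 1 \not\models \phi$ (only $k+1$ distinct vertices are available).

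The key step is then to prove $K_{k+2}, 1 \pebblebisim{k+1} K_{k+1}, 1$, which by \Cref{them:bisim-charac} entails that no $\CPDLg{\Tw[k]}$-"formula" can distinguish them and hence that $\phi$ has no $\CPDLg{\Tw[k]}$-equivalent. Duplicator's strategy in $\kBisimGame[k+1]$ is to preserve at all times the invariant
\[
\bar u[i] = \bar u[j] \;\Longleftrightarrow\; \bar v[i] = \bar v[j] \qquad \text{for all } 1 \leq i,j \leq k+1.
\]
Since both $K_{k+2}$ and $K_{k+1}$ are cliques under $a$, this invariant alone guarantees that $(\bar u,\bar v)$ (and its swap) induces a "partial homomorphism" in either direction, and the "distance" $\leq 1$ constraint is vacuous because every two "worlds" in a clique are adjacent. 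Answering a move of Spoiler who replaces $\bar u[i]$ (resp.\ $\bar v[i]$) by some $w$ is immediate: if $w$ already occurs in the remaining pebbles, Duplicator copies the corresponding value; otherwise Duplicator must pick a fresh vertex on the opposite side, which is possible because at most $k$ distinct vertices are occupied by the remaining pebbles and each side has at least $k+1$ vertices. The "switching" move of rule~(\ref{it:kbisim:neg}) is available only when all pebbles coincide, and in that case the invariant is preserved trivially.

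I do not anticipate substantial obstacles: the construction is essentially the clique/clique separation familiar from pebble games, and the tightest point is simply to verify Duplicator's strategy, which requires $k+1$ vertices on the ``small'' side to accommodate up to $k+1$ distinct pebble positions. Finally, the separation over arbitrary "structures" follows from the Kripke version, since any translation $T : \CPDLg{\Tw[k+1]} \to \CPDLg{\Tw[k]}$ preserving "equivalence" on all "structures" would in particular preserve "equivalence" on "Kripke structures", contradicting the above.
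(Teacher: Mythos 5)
Your proposal is correct, and it reaches the separation by a genuinely different route than the paper. The paper also separates via a clique: it uses the $(k+2)$-clique formula $\xi_{k+2}\in\CPDLg{\Tw[k+1]}$ (which carries an extra negated conjunct $\lnot\tup{\set{a(x_1,y),a(y,y)}[x_1,y]}$ to force the clique homomorphism to be injective), and argues that $\xi_{k+2}$ is satisfiable yet cannot hold in any model of tree-width $\leq k$, contradicting the $\Tw$-model property of \Cref{cor:treewidth-k-model-property} --- which is itself obtained from the unravelling construction of \Cref{prop:treewidth-k-unravelling}, the countable model property, and \Cref{them:bisim-charac}. You instead keep the separating formula positive and avoid the injectivity gadget by evaluating it on two explicit finite structures, the loop-free directed cliques $K_{k+2}$ and $K_{k+1}$, and you invoke only the (unconditional) 2-to-1 direction of \Cref{them:bisim-charac} after establishing $K_{k+2},1 \pebblebisim{k+1} K_{k+1},1$ via the equality-pattern invariant; your pigeonhole count ($k$ remaining pebbles versus $k+1$ available worlds on the small side), the observation that the distance-$1$ and switching constraints are vacuous on cliques, and the transfer from the $\Kripke$-separation to the general one are all sound. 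Your argument is thus more elementary and self-contained (no unravelling and no L\"owenheim--Skolem step), at the price of not yielding the tree-like model property as a by-product, which the paper records as a result of independent interest.
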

Essentially we show that the presence of a $(k+1)$-clique can be expressed in $\CPDLg{\Tw[k]}$ but not in $\CPDLg{\Tw[k-1]}$, for every $k\geq 3$. 
\AP
Consider the following ""$(k+1)$-clique formula"" 
\[
    \xi_{k+1} \eqdef \tup{C[x_1,x_{k+1}]} \land \lnot \tup{C'[x_1,y]}
\] 
of $\CPDLg{\Tw[k]}$ where $C = \set{a(x_i,x_j) \mid 1 \leq i < j \leq k+1}$ and $C' = \set{a(x_1,y), a(y,y)}$, for any fixed $a \in \Prog$. We will show that $\xi_{k+1}$ cannot be expressed in $\CPDLg{\Tw[k-1]}$, for every $k \geq 3$.
This inexpressivity result will be a direct consequence of the following ``tree-like model property'', which is of independent interest.
\begin{proposition}\AP\label{prop:treewidth-k-unravelling}
    For every $k \geq 2$, "Kripke structure" $K$, and world $u \in \dom{K}$, there exists a "Kripke structure" $\hat K$ of "tree-width" $\leq k-1$ and "world" $\hat u \in \dom{\hat K}$ such that $K,u \pebblebisim{k} \hat K,\hat u$.
    Further, if $K$ is countable, $\hat K$ has a countable "good" "tree decomposition" of "width" $\leq k-1$.
\end{proposition}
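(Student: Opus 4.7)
The plan is to construct $\hat K$ as an "unraveling" of $K$ guided by the moves of the $k$-pebble bisimulation game, analogous to the classical tree unraveling for modal logic but tracking $k$ pebbles at once so as to produce a structure of tree-width $\leq k-1$.

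I would first build an infinite tree $T$ of game configurations. Each node of $T$ is labeled with a tuple $\bar w \in \dom{K}^k$. The root $x_0$ carries $(u, u, \ldots, u)$. A node $x$ with label $\bar w_x$ has a distinct child for every pair $(i, w')$ with $i \in \{1, \ldots, k\}$ and $w' \in \dom{K}$ at distance $\leq 1$ from some $\bar w_x[j]$ with $j \neq i$, the child being labeled $\bar w_x[i \mapcoord w']$. Then $\hat K$ is defined to have as universe the equivalence classes $[x, i]$ of pairs $(x, i) \in V(T) \times \{1, \ldots, k\}$, under the equivalence relation generated by $(x, i) \sim (y, i)$ whenever $x$ and $y$ are adjacent in $T$ and their labels agree in position $i$. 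The interpretation of $p \in \Prop$ at $[x, i]$ mirrors that of $\bar w_x[i]$ in $K$, and for $a \in \Prog$ we put $([x, i], [x, j]) \in a^{\hat K}$ iff $(\bar w_x[i], \bar w_x[j]) \in a^K$, and nothing else.

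The pair $(T, \bagmap)$ with $\bagmap(x) = \{[x, 1], \ldots, [x, k]\}$ is a tree decomposition of the Gaifman graph of $\hat K$ of width $\leq k-1$: every class $[x, i]$ appears in $\bagmap(x)$; every $\hat K$-edge sits inside some bag by construction; and the bags containing a fixed class $[x, i]$ form the maximal connected subtree through $x$ along which position $i$ is preserved. To obtain the \emph{good} version, I would insert between each parent-child pair $x, y$ of $T$ an intermediate bag $\bagmap(x) \cap \bagmap(y)$ (yielding nested adjacent bags) and binarize using standard surgery that does not increase width. When $K$ is countable, $T$ has countable depth and each node has countably many children, so $T$ and therefore $\hat K$ are countable.

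Finally, I would verify $K, u \pebblebisim k \hat K, \hat u$ with $\hat u = [x_0, 1]$ by exhibiting a winning strategy for Duplicator. She maintains the invariant that at every game position the current state corresponds to some tree node $x$ with $\bar u = \bar w_x$ and $\bar v = ([x, 1], \ldots, [x, k])$. Spoiler's move on $K$ updating pebble $i$ to a neighbor $w'$ of $\bar w_x[j]$ is mirrored by selecting the child $y$ of $x$ labeled $\bar w_x[i \mapcoord w']$; Spoiler's move on $\hat K$ to an element adjacent to some $[x, j]$ necessarily lands in a common bag $y$ with $[y, j] = [x, j]$, from which Duplicator reads off the corresponding world $\bar w_y[i']$ in $K$. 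The switching move (rule 2) is handled symmetrically, since stacked configurations on either side correspond to tree nodes with stacked labels. The \emph{main obstacle} I expect is Duplicator's invariant maintenance when Spoiler's destination $[z, i']$ lies in a bag $y \neq x$: the other pebble positions may no longer lie in a common bag with the newly moved pebble, and preserving the partial-homomorphism condition across the whole $k$-tuple requires a careful analysis that the only-intra-bag definition of $a^{\hat K}$ prevents spurious adjacencies; this may force a mild reformulation of the invariant (tracking an explicit "pivot" tree-node accompanying the game position) rather than insisting that all of Duplicator's pebbles live in one bag throughout.
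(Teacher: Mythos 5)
Your construction is essentially the paper's: both unravel $K$ into an infinite tree whose nodes record configurations of $k$ pebbles, quotient the pairs (tree node, pebble) by the relation identifying a pebble across adjacent tree nodes when it does not move, interpret atomic programs only \emph{inside} a single node's bag, and read off both the width-$(k-1)$ tree decomposition and Duplicator's strategy from that tree. The one substantive divergence is that you index the worlds of $\hat K$ by \emph{positions} $(x,i)$ with $i\in\{1,\dots,k\}$, whereas the paper labels tree nodes by \emph{sets} of at most $k$ worlds and indexes the elements of $\hat K$ by pairs (world, node).

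That difference is not cosmetic: it bites exactly at configurations where two pebbles coincide in $K$. Your equivalence relation only ever relates pairs sharing the same index $i$, so $[x,i]$ and $[x,j]$ remain distinct even when $\bar w_x[i]=\bar w_x[j]$; then the assignment $\bar w_x[i]\mapsto [x,i]$ is not a function, hence $(\bar w_x,([x,1],\dots,[x,k]))\notin\kHoms(K,\hat K)$ and your claimed invariant position is not even a legal position of the game. This already fails at the root (labelled $(u,\dots,u)$, where the statement to prove concerns the tuple $(\hat u,\dots,\hat u)$, not $([x_0,1],\dots,[x_0,k])$), and, more seriously, it breaks rule (2) of $\kBisimGame$ --- the model-switching move needed to handle negation --- which requires that whenever all of Spoiler's pebbles are stacked on one world, all of Duplicator's are stacked on one element too. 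The repair is precisely the paper's choice: additionally identify $(x,i)\sim(x,j)$ whenever $\bar w_x[i]=\bar w_x[j]$, i.e.\ index elements of $\hat K$ by worlds rather than by pebble positions; with that change your invariant and strategy go through. As for the obstacle you flag at the end, it resolves the way you suspect and needs no weakening of the invariant: since $a^{\hat K}$ only relates elements co-occurring in one bag, a Spoiler move in $\hat K$ to a neighbour of $[x,j]$ must land in a bag of the connected subtree along which pebble $j$'s world persists, so Duplicator re-bases her reference node at the appropriate child and translates the move back to a distance-$1$ move in $K$.
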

\begin{proof}
    
    The idea of the proof is to build from $K$ a tree-like companion structure $\hat K$ in which each node of an auxiliary "tree" remembers only a local configuration of at most $k$ "worlds" of $K$. We then replace each "world" of $K$ by copies indexed by nodes of this "tree", and identify copies whenever they represent the same "world" in two consecutive local configurations. In this way, $\hat K$ unfolds $K$ into a structure of "tree-width" at most $k-1$, while still preserving exactly the information that is relevant for the "$k$-pebble bisimulation@$k$-bisimulation" game. Thus, $\hat K$ is tree-like enough to have small "tree-width", but remains "$k$-bisimilar" to the original structure $K$.
    \color{black}

    Let us now give all the technical details. 
    
    Consider the set $\+S$ of all nonempty sets of at most $k$ "worlds" from $K$. Consider the infinite "tree" $T$ having $V(T) = \set{\set{u} \cdot \bar x \mid \bar x \in \+S^*}$ 
    \color{black}
    (\ie, vertices are finite sequences on $\+S$ starting with $\set u$) and having an edge $\set{w,v}$ if $w$ is a prefix of $v$ of length $|v|-1$. Consider the root vertex $r$ of $T$ to be $r = \set{u}$.
    
    Let $\lambda \colon V(T) \to S$ be defined as follows: if $v=u_1\cdots u_\ell$ with
$\ell\geq 1$, $u_i\in S$ for every $i$, and $u_1=\set{u}$, then $\lambda(v)\eqdef u_\ell$.
In other words, $\lambda(v)$ is the last set occurring in the finite sequence $v$.
    \color{black}

    For a pair of worlds $w,w'$ of $K$ and a pair of vertices $v,v'$ of the "tree" $T$, let us write $\approx$ to denote the reflexive, symmetric and transitive closure of $\set{((w,v),(w',v'))\mid v' \text{ is child of } v \text{ in $T$, }w=w' \text{ and } w\in \lambda(v) \cap \lambda(v')}$. Let $[w,v]_\approx$ denote the equivalence class of $(w,v)$ with respect to $\approx$.

    We now define $\hat K$ as having 
    \begin{enumerate}
        \item \label{item:Kunravel:1} $\dom{\hat K} = \set{[w,v]_\approx \mid w \in \dom{K}, v \in V(T), w \in \lambda(v)}$;
        \item for every "atomic program" $a \in \Prog$ we have that $\dbracket{a}_{\hat K}$ consists of all pairs $([w,v]_\approx,[w',v]_\approx)$ of "worlds" such that $(w,w') \in \dbracket{a}_{K}$; and
        \item for every "atomic proposition"  $p \in \Prop$ we have that $\dbracket{p}_{\hat K}$ consists of all "worlds" $[w,v]_\approx$ such that $w \in \dbracket{p}_{K}$.
    \end{enumerate}
    Finally, consider the labeling $\bagmap \colon V(T) \to \pset{\dom{\hat K}}$ mapping every vertex $v$ to $\set{[w,v]_\approx \mid w \in \lambda(v)}$.

    One may visualize the pair $(T,\bagmap)$ as follows. Each vertex $v$ of the "tree" $T$ carries a small local configuration $\lambda(v) \in \+S$ of at most $k$ "worlds" from $K$, and the corresponding "bag" $\bagmap(v)$ contains one local copy $[w,v]_\approx$ of each $w \in \lambda(v)$. Thus, moving downwards in $T$ amounts to changing the current local configuration, while the relation $\approx$ glues together those local copies that correspond to the same "world" and occur in two consecutive configurations. In this picture, a single "world" of $\hat K$ is obtained by following the same original "world" through a connected portion of $T$, and every "bag" records only the local copies visible at one vertex.
    \color{black}

    It follows that $(T,\bagmap)$ is a "tree decomposition" of $\hat K$ of "tree-width" at most $k-1$. 

    \begin{claim}\AP
        $(T,\bagmap)$ is a "tree decomposition" of $\hat K$ of "tree-width" at most $k-1$.
    \end{claim}
    \begin{proof}
        $(T,\bagmap)$ has to verify the three conditions ("A@@treedec", "B@@treedec", "C@@treedec") of a "tree decomposition". 

        Condition "A@@treedec" holds  since every "world" $[w,v]_\approx$ of $\hat K$ is contained in the "bag" of $v$.

        Condition "B@@treedec" holds for any "atomic program" $a \in \Prog$ for the same reason: every pair $([w,v]_\approx,[w',v']_\approx)$ of "worlds" such that $v=v'$ have that both $[w,v]_\approx$ and $[w',v']_\approx$ are contained in the "bag" of $v$.
        
        For the ``connectedness'' condition "C@@treedec", consider any "world" $[w,v]_\approx$ of $\hat K$, and observe that $[w,v]_\approx \in \bagmap(v')$ for any vertex $v'$ from $V_{w,v} = \set{v'\in V(T) \mid (w,v) \approx (w,v')}$. Further, $V_{w,v}$ forms a connected "subtree@tree" of $T$, and any vertex $v'$ of $T$ with a "bag" containing $[w,v]_\approx$ must be such that $v' \in V_{w,v}$. Hence, $(T,\bagmap)$ verifies condition "C@@treedec".

        Finally, since every "bag" contains at most $k$ elements, the "tree-width" of $(T,\bagmap)$ is at most $k-1$. Note that the bound on the "tree-width" comes directly from the definition of $\+S$: since $\lambda(v)\in \+S$ for every vertex $v$ of $T$, we have $|\lambda(v)|\leq k$, and therefore every "bag" $\bagmap(v)=\set{[w,v]_\approx \mid w\in \lambda(v)}$ has size at most $k$.
        \color{black}
    \end{proof}
    We show that, by construction, $\hat K$ is "$k$-bisimilar" to $K$.
    \begin{claim}\AP 
        $K,u \pebblebisim{k} \hat K,\hat u$, where $\hat u = [u,r]_\approx$ and $r$ is the root of~$T$.
    \end{claim}
    \begin{proof}
        We show that, more generally, for any $k$-tuple $\bar u$ of "worlds" from $K$ 
        and any vertex of $T$ of the form $v = w \cdot \set{\bar u[1], \dotsc, \bar u[k]}$, we have
        $K,\bar u \pebblebisim{k} \hat K,\bar{\hat u}$,
        where $\bar{\hat u}[i] = [\bar u[i],v]_\approx$ for every $i$.

        Note that $\set{\bar u[i] \mapsto \bar{ \hat u}[i]}_{i\leq k}$ is a function, and further it is a bijection: if $\bar{\hat u}[i] = \bar{\hat u}[j]$ then $(\bar u[i],v) \approx (\bar u[j],v)$ which means that $\bar u[i] = \bar u[j]$. By definition of $\dbracket{a}_{\hat K}$ in fact both $\set{\bar u[i] \mapsto \bar{\hat u}[i]}_i$ and $\set{\bar{\hat u}[i] \mapsto \bar u[i]}_i$ are "partial homomorphisms" from $\kHoms(K,\hat K)$ and $\kHoms(\hat K, K)$, respectively. In this way we
        verify the first conditions of 
        $K,\bar u \pebblequasibisim k \hat K,\bar{\hat u}$ and $ \hat K,\bar{\hat u}\pebblequasibisim k K,\bar{u}$.

        Let $u'_i$ in $K$ be at "distance" $\leq 1$ from some $\bar u[j]$. We can then show 
        $K,\bar u[i\mapcoord u'_i] \pebblebisim{k} \hat K,\bar{\hat u}[i\mapcoord \hat u'_i]$ for $\hat u'_i = [u'_i,v']_\approx$ and $v' = v \cdot (\set{\bar u[i'] \mid i'\neq i} \cup \set{u'_i}) \in V(T)$.
        Observe that for every $i' \neq i$ we have
        $\bar{\hat u}[i'] = [\bar u[i'],v]_\approx = [\bar u[i'],v']_\approx$
        and thus that $\bar {\hat u}[i\mapcoord \hat u'_i]$ is of the required form.

        We proceed symmetrically for any $\hat u'_i$ in $\hat K$ at "distance" $\leq 1$ from some $\bar{\hat u}[j]$: by definition of $\hat K$ there must be some $v$
        and $u'_i$ such that $\hat u'_i = [u'_i,v]_\approx$, $\bar{\hat u}[j] = [\bar u[j],v]_\approx$, and  $u'_i$, $\bar u[j]$ are at "distance" $\leq 1$ in $K$. We then continue with 
        $K,\bar u[i\mapcoord u'_i]$ and
         $\hat K,\bar{\hat u}[i\mapsto \hat u'_i]$, of the required form.

        We can then repeat the same strategy ad aeternam, showing that $K,\bar u \pebblebisim{k} \hat K,\bar{\hat u}$.
    \end{proof}
    Finally, observe that $(T,\bagmap)$ and $\hat K$ are countable if $K$ is countable.
    The fact that it has a "good" "tree decomposition" immediately follows from \cite[Lemma 4.1]{DBLP:journals/jsyml/GollerLL09}, which shows that every countable "tree decomposition" of "width" $k$ can be transformed into one which is "good".
\end{proof}
\AP
As is the case of $\ICPDL$, $\CPDLp$ is definable in Least Fixed Point logic, and thus it inherits the Löwenheim-Skolem ""countable model property"": if a $\CPDLp$ "formula" is satisfiable, it is satisfied in a countable "structure@@kripke".
We therefore obtain the following corollary from \Cref{prop:treewidth-k-unravelling} and \Cref{them:bisim-charac}.
\begin{corollary}\label{cor:treewidth-k-model-property}
    For every $k\geq 2$, $\CPDLg{\Tw}$ has the ``\,""$\Tw$-model property""'': if a formula $\phi \in \CPDLg{\Tw}$ is satisfiable, then it is satisfied in a countable "Kripke structure" of "tree-width@@structure" at most $k$. 
\end{corollary}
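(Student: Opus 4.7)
The plan is to assemble the corollary directly from the three ingredients already in place: the countable model property stated immediately before the corollary, the unravelling construction of \Cref{prop:treewidth-k-unravelling}, and the bisimulation-invariance characterization of \Cref{them:bisim-charac}.

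First I would take any satisfiable $\phi \in \CPDLg{\Tw}$ and apply the Löwenheim--Skolem-style \emph{countable model property} (inherited via the embedding into Least Fixed Point logic, as announced just before the corollary) to obtain a \emph{countable} "Kripke structure" $K$ and "world" $u \in \dom{K}$ with $K,u \models \phi$. This reduces the task to producing a countable "tree-width" $\leq k$ model from an arbitrary countable one.

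Next I would invoke \Cref{prop:treewidth-k-unravelling} with parameter $k+1$ (noting that $k+1 \geq 2$ since $k \geq 2$) on $K,u$. This yields a "Kripke structure" $\hat K$ of "tree-width" at most $k$, together with a "world" $\hat u \in \dom{\hat K}$, such that $K,u \pebblebisim{k+1} \hat K, \hat u$; moreover, since $K$ is countable, the proposition guarantees that $\hat K$ admits a countable "good" "tree decomposition" of "width" $\leq k$, so in particular $\hat K$ itself is countable.

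Finally, I would apply \Cref{them:bisim-charac} (the $(k+1)$-"bisimulation" characterization for $\CPDLg{\Tw}$) in the 2-to-1 direction, which --- crucially --- does not require any "finite degree" assumption. Since $K,u \pebblebisim{k+1} \hat K,\hat u$, we have $K,u \models \phi$ iff $\hat K,\hat u \models \phi$, and hence $\hat K,\hat u \models \phi$, giving the desired countable model of "tree-width" $\leq k$. There is no genuine obstacle here: the only subtlety worth double-checking is the off-by-one between the statement of \Cref{prop:treewidth-k-unravelling} (which outputs "tree-width" $\leq k-1$ from parameter $k$) and the indexing in \Cref{them:bisim-charac} (which uses $(k+1)$-"bisimulation" to capture $\CPDLg{\Tw}$), and these align exactly when the proposition is instantiated at $k+1$.
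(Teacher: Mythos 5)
Your proof is correct and follows exactly the route the paper intends: countable model property from the LFP embedding, then \Cref{prop:treewidth-k-unravelling} instantiated at $k+1$, then the 2-to-1 direction of \Cref{them:bisim-charac} (which indeed needs no finite-degree hypothesis). The off-by-one bookkeeping you flag is handled correctly.
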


We can now proceed to the proof of \Cref{thm:separation}.
\begin{proof}[Proof of \Cref{thm:separation}]
Let $k\geq 2$.  Observe that the  "$(k+1)$-clique formula" $\xi_{k+1} \in \CPDLg{\Tw[k]}$  implies that there exists a directed $(k+1)$-clique of $a$'s starting from the current "world", and it is trivially satisfied in a directed $(k+1)$-clique "structure@@kripke". 
However, there cannot be any $(k+1)$-cliques in a "structure@@kripke" of "tree-width" $\leq k-1$. Hence, in light of \Cref{cor:treewidth-k-model-property}, $\xi_{k+1}$ cannot be expressed in $\CPDLg{\Tw[k-1]}$.
\end{proof}

\begin{remark}
    While \Cref{prop:treewidth-k-unravelling} works for every $k \geq 2$, it does not follow that $\CPDLg{\Tw[1]}$ has the "$\Tw[1]$-model property", because the characterization between $\pebblebisim{k}$ and $\CPDLg{\Tw[k-1]}$  of \Cref{them:bisim-charac} holds only for $k\geq 3$. In fact, since $\CPDLg{\Tw[1]} \langsemequiv \CPDLg{\Tw[2]}$ by \Cref{thm:ICPDL_equals_TW1_equals_TW2}, the "$3$-clique formula" is actually expressible in $\CPDLg{\Tw[1]}$.
\end{remark}

\subsection{Separation from Other Studied Logics}\label{sec:separation}

\AP
""Unary negation fragment of first-order logic extended with regular path expressions"" (\UNFOreg), introduced by Jung et al.\ \cite{jung2018querying}, is the fragment of first-order logic with equality over relational structures given by the following grammar, where $P$ is a relation symbol, $R$ is a binary relation symbol of $\sigma$ and $\phi(x)$ has no free variables besides (possibly) $x$:
\begin{align*}
\phi &\eqqdef P(\bar x) \mid x = y \mid \phi \lor \phi \mid \phi \land \phi \mid \exists x\ \phi \mid \lnot \phi(x) \mid E(x,y)
\\
E&\eqqdef R \mid \overline R \mid E\circ E \mid E^* \mid \phi(x)? 
\end{align*}
The formal semantics is given in \cite{jung2018querying}, but notice that $E$ is a regular expression with tests that can be used as binary relations in formulas.

The following is stated without proof in \cite{jung2018querying}:
\begin{claim}\AP
$\ICPDL \not\leqq_{\Kripke} \UNFOreg$ and $\UNFOreg \not\leqq_{\Kripke} \ICPDL$.
\end{claim}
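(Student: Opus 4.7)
\textbf{Direction $\UNFOreg \not\leqq_{\Kripke} \ICPDL$.} My plan is to reuse the "$(k+1)$-clique formula" $\xi_{k+1}$ from the proof of \Cref{thm:separation} with $k=3$. Unwinding its "conjunctive programs", $\xi_4$ is equivalent, as a formula with free variable $x$, to the first-order sentence
\[
  \exists x_2, x_3, x_4.\ \Bigl(a(x, x_2) \land a(x, x_3) \land a(x, x_4) \land \bigwedge_{2 \leq i < j \leq 4} a(x_i, x_j)\Bigr) \land \lnot \exists y.\ \bigl(a(x, y) \land a(y, y)\bigr),
\]
in which negation only appears in front of a subformula with a single free variable, so $\xi_4 \in \UNFOreg$ (in fact already $\xi_4\in\UNFO$). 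On the other hand, by \Cref{thm:ICPDL_equals_TW1_equals_TW2} every $\ICPDL$-formula is equivalent to a $\CPDLg{\Tw[2]}$-formula, and by \Cref{cor:treewidth-k-model-property} every satisfiable $\CPDLg{\Tw[2]}$-formula has a "Kripke structure" model of "tree-width@@structure" at most $2$. Since $\xi_4$ forces the existence of a directed $4$-clique of $a$-edges --whose "Gaifman graph" has "tree-width" $3$-- I conclude $\xi_4 \notin \ICPDL$.

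\textbf{Direction $\ICPDL \not\leqq_{\Kripke} \UNFOreg$.} My candidate $\ICPDL$-witness is $\phi = \tup{(a \cap b)^* \circ p?}$, expressing that a $p$-world is reachable along edges that are simultaneously $a$- and $b$-labeled. The $\UNFOreg$ grammar offers no intersection at the path-expression level, and unary tests inside a path expression cannot refer to a previously visited vertex; this strongly suggests that iterated conjunctive navigation is beyond $\UNFOreg$. To make this precise, I would construct a family of pointed "Kripke structures" separated by $\phi$ yet indistinguishable by any given $\UNFOreg$-formula: let $K_n$ be the chain $v_0, v_1, \dots, v_n$ with every edge $(v_i, v_{i+1})$ in $a^{K_n} \cap b^{K_n}$ and $p^{K_n} = \{v_n\}$, and let $K'_n$ be identical to $K_n$ except that the middle edge $(v_{\lfloor n/2 \rfloor}, v_{\lfloor n/2 \rfloor + 1})$ is labeled by $a$ but not by $b$. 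Then $K_n, v_0 \models \phi$ while $K'_n, v_0 \not\models \phi$, yielding the separation.

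\textbf{Main obstacle.} The hard step will be proving that for every $\UNFOreg$-formula $\psi$ there exists $n$ with $K_n, v_0 \models \psi$ iff $K'_n, v_0 \models \psi$. My plan is to set up an Ehrenfeucht--Fra\"\i ss\'e-style back-and-forth game tailored to $\UNFOreg$, in which Duplicator maintains a partial isomorphism between pebbled fragments of $K_n$ and $K'_n$ whose ``discrepancy'' (the $a$-only middle edge) stays far from all pebbled vertices. The core local lemma to establish is that a $\UNFOreg$ path expression $E$, being built from atomic relations and \emph{unary} tests via concatenation and Kleene star, observes one edge at a time together with unary properties of its endpoints; thus $E$ cannot detect the mismatch between an $a\cap b$-edge and an $a$-only edge unless a pebble lands within a bounded distance from the defective edge. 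Iterating this lemma across the Boolean connectives, existential quantification, and unary negation of $\psi$ yields an upper bound, polynomial in the quantifier- and $*$-nesting depth of $\psi$, on the distance that Duplicator must preserve around pebbles. Choosing $n$ beyond this bound gives the desired $\psi$-indistinguishability of $(K_n, v_0)$ and $(K'_n, v_0)$, contradicting the assumed definability of $\phi$ in $\UNFOreg$ and completing the argument.
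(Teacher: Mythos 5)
Your second direction ($\UNFOreg \not\leqq_{\Kripke} \ICPDL$) is exactly the paper's argument: read the $4$-clique formula $\xi_4$ as a one-free-variable $\UNFO$ (hence $\UNFOreg$) formula, then use $\ICPDL \langsemequiv[\Kripke] \CPDLg{\Tw[2]}$ (\Cref{thm:ICPDL_equals_TW1_equals_TW2}) together with the tree-width-$2$ model property (\Cref{cor:treewidth-k-model-property}) to rule out an $\ICPDL$ equivalent. That half is correct.

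The first direction has a genuine gap. You chose the right witness $\phi=\tup{(a\cap b)^*\circ p?}$ (the same one the paper uses), but your separating family fails: in $K'_n$ the only $p$-world is $v_n$, and since the middle edge lacks the label $b$, $v_n$ is not $b$-reachable from $v_0$ in $K'_n$ while it is in $K_n$. Hence the $\UNFOreg$ formula $\exists y.\,\bigl(b^*(x,y)\land p(y)\bigr)$ already distinguishes $K_n,v_0$ from $K'_n,v_0$. This also refutes your ``core local lemma'': a path expression such as $b^*$ is a \emph{global} reachability test and detects the defective edge without any pebble coming anywhere near it, so an argument based on keeping the discrepancy far from the pebbles cannot work. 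The genuine difficulty is that $\UNFOreg$ can express $a$-reachability, $b$-reachability, and bounded-length conjunctive $(a\wedge b)$-paths (via formula-level conjunction); only the \emph{unbounded iteration} of the conjunction is missing. So the two structures must agree on all of those — e.g.\ ladder- or grid-like families in which $a$-paths and $b$-paths to the $p$-world exist on both sides, but a path all of whose edges carry both labels exists on only one side — and the indistinguishability argument must be redone for such structures. (For what it is worth, the paper does not supply this argument either; it only asserts that inexpressibility ``can be shown'' and attributes the claim to \cite{jung2018querying}. But your concrete construction, as stated, is refutable.)
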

For the first claim, consider the class $\+A$ of pointed "Kripke structures" $K,u$ such that there is $n>0$ and $v_1,\dots,v_n\in\dom{K}$ with $(v_i,v_{i+1})\in\dbracket{a}_K\cap\dbracket{b}_K$ for $i=1,\dots,n-1$, $v_1=u$ and $u_n\in\dbracket{p}_K$, where $a,b\in\Prog$ and $p\in\Prop$. It can be shown that $\+A$ is not expressible in $\UNFOreg$, but expressible in $\ICPDL$ by the formula $\tup{(a\cap b)^*\circ p?}$. 

For the second claim, observe that the class of pointed "Kripke structures" $K,u$ where $u$ satisfies the `4-clique' formula $\xi_4$ of the proof of Theorem \ref{thm:separation} is expressible by the translation of $\xi_4$ to $\UNFOreg$, namely $\varphi(x_1) = \exists x_2,x_3,x_4.\bigwedge_{1\leq i<j\leq 4}a(x_i,x_j)\land\lnot\exists y.a(x_1,y)\land a(y,y)$, but not expressible in $\CPDLg{\Tw[2]}$, which is equal to $\ICPDL$ by Theorem \ref{thm:ICPDL_equals_TW1_equals_TW2}.

\AP ""Guarded negation first-order logic""  ($\intro*\GNFO$), introduced in Barany et al.\ \cite{BaranyCS15}, is the fragment of first-order logic with equality over relational structures given by the following grammar, where $P$ is a relation symbol of $\sigma$ and $\alpha$ is a relation symbol or equality:
\[
\phi \eqqdef P(\bar x) \mid x = y \mid \phi \lor \phi \mid \phi \land \phi \mid \exists x\ \phi \mid \alpha(\bar x \bar y) \land \lnot \phi(\bar y)
\]
Here $P(\bar x)$ and $\alpha(\bar x \bar y)$ are atoms whose free variables are $\bar x$ and $\bar x \bar y$, respectively.
Observe that in $\GNFO$, formulas with 0 or 1 free variables are closed under negation -- the latter through the equivalence $\lnot\phi(x) \semequiv (x=x \land \lnot\phi(x))$.
$\GNFO$ enjoys many desirable properties \cite{Segoufin17}.

\begin{proposition}\AP
$\GNFO \not \lleq[\Kripke]\UCPDLp$.
\end{proposition}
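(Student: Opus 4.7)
The plan is to exhibit a single-free-variable $\GNFO$-formula that admits no $\UCPDLp$-equivalent over "Kripke structures". Take two distinct binary "relation names" $R, S$ and consider
\[
\phi(x) \;\eqdef\; \exists y.\; R(x,y) \land \lnot S(x,y).
\]
The negated atom $S(x,y)$ is guarded by $R(x,y)$ (both have variable set $\{x,y\}$), so $\phi$ is a valid $\GNFO$-formula with free variable~$x$.

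I will apply the bisimulation characterization of \Cref{them:bisim-charac}. Every $\UCPDLp$-formula lies in $\UCPDLg{\Tw[j]}$ for some finite $j$ (the maximum "tree-width" of the "underlying graphs" of its "conjunctive programs"), so it suffices to exhibit, uniformly for every $k \geq 1$, two "Kripke structures" $K, K'$ and a "world" $u$ such that $(K,u) \pebblebisimE{k} (K',u)$ but $K,u \not\models \phi$ while $K',u \models \phi$. Concretely, take a common countably infinite "domain" $X = \{a,b,c_0,c_1,\ldots\}$ and set $R^K = S^K = X\times X$, $R^{K'} = X\times X$, and $S^{K'} = (X\times X)\setminus\{(a,b)\}$, with $u = a$. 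Disagreement on $\phi$ is immediate: every $y$ satisfies $S(a,y)$ in $K$, whereas $y = b$ witnesses $\phi(a)$ in $K'$.

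For the bisimulation, I will describe a winning strategy for Duplicator in $\kBisimGameE$ from the initial position $(s,(a,\dots,a),(a,\dots,a))$. Duplicator maintains the invariant that the induced map $u_i \mapsto v_i$ between pebble positions is a partial isomorphism of pebbled substructures. Because $R$ is full in both and $S$ differs only at the single pair $(a,b)$, the invariant reduces, in the $K \to K'$ direction, to the constraint that Duplicator's $K'$-pebbles never simultaneously occupy both $a$ and $b$; in the opposite $K' \to K$ direction the partial-homomorphism condition is trivially met because $S^K$ is total. Against any Spoiler move—a distance-$1$ move (rule (1)), a universal stacking move $d_0$ (rule (4)), or a side-switching move (rule (2))—Duplicator responds by choosing an element of $X$ that is either forced by function-consistency (if Spoiler's new source position coincides with one already pebbled) or fresh and, when required, chosen outside $\{a,b\}$; such a choice always exists since $X$ is infinite.

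Putting things together, if some $\UCPDLp$-formula $\psi$ were equivalent to $\phi$ over "Kripke structures", fix $j$ with $\psi \in \UCPDLg{\Tw[j]}$ and apply the construction with $k = j+1$. By the $2\!\to\!1$ direction of \Cref{them:bisim-charac}—which does not require finiteness—we would obtain $K,a \models \psi \iff K',a \models \psi$, contradicting the $\phi$-disagreement. The main obstacle is tracking Duplicator's invariant across Spoiler's alternation of stacking and side-switching moves: after a switch the partial-homomorphism direction reverses and becomes trivially satisfied on one side but constrained on the other, and I will need to verify that subsequent stack-and-spread phases never corner Duplicator into a configuration where both $a$ and $b$ are forced to be pebbled together in~$K'$.
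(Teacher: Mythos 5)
Your proposal is correct and takes essentially the same route as the paper: the same guarded-negation formula $\exists y\,(R(x,y)\land\lnot S(x,y))$, refuted by exhibiting two pointed structures that are $k$-$\Univ$-bisimilar for every $k$ (hence, by the $2\!\to\!1$ direction of \Cref{them:bisim-charac}, indistinguishable by all of $\UCPDLp$) yet disagree on the formula. The step you flag as unverified does go through: after any stack-and-switch (or switch-back) all pebbles sit on a single element, so at most one of $a,b$ is pebbled in $K'$, and from such a configuration every subsequent Duplicator response is either forced onto an already-pebbled element (which cannot enlarge the set $\{a,b\}\cap\{\text{pebbled}\}$) or freely chosen outside $\{a,b\}$, so the invariant survives every phase of the game. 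The only substantive difference is the choice of witnesses: the paper uses a two-element finite pair ($K$ a single world with $a$- and $b$-loops, $K'$ adding one extra world reachable by $a$ but not by $b$), which makes the bisimulation check shorter, whereas your infinite pair is equally valid since the $2\!\to\!1$ implication requires no finiteness or finite-degree hypothesis.
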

\begin{proof}
Let $\phi(x)\eqdef \exists y\ a(x,y)\land\lnot b(x,y)$ be a formula in $\GNFO$ (in fact it is even in guarded-quantification FO) over a signature with binary relation symbols $a$ and $b$. Let $K$ be the "Kripke structure" with $\dom{K}=\{u\}$, $a^K=\{(u,u)\}$ and $b^K=\{(u,u)\}$, and let $K'$ be the "Kripke structure" with $\dom{K'}=\{u',v'\}$, $a^{K'}=\{(u',u'),(u',v'),(v',v')\}$ and $b^{K'}=\{(u',u'),(v',v')\}$, see Figure \ref{fig:GNFOnotleqUCPDL}.  
\begin{figure}
\includegraphics[scale=0.25]{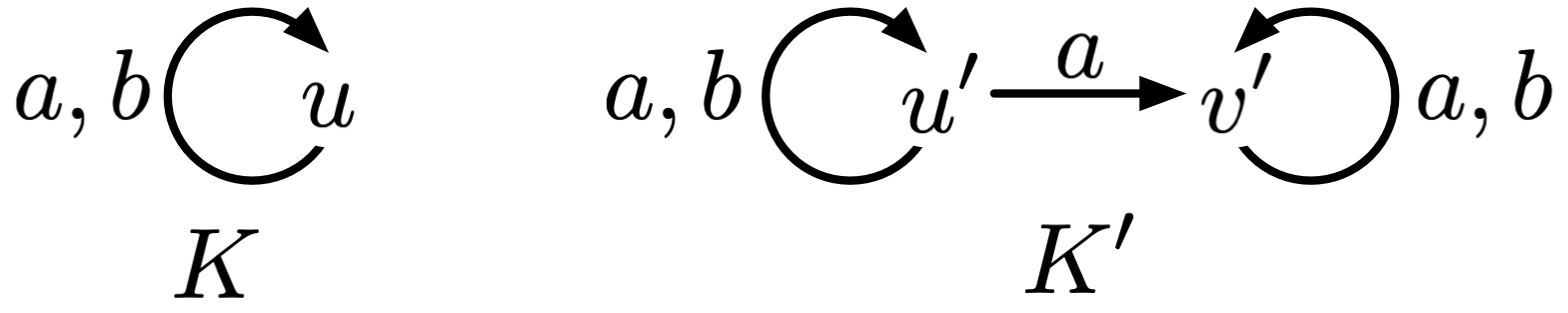}
\caption{$K,u$ and $K',u'$ are bisimilar for $\UCPDLp$ but distinguishable in $\GNFO$.}
\label{fig:GNFOnotleqUCPDL}
\end{figure}
One can show that $K,u$ and $K',u'$ are bisimilar for $\UCPDLp$ (\ie, "$k$-$\Univ$-bisimilar" for every $k$). Since $K\not\models\phi[x\mapsto u]$ and $K'\models\phi[x\mapsto u']$, we conclude that there is no $\UCPDLp$-formula equivalent to $\phi(x)$. 
\end{proof}

\AP
It is also the case that there are properties of $\CPDLp$ (or even $\UNFOreg$) which cannot be expressed either in $\GNFO$ or in its extension with guarded fixed-points, known as $\intro*\GNFP$~\cite{BaranyCS15}.
\begin{proposition}\AP
    $\CPDLp \not\lleq[\Kripke] \GNFP$ (and hence $\UCPDLp \not\lleq[\Kripke]\GNFO$).
\end{proposition}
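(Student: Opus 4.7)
The plan is to rely on the semantic characterization of $\GNFP$ as invariant under a suitable guarded-negation bisimulation, established in \cite{BaranyCS15}: every $\GNFP$-formula $\phi(\bar{x})$ fails to distinguish any two pointed "structures" $K,\bar u$ and $K',\bar u'$ that are related by such a bisimulation. To witness $\CPDLp \not\lleq[\Kripke]\GNFP$ it therefore suffices to exhibit a $\CPDLp$-"formula" $\psi$ and two pointed "Kripke structures" $K,u$ and $K',u'$ that are guarded-negation bisimilar but disagree on $\psi$. Once this is done, the parenthetical statement $\UCPDLp\not\lleq[\Kripke]\GNFO$ follows for free because $\CPDLp \subseteq \UCPDLp$ and $\GNFO \subseteq \GNFP$.

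My first step is to spell out the version of guarded-negation bisimulation adequate for our binary signature (where guarded subsets reduce to singletons and pairs covered by a binary "atomic program"), together with its back-and-forth conditions and the additional ``resetting'' moves used in the $\GNFP$-game of \cite{BaranyCS15}. The second step is to construct, by a ``folding/unfolding'' template, a pair $(K,u),(K',u')$ whose guarded neighborhoods look identical but which differ on a recurring non-guarded configuration that $\CPDLp$ can detect through a looping "conjunctive program". A natural candidate is a short directed $a$-cycle of length $n$ versus its infinite directed $a$-unravelling (or, if one wishes to remain finite, two cycles of appropriately chosen lengths), enriched with the same unary labels so that every guarded-atom neighborhood coincides. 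The separating "formula" can then be taken as the translation into $\CPDLp$ of $\lo(a^n)$ provided by \Cref{prop:loopCPDL-eq-CPDLgLoop}, namely $\psi \eqdef \tup{\{a(x_0,x_1),a(x_1,x_2),\dots,a(x_{n-1},x_0)\}[x_0,x_0]}\in \CPDLg{\Tw[2]}\subseteq \CPDLp$, which holds at $u$ (an $n$-loop is present) and fails at $u'$ (no $n$-loop is present).

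The third step is a direct verification: on one side, that $\psi$ indeed separates $K,u$ from $K',u'$; on the other side, that the pair is related by a guarded-negation bisimulation, by writing down an explicit back-and-forth system of "partial isomorphisms" between guarded pairs and checking that all $\GNFP$-moves (including the resetting moves that distinguish the $\GNFP$-game from the plain $\GNFO$-game) are answerable by Duplicator because the two structures are locally indistinguishable on their guarded tuples. Invariance of $\GNFP$ under this bisimulation then rules out the existence of any $\GNFP$-equivalent of $\psi$.

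The main obstacle is the careful choice and verification of the separating pair: the game for $\GNFP$ is strictly stronger than for $\GNFO$ because $\GNFP$ can compute guarded least/greatest fixed points, and one has to check that the ``looping'' patterns expressible by $\CPDLp$ via "conjunctive programs" of the form $C[x,x]$ truly escape what those fixed points can simulate when the configuration to be tested is non-guarded (i.e.\ several vertices with no single atom covering them). Standard cyclic-vs-acyclic examples from the literature on guarded logics and transitive-closure logics provide the blueprint, and adapting one of them to the signature of "Kripke structures" yields the desired witness.
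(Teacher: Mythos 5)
Your overall strategy (find a $\CPDLp$-definable property that breaks the bisimulation invariance underlying $\GNFP$) is the right kind of argument, and your reduction of the parenthetical claim to the main one is fine. But the concrete witness you propose cannot work, and the failure is fatal rather than cosmetic. The separating formula you write down, $\tup{\{a(x_0,x_1),\dots,a(x_{n-1},x_0)\}[x_0,x_0]}$ (equivalently $\lo(a^n)$ for a \emph{fixed} $n$), is nothing more than a Boolean conjunctive query with one free variable: it is equivalent to $\exists x_1\dots x_{n-1}\,\bigl(a(x,x_1)\wedge\dots\wedge a(x_{n-1},x)\bigr)$, a positive-existential formula with no negation at all, hence already in $\GNFO$ and a fortiori in $\GNFP$. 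So it cannot witness $\CPDLp\not\lleq[\Kripke]\GNFP$. Correspondingly, your candidate pair of structures cannot be guarded-negation bisimilar in the sense you need: an $n$-cycle and its acyclic unravelling are distinguished by that very conjunctive query, so any bisimulation under which all of $\GNFO$ is invariant must separate them. The quantifier order in your plan is reversed: the width-parameterized $k$-GN-bisimulations do relate a long cycle to its unravelling, but they only preserve formulas of width at most $k$, so they yield inexpressibility only in the width-$k$ fragment --- and your fixed-$n$ formula escapes precisely because its width grows with $n$.

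What actually separates is the \emph{unbounded}-length version of your property: ``$x$ lies on an $a$-cycle of some length'', expressible in $\ICPDL$ as $\tup{(a\circ a^*)\cap\epsilon}$ and hence in $\CPDLp$ as $\tup{\{(a\circ a^*)(x,x)\}[x,x]}$. This is an infinite disjunction of conjunctive queries of unbounded size, and it is not $\GNFP$-expressible; establishing that requires exactly the argument you gesture at, but with the quantifiers in the right order (for \emph{every} candidate width $k$, a sufficiently long cycle and its unravelling are $k$-GN-bisimilar yet disagree on the property). The paper does not redo this analysis: it simply observes that the property is in $\ICPDL\subseteq\CPDLp$ and cites the known result that it is not expressible in $\GNFP$ \cite{DBLP:conf/lics/BenediktBB16}. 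If you want a self-contained proof you must carry out the width-parameterized bisimulation argument for the starred formula; with your fixed-length formula no such argument can exist.
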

\begin{proof}
    The unary property stating ``$x$ is in an $a$-cycle'' can be easily expressed in $\ICPDL$ or $\UNFOreg$ (hence also in $\CPDLp$), but it cannot be expressed in $\GNFP$, as shown in \cite[Proof of Proposition 2 (Appendix A.1 of full version)]{DBLP:conf/lics/BenediktBB16}.
\end{proof}

\color{black}

\section{Satisfiability}
\label{sec:sat}
The ""satisfiability problem"" for $\UCPDLp$ is the problem of, given a $\UCPDLp$ "formula" $\phi$, whether there exists some "structure" $K$ and a "world" $w$ thereof such that $K, w \models \phi$.

Observe that any countable "tree decomposition" can be turned into a "tree decomposition" whose underlying "tree" is binary, preserving the "width".
Decidability of the "satisfiability problem" for $\UCPDLp$ follows readily from the previous \Cref{cor:treewidth-k-model-property}, the fact that $\UCPDLp$ expressions can be effectively expressed in Monadic Second Order Logic (MSO), and the known result that satisfiability for MSO on structures of bounded "tree-width" is decidable (by MSO interpretations onto "tree"-structures and Rabin's theorem, see eg~\cite[Fact~2]{DBLP:journals/apal/Seese91}). 

\begin{proposition}\AP
    $\UCPDLp$-"satisfiability@satisfiability problem" is decidable.
\end{proposition}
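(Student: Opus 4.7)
My plan follows the recipe outlined in the paragraph immediately preceding the statement: reduce $\UCPDLp$-satisfiability to MSO-satisfiability over structures of bounded tree-width, and then invoke Rabin's theorem. Given $\phi \in \UCPDLp$, the first step is to compute $k \eqdef \max\set{\tw(\uGraph{C[x_s,x_t]}) : C[x_s,x_t] \in \subexpr(\phi)}$, so that $\phi \in \UCPDLg{\Tw[k]}$. Before anything else, I would need to lift \Cref{cor:treewidth-k-model-property} from $\CPDLg{\Tw[k]}$ to $\UCPDLg{\Tw[k]}$. This amounts to checking that the unravelling $\hat K$ built in \Cref{prop:treewidth-k-unravelling} actually satisfies $K,u \pebblebisimE{k} \hat K,\hat u$, and not only $K,u \pebblebisim{k} \hat K,\hat u$: the construction keeps a copy $[u',v]_\approx$ of every world $u'$ of $K$ inside $\hat K$, so Duplicator can match Spoiler's new stacking moves of $\kBisimGameE$ by stacking on the corresponding copy, and conversely any stacking move made by Spoiler on $\hat K$ lands on a copy whose projection to $K$ is already in the range of the original strategy.

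Second, I would define an effective, syntax-directed translation $e \mapsto e^*$ from $\UCPDLp$ into MSO. Atomic propositions and programs translate to the underlying relation symbols, tests, Boolean connectives and the program operations $\cup$ and $\circ$ are first-order, the Kleene star is expressed by the standard MSO set-quantification ``smallest set closed under \dots''\ trick, a conjunctive program $C[x_s,x_t]$ is rendered as an existential quantifier over the remaining variables of $\vars(C)$ conjoined with the translations of its atoms, and $\Univ$ is interpreted as the total binary relation. Preservation of semantics on every $\Rels$-structure follows by a routine induction on the structure of $e$, and the translation is clearly computable.

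Finally, combining the two previous steps, $\phi$ is satisfiable iff $\phi^*$ is satisfied at some element of a countable $\Rels$-structure of tree-width at most $k$. This last problem is decidable by the classical theorem of Rabin: countable tree decompositions of bounded width can be binarised while preserving the width (as noted just above the statement), the bag contents and adjacency data of such a binary decomposition are coded into a finite alphabet, the underlying structure is MSO-interpretable in the resulting labelled infinite binary tree, and MSO satisfiability over labelled infinite binary trees is decidable. The main obstacle will be the first step, namely verifying that the tree-like unravelling behaves correctly in the presence of the $\Univ$ operator; once that is in place, the remaining two steps are standard and routine.
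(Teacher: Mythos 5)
Your proposal is correct and follows essentially the same route as the paper: compute the tree-width bound, invoke the $\Tw$-model property (\Cref{cor:treewidth-k-model-property}), translate effectively into MSO, and conclude by Rabin's theorem via an MSO interpretation into labelled binary trees. Your additional check that the unravelling of \Cref{prop:treewidth-k-unravelling} also yields a $k$-$\Univ$-bisimulation is a legitimate way to account for $\Univ$; the paper instead eliminates $\Univ$ by an equi-satisfiable replacement with $(a_0 \cup \dotsb \cup a_n \cup \bar a_0 \cup \dotsb \cup \bar a_n)^*$ (\Cref{lem:removeUniv}).
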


\paragraph*{Complexity}
Next, we will pin down the complexity of the satisfiability problem.
The main result of this section is the following:
\begin{description}
    \item[Main result] (\Cref{thm:sat-cpdlp}): The "satisfiability@satisfiability problem" for $\UCPDLp$ is decidable in doubly-exponential time, and it is decidable singly-exponential if we bound the `intricacy' of the "conjunctive programs", which we call \reintro{conjunctive width}.
\end{description}

Let us begin by first defining the "conjunctive width" of a $\UCPDLp$ "program". Intuitively, it is the maximum number of conjuncts that a "conjunctive program" may have. 
For example, the "conjunctive width" of 
$C[x_s,x_t]$ for $C=\set{ a(x_s,x_t), b(x_s,z), c(x_t,z)}$ is 3, but it is 4 for $C'[y_s,y_t]$ if $C'=\set{ a(y_s,y_t), (b\circ \pi)(y_s,y_t)}$ and $\pi=C[x_s,x_t]$ since it is effectively like having the 4 atoms $\set{ a(y_s,y_t), (b\circ a)(y_s,y_t), (b\circ b)(y_s,z), (b\circ c)(y_t,z)}$.\footnote{This measure is inspired by the `"intersection width"' of $\ICPDL$ programs from the work by G\"oller et al.~\cite{DBLP:journals/jsyml/GollerLL09} and it can be seen as a natural generalization in the context of "conjunctive programs".} 
For arbitrary arity structures we also take into account the arity of higher-arity relations, where the extension to "r-atoms" of arity $k$ is done by simply weighing the atom by $k$ instead of $1$.
\color{black}
\AP
Formally, the ""conjunctive width"" $\intro*\cqsize{\pi}$ of a $\UCPDLp$ "program" $\pi$ is the following measure:
\begin{align*}
    \cqsize{a} = \cqsize{\Univ} = \cqsize{\phi?} = \cqsize{\epsilon} &= \cqsize{\bar a} \eqdef 1 &&\text{ for } a \in \Prog,
    \\
    \cqsize{\pi_1 \star \pi_2} &
    \eqdef
    \max \set{ \cqsize{\pi_1}, \cqsize{\pi_2}} &&\text{ for $\star \in \set{\circ,\cup}$},\\
    \cqsize{\pi^*} &\eqdef \cqsize\pi,
    \\
    \cqsize{C[x_s,x_t]} &\eqdef \sum \set{\arity(R) : R(\bar x)  \text{ "r-atom" of } C} + &&\\
    &\sum \set{\cqsize\pi : \pi(x,y) \text{ "p-atom" of } C }.
\end{align*}
We define $\AP\intro*\Cqsize{e}$
$\phantomintro*\Cqsizealt$%
for any "expression" $e$ to be the maximum "conjunctive width" of a "program" therein (or 1 if it contains no "program").
\begin{remark}\label{rk:treewidth-conjunctive-width}
    For every $\UCPDLp$ "expression" $e$ we have $e \in \UCPDLg{\Tw[k]}$ for $k=\Cqsize{e}$.
\end{remark}

To prove the main result, we follow the proof strategy of \cite[Theorem~4.8]{DBLP:journals/jsyml/GollerLL09} for satisfiability for $\ICPDL$, and adapt it to the more general setting of $\UCPDLp$. 
The proof goes by a series of reductions \ref{eq:redux:0} $\rightarrow$ \ref{eq:redux:1} $\rightarrow$ \ref{eq:redux:2} $\rightarrow$ \ref{eq:redux:2.5} $\rightarrow$ \ref{eq:redux:3} $\rightarrow$ \ref{eq:redux:4} $\rightarrow$ \ref{eq:redux:5}:
\begin{enumerate}[(a)]
    \item \label{eq:redux:0} From the "satisfiability problem" of $\UCPDLp$ on   $\Rels$-structures to 
    \item \label{eq:redux:1} the "satisfiability problem" of $\UCPDLp$ in `"unnested form"' (meaning that the `"nesting depth"' of "test programs" is at most 2) on "$\Rels$-structures", to 
    \item \label{eq:redux:2} the "unnested@unnested form" "satisfiability problem" ("ie", the "satisfiability problem" for "expressions" in "unnested form") on "Kripke structures", to
    \item \label{eq:redux:2.5} the "unnested@unnested form" "satisfiability problem" of $\CPDLp$ ("ie", no $\Univ$ "program") on "Kripke structures", to
    \item \label{eq:redux:3} the "unnested@unnested form" "satisfiability problem" of $\CPDLp$ on "Kripke structures" of "tree-width@@structure" $k$, where $k$ is the "tree-width@@pdl" of the input formula, to
    \item \label{eq:redux:4} the "unnested@unnested form" "satisfiability problem" of $\CPDLp$ over ``$\omega$-regular trees'', to
    \item \label{eq:redux:5} the emptiness problem of an "automata model@TWAPTA" on "trees".
\end{enumerate}

\paragraph{Organization} 
In \Cref{ssec:redux-simple-expressions} we define the "unnested form" for $\UCPDLp$ expressions, and show the reductions \ref{eq:redux:0} $\to$ \ref{eq:redux:1} $\to$ \ref{eq:redux:2} $\to$  \ref{eq:redux:2.5}.
The reduction \ref{eq:redux:2.5} $\rightarrow$ \ref{eq:redux:3} is trivial from \Cref{cor:treewidth-k-model-property}.
In \Cref{ssec:omega-regular-tree-sat} we define the "tree automata model@TWAPTA" and the "$\omega$-regular tree satisfiability" problem relative to this model.
In \Cref{ssec:redux-sat-to-omegatreesat} we show the reduction \ref{eq:redux:3} $\rightarrow$ \ref{eq:redux:4}.
In \Cref{sec:solving-omega-reg-sat} we prove a complexity upper bound for the "$\omega$-regular tree satisfiability" problem (\Cref{prop:omega-sat-pb-2Exp}) via a reduction \ref{eq:redux:4} $\rightarrow$ \ref{eq:redux:5} --this is the most technical part.
Finally, in \Cref{ssec:main-sat-result} we put everything together and deduce our main result for the "satisfiability problem" of $\UCPDLp$ in \Cref{thm:sat-cpdlp}.

To simplify the complexity statements of this section, we shall write $\AP\intro*\polyfun(n)$ to denote $n^{\+O(1)}$, and $\intro*\expfun(n)$ to denote $2^{n^{O(1)}}$.

\subsection{A Polynomial-time Reduction to Simple Expressions}
\label{ssec:redux-simple-expressions}
We shall first simplify our satisfiability problem to formulas of a simple form having: low nesting, no universal quantification and no "r-atoms". Further, the reduction preserves the "conjunctive width".

\knowledgenewrobustcmd{\nestingdepth}{\cmdkl{nd}}
The \AP""nesting depth"" $\AP\intro*\nestingdepth(e)$ of a $\UCPDLp$ "expression" $e$ is the maximum number of nested "test programs" of the form `$\phi?$' it contains, defined as follows:
\begin{align*}
    \nestingdepth(p) = \nestingdepth(a) = \nestingdepth(\bar a) = \nestingdepth(\epsilon) = \nestingdepth(\Univ) &\eqdef 0 \qquad \text{ for } p \in \Prop, a \in \Prog,\\
    \nestingdepth(\lnot \phi) &\eqdef \nestingdepth(\phi),\\
    \nestingdepth(\phi_1 \land \phi_2) &\eqdef \max \set{\nestingdepth(\phi_1), \nestingdepth(\phi_2)},\\
    \nestingdepth(\tup{\pi}) &\eqdef \nestingdepth(\pi),\\
    \nestingdepth(\pi_1 \star \pi_2) &\eqdef \max \set{\nestingdepth(\pi_1), \nestingdepth(\pi_2)} \qquad \text{ for } \star \in \set{\circ, \cup},\\
    \nestingdepth(\pi^*) &\eqdef \nestingdepth(\pi),\\
    \nestingdepth(\phi?) &\eqdef 1 + \nestingdepth(\phi),\\
    \nestingdepth(C[x_s,x_t]) &\eqdef \max (\set{0} \cup \set{\nestingdepth(\pi) : \pi(x,y) \text{ "p-atom" of } C}).
\end{align*}
\color{black}
Let us call a $\UCPDLp$ "expression" to be in \AP""unnested form"" if \textit{(i)} its "nesting depth" is at most $2$ and \textit{(ii)} every nested "test program" is of the form `$p?$' for an "atomic proposition" $p \in \Prop$. 
We will show that each $\UCPDLp$ "expression" $e$ can be transformed into one $e'$ in "unnested form" with no "universal programs" nor "r-atoms", such that $e$ and $e'$ are \AP""equi-satisfiable"", that is, $e$ is satisfiable if{f} $e'$ is satisfiable.

\begin{proposition}\AP\label{prop:polyredux-simpleform}
    For every $\UCPDLp$ "expression" $e$ one can produce, in polynomial-time, an "equi-satisfiable" "expression" $e'$ such that
    \begin{itemize}
        \item $e'$ is in "unnested form",
        \item $e'$ does not contain "r-atoms",
        \item $e'$ does not contain $\Univ$ "programs",
        \item $\Cqsize{e}=\Cqsize{e'}$,
    \end{itemize}
\end{proposition}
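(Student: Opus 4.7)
The plan is to apply three polynomial-time transformations in sequence, each preserving equi-satisfiability and the value of $\Cqsizealt$.

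First, I would eliminate r-atoms. For each occurrence of an r-atom $R(x_1, \ldots, x_n)$ with $n > 2$ inside a conjunctive program, introduce a fresh variable $y$ and replace the r-atom by the $n$ p-atoms $a_R^1(y, x_1), \ldots, a_R^n(y, x_n)$, where $a_R^1, \ldots, a_R^n$ are fresh atomic programs (one family per relation symbol $R$). Equi-satisfiability is routine: from a model $K$ of $e$, extend it by a fresh witness $y_{\bar u}$ for each tuple $\bar u \in R^K$ with $(y_{\bar u}, \bar u[i]) \in a_R^i$; conversely, from $K' \models e'$, define $R^K$ as the set of tuples $\bar u$ such that some $y \in \dom{K'}$ satisfies $(y, \bar u[i]) \in (a_R^i)^{K'}$ for all $i$. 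Conjunctive width is preserved, since the r-atom contributed $\arity(R) = n$ to the sum defining $\cqsize{\cdot}$ and the replacement contributes $n$ atomic p-atoms of width $1$ each.

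Second, I would unnest the formula using fresh atomic propositions and global equivalence axioms. Processing subformulas bottom-up, for each inner subformula $\psi$ of the current expression introduce a fresh $p_\psi \in \Prop$ and, wherever $\psi?$ occurs nested strictly inside another test, replace $\psi?$ by $p_\psi?$. The equivalence $p_\psi \leftrightarrow \hat\psi$ is enforced globally at the top level via the conjunct $\lnot\tup{\Univ \circ ((p_\psi \land \lnot\hat\psi) \lor (\lnot p_\psi \land \hat\psi))?}$, where $\hat\psi$ is the formula obtained from $\psi$ by the bottom-up rewriting, so that all tests in $\hat\psi$ are already of the form $p?$ with $p \in \Prop$. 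The resulting formula has nesting depth at most $2$, with every nested test being atomic. Conjunctive width is preserved, since only atomic tests and atomic propositions are introduced.

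Third, I would eliminate $\Univ$. Introduce a fresh atomic program $u$ and, letting $a_1, \ldots, a_n$ enumerate the atomic programs occurring in the current formula, set $\pi_u \eqdef (a_1 \cup \bar a_1 \cup \dotsb \cup a_n \cup \bar a_n \cup u \cup \bar u)^*$. Replace every occurrence of $\Univ$ by $\pi_u$; since $\cqsize{\pi_u} = 1$ (as $\cqsize{\cdot}$ is $\max$ over $\cup$ and is stable under $*$), the value of $\Cqsizealt$ is preserved. For equi-satisfiability: from $K \models e$ after the previous steps, construct $K'$ extending $K$ by setting $u^{K'} = \dom{K} \times \dom{K}$, so that $\pi_u$ is interpreted as the full relation on $\dom{K}$ and behaves as $\Univ$. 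Conversely, from $K', w_0 \models e'$ restrict to $\hat K$, the substructure induced by $\{w : (w_0, w) \in \dbracket{\pi_u}_{K'}\}$, on which $\dbracket{\pi_u}_{\hat K}$ is the full cartesian product and hence coincides with $\Univ$.

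The most delicate point will be the last step: one has to verify that restricting to the $\pi_u$-reachable submodel preserves satisfaction of the unnested formula, which follows from the facts that the body only navigates via atomic programs of the formula or via $\pi_u$ itself, both absorbed into $\hat K$ by construction, and that the unnested form limits the sub-expressions requiring preservation to atomic tests $p?$ whose truth is local.
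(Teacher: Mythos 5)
Your proposal is correct and follows essentially the same route as the paper, which factors the statement into three lemmas (unnesting via fresh propositions $p_\psi$ and a global $\lnot\tup{\Univ\circ((p_\psi\land\lnot\psi)\lor(\lnot p_\psi\land\psi))?}$ conjunct, replacement of each "r-atom" by $n$ fresh "p-atoms" through a fresh variable, and replacement of $\Univ$ by the Kleene star of the symmetric closure of all "atomic programs" plus a fresh one). The only difference is that you swap the order of the first two steps, which is immaterial since each transformation preserves the properties established by the others.
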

\begin{proof}
    Follows directly by applying the next \Cref{lem:unnested-form,lem:redux:struct-to-kripke,lem:removeUniv} below.
\end{proof}

We will next show how to reduce to "unnested form" (\Cref{lem:unnested-form}), then how to remove "r-atoms" (\Cref{lem:redux:struct-to-kripke}), and how to remove "universal programs" (\Cref{lem:removeUniv}). \Cref{prop:polyredux-simpleform} will follow as a direct consequence of these reductions.
\paragraph{Reduction to low nesting depth expressions}

\begin{lemma}\AP\label{lem:unnested-form}
    For every $\UCPDLp$ "expression" $e$ one can produce in polynomial time an "equi-satisfiable" "expression" $e'$ in "unnested form" such that $\Cqsize{e}=\Cqsize{e'}$.
\end{lemma}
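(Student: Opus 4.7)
The plan is to use a standard Fischer--Ladner-style relabeling. First, if $e$ is a "program", I would replace it by the "formula" $\tup{e}$, which is "equi-satisfiable", has the same "conjunctive width", and lets me assume from now on that $e$ is a "formula".

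Next, for every "formula" $\phi \in \subexpr(e)$ I would introduce a fresh "atomic proposition" $p_\phi \in \Prop$ together with a one-step unfolding $\widetilde{\phi}$, obtained by expanding only the outermost connective while abbreviating strict "formula" subexpressions by their fresh propositions: concretely $\widetilde{p}=p$, $\widetilde{\lnot\psi}=\lnot p_\psi$, $\widetilde{\psi_1 \land \psi_2}=p_{\psi_1} \land p_{\psi_2}$, and $\widetilde{\tup{\pi}}=\tup{\pi^{\sharp}}$, where $\pi^{\sharp}$ is the "program" obtained from $\pi$ by replacing every "test program" $\chi?$ occurring in it (also inside the p-atoms of any "conjunctive program") by $p_\chi?$. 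By construction every $\widetilde{\phi}$ has "nesting depth" at most $1$ and all its tests are on "atomic propositions".

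I would then assemble the definitional constraint
\[
    \chi_e \eqdef \bigwedge_{\phi \in \subexpr(e)} (p_\phi \leftrightarrow \widetilde{\phi}),
\]
which still has "nesting depth" $1$, and use $\Univ$ to enforce it at every "world" via $\alpha_e \eqdef \lnot \tup{\Univ \circ (\lnot\chi_e)?}$. The output would be $e' \eqdef p_e \land \alpha_e$, whose "nesting depth" is exactly $2$ and whose only depth-$2$ "test programs" are the $p?$'s appearing in the $\widetilde{\phi}$'s; hence $e'$ is in "unnested form". "Equi-satisfiability" is routine: from a model $K,u \models e$ extend the valuation by $p_\phi^K \eqdef \dbracket{\phi}_K$ and verify that $\alpha_e$ holds everywhere; conversely, any model of $e'$ forces $p_\phi \leftrightarrow \phi$ at every "world" by induction on $\phi$ using the clauses of $\chi_e$, so $p_e$ entails $e$ at the witness "world". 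Polynomial-time is clear since $|\subexpr(e)|=O(|e|)$ and each $\widetilde{\phi}$ is bounded by the top-level size of $\phi$.

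The sanity check I expect to need most care is $\Cqsize{e'}=\Cqsize{e}$: the only "programs" introduced at the outer level are $\Univ$, "test programs" on "atomic propositions", and compositions thereof, all of "conjunctive width" $1$; and the replacements $\chi?\mapsto p_\chi?$ inside each $\pi^{\sharp}$ do not alter any "conjunctive program" structurally, because a "test program" contributes $1$ to "conjunctive width" regardless of its tested "formula". I do not foresee any real obstacle; the only delicate point is counting "nesting depth" carefully to ensure that the definitional equivalences sit at depth $1$, so that wrapping them under $\lnot\tup{\Univ \circ (\lnot\cdot)?}$ lands them at depth $2$ with only atomic tests below, which is precisely the purpose of the one-step unfolding $\widetilde{\cdot}$.
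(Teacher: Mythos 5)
Your proposal is correct and follows essentially the same route as the paper: abbreviate tests by fresh "atomic propositions" and enforce the definitional equivalences at every "world" via a $\Univ$-guarded conjunct of the form $\lnot\tup{\Univ\circ(\dotsb)?}$, observing that this keeps the "nesting depth" at $2$ with only atomic nested tests and leaves the "conjunctive width" untouched. The only difference is presentational — you batch all subformula definitions into a single constraint $\chi_e$ in one pass, whereas the paper eliminates one offending nested test at a time and iterates linearly many times — and both yield the same polynomial bound.
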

\begin{proof}
    Let $\phi \in \UCPDLp$.
    Let $\psi?$ be a "subexpression" of $\phi$ such that
    \begin{itemize}
        \item $\psi$ is not an "atomic program" and has "nesting depth" $\leq 1$, and
        \item $\psi?$ appears inside a "test program" of $\phi$.\footnote{That is, $\phi$ contains a "subexpression" of the form `$\tau?$', and $\tau$ contains $\psi?$ as "subexpression".}
    \end{itemize} 
    Let $\phi'$ be the result of replacing every occurrence of `$\psi?$' with `$p_{\psi}?$' in $\phi$, where $p_{\psi} \in \Prop$ is a fresh "atomic proposition" not used in $\phi$. We then produce the "equi-satisfiable" expression $\hat\phi \eqdef \phi' \land \lnot\tup{\Univ \circ ((p_\psi \land \lnot \psi)\lor (\lnot p_\psi \land \psi))?}$. Note that the second conjunct of $\hat \phi$ expresses that there is no node where the truth value of $p_\psi$ differs from that of $\psi$, and hence that instead of testing for $p_\psi$ one can equivalently test for $\psi$ in $\phi'$.
    If $\hat \phi$ is satisfied in some "structure" $K$, then $\phi$ is also satisfied in $K$, due to the `$\psi \Leftrightarrow p_\psi$' property above.
    Conversely, if $\phi$ is satisfied in some "structure" $K$, consider $K'$ the "structure" resulting from extending $K$ by making $p_\psi$ hold, precisely, wherever $\psi$ holds true; it follows that $K' \models \hat \phi$. Observe that $\Cqsize{\phi} = \Cqsize{\hat \phi}$.
    By applying such reduction a linear number of times we obtain an "equi-satisfiable" "expression" in "unnested form".
\end{proof}

\paragraph{Reduction to "Kripke structures"}
\label{ssec:reduction-to-kripke-sat}
We show a polynomial time translation which yields, for every $\UCPDLp$ "formula" $\phi$, a "formula" $\phi'$ which uses only relations from $\Prog \cup \Prop$ (\ie, no relation of "arity" greater than $2$), such that $\phi$ is satisfiable if{f} $\phi'$ is satisfiable (over "Kripke structures").

\begin{lemma}\AP\label{lem:redux:struct-to-kripke}
    For every $\UCPDLp$ "expression" $e$ one can produce in polynomial time an "equi-satisfiable" "expression" $e'$ such that
    \begin{enumerate}
        \item $e'$ does not contain "r-atoms",
        \item if $e$ is in "unnested form", so does $e'$,
        \item $\Cqsize{e}=\Cqsize{e'}$.
    \end{enumerate}
\end{lemma}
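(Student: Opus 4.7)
The plan is to reify each relation of arity $>2$ via a fresh ``pivot'' world carrying binary edges to its arguments. For each $R \in \Rels$ of arity $n > 2$ occurring in $e$ I will introduce $n$ fresh binary relations $a_1^R, \dots, a_n^R$, together with a single fresh atomic proposition $p_{\mathrm{dom}}$ that marks ``original'' worlds. The translation $T$ will commute with all Boolean and program constructors, but will set $T(\Univ) \eqdef p_{\mathrm{dom}}? \circ \Univ \circ p_{\mathrm{dom}}?$ and $T(\tup{\pi}) \eqdef \tup{T(\pi) \circ p_{\mathrm{dom}}?}$ to confine navigation to original worlds; and it acts on a conjunctive program $C[x_s,x_t]$ by (i) replacing each r-atom $R(x_1,\dots,x_n)$ by the set of binary p-atoms $\set{a_i^R(y_R,x_i) : 1 \leq i \leq n}$ with $y_R$ fresh, and (ii) replacing each p-atom $\pi(x,y)$ by the p-atom $(p_{\mathrm{dom}}? \circ T(\pi) \circ p_{\mathrm{dom}}?)(x,y)$. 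The final translated expression is $e' \eqdef p_{\mathrm{dom}} \land T(e)$, and the construction is visibly polynomial-time.

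The three declared properties are then straightforward. Clearly $e'$ contains no r-atoms. The only tests introduced by $T$ are atomic $p_{\mathrm{dom}}?$ tests, which are never nested inside another test, so the overall nesting depth is preserved, and any newly introduced test at depth $2$ is already of the form $p?$ with $p \in \Prop$, matching the unnested-form condition. For the conjunctive width, a straightforward induction on programs shows $\cqsize{T(\pi)} = \cqsize{\pi}$, using that $\cqsize{p_{\mathrm{dom}}? \circ \alpha \circ p_{\mathrm{dom}}?} = \max\set{1, \cqsize{\alpha}, 1} = \cqsize{\alpha}$ (since $\cqsize{\alpha}\geq 1$); and for a conjunctive program $C[x_s,x_t]$, each original r-atom of arity $n$ contributes exactly $n$ to the new sum via $n$ binary p-atoms of conjunctive width $1$, while each original p-atom $\pi(x,y)$ contributes $\cqsize{\pi}$ to both sums, giving $\Cqsize{e} = \Cqsize{e'}$.

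For equi-satisfiability I will argue both directions by structural induction. In the forward direction, given $K, w \models e$, construct $K'$ by keeping all original interpretations and adding, for each $R$ of arity $> 2$ and each $\bar w \in R^K$, a fresh pivot world $y_{R,\bar w}$ satisfying $(y_{R,\bar w}, \bar w[i]) \in a_i^{R,K'}$ for every $i$; set $p_{\mathrm{dom}}^{K'} \eqdef \dom{K}$. A routine induction on $e$ yields $K', w \models e'$. In the backward direction, given $K', w \models e'$, define $K$ as the substructure of $K'$ whose domain is $p_{\mathrm{dom}}^{K'}$, with original symbols restricted accordingly, and read off the higher-arity relations from the reification pattern: $(u_1,\dots,u_n) \in R^K$ iff there exists $y \in \dom{K'}$ with $(y,u_i) \in a_i^{R,K'}$ for every $i$.

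The delicate point will be the backward direction: the reification nodes must not ``leak'' into the semantics of the original formula. The $p_{\mathrm{dom}}?$ guards around $\Univ$ and $\tup{\cdot}$ confine existential navigation to original worlds, while the $p_{\mathrm{dom}}?$ wrappers of p-atoms inside conjunctive programs, together with the reification witnesses --- which, by construction of $a_i^{R,K'}$, force the argument positions of $a_i^R$ to land in $\dom{K}$ --- together guarantee that every $T(C)$-satisfying assignment $f'$ in $K'$ maps the original variables into $p_{\mathrm{dom}}^{K'}$ and hence restricts to a $C$-satisfying assignment of $K$. The fact that each such insertion is a composition with an atomic test $p_{\mathrm{dom}}?$ is what simultaneously allows us to enforce this domain discipline and to preserve the conjunctive width exactly.
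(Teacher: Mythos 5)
Your overall strategy --- reifying each r-atom through a fresh pivot world and using a domain predicate $p_{\mathrm{dom}}$ to shield the original formula from the pivots --- is sensible, and your forward direction and width accounting are fine. The gap is in the backward direction, and it is real: the variables of an r-atom $R(x_1,\dots,x_n)$ that do not also occur in a p-atom are never guarded by $p_{\mathrm{dom}}?$, so in an arbitrary model $K'$ of $e'$ nothing forces them into $p_{\mathrm{dom}}^{K'}$. Your justification that the argument positions of $a_i^R$ land in $\dom{K}$ ``by construction of $a_i^{R,K'}$'' only applies to the structure you build in the forward direction; in the backward direction $K'$ is given, and its $a_i^R$-edges may point anywhere. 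This breaks equi-satisfiability. Concretely, take
\[
e \;=\; \tup{\set{R(x,y,z)}[x,x]} \;\land\; \lnot\tup{\Univ \circ \bigl(\tup{\set{R(x,y,z)}[y,y]}\bigr)?},
\]
which is unsatisfiable (the first conjunct forces $R\neq\emptyset$, the second forces $R=\emptyset$). Its translation $e'$ is satisfied at $w$ in the structure $K'$ with $\dom{K'}=\set{w,v}$, $p_{\mathrm{dom}}^{K'}=\set{w}$, $a_1^{R,K'}=a_3^{R,K'}=\set{(v,w)}$ and $a_2^{R,K'}=\set{(v,v)}$: the first translated conjunct holds via the assignment $y_R\mapsto v$, $x\mapsto w$, $y\mapsto v$, $z\mapsto w$ (the unguarded $y$ escapes to $v\notin p_{\mathrm{dom}}^{K'}$), while the negated conjunct also holds because no satisfying assignment can send $y$ to the unique $p_{\mathrm{dom}}$-world $w$, as no $a_2^R$-edge ends at $w$.

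The repair is cheap: guard the reification atoms as well, replacing $a_i^R(y_R,x_i)$ by $(a_i^R\circ p_{\mathrm{dom}}?)(y_R,x_i)$. Each such p-atom still has conjunctive width $1$, so $\Cqsize{e}=\Cqsize{e'}$ and the unnested form are preserved, and now every variable of the original conjunctive program is confined to $p_{\mathrm{dom}}^{K'}$, so the substructure induced by $p_{\mathrm{dom}}^{K'}$, with $R$ decoded from the pivots, is a genuine model of $e$. For comparison, the paper performs the bare replacement with no guards at all; there the backward direction is the easy one (decode $R$ on the full domain of $K'$, with no domain restriction and hence no leakage issue), while the forward direction is the delicate one, since freshly added pivot worlds become visible to $\Univ$ under negation. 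Your guarded variant, once corrected as above, makes both directions routine, and is arguably a cleaner route than the paper's.
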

\begin{proof}
    
    The reduction is simply the result of replacing every "r-atom" $R(x_1, \dotsc, x_n)$ in a "conjunctive program" with the set of "p-atoms" $R^{[1]}(z,x_1), \dotsc, R^{[n]}(z,x_n)$, where $z$ is a fresh variable and $R^{[1]}, \dots, R^{[n]} \in \Prog$ are pairwise distinct "atomic programs" not used elsewhere. 
    We claim that $e$ is satisfiable iff $e'$ is satisfiable. For the left-to-right direction, suppose that $e$ is satisfiable in some "Kripke structure" $K$. We expand $K$ into a "Kripke structure" $K'$ by keeping the old "worlds" and interpretations of all symbols already occurring in $e$, and for every tuple $(a_1,\dotsc,a_n) \in \dbracket{R}_K$ we add a fresh "world" $b_{(a_1,\dotsc,a_n)}$ such that $(b_{(a_1,\dotsc,a_n)},a_i) \in \dbracket{R^{[i]}}_{K'}$ for every $1\leq i\leq n$. Since the variable $z$ is fresh, any assignment satisfying the original atom $R(x_1,\dotsc,x_n)$ can be extended by mapping $z$ to the corresponding witness $b_{(a_1,\dotsc,a_n)}$, and hence satisfies its translation. 
    For the converse direction, suppose that $e'$ is satisfiable in some "Kripke structure" $K'$. We keep the same "worlds" and the same interpretations of all symbols already occurring in $e$, and interpret each relation symbol $R$ by setting $(a_1,\dotsc,a_n) \in \dbracket{R}_K$ iff there exists some "world" $b$ such that $(b,a_i) \in \dbracket{R^{[i]}}_{K'}$ for every $1\leq i\leq n$. Since the "atomic programs" $R^{[1]},\dotsc,R^{[n]}$ are fresh and do not occur elsewhere, this definition does not interfere with the interpretation of the rest of $e$, and any assignment satisfying the translated atom in $K'$ satisfies the original atom in $K$. Thus $e$ is satisfiable iff $e'$ is satisfiable.
    \color{black}

    To show that this reduction preserves the "conjunctive width", we proceed by induction in the structural complexity of $e$. The only interesting case is when $e$ is a "conjunctive program" $C[x,y]$. Each "r-atom" $R(x_1, \dotsc, x_n)$ of $C$ contributes with an additive factor $n$ to the "conjunctive width" of $e$. In the translation $e'$ of $e$, such "r-atom" is no longer present, but instead it was replaced by $n$ "p-atoms" in $\Prog$, each of them having "conjunctive width" equal to 1 by definition. Therefore, $\Cqsize{e}=\Cqsize{e'}$.     
\end{proof}

\paragraph{Removal of "universal programs"}
We show a standard way to remove $\Univ$ modalities.
\begin{lemma}\AP\label{lem:removeUniv}
    For every $\UCPDLp$ "expression" $e$ one can produce in polynomial time an "equi-satisfiable" "expression" $e'$ such that
    \begin{enumerate}
        \item $e'$ does not contain $\Univ$ "programs",
        \item if $e$ does not contain "r-atoms", neither does $e'$,
        \item if $e$ is in "unnested form", so does $e'$,
        \item $\Cqsize{e}=\Cqsize{e'}$.
    \end{enumerate}
\end{lemma}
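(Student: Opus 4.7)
The plan is to simulate the "universal program" by reachability through a fresh "atomic program" $u$ that, after a suitable model expansion, connects the current world to every other world. Concretely, pick an "atomic program" $u$ not appearing in $e$, let $\{a_1, \dots, a_n\}$ be the "atomic programs" occurring in $e$, and let $T$ be the translation on $\UCPDLp$-"expressions" that distributes homomorphically through every construct and sends every occurrence of $\Univ$ to
\[
    \pi_U \;\eqdef\; \bigl( u \cup \bar u \cup \bigcup_{i=1}^{n} (a_i \cup \bar a_i) \bigr)^{\!*}.
\]
Set $e' \eqdef T(e)$. Properties (1)--(4) are immediate: $T$ removes every $\Univ$, introduces no "r-atoms", and $\pi_U$ has no "test programs" (so "nesting depth", hence "unnested form", is preserved) and satisfies $\cqsize{\pi_U} = 1 = \cqsize{\Univ}$, so $\Cqsize{e'} = \Cqsize{e}$.

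For equi-satisfiability, the forward direction is easy. Given $K, w \models e$, define $K^+$ to agree with $K$ on every symbol appearing in $e$ and on the same domain, but interpret the fresh $u$ as $\{w\} \times \dom{K}$. Then in $K^+$ every world $v$ reaches $w$ via one $\bar u$-step and $w$ reaches every $v'$ via one $u$-step, so $\dbracket{\pi_U}_{K^+} = \dom{K^+} \times \dom{K^+} = \dbracket{\Univ}_{K^+}$. Since $u$ does not occur in $e$, a routine structural induction yields $K^+, w \models T(\psi)$ iff $K, w \models \psi$ for every subexpression $\psi$ of $e$, and so $K^+, w \models e'$.

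The backward direction is where the work lies. Given $K', w' \models e'$, let $K''$ be the substructure of $K'$ induced by the connected component of $w'$ in the graph whose edges are the interpretations in $K'$ of $u, a_1, \dots, a_n$ together with their converses. Then $\dom{K''}$ is closed under every such "atomic program", so restriction from $K'$ to $K''$ preserves every atomic interpretation at every world of $\dom{K''}$, and moreover $\dbracket{\pi_U}_{K''} = \dom{K''} \times \dom{K''} = \dbracket{\Univ}_{K''}$. A structural induction on subexpressions $\psi$ of $e$ then shows, for every $v \in \dom{K''}$, $K', v \models T(\psi)$ iff $K'', v \models \psi$ (and similarly for subprograms on pairs in $\dom{K''}^2$); specialising to $\psi = e$ and $v = w'$ yields $K'', w' \models e$. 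The main point to be careful about is the case of "conjunctive programs": one must observe that any $T(C)$-satisfying assignment on $K'$ starting at a pair in $\dom{K''}^2$ necessarily has its image in $\dom{K''}$, because the "underlying graph" of $C$ is connected and, after translation, every "p-atom" in $T(C)$ navigates only along edges from $\{u, a_1, \dots, a_n\}$ and their converses, which keep us inside the connected component. This is the main obstacle of the proof, and it leans comfortably on the standing assumption from the reduction chain of \Cref{thm:sat-cpdlp} that $e$ is already free of "r-atoms" (by \Cref{lem:redux:struct-to-kripke}, applied just before), since "r-atom" hyperedges would otherwise escape the connected component.
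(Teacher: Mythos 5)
Your proof is correct and is essentially the paper's own argument: the same translation of $\Univ$ into the Kleene star of the union of all atomic programs (plus one fresh) and their converses, with equi-satisfiability shown by making the fresh relation total in one direction and restricting to the connected component of the witness in the other (your choice of $u^{K^+}=\set{w}\times\dom{K}$ versus the paper's total $a_0$-relation is immaterial). Your closing caveat is the one genuine subtlety and you are right to flag it: the component argument really does need $e$ to be free of r-atoms --- for the lemma as literally stated the translation can fail, e.g.\ $\tup{\set{R(x,y,y),\, q?(y,y)}[x,x]} \land \lnot\tup{\Univ\circ q?}$ with $R$ ternary is unsatisfiable while its translation holds in a two-element structure with no atomic-program edges --- and the paper's proof carries the same implicit restriction (it silently works over Kripke structures), both arguments relying on the lemma being invoked only after \Cref{lem:redux:struct-to-kripke}.
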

\begin{proof}
    Given a $\UCPDLp$ "expression" $e$, let $\set{a_1, \dotsc, a_n}$ be the set of "atomic programs" present in $e$, and let $a_0 \in \Prog$ be an "atomic program" not used in $e$. Let $e'$ be the result of replacing each appearance of $\Univ$ with the "program" $ (a_0 \cup \dotsb \cup a_n \cup \bar a_0 \cup \dotsb \cup \bar a_n)^*$.
    
    Note that this reduction does not add "r-atoms" and does not increase the "nesting depth" of $e$.
    It does not either increase the "conjunctive width" since $\cqsize{\Univ} = \cqsize{(a_0 \cup \dotsb \cup a_n \cup \bar a_0 \cup \dotsb \cup \bar a_n)^*}=1$.
    Further the expressions are "equi-satisfiable".
    \begin{claim}\AP
        $e$ is satisfiable if{f} $e'$ is satisfiable.
    \end{claim}
    Suppose first that $e'$ is satisfiable. Then there is some "Kripke structure" $K$ and element $\bar t$ such that $\bar t \in \dbracket{e'}_K$ (where $\bar t$ can be a pair of elements or just one element depending on whether $e'$ is a "program" or a "formula"). It then follows that $\dbracket{e}_{K'} \neq \emptyset$ where $K'$ is the substructure of $K$ induced by the $A$-connected component of $\bar t$, for $A = \set{a_0, \dotsc, a_n}$.
    
    Suppose now that $e$ is satisfiable. Then $\dbracket{e}_K \neq \emptyset$ for some "Kripke structure" $K$. Let $K'$ be the result of adding an $a_0$-relation between every pair of elements of $K$. It then follows that $\dbracket{e}_{K'} \neq \emptyset$.
\end{proof}

\subsection{Definition of \texorpdfstring{$\omega$}{ω}-regular Tree Satisfiability Problem}
\label{ssec:omega-regular-tree-sat}
In this section we define the "$\omega$-regular tree satisfiability" problem, which is the problem of deciding, given a $\CPDLp$ "formula" $\phi$ in "unnested form" and without "r-atoms" and a "two-way alternating parity tree automaton" ("TWAPTA") $\+T$ (defined in the next subsection), whether there is an infinite "tree" in $\langTWAPTA{\+T}$ that, when viewed as a "Kripke structure", is a model of $\phi$.

We first define the notion of tree we use.
\AP
Let $\intro*\SigmaN$ be a finite alphabet of ""node labels"" and $\intro*\SigmaE$ a finite alphabet of ""edge labels"". 
A ""$\SigmaN$-labeled $\SigmaE$-tree"" is a partial function $T \colon \SigmaE[*] \to 
\SigmaN$ whose domain, denoted $\intro*\domT(T)$, is prefix-closed. The elements of $\domT(T)$ are 
the nodes of $T$. If $\domT(T) = \SigmaE[*]$, then $T$ is called ""complete"". 
\AP
In the context of a "complete" tree $T$, 
a node $va \in \domT(T)$ with $a \in \SigmaE$, is called the ""$a$-successor"" of $v$, and $v$ is the ""$a$-predecessor"" of $va$. Notice that the "$a$-successor" of $v$ is unique, since nodes are words over $\SigmaE$.

In the rest of the section we work with "complete" trees.
\AP
We use $\intro*\tree(\SigmaN, \SigmaE)$ to denote the set of all "complete" "$\SigmaN$-labeled $\SigmaE$-trees". 
If $\SigmaE$ is not important, we simply talk of \reintro{$\SigmaN$-labeled trees}.

The trees accepted by "TWAPTAs" are "complete" "$\pset{P}$-labeled $A$-trees", where $P \subseteq \Prop$ and $A \subseteq \Prog$ are finite sets of "atomic propositions" and "atomic programs", respectively. 
Such a "tree" $T$ can be identified with the "Kripke structure" $K$ with $\dom{K} = A^*$, with
$\dbracket{p}_K = \set{u \in A^* \mid \mbox{the label of $u$ contains $p$}}$
for every $p \in P$, and $\dbracket{a}_K = \set{(u,ua) \mid u \in A^*}$
for every $a \in A$.

Observe that "Kripke structures" derived in this way are deterministic and total 
with respect to $A$, \ie,  the transition relation $a^K$ is a total function 
for each $a \in A$.

"TWAPTAs" are walking automata which, at each transition, can read the current node label from $\SigmaN$ and move to a child checking an edge label from $\SigmaE$, move to the parent by checking an edge label from $\SigmaE$, or stay at the same node. Further, the automata model features alternation, so transitions are "positive Boolean combinations@\posB" of these kinds of moves. The acceptance is based on a "parity condition@successful". The formal definition is given in the next section.

\subsubsection{Definition of Two-Way Alternating Parity Tree Automata}
\label{sec:def:twapta}
\AP
To define "TWAPTAs", we need a few preliminaries. For a finite set $X$, we denote by $\intro*\posB(X)$ the set of all positive Boolean formulas where the elements of $X$ are used as variables. The constants \textit{true} and \textit{false} are admitted, "ie", we have $\textit{true}$, $\textit{false}$ $\in \posB (X)$ for any set $X$. A subset $Y \subseteq X$ can be seen as a valuation in the obvious way by assigning $\textit{true}$ to all elements of $Y$ and $\textit{false}$ to all elements of $X \setminus Y$. 
\AP
For an edge alphabet $\SigmaE$, let $\intro*\barSigmaE \eqdef \set{ \bar a \mid a \in \SigmaE}$ be a disjoint copy of $\SigmaE$. For $u \in \Sigma^*$ and $d \in \SigmaE \cup \barSigmaE \cup \set \epsilon$ define:
\begin{align*}
u {\cdot} d &= \begin{cases}
    ud & \text{if $d \in \SigmaE$}\\
    u & \text{if $d = \epsilon$}\\
    v & \text{if there is $a \in \SigmaE$ with $d = \bar a$ and $u = va$}\\
    \textit{undefined} & \text{otherwise}
\end{cases}
\end{align*}

\AP
A ""two-way alternating parity tree automaton"" (""TWAPTA"") over "complete" $\SigmaN$-labeled $\SigmaE$-trees is a tuple $T = (S, \delta, s_0 , \textit{Acc})$, where

\begin{itemize}
    \item $S$ is a finite non-empty set of states,
    \item  $\delta \colon S \times \SigmaN \to \posB(mov(\SigmaE))$ is the transition function, where $mov(\SigmaE) = S \times (\SigmaE \cup \barSigmaE \cup \set{ \epsilon } )$ is the set of moves,
    \item  $s_0 \in S$ is the initial state, and
    \item  $\textit{Acc} \colon S \to \Nat$ is the priority function.
\end{itemize}

For $s \in S$ and $d \in \SigmaE \cup \barSigmaE \cup \set\epsilon$, we write the corresponding move as $\tup{s, d}$. Intuitively, a move $\tup{s, a}$ with $a \in \SigmaE$, means that the automaton sends a copy of itself in state $s$ to the "$a$-successor" of the current tree node. Similarly, $\tup{s,\bar a}$
means to send a copy to the "$a$-predecessor" (if exists), and $\tup{s,\epsilon}$ means to stay in the current node. Formally, the behaviour of "TWAPTAs" is defined in terms of runs. Let $\+T$ be a "TWAPTA" as above, $T \in \tree(\SigmaN , \SigmaE)$, $u \in \SigmaE[*]$ a node, and $s \in S$ a state of $\+T$. \AP An ""$(s,u)$-run"" of $\+T$ on $T$ is a (not necessarily "complete") $(S \times \SigmaE[*])$-labeled tree $T_R$ such that
\begin{itemize}
    \item $T_R(\epsilon) = (s,u)$, and
    \item for all $\alpha \in \domT(T_R)$, if $T_R (\alpha) = (p, v)$ and $\delta(p, T(v)) = \theta$, then there is a subset $Y \subseteq mov(\SigmaE)$ that satisfies $\theta$ and such that for all $(p', d) \in Y$, $v \cdot d$ is defined and there exists a successor $\beta$ of $\alpha$ in $T_R$ with $T_R(\beta) = (p', v \cdot d)$.
\end{itemize}

\AP
We say that an "$(s,u)$-run" $T_R$ is ""successful"" if for every infinite path $\alpha_1 \alpha_2 \dotsb$ in $T_R$ (which is assumed to start at the root), the following number is even:
\begin{align*}
    \min \{ \textit{Acc}(s) \mid {}& s \in S \text{ with } T_R (\alpha_i ) \in \set{ s } \times \SigmaE[*] 
    \text{ for infinitely many } i \}.
\end{align*} %

\AP
For $s \in S$ define
$\intro*\dbracketaut{\+T , s} \eqdef \{ (T, u) \mid T \in \tree(\SigmaN , \SigmaE ), u \in \SigmaE[*]$ , and
there exists a "successful" "$(s,u)$-run" of $\+T$ on $T$ $\}$ and
$\dbracketaut{\+T} \eqdef \dbracketaut{\+T , s_0}$.

Now the language $\langTWAPTA{\+T}$ accepted by $\+T$ is defined as
\AP
\begin{align*}
    \intro*\langTWAPTA{\+T} \eqdef \set{ T \in \tree(\SigmaN , \SigmaE ) \mid (T, \epsilon) \in \dbracketaut{\+T} }.
\end{align*}

\AP
For a "TWAPTA" $\+T = (S, \delta, s_0 , \textit{Acc})$, we define its ""state size"" $\intro*\sizeStates{\+T} \eqdef | S |$ as its number of states and we define its ""index"" $\intro*\indexAut{\+T}$ as $\max \set{ \textit{Acc}(s) \mid s \in S }$. The ""transition size"" $\sizeTrans{\+T}$ is the sum of the lengths of all positive Boolean functions that appear in the range of $\delta$, $\intro*\sizeTrans{\+T} \eqdef \sum_{(s,a) \in S \times \SigmaN} |\delta(s,a)|$.

The emptiness problem for "TWAPTAs" is known to be decidable within these bounds:
\begin{thm}(\cite{DBLP:conf/icalp/Vardi98}, \cite[Theorem~3.1]{DBLP:journals/jsyml/GollerLL09})\label{thm:TWAPTA-complexity}
For any given "TWAPTA" $\+T$, it can be checked in time $\expfun( \sizeStates{\+T}  + \indexAut{\+T}) \cdot \sizeTrans{\+T}^{O(1)}$ whether $\langTWAPTA{\+T} \neq \emptyset$. 
\end{thm}
\color{black}

\subsubsection{The $\omega$-regular tree satisfiability problem}

We can now formally define "$\omega$-regular tree satisfiability". Let $\phi$ be a $\CPDLp$ "formula", let $A = \set{ a \in \Prog \mid a \text{ occurs in }\phi}$ and $P = \set{ p \in \Prop \mid p \text{ occurs in } \phi}$. 
\AP
The "formula" $\phi$ is ""satisfiable with respect to@@twapta"" a "TWAPTA" $\+T$ over "$\pset{P}$-labeled $A$-trees" if there is $T \in \langTWAPTA{\+T}$ such that $\epsilon \in \dbracketaut{\phi}_T$. 
\AP
Finally, ""$\omega$-regular tree satisfiability"" is the problem to decide, given $\phi \in \CPDLp$ in "unnested form" and without "r-atoms" and a "TWAPTA" $\+T$, whether $\phi$ is "satisfiable with respect to@@twapta" $\+T$.

\subsection{Reduction to \texorpdfstring{$\omega$}{ω}-regular Tree Satisfiability Problem}
\label{ssec:redux-sat-to-omegatreesat}

The satisfiability problem can be narrowed down by leveraging the logic's "$\Tw$-model property". These tree-like models can be unrolled into infinite trees, allowing the problem to be reduced into checking for the existence of a suitable tree model accepted by a "TWAPTA". Intuitively, each tree accepted by a "TWAPTA" encodes the "tree decomposition" of a "Kripke structure" such that a modified version $\phi'$ of the input formula $\phi$ holds true in $T$ if, and only if, it holds in the "structure" denoted by $T$.
\color{black}
More concretely, for every $\CPDLp$ "formula" $\phi$ over "Kripke structures" there is a polynomial-time computable $\CPDLp$ "formula" $\phi'$ and an exponential-time computable "TWAPTA" $\+T$ such that the following holds.
\begin{lemma}\AP\label{lem:redux-omega-tree-sat}
    $\phi$ is satisfiable if{f} $\phi'$ is "satisfiable with respect to@@twapta" $\+T$. Moreover, $\Cqsize{\phi}=\Cqsize{\phi'}$, and $\pdlsize{\phi'}$ and $\sizeStates{\+T}$ are polynomial in $\pdlsize{\phi}$.\footnote{Observe, however, that $\+T$ may have an exponential number of transitions.} 
    \AP
    $\+T$ depends only on the set of "atomic programs" and "atomic propositions" occurring in $\phi$ as well as on the ""tree-width@@pdl"" of $\phi$ (ie, the minimum $k$ such that $\phi \in \CPDLg\Tw$).  If $\phi$ is in "unnested form", so is $\phi'$.\color{black}
\end{lemma}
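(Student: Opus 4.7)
The strategy is the standard \emph{tree-decomposition-as-tree} encoding, adapted to \CPDLp so that conjunctive width is preserved. First, by \Cref{cor:treewidth-k-model-property} together with \Cref{rk:treewidth-conjunctive-width}, letting $k=\Cqsize{\phi}$, the formula $\phi$ is satisfiable iff it is satisfiable in a countable Kripke structure admitting a "good" binary "tree decomposition" of "width" at most $k$. Without loss of generality we also assume that consecutive "bags" differ in at most one "world" (this is ensured by inserting intermediate bags, still keeping the decomposition good and of width $\leq k$). Call such models \emph{$k$-tree-shaped}.

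Next, I would fix alphabets $\SigmaN, \SigmaE$ coding $k$-tree-shaped models. The set of positions in a bag is $\{1,\dots,k+1\}$, and a "node label" in $\SigmaN$ specifies (i) which positions are active, (ii) for each active position $i$, the subset of atomic propositions of $\phi$ that hold at the "world" stored there, and (iii) for each ordered pair of active positions $(i,j)$, the subset of atomic programs of $\phi$ that hold from the $i$-th to the $j$-th "world". An "edge label" in $\SigmaE$ encodes, for each parent-child transition, the partial bijection between positions in the parent and in the child (this is injective and respects types by goodness). The resulting encoding of a $k$-tree-shaped model is a "complete" "$\SigmaN$-labeled $\SigmaE$-tree" (completing arbitrarily by a ``dead'' letter if needed).

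The TWAPTA $\+T$ has to accept precisely the trees that encode some $k$-tree-shaped Kripke structure over the signature of $\phi$. Its task is purely global consistency: (a) labels on the two endpoints of every edge agree on the shared positions (carry the same atomic propositions and the same incident atomic programs between positions that both endpoints still see), and (b) each position is ``introduced'' at exactly one bag, and the subtree of bags where a given world remains stored is connected — equivalently, once a position is dropped along an edge it may never come back. Both conditions are local (they involve only a node and its children/parent), so $\+T$ can be built with $O(k+|P|+|A|)$ states and a trivial parity condition; the transition function is exponential in $k$ because it enumerates the possible partial bijections and position-wise checks.

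Finally, I would inductively define a translation $e\mapsto e'$ from \CPDLp "expressions" over the Kripke signature into \CPDLp "expressions" over $\SigmaN\cup\SigmaE$, so that $\phi'$ evaluated on a tree in $\langTWAPTA{\+T}$ holds at a bag $b$ with distinguished position $i$ iff $\phi$ holds at the corresponding "world" in the encoded model. Propositions become short tests on the "node label". Atomic programs and their converses are replaced by fixed \CPDL "programs" that navigate a bounded number of steps (to a neighbouring bag sharing both endpoints and back), using only node-label and edge-label tests. Composition, union, star and tests commute with the translation. For a "conjunctive program" $C[x_s,x_t]$, the translation is done \emph{atom by atom}: each "p-atom" $\pi(x,y)$ becomes $\pi'(x,y)$, and the set is reassembled as $C'[x_s,x_t]$; by \Cref{lem:redux:struct-to-kripke} we may assume no "r-atoms" remain. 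Since this rewriting is structural and uses fixed short programs only at the atomic level, $\Cqsize{\phi'}=\Cqsize{\phi}$, "unnested form" is preserved, and $\pdlsize{\phi'}$ is polynomial in $\pdlsize{\phi}$. A technical point at the interface between $\phi'$ and $\+T$ is to carry a distinguished position marker during the recursion so the formula knows ``which world is currently pointed at''; this is handled by duplicating states of $\+T$ by $\{1,\dots,k+1\}$ and adding $k+1$ fresh atomic propositions ``current position is $i$'' to $\phi'$.

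The correctness of the reduction then goes: if $K,w\models \phi$ in a $k$-tree-shaped $K$, fix a bag containing $w$ and position $i$; the encoding $t_K$ lies in $\langTWAPTA{\+T}$ and $t_K, b \models \phi'$ by induction on $e\mapsto e'$, where the case of conjunctive programs uses the "$C$-satisfying assignment" in $K$ to produce one in $t_K$ by placing each "variable" at any bag storing the corresponding world. Conversely, any $t\in\langTWAPTA{\+T}$ decodes back to a $k$-tree-shaped Kripke structure $K_t$, and an inductive argument again shows $t,b\models \phi' \Rightarrow K_t, w \models \phi$. The main obstacle is the "conjunctive program" case of the translation: one must argue that a "$C'$-satisfying assignment" to the encoding can always be realised over the ``world'' view by mapping each variable back to the world stored at the position it picks — this uses the global consistency enforced by $\+T$.
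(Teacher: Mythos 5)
Your overall plan is exactly the paper's: the paper proves this lemma by citing the construction of G\"oller--Lohrey--Lutz for $\ICPDL$ (\cite[\S 4.2]{DBLP:journals/jsyml/GollerLL09}) and observing that one only has to replace the name set $\set{0,1,2}$ by $\set{0,1,\dotsc,k}$ for $k$ the "tree-width@@pdl" of $\phi$. Your encoding of width-$k$ "good" "tree decompositions" into "$\SigmaN$-labeled $\SigmaE$-trees", the consistency-checking "TWAPTA" with polynomially many states but exponentially many transitions, and the atom-by-atom rewriting of "conjunctive programs" (which is what keeps $\Cqsize{\phi'}=\Cqsize{\phi}$ and preserves "unnested form") all match that construction, so in substance you are reconstructing the proof the paper delegates.

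There is, however, one step that would fail as written: the translation of an "atomic program" $a$ cannot be a program that ``navigates a bounded number of steps (to a neighbouring bag sharing both endpoints and back)''. A "world" $u$ is represented by an equivalence class of (bag, position) pairs spread over the whole connected subtree of "bags" containing $u$, and the $a$-edge from $u$ to $v$ is only guaranteed (by condition "B@@treedec") to be recorded in \emph{some} bag containing both $u$ and $v$ --- which may be arbitrarily far, in the tree, from the bag at which you are currently evaluating the translated program. The correct translation of $a$ must therefore have the shape $(\mathit{stay})^{*}\circ(\text{check-}a)\circ(\mathit{stay})^{*}$, where $\mathit{stay}$ is a union over edge labels of one-step moves to an adjacent bag that retains the current world (re-mapping the position according to the edge label); without the Kleene stars the translated formula is not equivalent and the ``only if'' direction of the reduction breaks. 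The fix is standard and harmless for the quantitative claims --- the starred program has "conjunctive width" $1$ and introduces no new nesting, so $\Cqsize{\phi'}=\Cqsize{\phi}$, "unnested form", and the polynomial bound on $\pdlsize{\phi'}$ all survive --- but you should state it, since it is the one place where the translation is not purely local.
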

\begin{proof}
This is essentially the reduction shown in \cite[\S 4.2]{DBLP:journals/jsyml/GollerLL09}. 

The basic idea is rather classical: one builds a "TWAPTA" so that each tree of the language accepts a  "tree decomposition" of a "Kripke structure" $K$ whose "tree-width" is at most $k$, where $k$ is the "tree-width@@pdl" of $\phi$  (two, in the case of $\ICPDL$).
The encoding uses special "atomic propositions" and "atomic programs" to encode, for each node of the tree, which is the structure restricted to the associated bag. 
In other words, we visualize each node of the tree as having $k+1$ available `slots' that can be empty or filled with a "world" of $K$, corresponding to the ordered representation of $\set{0,\dotsc, k}$ of the bag's elements at the current node.
For example, if the "atomic proposition" `$(i,R,j)$' holds at a node of the tree, then there is a $R$-relation between the "world" represented by the $i$-th element in the current bag and the $j$-th element of the current bag ("ie", the pair of represented elements belongs to $R^K$).
We refer the reader to \cite[\S 4.2]{DBLP:journals/jsyml/GollerLL09} for more details on the intuition of the construction.
\color{black}

In our case the construction consists of a trivial modification of \cite[\S 4.2]{DBLP:journals/jsyml/GollerLL09}, where the only difference is that the reduction needs to be adapted so that it allows a reduction from "Kripke structures" of arbitrary "tree-width" instead of just "tree-width" $2$. 
It is not worth showing it here since the proof is literally the result of replacing everywhere, in \cite[\S 4.2]{DBLP:journals/jsyml/GollerLL09}'s construction, the set $\set{0,1,2}$ (to keep track of a "width" 2 "tree decomposition") with the set $\set{0,1, \dotsc, k}$ where $k$ is the "tree-width@@pdl" of $\phi$, together with the trivial observation that this does not induce an extra exponential blowup.
\end{proof}

\subsection{Complexity of \texorpdfstring{$\omega$}{ω}-regular Tree Satisfiability Problem}
\label{sec:solving-omega-reg-sat}

We shall now solve the "$\omega$-regular tree satisfiability" problem by translating any $\CPDLp$ "expression" into an $\ICPDL$ "expression" which is equivalent (over trees), and reuse the prior result from \cite[Theorem 3.8]{DBLP:journals/jsyml/GollerLL09} solving the analogous problem for $\ICPDL$. 
Although the translation is exponential, the produced $\ICPDL$ formulas have low `intricacy' of "program intersections", which ensures that we do not incur in an exponential blowup for the complexity. This measure of intricacy is called "intersection width" and it is akin to the "conjunctive width" we have defined but for $\ICPDL$ "programs"
(or, rather, our "conjunctive width" definition is inspired by the "intersection width"). Let us first define this width.

\AP
The ""intersection width""  $\intro*\iwidth(\pi)$ of an $\ICPDL$ "program" $\pi$ is defined in \cite{DBLP:journals/jsyml/GollerLL09} as follows:
\begin{align*}
    \iwidth(a) = \iwidth(\bar a) = \iwidth(\phi?) = \iwidth(\epsilon) &\eqdef 1 &&\text{ for } a \in \Prog,
    \\
    \iwidth(\pi_1 \star \pi_2) &
    \eqdef
    \max \set{ \iwidth(\pi_1), \iwidth(\pi_2)} &&\text{ for $\star \in \set{\circ,\cup}$},\\
    \iwidth(\pi^*) &\eqdef \iwidth(\pi),
    \\
    \iwidth(\pi_1 \cap \pi_2) &\eqdef \iwidth(\pi_1) + \iwidth(\pi_2).
\end{align*}
\AP
This is generalized to any "expression" $e$ by defining $\intro*\Iwidth(e)$ to be the maximum "intersection width" of a "program" therein (or 1 if it contains no "program").

In order to define the translation $\CPDLp \to \ICPDL$ over trees, we will first need to learn how to `factor' a "program" into smaller subprograms to cover all possible ways of satisfying a path split into several subpaths. This will be the objective of our next \Cref{ssec:ksplit}. Once this definition is in place, we shall show the translation in \Cref{ssec:translation-ICPDL-tres}.

\subsubsection{Splitting path expressions}
\label{ssec:ksplit}

The $\ksplit$ of an "\ICPDL" "program" $\pi$ consists, intuitively, of all the ways of splitting a path conforming to $\pi$ into $k$ parts, grouped into sublanguages of $\pi$. For example, the $\ksplit[2]$ of $(ab)^*$ is $\set{((ab)^*,(ab)^*), ((ab)^*a,b(ab)^*)}$ 

since for every pair of words $u,v$ we have $u \cdot v \in (ab)^*$ if, and only if, there is $(L,L') \in \ksplit[2]((ab)^*)$ such that $u \in L, v \in L'$.
\color{black}
More generally, the $\ksplit$ of a "program" $\pi$ is a set of $k$-tuples of "programs", in such a way that for every \AP""witnessing path"" $p$ in a tree $T$ of $\pi$ 

("ie", a simple path whose first and last nodes $(v,v')$ are in $\dbracket{\pi}_T$) 
\color{black}
and every split $p = p_1 \dotsb p_k$ of that path into $k$ subpaths, we have that there is some tuple $\bar t$ from the $\ksplit$ of $\pi$ such that each $p_i$ is "witnessing path" for $\bar t[i]$. 
We shall now formalize these concepts.

\AP
We define the $\ksplit$ of an $\ICPDL$ "program" by induction on $k$, where
\begin{align*}
    \intro*\ksplit[1](\pi) \eqdef{} & \set{(\pi)} \quad \text{("ie", a singleton set with a $1$-tuple)}\\
    \reintro*\ksplit[2](\star) \eqdef {}& \set{(\star,\epsilon), (\epsilon,\star)} \quad \text{ where $\star \in \set{a,\bar a, \epsilon, \phi?}$}\\
    \reintro*\ksplit[2](\pi_1 \circ \pi_2) \eqdef {}&
        \set{(\pi'_1,\pi'_1\circ \pi_2) : (\pi'_1,\pi''_1) \in \ksplit[2](\pi_1) }
        \cup
        \set{(\pi_1 \circ \pi'_2,\pi''_2) : (\pi'_2,\pi''_2) \in \ksplit[2](\pi_2) }
        \\
    \reintro*\ksplit[2](\pi_1 \cup \pi_2) \eqdef {}& \ksplit[2](\pi_1) \cup \ksplit[2](\pi_2)\\
    \reintro*\ksplit[2](\pi_1 \cap \pi_2) \eqdef {}&
    \set{(\pi'_1 \cap \pi'_2, \pi''_1 \cap \pi''_2) : (\pi'_1,\pi''_1) \in \ksplit[2](\pi_1), (\pi'_2,\pi''_2) \in \ksplit[2](\pi_2)}\\
    \reintro*\ksplit[2](\pi^*) \eqdef {}&
    \set{(\pi^* \circ \pi', \pi'' \circ \pi^*) : (\pi',\pi'') \in \ksplit[2](\pi)}\\
    \reintro*\ksplit(\pi) \eqdef {}&
    \set{(\pi_1, \pi_2, \dotsc, \pi_k) : (\pi_1,\pi'_1) \in \ksplit[2](\pi), (\pi_2, \dotsc, \pi_k) \in \ksplit[(k{-}1)](\pi'_1)}
\end{align*}

The intuition behind the $\ksplit[2](\pi_1 \circ \pi_2)$ is that for splitting a program $\pi_1 \circ \pi_2$ the `cut' in the path could occur in the part of the path matching the first part (splitting $\pi_1$ into $\pi'_1$ and $\pi''_1$) or in the part matching the second part (splitting $\pi_2$ into $\pi'_2$ and $\pi''_2$).
On the other hand, the definition of $\ksplit[2](\pi^*)$ follows the intuition that cutting a path matching $\pi^*$ into two would necessarily yield paths such that the first one matches $\pi^*$ plus a first part of $\pi$, and the second matches a second part of $\pi$ plus $\pi^*$.
\color{black}

\knowledgenewrobustcmd{\stardepth}{\cmdkl{\textit{sd}}}
\begin{remark}\hfill\label{rk:nr-elems-ksplit}
    \begin{enumerate}
        \item The "intersection width" of any "program" in $\ksplit(\pi)$ (for $\pi \in \ICPDL$) is bounded by $\Iwidth(\pi)$.%
        \item For every program $\pi'$ in a coordinate of an element of $\ksplit(\pi)$ we have $sd(\pi') \leq \stardepth(\pi)$, where $\AP\intro*\stardepth(\pi)$ stands for the \AP""star-depth"" of $\pi$ ("ie", the maximum number of nested $(~)^*$-"programs" in $\pi$).
        \item Each coordinate in an element of the $\ksplit[2](\pi)$ has size bounded by $\stardepth(\pi) \cdot \pdlsize{\pi}$.
        \item For every $k>2$, each coordinate in an element of the $\ksplit(\pi)$ has size bounded by $\stardepth(\pi)^{k-1} \cdot \pdlsize{\pi}$.
        \item For every $\ICPDL$ "program" $\pi$ and $k \in \Nat$, we have $|\ksplit(\pi)| \leq \pdlsize{\pi}^k$. %
    \end{enumerate}
\end{remark}

\begin{lemma}\AP\label{lem:ksplit}
    Let $k\geq 1$, $\pi$ an $\ICPDL$ "program", and $T \in \tree(\SigmaN , \SigmaE)$. For every pair $(u,v) \in \dbracket{\pi}_T$, and partition into $k$ subpaths of the simple path $p=p_1 \dotsb p_k$ from $u$ to $v$ in $T$ 
    (where each $p_i$ is possibly an empty path starting and ending at the same node), 
    there is a tuple $(\pi_1,\dotsc, \pi_k) \in \ksplit(\pi)$ such that each $(src(p_i),tgt(p_i)) \in \dbracket{\pi_i}_T$ for every $i$,
    
    where $src(p_i)$ and $tgt(p_i)$ are the first and last nodes of $p_i$, respectively.
    \color{black}
    Conversely, for every tuple $(\pi_1,\dotsc, \pi_k) \in \ksplit(\pi)$ and $(u_i,u_{i+1}) \in \dbracket{\pi_i}_T$ (for $1 \leq i \leq k$), we have that $(u_1,u_{k+1}) \in \dbracket{\pi}_T$.
\end{lemma}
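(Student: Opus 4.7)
The plan is to prove both directions of the lemma simultaneously via a double induction: an outer induction on $k$ that reduces the general case to $k=2$, followed by structural induction on $\pi$ for the $k=2$ base.

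For the outer step, the recursive definition $\ksplit(\pi) = \{(\pi_1,\pi_2,\dots,\pi_k) : (\pi_1,\pi'_1)\in\ksplit[2](\pi),\ (\pi_2,\dots,\pi_k)\in\ksplit[(k{-}1)](\pi'_1)\}$ unfolds a $k$-split into a $2$-split composed with a $(k{-}1)$-split. In the forward direction, given a partition $p=p_1\cdots p_k$, I first apply the $k=2$ case to the coarser partition $(p_1,\,p_2\cdots p_k)$ to extract some $(\pi_1,\pi'_1)\in\ksplit[2](\pi)$ witnessed at $(u,w)$ and $(w,v)$ with $w=\mathrm{tgt}(p_1)$, then invoke the inductive hypothesis on $\pi'_1$ with the partition $(p_2,\dots,p_k)$ of the simple path from $w$ to $v$. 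The converse is symmetric: chain the witnesses $(u_2,u_{k+1})\in\dbracket{\pi'_1}_T$ via the IH applied to $(\pi_2,\dots,\pi_k)\in\ksplit[(k{-}1)](\pi'_1)$, then apply the $k=2$ converse to $(\pi_1,\pi'_1)$.

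For $k=2$ I would proceed by structural induction on $\pi$. Base cases ($a$, $\bar a$, $\epsilon$, $\phi?$) are immediate because the relevant simple path has length at most $1$, forcing one side of any $2$-partition to be empty, and the definition $\ksplit[2](\star)=\{(\star,\epsilon),(\epsilon,\star)\}$ covers exactly these two options. The union case dispatches on the matching disjunct. The intersection case is the easiest inductive step: since $(u,v)\in\dbracket{\pi_1}_T\cap\dbracket{\pi_2}_T$, invoking the IH on $\pi_1$ and on $\pi_2$ with the same split point $w$ on the simple path $p$ yields tuples $(\pi'_i,\pi''_i)\in\ksplit[2](\pi_i)$ for $i=1,2$, and the combined tuple $(\pi'_1\cap\pi'_2,\,\pi''_1\cap\pi''_2)$, which lies in $\ksplit[2](\pi_1\cap\pi_2)$ by definition, satisfies both semantic conditions at $(u,w)$ and $(w,v)$. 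The Kleene-star case unrolls $\pi^*$ around the iteration containing $w$ into $\pi^* \circ \pi \circ \pi^*$ and reduces to the composition case together with the IH on $\pi$. The converse direction in every subcase is routine semantic manipulation.

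The main obstacle lies in the composition case $\pi_1\circ\pi_2$, which in turn drives the Kleene-star case. A witness for $\pi_1\circ\pi_2$ exhibits an intermediate point $w_0$, but $w_0$ need not lie on the simple path $p$ from $u$ to $v$: a witness walk may freely detour off $p$ using combinations of edges and their converses. Using the tree structure of $T$, one identifies the unique ``hub'' $c\in p$ at which any such detour leaves and returns, so that the simple paths from $u$ to $w_0$ and from $w_0$ to $v$ both share the segment of $p$ up to $c$ (respectively from $c$ onwards). A case analysis on whether the chosen split vertex $w$ lies on the $u$-to-$c$ segment (absorbed into $\pi_1$, producing an element of $\ksplit[2](\pi_1 \circ \pi_2)$ of the form $(\pi'_1,\pi''_1\circ\pi_2)$) or on the $c$-to-$v$ segment (absorbed into $\pi_2$, giving $(\pi_1\circ\pi'_2,\pi''_2)$) then yields the desired tuple by applying the IH to $\pi_1$ or $\pi_2$ respectively.
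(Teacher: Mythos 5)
Your proof takes essentially the same route as the paper's: induction on $k$ reducing everything to the case $k=2$, which is then settled by structural induction on $\pi$. The paper dispatches the $k=2$ case with ``it is not hard to see,'' whereas you correctly isolate and resolve the only delicate point --- that in the composition case the intermediate witness $w_0$ may lie off the simple $u$--$v$ path --- via the median vertex of $u,v,w_0$ in the tree; this is exactly the argument the paper leaves implicit. One shared wrinkle worth noting: in the Kleene-star case with zero iterations ($u=v=w$) there is no ``iteration containing $w$'' to unroll, and the formal clause $\ksplit[2](\pi^*)=\set{(\pi^*\circ\pi',\pi''\circ\pi^*) : (\pi',\pi'')\in\ksplit[2](\pi)}$ need not contain a pair satisfied at $(u,u),(u,u)$ (e.g.\ for $\pi=a$ at the root of a tree); the paper's own informal example lists $((ab)^*,(ab)^*)$ in $\ksplit[2]((ab)^*)$ even though the formal definition omits it, so this is a defect of the definition (repaired by adding $(\pi^*,\pi^*)$) rather than of your argument.
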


\begin{proof}
    \proofcase{First statement}
    We proceed by induction on $k$.

    The case $k=1$ is trivial, so assume $k=2$. Assume that $w$ is an element which lies in the path between $u$ and $v$. 
    We show by induction on the "size@@pdl" of $\pi$ that there exists $(\pi_1,\pi_2) \in \ksplit[2](\pi)$ such that $(u,w) \in \dbracket{\pi_1}_T$ and $(w,v) \in \dbracket{\pi_2}_T$.
    \begin{itemize}
        \item For the base cases $\pi \in \set{a,\bar a, \epsilon, \phi?}$ we must have that either (i) $u=w$ or (ii) $w=v$. Hence, we have that either
        (i) $(u,w) \in \dbracket{\epsilon}_T$ and $(w,v) \in \dbracket{\pi}_T$ or 
        (ii) $(u,w) \in \dbracket{\pi}_T$ and $(w,v) \in \dbracket{\epsilon}_T$.
        Since both $(\pi,\epsilon)$ and $(\epsilon,\pi)$ are in $\ksplit[2](\pi)$, the statement follows.
        \item The inductive case $\pi = \pi_1 \circ \pi_2$. Consider the node $w'$ in the path between $u$ and $v$ such that $(u,w') \in \dbracket{\pi_1}_T$ and $(w',v) \in \dbracket{\pi_2}_T$. It may be that $w'$ appears before $w$, after $w$, or $w=w'$. Assume "wlog" that it appears before or at $w$, and hence that the path goes $u \underbrace{\leadsto}_{\dbracket{\pi_1}_T} w' \underbrace{\leadsto w \leadsto}_{\dbracket{\pi_2}_T} v$. By inductive hypothesis, there exists $(\pi_2',\pi''_2) \in \ksplit[2](\pi_2)$ such that $(w',w) \in \dbracket{\pi'_2}_T$ and $(w,v) \in \dbracket{\pi''_2}_T$, that is:
        \[
            u \underbrace{\leadsto}_{\dbracket{\pi_1}_T} 
            w' \underbrace{\overbrace{\leadsto}^{\dbracket{\pi'_2}_T} 
            w \overbrace{\leadsto}^{\dbracket{\pi''_2}}}_{\dbracket{\pi_2}_T} v.
        \]
        Therefore, we have $(u,w) \in \dbracket{\pi_1 \circ \pi'_2}_T$ and $(w,v)\in \dbracket{\pi''_2}_T$. Since by definition $(\pi_1 \circ \pi'_2,\pi''_2) \in \ksplit[2](\pi)$, the statement follows.
        \item  The inductive cases $\pi = \pi_1 \cup \pi_2$ and $\pi = \pi_1 \cap \pi_2$. Both cases follow directly by definition $\ksplit[2](\pi)$ and inductive hypothesis.
        \item The inductive case $\pi^*$. Let $w'_1,w'_2$ be nodes of the path $(u,v)$ such that $(w'_1,w'_2) \in \dbracket{\pi}_T$, $w'_1$ appears before (or at) $w$ and $w'_2$ appears after (or at) $w$. Hence, we have $u \underbrace{\leadsto}_{\dbracket{\pi^*}_T} w'_1 \underbrace{\leadsto w \leadsto}_{\dbracket{\pi}_T} w'_2 \underbrace{\leadsto}_{\dbracket{\pi^*}_T} v$. By inductive hypothesis there is $(\pi',\pi'') \in \ksplit[2](\pi)$ such that 
        \[
        u \underbrace{\leadsto}_{\dbracket{\pi^*}_T} w'_1 \underbrace{\overbrace{\leadsto}^{\dbracket{\pi'}_T} w \overbrace{\leadsto}^{\dbracket{\pi''}_T}}_{\dbracket{\pi}_T} w'_2 \underbrace{\leadsto}_{\dbracket{\pi^*}_T} v
        \]
        and thus $(u,w) \in \dbracket{\pi^* \circ \pi'}_T$ and $(w,v) \in \dbracket{\pi'' \circ \pi^*}_T$, 
        where $(\pi^* \circ \pi', \pi'' \circ \pi^*) \in \ksplit[2](\pi^*)$. This shows the statement.
    \end{itemize}
    
    The inductive case $k>2$ follows directly by the inductive definition of $\ksplit$.

    \smallskip

    \proofcase{Second statement}
    On the other hand, it is direct by construction of $\ksplit(\pi)$ that for each $(\pi_1, \dotsc, \pi_k) \in \ksplit(\pi)$ we have that $\dbracket{\pi_1 \circ \dotsc \circ \pi_k}_T \subseteq \dbracket{\pi}_T$, from which it follows the second statement.
\end{proof}
\color{black}
\subsubsection{Translation into $\ICPDL$ on trees}
\label{ssec:translation-ICPDL-tres}
We shall next show a semantics-preserving translation from $\CPDLp$ to $\ICPDL$.

\begin{lemma}\AP\AP\label{lem:CPDLp-to-ICDPL-on-trees}
    For every $\CPDLp$ "expression" $e$ in "unnested form" there exists an $\ICPDL$ "expression" $e'$ such that
    \begin{enumerate}
        \item $e$ and $e'$ are "equi-expressive" over trees 
        \item there is a $\polyfun(\pdlsize{e}^{\Cqsize{e}})$-time procedure to build $e'$ from $e$,
        \item $\Iwidth(e') \leq \Cqsize{e}$.
    \end{enumerate}
\end{lemma}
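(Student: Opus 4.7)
The plan is to proceed by structural induction on the $\CPDLp$ "expression" $e$. For atomic cases ($e \in \Prog \cup \Prop \cup \{\epsilon\}$, or $e = \bar a$ for $a \in \Prog$) we set $e' \eqdef e$. For the Boolean and regular constructors ($\land$, $\lnot$, $\cup$, $\circ$, $(\cdot)^*$, $\tup{\cdot}$, $(\cdot)?$), we apply the inductive hypothesis to the "subexpressions" and recombine with the same constructor; none of these operators affects the "intersection width", so the bound $\Iwidth(e') \leq \Cqsize{e}$ is preserved through the induction. The substantive case is $e = C[x_s, x_t]$, a "conjunctive program".

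For this case, let $k \eqdef \cqsize{C[x_s, x_t]} = \Cqsize{e}$. First, the inductive hypothesis applied to each "p-atom" $\pi(x,y) \in C$ yields an $\ICPDL$ "program" $\hat \pi$ with $\iwidth(\hat \pi) \leq \cqsize{\pi}$; summing over $C$ gives $\sum_{\pi(x,y) \in C} \iwidth(\hat \pi) \leq k$ by definition of $\cqsize$. Since $e$ is in "unnested form" and contains no "r-atoms", the result is a set $\hat C$ of "p-atoms" whose inner programs are already in $\ICPDL$. Using \Cref{lem:bags_are_cliques}, we fix a "tree decomposition" $(T_C, \bagmap)$ of $\uGraph{\hat C[x_s, x_t]}$ of "width" at most $k$, whose root bag contains $\{x_s, x_t\}$ and each of whose bags is a clique in $\uGraph{\hat C[x_s, x_t]}$.

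The central observation is that on a tree model $T \in \tree(\pset{P}, A)$ (viewed as a "Kripke structure"), for any "$C$-satisfying assignment" $f : \vars(\hat C) \to \dom T$, the image $f(\vars(\hat C))$ has a uniquely determined combinatorial skeleton, obtained from the pairwise simple paths between these nodes in $T$. Since $\uGraph{\hat C[x_s, x_t]}$ has "tree-width" at most $k$, this skeleton is described locally at each bag by a finite number of ``tree shapes'' (tree structures on at most $k{+}1$ marked variables together with their internal Steiner points), depending only on $k$. The plan is to enumerate, for each bag $b$, all possible shapes $S_b$, and then for each globally consistent family $\bar S = (S_b)_b$ of such shapes, build an $\ICPDL$ "program" $\pi_{\bar S}$ capturing that the skeleton is realized and that every translated atom of $\hat C$ holds; the translation is then $e' \eqdef \bigcup_{\bar S} \pi_{\bar S}$.

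The construction of $\pi_{\bar S}$ proceeds bottom-up along $T_C$: at each bag $b$, the shape $S_b$ fixes positions for its variables along an internal spine, and each "p-atom" $\hat \pi(x, y)$ in the bag contributes a walk along that spine. We decompose $\hat \pi$ into pieces matching each segment of the spine by selecting an element of $\ksplit[m](\hat \pi)$ for a suitable $m \leq k+1$; pieces associated to the same segment but originating from distinct atoms are combined via $\cap$, and adjacent bags are glued via $\circ$ across their shared separator variables. The "intersection width" at every $\cap$-node is bounded by the sum of $\iwidth$'s of its conjuncts, each traceable back to a distinct atom of $\hat C$, hence by $\sum_{\pi(x,y) \in C} \iwidth(\hat \pi) \leq k$, giving the desired $\Iwidth$ bound. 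For the time bound, the number of shapes per bag is a function of $k$ alone, the number of splittings of each atom is at most $\pdlsize{\hat \pi}^{k+1}$ by \Cref{rk:nr-elems-ksplit}, and there are polynomially many bags, yielding $\polyfun(\pdlsize{e}^k)$ total choices and an expression of size $\polyfun(\pdlsize{e}^k)$ constructible in the same time.

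The hard part will be the bookkeeping needed to make all pieces fit: local shapes of neighbouring bags must agree on shared variables, the splittings of a single atom chosen in bags along its ``scope'' must compose back to that atom, every segment of the global spine must be exactly covered, and the final tally of conjuncts at each $\cap$-node must provably stay within $k$. The $\ksplit$ machinery together with the clique property of bags (\Cref{lem:bags_are_cliques}) are the main levers for this bookkeeping; correctness---that is, semantic equivalence over trees between $C[x_s,x_t]$ and $\bigcup_{\bar S} \pi_{\bar S}$---should follow by matching each "$C$-satisfying assignment" with the shape it induces on the tree model, and conversely by showing that any $\pi_{\bar S}$-witness assembles into a "$C$-satisfying assignment".
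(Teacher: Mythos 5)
Your plan has the right raw ingredients---guessing how the variables of $C$ embed into the tree model, splitting each atom's program via $\ksplit$, taking one intersection conjunct per atom, and folding side branches into tests---but the route through a tree decomposition of $\uGraph{C[x_s,x_t]}$ runs into an obstruction that your proposal defers to ``bookkeeping'' and that is in fact the crux. When you glue a child bag to its parent across a separator $S$ of size up to $k{+}1$, the contribution of the subtree below that child is a relation of arity $|S|$ on the images of the separator variables; $\ICPDL$ has only unary tests and binary programs, so for $|S|\geq 3$ there is no direct way to absorb that contribution into the parent (this is exactly the obstruction that makes $\CPDLg{\Tw[k]}$ strictly stronger than $\ICPDL$ for $k\geq 3$ on general structures, cf.\ \Cref{thm:separation}). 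On a tree model the contribution does factor through the Steiner tree spanned by the separator's images, but to exploit this you must guess that Steiner tree and re-split the child's programs along it---and once you do that, the tree decomposition buys you nothing. The paper skips it entirely: since each p-atom has conjunctive width at least $1$, we have $|\vars(C)|\leq 2\,\Cqsize{e}$, so the \emph{global} Steiner tree $T$ spanned by the images of all of $\vars(C)$ (the variables plus their closest common ancestors) has at most $2|\vars(C)|$ nodes and can be enumerated outright; Cayley's formula keeps the number of candidate shapes at $\pdlsize{e}^{O(\Cqsize{e})}$. One then chooses, for each atom $\pi'(x,x')$, an element of $\ksplit[m](\pi')$ where $m$ is the length of the $x$-to-$x'$ path in $T$, intersects the pieces edgewise, concatenates along the $x_s$-to-$x_t$ path, and turns the off-path branches into nested tests by a leaf-elimination pass. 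Each edge intersection carries one conjunct per atom, each of intersection width at most that atom's conjunctive width, which yields $\Iwidth(e')\leq\Cqsize{e}$ with no further gluing argument.

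Two further problems with the proposal as written. First, \Cref{lem:bags_are_cliques} achieves clique bags by \emph{adding} up to $|\vars(C)|^2$ fresh atoms, so the conjuncts traceable to atoms at an intersection node are no longer bounded by $\Cqsize{e}$; on a complete tree model the added atoms are trivially true and can be discarded, but you must say so or the bound in item (3) fails. Second, your count of globally consistent families $(S_b)_b$ is a product over bags of per-bag shape counts, roughly $c(k)^{m}$ with $m=\Theta(|\vars(C)|)$ bags and $c(k)=2^{\Theta(k\log k)}$ shapes per bag; this is $2^{\Theta(k^2\log k)}$, which exceeds $\polyfun(\pdlsize{e}^{\Cqsize{e}})$ when $k$ is large relative to $\log\pdlsize{e}$, unless the union is restructured as a bottom-up dynamic program over the decomposition. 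Neither issue is fatal to the downstream decidability result, but both break the bounds claimed in items (2) and (3) as your argument stands.
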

\knowledgenewrobustcmd{\pathinT}{\mathrel{\cmdkl{\leadsto_{T}}}}%
\begin{proof}
For clarity of the exposition we will actually show a translation from $\ICPDLp$  to $\ICPDL$.
We will first show how to translate a "conjunctive program" $C[x_s,x_t]$ all of whose  "programs" are $\ICPDL$ "programs" -- the extension to arbitrary formulas is the recursive application of such translation.
Consider any "tree" $T$ such that
\begin{enumerate}
    \item it contains $\vars(C)$ as vertices ("ie", $\vars(C) \subseteq \vertex{T}$) and
    \item every vertex in $\vertex{T}$ is the closest common ancestor of two $\vars(C)$-vertices.
\end{enumerate} 
For any two variables $x,y \in \vars(C)$, let us write \AP$x \intro*\pathinT y$ to denote the (unique) simple path from $x$ to $y$ in $T$.
For every "atom" $\alpha=\pi'(x,x')$ of $C$, let $n_\alpha$ denote the length of the path $x \pathinT x'$ ("ie", its number of edges).
Consider any function $f$ mapping every "atom" $\alpha=\pi'(x,x')$ in $C$ to an element of $\ksplit[n_\alpha](\pi')$, that is, $f$ is such that $f(\alpha) \in \ksplit[n_\alpha](\pi')$ for every $\alpha \in C$.
Consider also any function $\ell \colon \vertex{T} \times \vertex{T} \to 2^{\text{"programs"}}$ labeling edges,
such that $\ell(v,v')$ (for $\set{v,v'} \in \edges{T}$) is the smallest set containing, for each "atom" $\pi'(x,x') \in C$,
\begin{itemize}
    \item the "program" $f(\pi'(x,x')) [i]$ if $(v,v')$ is the $i$-th edge in the path $x \pathinT x'$,
    \item the "program" $\reverseof{(f(\pi'(x,x')) [i])}$ if $(v',v)$ is the $i$-th edge in the path $x \pathinT x'$.
\end{itemize}

The intuition is that with $T$ we guess the `shape' in which $C$ will be mapped to the tree, and with $f,\ell$ we guess how the path expressions will be mapped for witnessing each "atom". 
See \Cref{fig:exampleTCconstruction} for an example.
\begin{figure}
    \includegraphics[width=\textwidth]{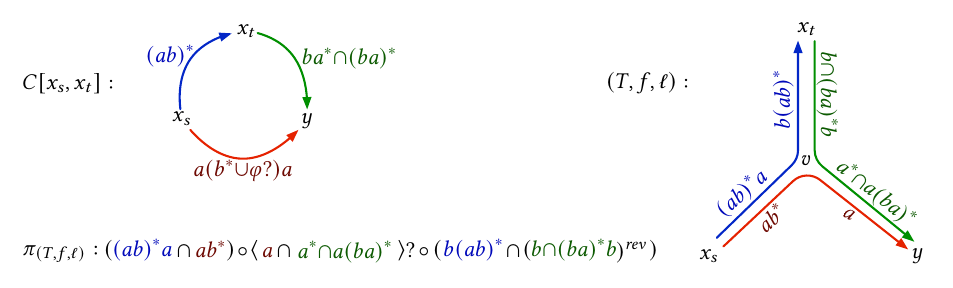}
    \caption{Example of a three-"atom" "conjunctive program" $C[x_s,x_t]$ and a tree $(T,f,\ell) \in \ShapesC$. The tree $T$ has 4 vertices $x_s,x_t,y,v$ and three (undirected) edges $\set{x_t,v}, \set{v,y}, \set{v,x_s}$.
    The edge labels of $\ell$ are implicitly shown, for example $\ell(x_t,v)$ contains $b \cap (ba)^*b$ while $\ell(v,x_t)$ contains $\reverseof{(b \cap (ba)^*b)} = \bar b \cap \bar b (\bar a \bar b)^*$. The "program" $\pi_{(T,f,\ell)}$ is the desired rewriting.}
    \label{fig:exampleTCconstruction}
\end{figure}

\AP 
Let $\intro*\ShapesC$ be the set of all $(T,f,\ell)$ verifying the conditions above.
We will build an equivalent "program" as a union of programs, one for each $(T,f,\ell) \in \ShapesC$.
The idea is that for each $(T,f,\ell)$ the associated $\ICPDL$ "program" $\pi_{(T,f,\ell)}$ will be built by: (i) taking the intersection of labels for each edge
and (ii) concatenating the resulting labels in the path $x_s \pathinT x_t$, and (iii) transforming the remaining parts of the tree not lying on $x_s \pathinT x_t$ to "test programs" of the form `$\tup{ \dotsb }?$'. See \Cref{fig:exampleTCconstruction} for an example.

We now give the formal procedure to build the program, which parallels the well-known ``rolling-up'' technique from description logics ("aka" ``concept rolling-up'' or ``tree-shaped query rewriting'') \cite{HorrocksT00} by iteratively collapsing side branches into existential tests. We shall use some `\AP""processed""' flag to know which vertices of the tree $T$ have already been visited, as well as some `\AP""test labels""' on vertices, which are $\ICPDL$ formulas.
\color{black}
Initially, every vertex of $T$ is \reintro{unprocessed} ("ie", not marked as "processed") and has an empty "test label". For a set $P=\set{\pi_1,\dotsc, \pi_\ell}$ of "programs" we shall write `$\bigcap P$' as short for $\pi_1 \cap \dotsb \cap \pi_\ell$ (note that its semantics is independent of the order).

Let us take an \AP""unprocessed leaf"" $z$ of $T$ ("ie", an "unprocessed" vertex having exactly one "unprocessed" neighbor) which is not on the path $x_s \pathinT x_t$. Suppose $z$ has "test label" $\set{\phi_1, \dotsc, \phi_s}$. Let $z'$ be its (sole) "unprocessed" neighbor, we then add a "test label" of $\tup{(\bigcap \ell(z',z)) \circ (\bigwedge_{1 \leq j \leq  s}\phi_j)?}$ to $z'$ and we mark $z$ as "processed" -- if the "test label" is empty, we just add $\tup{\bigcap \ell(z',z)}$.
We iterate this procedure until no more vertices can be marked as "processed", in other words, only the vertices lying on $x_s \pathinT x_t$ remain "unprocessed".

Let $x_s=x_0 \to x_1 \to \dotsb \to x_n=x_t$ be the vertices of $x_s \pathinT x_t$.
\knowledgenewrobustcmd{\topp}{\cmdkl{\top}}
Let $\psi_i$ be the conjunction of all the "test labels" of $x_i$ (or `$\topp$' if the "test label" is empty, where $\topp$ is a fixed tautology  such as, "eg", $\AP\intro*\topp \eqdef p \lor \lnot p$ for $p \in \Prop$).
We then produce the $\ICPDL$ "program":
\[
    \pi_{(T,f,\ell)} \eqdef \psi_0? \circ \left(\bigcap \ell(x_0,x_1)\right) \circ \psi_1? \circ \left(\bigcap \ell(x_1,x_2)\right) \circ \dotsb \circ \left(\bigcap \ell(x_{n-1},x_n)\right) \circ \psi_n?
\]
Finally, we define the translation of $C[x_s,x_t]$ as $\pi_{C[x_s,x_t]} \eqdef \bigcup_{(T,f,\ell) \in \ShapesC}\pi_{(T,f,\ell)}$.

We first show that indeed $\pi_{C[x_s,x_t]}$ is equivalent to $C[x_s,x_t]$.
\begin{claim}\AP
    $\pi_{C[x_s,x_t]}$ and $C[x_s,x_t]$ are equi-expressive over trees.
\end{claim}
\begin{nestedproof}
    Given a tree $K$ such that $(u,u') \in \dbracket{C[x_s,x_t]}_K$, consider any "$C$-satisfying assignment" $\nu$ such that $\nu(x_s)=u$ and $\nu(x_t)=u'$, and the tree $T$ that $\nu$ induces on $K$. By \Cref{lem:ksplit}, for every "atom" $\alpha = \pi'(x,x') \in C$ and associated path $x \pathinT x' = v_1 \to \dotsb \to v_{k+1}$ (where $v_1=x$ and $v_{k+1}=x'$), there must be some $(\pi_1, \dotsc, \pi_k) \in \ksplit(\pi')$ such that $(\nu(v_i),\nu(v_{i+1})) \in \dbracket{\pi_i}_K$ for every $i \in [k]$. From these witnesses we can build $(T,f,\ell)$ which satisfies the conditions for being in $\ShapesC$, from which it follows that $(u,u') \in \dbracket{\pi_{(T,f,\ell)}}_K$.

    Conversely, if $(u,u') \in \dbracket{\pi_{(T,f,\ell)}}_K$ for some $(T,f,\ell) \in  \ShapesC$, consider the function $\mu \colon \vertex{T} \to \dom{K}$ witnessing this fact (in particular $\mu(x_s) = u$ and $\mu(x_t) = u'$). By construction of $\pi_{(T,f,\ell)}$ combined with \Cref{lem:ksplit}, it follows that $(\mu(x),\mu(x')) \in \dbracket{\pi'}$ for every $\pi'(x,x') \in C$. In other words, $\mu$ is a "$C$-satisfying assignment", and thus $(\mu(x_s),\mu(x_t)) = (u,u') \in \dbracket{C[x_s,x_t]}_K$.
\end{nestedproof}

We note by construction that the "intersection width" is bounded by the "conjunctive width".
\begin{claim}\AP\label{cl:IW-CW-translation}
    $\Iwidth(\pi_{C[x_s,x_t]}) \leq \Cqsize{C[x_s,x_t]}$.
\end{claim}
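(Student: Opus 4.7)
The plan is to reduce the width bound to a per-edge analysis, exploiting the fact that $\iwidth$ distributes as $\max$ over every connective except $\cap$. First I would observe that since $\iwidth$ on $\cup$ is a $\max$, we have $\Iwidth(\pi_{C[x_s,x_t]}) = \max_{(T,f,\ell)\in\ShapesC} \Iwidth(\pi_{(T,f,\ell)})$, so it suffices to bound $\Iwidth(\pi_{(T,f,\ell)})$ by $\Cqsize{C[x_s,x_t]}$ for each fixed triple.

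Next I would analyse the structure of $\pi_{(T,f,\ell)}$: by construction it is an alternating composition of test programs $\psi_i?$ and main-path intersections $\bigcap\ell(x_i,x_{i+1})$, where each $\psi_i$ is itself a conjunction of formulas of the form $\tup{(\bigcap\ell(z',z))\circ\rho?}$ produced recursively from off-path edges of $T$. Since all connectives appearing here other than $\cap$ --- namely $\circ$, $\cup$, $(\,)^*$ and single tests $?$ --- are handled by $\iwidth$ via $\max$, a routine induction on the construction order shows that $\Iwidth(\pi_{(T,f,\ell)})$ equals the largest $\iwidth(\bigcap\ell(v,v'))$ as $(v,v')$ ranges over the directed edges of $T$ that contributed to the construction. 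Making this step rigorous, despite the unbounded nesting of test labels, will be the main technical obstacle; the key observation is that no intersection at a given edge ever gets summed with the intersection coming from a different edge.

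Finally, I would bound each edge intersection. By construction of $\ell$ and $f$, each program in $\ell(v,v')$ is either a coordinate of $f(\alpha)\in\ksplit(\pi'_\alpha)$ or the reverse of such a coordinate, for some p-atom $\alpha=\pi'_\alpha(x,x')\in C$ whose simple path $x\pathinT x'$ traverses $(v,v')$; in particular each such atom contributes at most one program. By \Cref{rk:nr-elems-ksplit}(1) every coordinate of $\ksplit(\pi'_\alpha)$ has intersection width at most $\iwidth(\pi'_\alpha)$, and reversal preserves $\iwidth$ by straightforward induction on the definition of $\reverseof{\cdot}$. Hence
\[
\iwidth\Big(\bigcap\ell(v,v')\Big) \;=\; \sum_{\pi\in\ell(v,v')}\iwidth(\pi) \;\leq\; \sum_{\alpha\text{ p-atom of }C}\iwidth(\pi'_\alpha) \;=\; \sum_{\alpha\text{ p-atom of }C}\cqsize{\pi'_\alpha} \;=\; \Cqsize{C[x_s,x_t]},
\]
where the second-to-last equality uses that $\iwidth$ and $\cqsize$ coincide on ICPDL programs (both take $\max$ over $\circ$, $\cup$, $(\,)^*$ and are additive on $\cap$), and the last equality uses that $C$ contains no r-atoms at this stage of the reduction (cf.\ \Cref{lem:redux:struct-to-kripke}). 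Taking the maximum over edges and then over $\ShapesC$ yields the desired bound.
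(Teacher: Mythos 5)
Your proposal is correct and rests on exactly the same observation as the paper's (one-sentence) proof: the only intersections introduced by the construction are the per-edge blocks $\bigcap \ell(v,v')$, each of which receives at most one conjunct per "atom" of $C$, and that conjunct's own "intersection width" is bounded by that of the atom's program. Your elaboration --- that $\iwidth$ takes $\max$ over every other connective so intersections from distinct edges never add up, and that coordinates of $\ksplit$ do not increase $\iwidth$ --- is a faithful (and welcome) filling-in of the details the paper leaves implicit.
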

\begin{nestedproof}
    This is because the number of "program intersections" is bounded by the number of "atoms" in $C$.
\end{nestedproof}

Further, the program $\pi_{C[x_s,x_t]}$ can be built in polynomial time.
\begin{claim}\AP
    $\pi_{C[x_s,x_t]}$ can be produced in $\polyfun(\pdlsize{C[x_s,x_t]}^{\Cqsize{C[x_s,x_t]}})$.
\end{claim}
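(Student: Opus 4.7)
The plan is to bound the size of $\pi_{C[x_s,x_t]} = \bigcup_{(T,f,\ell) \in \ShapesC} \pi_{(T,f,\ell)}$ by separately bounding $|\ShapesC|$ and the size of each $\pi_{(T,f,\ell)}$; since enumeration and per-element assembly are straightforward in time polynomial in the respective output sizes, these two factors together will give the total construction time.

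First I would show that every $T$ appearing in $\ShapesC$ has $|V(T)| = O(\Cqsize{C[x_s,x_t]})$. The constraint that every vertex of $T$ is the closest common ancestor of two $\vars(C)$-vertices forces $|V(T)| \leq 2|\vars(C)| - 1$; since each atom uses at most two variables, $|\vars(C)| \leq 2|C|$; and each p-atom contributes at least $1$ to the conjunctive width, so $|C| \leq \Cqsize{C[x_s,x_t]}$. Then, to bound $|\pi_{(T,f,\ell)}|$ I would appeal to \Cref{rk:nr-elems-ksplit}(3--4): each coordinate of an element of $\ksplit[n](\pi')$ has size at most $\pdlsize{\pi'}^{O(n)}$, so with $n_\alpha = O(\Cqsize{C[x_s,x_t]})$ each such coordinate has size $\pdlsize{C[x_s,x_t]}^{O(\Cqsize{C[x_s,x_t]})}$. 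An edge label is an intersection of at most $|C|$ such coordinates, and $\pi_{(T,f,\ell)}$ is a composition of at most $O(\Cqsize{C[x_s,x_t]})$ edge labels interleaved with recursively-built test labels at the intermediate vertices, yielding $|\pi_{(T,f,\ell)}| \leq \pdlsize{C[x_s,x_t]}^{O(\Cqsize{C[x_s,x_t]})}$.

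For $|\ShapesC|$ the tree component contributes $|V(T)|^{O(|V(T)|)} \leq \pdlsize{C[x_s,x_t]}^{O(\Cqsize{C[x_s,x_t]})}$ distinct tree topologies. For each fixed $T$, the function $f$ chooses one element of $\ksplit[n_\alpha](\pi'_\alpha)$ per atom, giving at most $\prod_\alpha \pdlsize{\pi'_\alpha}^{n_\alpha}$ choices by \Cref{rk:nr-elems-ksplit}(5); careful amortization exploiting $n_\alpha = O(\Cqsize{C[x_s,x_t]})$, $|C| \leq \Cqsize{C[x_s,x_t]}$, and $\sum_\alpha \pdlsize{\pi'_\alpha} \leq \pdlsize{C[x_s,x_t]}$ bounds this product by $\pdlsize{C[x_s,x_t]}^{O(\Cqsize{C[x_s,x_t]})}$. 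Since the labeling $\ell$ is determined by the pair $(T,f)$ and adds no extra factor, multiplying the bounds yields $|\pi_{C[x_s,x_t]}| \leq \pdlsize{C[x_s,x_t]}^{O(\Cqsize{C[x_s,x_t]})}$, which lies in $\polyfun(\pdlsize{C[x_s,x_t]}^{\Cqsize{C[x_s,x_t]}})$.

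The main obstacle is the last counting step: a naive bound on $\prod_\alpha \pdlsize{\pi'_\alpha}^{n_\alpha}$ gives an exponent that is in principle quadratic in $\Cqsize{C[x_s,x_t]}$, so one needs a finer amortized argument --- exploiting that the atoms' sizes and path lengths jointly sum to at most the total program size and the tree diameter, respectively --- to keep the exponent linear in $\Cqsize{C[x_s,x_t]}$.
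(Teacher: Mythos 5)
Your decomposition is exactly the paper's: bound $|\ShapesC|$ as (the number of labelled trees on at most $2|\vars(C)|$ vertices, via Cayley's formula) times (the number of choices of $f$), observe that $\ell$ is determined by $(T,f)$, and assemble each $\pi_{(T,f,\ell)}$ in time polynomial in the data already at hand. Your extra bound on $|\pi_{(T,f,\ell)}|$ via \Cref{rk:nr-elems-ksplit} is a refinement of the paper's ``in polynomial time'', not a departure.

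However, the step you yourself single out as the main obstacle is not closed by your sketch, and the amortization you gesture at is false. The exponent in $\prod_\alpha \pdlsize{\pi'_\alpha}^{n_\alpha}$ is governed by $\sum_\alpha n_\alpha$, and this sum is \emph{not} bounded by the diameter of $T$: each individual $n_\alpha$ is at most $|V(T)| = O(\Cqsize{C[x_s,x_t]})$, but there are up to $|C| = O(\Cqsize{C[x_s,x_t]})$ atoms, and the path lengths do not telescope. Concretely, take $C = \set{\pi_i(x_1,x_i) : 2 \leq i \leq m}$ with each $\pi_i = (a_1 \circ \dotsb \circ a_r)^*$, and let $T$ be the path $x_1, \dotsc, x_m$ (an admissible tree, since all its vertices are in $\vars(C)$). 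Then $n_{\pi_i(x_1,x_i)} = i-1$ and $|\ksplit[i-1](\pi_i)| = r^{\Theta(i)}$, so this single tree already contributes $r^{\Theta(m^2)}$ choices of $f$, while $\Cqsize{C[x_s,x_t]} = m-1$ and $\pdlsize{C[x_s,x_t]} = O(mr)$; hence $|\ShapesC|$, and a fortiori the size of $\pi_{C[x_s,x_t]}$, genuinely exceeds $\polyfun(\pdlsize{C[x_s,x_t]}^{\Cqsize{C[x_s,x_t]}})$. The paper's own proof has the same soft spot---its displayed chain ends at $\max_\pi \pdlsize{\pi}^{k|C|}$ with $k = 2|\vars(C)|$, and $k|C|$ is quadratic in the conjunctive width, so the final inequality is asserted rather than derived. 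What both arguments actually deliver is $\pdlsize{C[x_s,x_t]}^{O(\Cqsize{C[x_s,x_t]}^2)}$. This is harmless downstream, since \Cref{prop:omega-sat-pb-2Exp} and \Cref{thm:sat-cpdlp} only need a bound of the form $\pdlsize{\phi}^{\polyfun(\Cqsize{\phi})}$ and bounded conjunctive width still yields a polynomial-size translation; but you should either prove and carry the quadratic exponent honestly, or find a genuinely different construction---no rearrangement of these inequalities yields a linear one.
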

\begin{nestedproof}
    First let us bound the size of $\ShapesC$.
    
    First note that the trees of $\ShapesC$ contain all variables $\vars(C)$ as vertices, plus the closest common ancestors of $\vars(C)$-vertices. In the worst case we have a complete binary tree whose leaves are $\vars(C)$-vertices, in which case the tree has $k= 2 |\vars(C)|-1$ vertices (any other case contains fewer closest common ancestors).
    \color{black}
    Hence, the number of vertices of each tree of $\ShapesC$ is bounded by $k = 2 |\vars(C)|$ and thus there are no more than $k^{k-2} \leq \pdlsize{C[x_s,x_t]}^{O(\Cqsize{C[x_s,x_t]})} $ distinct trees by Cayley's formula for counting trees.
    For each such tree there are no more than 
    \begin{align*}
        \prod_{\pi(x,y) \in C}|\ksplit[k](\pi)| &\leq 
        \max_{\pi(x,y) \in C}|\ksplit[k](\pi)|^{|C|} \\
        &\leq
        \max_{\pi(x,y) \in C}\pdlsize{\pi}^{k|C|} \tag{by \Cref{rk:nr-elems-ksplit}}\\
        &\leq 
        \pdlsize{C[x_s,x_t]}^{O(\Cqsize{C[x_s,x_t]})}    
    \end{align*}
    different possible $f$-functions to choose from.
    Finally, once a tree $T$ and function $f$ is fixed, the edge label $\ell$ is uniquely determined.
    The elements of $\ShapesC$ can be enumerated one by one, and for each $(T,f,\ell) \in \ShapesC$ one can produce the translation $\pi_{(T,f,\ell)}$ in polynomial time.
\end{nestedproof}

Finally, the translation for an arbitrary $\ICPDLp$ "program" goes by iteratively replacing every "conjunctive program" $C[x_s,x_t]$ such that $C$ contains no "conjunctive programs" with its equivalent $\ICPDL$ program.

\begin{claim}\AP
    For every $\ICPDLp$ formula $\phi$ in "unnested form", the translation $\phi'$ can be produced in $\polyfun(\pdlsize{\phi}^{\Cqsize{\phi}})$ and $\Iwidth(\phi') \leq \Cqsize{\phi}$.
\end{claim}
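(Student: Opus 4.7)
The plan is to proceed by structural induction on $\phi$, iteratively reducing the problem to the per-"conjunctive program" translation established in the previous claim. Concretely, at each step identify an innermost "conjunctive program" $C[x_s,x_t]$ in the current formula --- one whose "p-atoms" already carry $\ICPDL$ "programs" --- and replace it by the equivalent $\ICPDL$ "program" $\pi_{C[x_s,x_t]}$ produced by the per-CP construction. Iterate until no "conjunctive programs" remain to obtain $\phi'$.

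Semantic correctness $\phi \langsemequiv[\tree(\SigmaN , \SigmaE)] \phi'$ follows at once from compositionality of $\ICPDLp$ semantics together with the per-CP equivalence $\pi_{C[x_s,x_t]} \langsemequiv[\tree(\SigmaN , \SigmaE)] C[x_s,x_t]$. For the "intersection width" bound, each substitution introduces a subprogram with $\Iwidth(\pi_{C[x_s,x_t]}) \leq \Cqsize{C[x_s,x_t]} \leq \Cqsize{\phi}$ by \Cref{cl:IW-CW-translation}. Since the $\ICPDL$ operators $\cup$, $\circ$, $(\cdot)^*$ and tests combine $\Iwidth$ as a max, and no "program intersection" is introduced outside the substituted subprograms, we obtain $\Iwidth(\phi') \leq \Cqsize{\phi}$. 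A further invariant worth highlighting is that replacing a "p-atom" by its $\ICPDL$ translation does not increase $\Cqsize$ of any enclosing "conjunctive program": since the new subprogram has "intersection width" --- equivalently, its own "conjunctive width" under the identification of $\cap$ with a two-atom CP from \Cref{rk:intersecion_does_not_add_expressive_power} --- bounded by the original "p-atom"'s "conjunctive width".

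The principal technical hurdle lies in the size/time bound. Each single application of the per-CP construction costs $\polyfun(\pdlsize{C[x_s,x_t]}^{\Cqsize{C[x_s,x_t]}})$, and the number of substitutions is at most $\pdlsize{\phi}$. The invariant on $\Cqsize$ discussed above means the exponent does not grow through iterated substitutions, so the total cost stays within $\polyfun(\pdlsize{\phi}^{\Cqsize{\phi}})$. The "unnested form" assumption plays a supporting role here: it guarantees that the tests introduced by the per-CP construction (the `$\psi?$' subformulas) remain well-behaved and that test-nesting depth remains bounded. The main obstacle is precisely this compounding analysis: verifying that nested "conjunctive programs" (whether nested through program operators like $\circ$, $\cup$, $(\cdot)^*$ or through tests) do not cause the exponent to inflate beyond $\Cqsize{\phi}$, which is handled by the $\Cqsize$-stability invariant stated above and the uniform per-step bound.
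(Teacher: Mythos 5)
Your overall strategy --- iterate the per-conjunctive-program translation innermost-first, get semantic correctness from compositionality, and get $\Iwidth(\phi')\leq\Cqsize{\phi}$ from the per-program bound together with the observation that substitution does not increase $\Cqsize$ of enclosing conjunctive programs --- is exactly the paper's, and the correctness and intersection-width parts are fine.

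The gap is in the time/size accounting. You argue: each step costs $\polyfun(\pdlsize{C[x_s,x_t]}^{\Cqsize{C[x_s,x_t]}})$, there are at most $\pdlsize{\phi}$ steps, and since $\Cqsize$ never increases the exponent is stable, so the total is $\polyfun(\pdlsize{\phi}^{\Cqsize{\phi}})$. But the per-step cost is measured against the \emph{current} size of the conjunctive program being translated, and that size may already have been inflated by earlier substitutions. If a conjunctive program $C$ contains another conjunctive program $C'$ as (the program of) a p-atom --- which is perfectly possible in unnested form, since unnesting only constrains test programs $\phi?$, not p-atoms --- then after replacing $C'$ by $\pi_{C'}$ the size of $C$ becomes $\polyfun(\pdlsize{\phi}^{\Cqsize{\phi}})$, and the subsequent translation of $C$ then costs $\polyfun\bigl((\pdlsize{\phi}^{\Cqsize{\phi}})^{\Cqsize{\phi}}\bigr)=\polyfun(\pdlsize{\phi}^{\Cqsize{\phi}^{2}})$. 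Keeping the exponent $\Cqsize{\phi}$ fixed does nothing to stop this compounding of the \emph{base}. The paper's (admittedly terse) proof closes this by appealing to the fact that the nesting depth of conjunctive programs is bounded by a constant, so only a constant number of compounding levels occur; you instead assign the unnested-form hypothesis the cosmetic role of keeping tests well-behaved, which misses the place where a bound on nesting is actually needed. To repair your argument you must either justify and exploit a constant bound on the nesting depth of conjunctive programs (which yields $\pdlsize{\phi}^{\polyfun(\Cqsize{\phi})}$ --- enough for the downstream use in the $\omega$-regular tree satisfiability bound, though not literally the $\polyfun(\pdlsize{\phi}^{\Cqsize{\phi}})$ stated), or replace the per-step accounting by a single global one.
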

\begin{nestedproof}
    This is simply because the "nesting depth" of "conjunctive programs" is bounded~(by 2). Hence, iterating the translation only incurs in a polynomial blowup. $\Iwidth(\phi') \leq \Cqsize{\phi}$ follows immediately from \Cref{cl:IW-CW-translation}.
\end{nestedproof}
This concludes the proof of \Cref{lem:CPDLp-to-ICDPL-on-trees}.
\end{proof}

\begin{thm}[{\cite[Theorem 3.8]{DBLP:journals/jsyml/GollerLL09}}]\label{thm:goller:ICPDL-on-trees}
    For a "TWAPTA" $\+T$ and an $\ICPDL$ formula $\phi$, we can decide in $\expfun(\sizeStates{\+T} + \indexAut{\+T} + \pdlsize{\phi}^{\Iwidth(\phi)}) \cdot \polyfun(\sizeTrans{\+T})$ whether there exists some $T \in \langTWAPTA{\+T}$ such that $\epsilon \in \dbracket{\phi}_T$. 
\end{thm}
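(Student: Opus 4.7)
The plan is to attack this in three stages: (i) translate the $\ICPDL$ "formula" $\phi$ into an equivalent "TWAPTA" $\+T_\phi$ over "$\pset{P}$-labeled $A$-trees", (ii) form a product "TWAPTA" $\+T'$ with $\+T$, and (iii) decide emptiness of $\+T'$ via the standard reduction to parity games. The complexity bound will then follow by tracking the size parameters through each step.

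For step (i), I would construct $\+T_\phi$ by induction on $\subexpr(\phi)$ in the style of the classical translation of $\PDL$ to alternating automata. The only delicate construct is "program intersection": to witness $\pi_1 \cap \pi_2$ at a pair $(u,v)$, the automaton must verify both $\pi_1$ and $\pi_2$ along the \emph{same} path from $u$ to $v$, which a plain alternating automaton cannot synchronize. The standard fix, following \cite{DBLP:journals/jsyml/GollerLL09}, is to use ``product states'': for a "program" $\pi$, the automaton carries, at each node of the tree, a tuple of sub-programs (of size at most $\iwidth(\pi)$) each awaiting to be checked, and transitions simulate a one-step move that advances all tuple components simultaneously. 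Since each component comes from the "subexpressions" of the original "program", the state space for a "program" $\pi$ is bounded by $\pdlsize{\pi}^{\iwidth(\pi)}$, giving $\sizeStates{\+T_\phi} = \pdlsize{\phi}^{O(\Iwidth(\phi))}$. The "index" $\indexAut{\+T_\phi}$ is $O(\stardepth(\phi))$ because parities are needed only to distinguish ``no progress on a star'' from actual advancement. Correctness reduces to showing: for every pair $(u,v)$ in a tree $T$ and every $\pi \in \subexpr(\phi)$, there is a "successful" run of the product-state automaton for $\pi$ on $(T,u)$ reaching $v$ iff $(u,v) \in \dbracket{\pi}_T$; both directions are by structural induction, with the $(~)^*$ case handled by the parity condition.

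For step (ii), $\+T' = \+T \times \+T_\phi$ is a classical construction: a state is a pair $(s,s')$, the transition function is $\delta_{\+T'}((s,s'),\sigma) = \delta_{\+T}(s,\sigma) \wedge \delta_{\+T_\phi}(s',\sigma)$ (where moves must agree on direction), the initial state is the pair of initial states, and the parity is the max of the two (suitably shifted to preserve correctness of the disjoint parity conditions, e.g.\ via the known LAR-free trick for combining two parity conditions). Then $\langTWAPTA{\+T'} \neq \emptyset$ and $\epsilon \in \dbracket{\phi}_T$ for some $T \in \langTWAPTA{\+T}$ are equivalent. We have $\sizeStates{\+T'} = \sizeStates{\+T}\cdot \sizeStates{\+T_\phi}$, $\indexAut{\+T'} = O(\indexAut{\+T}+\indexAut{\+T_\phi})$, and $\sizeTrans{\+T'} = O(\sizeTrans{\+T}\cdot \sizeTrans{\+T_\phi})$.

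For step (iii), emptiness of a "TWAPTA" reduces in standard fashion to solving a parity game whose positions are pairs of states and tree-node ``summaries'' (tracking what the automaton can assume about the subtree rooted below), producing a parity game with $\expfun(\sizeStates{\+T'})$ positions, index $O(\indexAut{\+T'})$, and whose moves use the transition function linearly. Parity games of that shape are solvable in time $\expfun(\text{positions} + \text{index}) \cdot \polyfun(\text{moves})$, yielding the stated bound $\expfun(\sizeStates{\+T}+\indexAut{\+T}+\pdlsize{\phi}^{\Iwidth(\phi)}) \cdot \polyfun(\sizeTrans{\+T})$; note that $\sizeTrans{\+T_\phi}$ is absorbed into the $\expfun$ factor since it depends only on $\phi$.

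The main obstacle is getting the tight dependence $\pdlsize{\phi}^{\Iwidth(\phi)}$ in step (i), i.e.\ ensuring that the product-state construction for "program intersection" really only ever needs tuples of size bounded by $\Iwidth$, not by the raw nesting depth of $\cap$. This requires a careful recursive definition so that intersections contributed by disjoint branches of the formula tree (e.g., the two sides of a $\cup$, $\circ$, or a $?$) never accumulate into the same tuple; this is precisely what the definition of $\iwidth$ (which uses $\max$ for $\cup$ and $\circ$ but $+$ only for $\cap$) is designed to support.
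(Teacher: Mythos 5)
This theorem is not proved in the paper at all: it is imported verbatim as \cite[Theorem 3.8]{DBLP:journals/jsyml/GollerLL09}, so there is no in-paper argument to compare yours against. That said, your three-stage plan is a faithful reconstruction of the proof strategy in the cited reference: the essential point --- that "program intersection" is handled by states that are tuples of subprograms whose length is bounded by the "intersection width" rather than by the nesting depth of $\cap$, yielding the $\pdlsize{\phi}^{\Iwidth(\phi)}$ state bound --- is exactly the mechanism there (Göller--Lohrey--Lutz track $\Iwidth$-many NFA configurations simultaneously), and the final emptiness test via parity games matches the standard TWAPTA machinery they invoke.

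One step of your step (ii) is imprecise and, taken literally, would not work. A synchronized product with state set $S \times S'$ and acceptance ``the max of the two parities, suitably shifted'' is not sound: the conjunction of two parity conditions is in general a Streett condition, and no shifting turns it back into a single parity condition without a further blow-up. But for \emph{alternating} automata this product is unnecessary: to accept $\langTWAPTA{\+T} \cap \langTWAPTA{\+T_\phi}$ you take the disjoint union of the two state sets, add a fresh initial state whose transition is the conjunction of the two initial transitions, and keep each automaton's priority function on its own states. Since the transition function of each component only mentions its own states, every infinite branch of a run tree eventually lives inside a single component, so the piecewise-defined priority function is a genuine parity condition and correctness is immediate. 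With that replacement (and noting, as you do, that $\sizeTrans{\+T_\phi}$ is absorbed into the $\expfun$ factor), your outline goes through and recovers the stated bound.
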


\begin{proposition}\AP\label{prop:omega-sat-pb-2Exp}
    The "$\omega$-regular tree satisfiability" problem for "unnested@unnested form" $\CPDLp$ is in "2ExpTime". 
    It is in "ExpTime" for any subclass of "unnested@unnested form" $\CPDLp$ formulas with bounded "conjunctive width".
    More precisely, for an input $(\+T,\phi)$ it can be decided in
    $\expfun(\sizeStates{\+T} + \indexAut{\+T} + \pdlsize{\phi}^{\polyfun(\Iwidth(\phi))}) \cdot \polyfun(\sizeTrans{\+T})$ time.
\end{proposition}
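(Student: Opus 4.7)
The plan is to obtain the bound by composing \Cref{lem:CPDLp-to-ICDPL-on-trees} with the known result \Cref{thm:goller:ICPDL-on-trees} on $\ICPDL$: given an input $(\+T,\phi)$, I will first translate $\phi$ into an "equi-expressive" $\ICPDL$ formula $\phi'$ over trees with mild blow-up, and then run the $\ICPDL$ $\omega$-regular tree satisfiability procedure on $(\+T,\phi')$.

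Concretely, \Cref{lem:CPDLp-to-ICDPL-on-trees} yields $\phi'$ with $\pdlsize{\phi'} = \polyfun(\pdlsize{\phi}^{\Cqsize{\phi}})$ and, crucially, $\Iwidth(\phi') \leq \Cqsize{\phi}$. Applying \Cref{thm:goller:ICPDL-on-trees} to $(\+T,\phi')$ gives a running time bounded by
\[
\expfun\bigl(\sizeStates{\+T} + \indexAut{\+T} + \pdlsize{\phi'}^{\Iwidth(\phi')}\bigr) \cdot \polyfun(\sizeTrans{\+T}).
\]
Substituting the bounds from the translation yields
\[
\pdlsize{\phi'}^{\Iwidth(\phi')} \;\leq\; \bigl(\polyfun(\pdlsize{\phi}^{\Cqsize{\phi}})\bigr)^{\Cqsize{\phi}} \;=\; \pdlsize{\phi}^{\polyfun(\Cqsize{\phi})},
\]
which, absorbed into the $\expfun(\cdot)$, gives the stated bound $\expfun(\sizeStates{\+T} + \indexAut{\+T} + \pdlsize{\phi}^{\polyfun(\Iwidth(\phi))}) \cdot \polyfun(\sizeTrans{\+T})$ (reading $\Iwidth(\phi)$ as $\Cqsize{\phi}$ in the $\CPDLp$ setting, as per \Cref{rk:treewidth-conjunctive-width} and the analogy made explicit in \Cref{lem:CPDLp-to-ICDPL-on-trees}).

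The two headline consequences follow immediately: in the worst case $\Cqsize{\phi} \leq \pdlsize{\phi}$, so $\pdlsize{\phi}^{\polyfun(\Cqsize{\phi})}$ is singly exponential in $\pdlsize{\phi}$ and the whole expression sits inside $\expfun(\expfun(\cdot))$, \ie "2ExpTime"; whereas if $\Cqsize{\phi}$ is bounded by a constant, $\pdlsize{\phi}^{\polyfun(\Cqsize{\phi})}$ is polynomial in $\pdlsize{\phi}$ and the overall bound collapses to a single $\expfun$, \ie "ExpTime". There is no real obstacle here: the conceptual work (designing the $\ksplit$-based translation so that it simultaneously keeps the size under $\pdlsize{\phi}^{\Cqsize{\phi}}$ and the "intersection width" below $\Cqsize{\phi}$) has already been carried out in \Cref{lem:CPDLp-to-ICDPL-on-trees}, and the automata-theoretic machinery is entirely imported from \Cref{thm:goller:ICPDL-on-trees}; the proof is thus a bookkeeping composition of these two ingredients.
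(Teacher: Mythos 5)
Your proof is correct and follows essentially the same route as the paper's: translate $\phi$ to an $\ICPDL$ formula $\phi'$ via \Cref{lem:CPDLp-to-ICDPL-on-trees}, observe that $\pdlsize{\phi'}^{\Iwidth(\phi')} \leq \pdlsize{\phi}^{\polyfun(\Cqsize{\phi})}$, and conclude by \Cref{thm:goller:ICPDL-on-trees}, with the bounded-width case collapsing to a single exponential. No gaps.
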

\begin{proof}
    Given a "TWAPTA" $\+T$ and a $\CPDLp$ formula $\phi$ in "unnested form", we produce the equivalent $\ICPDL$ formula $\phi'$ given by \Cref{lem:CPDLp-to-ICDPL-on-trees}.
    Observe that the produced formula $\phi'$ is of $\polyfun(\pdlsize{\phi}^{\Cqsize{\phi}})$ "size@@pdl" and "intersection width" bounded by $\Cqsize{\phi}$. Hence, $\pdlsize{\phi'}^{\Iwidth(\phi')} \leq \polyfun(\pdlsize{\phi}^{\Cqsize{\phi}})^{\Cqsize{\phi}} \leq \pdlsize{\phi}^{\polyfun(\Cqsize{\phi})}$ is singly exponential w.r.t.\ $\pdlsize{\phi}$ and by \Cref{thm:goller:ICPDL-on-trees} we obtain a "2ExpTime" procedure.

    If the "conjunctive width" of $\phi$ is bounded by a constant, observe that $\phi'$ can be produced in polynomial time (and in particular it is of polynomial "size@@pdl") and it has constant "intersection width". Thus, in light of \Cref{thm:goller:ICPDL-on-trees} we obtain an "ExpTime" procedure.
\end{proof}

\subsection{Main Decidability Result}
\label{ssec:main-sat-result}
As a direct consequence of all the previous developments, we obtain the main result of this section:
\begin{thm}\label{thm:sat-cpdlp}
    \hfill
    \begin{enumerate}
        \item \label{it:satgeneral} $\UCPDLp$-"satisfiability@satisfiability problem" is "2ExpTime"-complete,
        \item \label{it:satCW}  For every $k \geq 1$, $\set{ \phi \in \UCPDLp \mid \Cqsize{\phi} \leq k}$-"satisfiability@satisfiability problem" is  "ExpTime"-complete. 
    \end{enumerate}
\end{thm}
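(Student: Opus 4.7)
The plan is to assemble the results established earlier in the section for the upper bounds, and invoke known hardness results from the literature for the matching lower bounds.

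For the upper bounds, I would simply chain together the reductions already proved. Starting from an arbitrary $\UCPDLp$ formula $\phi$, apply \Cref{prop:polyredux-simpleform} to obtain in polynomial time an equi-satisfiable $\phi_1$ in unnested form, without r-atoms and without $\Univ$ programs, satisfying $\Cqsize{\phi_1}=\Cqsize{\phi}$. Thus $\phi_1 \in \CPDLp$ over Kripke structures. By \Cref{rk:treewidth-conjunctive-width}, $\phi_1 \in \CPDLg{\Tw[k]}$ for $k=\Cqsize{\phi_1}$, and by \Cref{cor:treewidth-k-model-property} it suffices to search for a model of tree-width at most $k$. Now apply \Cref{lem:redux-omega-tree-sat} to obtain a $\CPDLp$ formula $\phi_2$ in unnested form (without r-atoms) of polynomial size and a TWAPTA $\+T$ with $\sizeStates{\+T}$ polynomial in $\pdlsize{\phi_1}$ such that $\phi_1$ is satisfiable iff $\phi_2$ is satisfiable w.r.t.\ $\+T$. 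Finally, \Cref{prop:omega-sat-pb-2Exp} gives a $\expfun(\sizeStates{\+T}+\indexAut{\+T}+\pdlsize{\phi_2}^{\polyfun(\Cqsize{\phi_2})})\cdot\polyfun(\sizeTrans{\+T})$ time decision procedure. Since all quantities in the exponent are polynomial in $\pdlsize{\phi}$ when $\Cqsize{\phi}$ is bounded, this yields "ExpTime" for item \ref{it:satCW}; in the general case the conjunctive width can be linear in $\pdlsize{\phi}$, so the exponent becomes singly exponential, giving "2ExpTime" for item \ref{it:satgeneral}.

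For the lower bounds, both follow by inclusion of previously studied logics. The "2ExpTime"-hardness of item \ref{it:satgeneral} is inherited from $\ICPDL$-satisfiability, which is "2ExpTime"-hard by \cite{DBLP:journals/jsyml/GollerLL09}; since $\ICPDL \lleq \CPDLp \lleq \UCPDLp$ via polynomial-time translations (by \Cref{rk:intersecion_does_not_add_expressive_power} together with \Cref{thm:ICPDL_equals_TW1_equals_TW2}), the hardness transfers. The "ExpTime"-hardness for item \ref{it:satCW} is inherited from $\PDL$-satisfiability, which is already "ExpTime"-hard \cite{DBLP:journals/jcss/FischerL79}, and every $\PDL$ formula lies trivially in the $k=1$ fragment $\{\phi \in \UCPDLp \mid \Cqsize{\phi}\leq 1\}$ since no conjunctive programs are used.

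The main technical content has already been done in the preceding subsections; the theorem itself is essentially an assembly statement. The only subtle point to verify carefully is that the chain of reductions preserves the conjunctive width (so that item \ref{it:satCW} indeed yields a bounded-width input to the $\omega$-regular tree satisfiability procedure). This has been explicitly tracked through \Cref{prop:polyredux-simpleform} and \Cref{lem:redux-omega-tree-sat}, so no additional work is needed. I would present the proof as a short paragraph combining these citations, noting complexity bookkeeping along the way.
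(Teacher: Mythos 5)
Your proposal is correct and follows essentially the same route as the paper: the upper bounds are obtained by chaining \Cref{prop:polyredux-simpleform}, \Cref{cor:treewidth-k-model-property}, \Cref{lem:redux-omega-tree-sat} and \Cref{prop:omega-sat-pb-2Exp} while tracking the conjunctive width, and the lower bounds are inherited from $\ICPDL$ (via \Cref{thm:ICPDL_equals_TW1_equals_TW2}) and from $\PDL$. The only cosmetic difference is the citation for the $\ICPDL$ hardness result.
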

\begin{proof}
    The upper-bounds are a consequence of
    \begin{enumerate}[(i)]
        \item \Cref{prop:polyredux-simpleform}, reducing to the satisfiability of "unnested@unnested form" $\CPDLp$ over "Kripke structures",
        \item \Cref{cor:treewidth-k-model-property} reducing to satisfiability over "Kripke structures" of "tree-width" $k$,
        \item \label{it:redux:omega} \Cref{lem:redux-omega-tree-sat}, reducing to the "$\omega$-regular tree satisfiability" problem, and
        \item \Cref{prop:omega-sat-pb-2Exp}, solving the "$\omega$-regular tree satisfiability" problem.
    \end{enumerate} 
    The first 4 items are polynomial-time reductions. In item \ref{it:redux:omega} we obtain a polynomial size "formula" $\phi$ and an exponential-sized "TWAPTA" $\+T$ such that $\sizeStates{\+T}$ is polynomial. In all reductions the "conjunctive width" and "unnesting@unnested form" of the formula is preserved.
    In light of the bounds of the last item (\Cref{prop:omega-sat-pb-2Exp}), this yields a double exponential time procedure, which becomes single exponential if the  "conjunctive width" is bounded.

The lower bound of \cref{it:satgeneral} follows from the known "2ExpTime" lower bound of $\ICPDL$ satisfiability \cite[Theorem~2]{DBLP:journals/jsyml/LangeL05} combined with \Cref{thm:ICPDL_equals_TW1_equals_TW2}. The lower bound of \cref{it:satCW} follows from the "ExpTime" lower bound of $\PDL$ \cite{DBLP:journals/jcss/FischerL79}.
\end{proof}

Due to the following \cref{rem:ICPDL-to-UCPDL}, \cref{it:satCW} of \Cref{thm:sat-cpdlp} can be seen as a generalization of the result \cite[Theorem~4.8]{DBLP:journals/jsyml/GollerLL09} stating that the satisfiability problem for $\ICPDL$ "formulas" of bounded "intersection width"  is in "ExpTime".
\begin{remark}\label{rem:ICPDL-to-UCPDL}
    There is a polynomial time translation of $\ICPDL$ "formulas" of "intersection width" $k$ to $\UCPDLg{\Tw[2]}$ "formulas" of "conjunctive width" $k$ ("eg", the translation $\trThree$ from \Cref{ssec:ICPDL-CPDL-equivalence}).
\end{remark}

\section{Unary Negation First-Order Logic with Transitive Closure}
\label{sec:UNTC}
\AP
We define $\intro*\UNTC$, the unary-negation fragment of first-order logic with unary transitive closure (without parameters), as the following language:
\begin{align*}
\varphi &\eqqdef R(\bar x)  \mid 
    x=y                     \mid
    \varphi\land\varphi     \mid
    \varphi\lor\varphi      \mid
    \exists x.\varphi       \mid
    \lnot\varphi(x)			\mid
    [\intro*\TC_{u,v}\varphi(u,v)](x,y)
\end{align*}
where $R\in\Rels$, $x,y,u,v\in\Vars$, $\bar x$ is a tuple of variables of $\Vars$, $\phi(x)$ is a formula with at most one free variable $x$ and $\varphi(u,v)$ is a formula with free variables $\set{u,v}$. The semantics is defined over "$\Rels$-structures" and it is the usual for first-order logics. In particular, for any "$\Rels$-structure" $K$ and $a,b\in\dom{K}$ we have $K\models [\TC_{u,v}\varphi(u,v)](x,y)[x\mapsto a,y\mapsto b]$ iff there is $n\geq 1$ and $c_1,\dots,c_n\in\dom{K}$ such that $c_1=a$, $c_n=b$ and for all $i=1,\dots,n-1$ we have $K\models\phi[u\mapsto c_i,v\mapsto c_{i+1}]$.

As an example, consider the $\UNTC$-formula
\[
\varphi(x)\eqdef \exists y.(\lnot q(y)\land [\TC_{u,v}\exists z.\bigl(a(u,z)\land b(z,v)\land R(u,z,v)\bigr)](x,y)).
\]
This formula says that one can reach some "world" satisfying $\lnot q$ by following a finite path whose individual steps are witnessed by an intermediate "world" $z$ such that $a(u,z)$, $b(z,v)$, and the "r-atom" $R(u,z,v)$ hold simultaneously. Thus, each step is not given by a single "atomic program", but by a small existentially defined pattern. In $\UCPDLp$, this same step is naturally represented by a "conjunctive program": if
$$
C\eqdef \set{a(x_s,z),\, b(z,x_t),\, R(x_s,z,x_t)},
$$
then the corresponding $\UCPDLp$-formula is
$$
\tup{(C[x_s,x_t])^*\circ (\lnot q)?}.
$$
This correspondence is not accidental: in Section~\ref{sec:UCPDLplus-equi-UNTC} we will show that $\UCPDLp$ and $\UNTC$ are in fact equi-expressive. This example may be kept in mind when reading the translation below: the transitive closure in the first formula corresponds to the iteration operator $(-)^*$, while the existentially defined binary step corresponds to a "conjunctive program".
\color{black}

\AP
We define the ""size@@untc"" of a formula $\phi$ of $\UNTC$, notated $\intro*\untcsize{\phi}$, as follows:
\begin{align*}
\untcsize{R(x_1,\dots,x_n)}  &\eqdef n,\\
\untcsize{x=y} &\eqdef 1, \\
\untcsize{\varphi\star\psi} &\eqdef  \untcsize{\varphi}+\untcsize{\psi}&&\text{for $\star\in\{\land,\lor\}$},\\
\untcsize{\exists x.\varphi} &\eqdef  1+\untcsize{\varphi},\\
\untcsize{\lnot\varphi(x)} &\eqdef  1+\untcsize{\varphi},\\
\untcsize{[\TC_{u,v}\varphi(u,v)](x,y)} &\eqdef  1+\untcsize{\varphi}.
\end{align*}

\subsection{\texorpdfstring{$\UCPDLp$}{UCPDL⁺} and \texorpdfstring{$\UNTC$}{UNFO*} are Equi-expressive}\label{sec:UCPDLplus-equi-UNTC}

In this subsection we show that $\UCPDLp$ and $\UNTC$ have the same expressive power. We begin with the translation from $\UNTC$ to $\UCPDLp$. The main difficulty in this direction is that transitive closure naturally involves formulas with two free variables, while in $\UCPDLp$ one must distinguish between "formulas" and "programs". For this reason, we first put $\UNTC$-formulas with at most two free variables into a suitable normal form, and only afterwards translate them into $\UCPDLp$.

\begin{proposition}\AP\label{prop:UNTC-ECPDL}
    \AP
    $\UNTC \lleq \UCPDLp$ via an exponential-time translation $\intro*\transUNTCECPDL\colon \UNTC \to \UCPDLp$ such that  $\Cqsize{\transUNTCECPDL(\phi)} \leq \untcsize{\phi}$ for every $\phi \in \UNTC$.
\end{proposition}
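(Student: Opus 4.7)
The plan is to build $\transUNTCECPDL$ in two stages: first, normalize every $\UNTC$ formula into a disjunction of ``existential conjunctions of simple literals'', and second, compile each such existential conjunction into a UCPDL+ conjunctive program, recursively translating the subformulas that appear under unary negation or $\TC$. Concretely, I would first show by induction on $\untcsize\phi$ that every $\UNTC$ formula $\phi(\bar x)$ is semantically equivalent to
\[
\phi(\bar x) \ \semequiv\ \bigvee_{i \in I} \exists \bar y_i.\ \bigwedge_{j \in J_i} \alpha_{i,j},
\]
where each literal $\alpha_{i,j}$ is either (i) a relational atom $R(\bar z)$, (ii) an equality $z=z'$, (iii) a unary negation $\lnot\psi(z)$ with $\psi\in\UNTC$ having one free variable, or (iv) a $\TC$-atom $[\TC_{u,v}\psi(u,v)](z,z')$ with $\psi\in\UNTC$ binary. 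This is obtained by classical rewrites (distributing $\land$ over $\lor$, pulling $\exists$ outward, renaming bound variables); it may incur an exponential blow-up, but the crucial invariant preserved is that for each $i$ we have $\sum_{j \in J_i} \untcsize{\alpha_{i,j}} \leq \untcsize\phi$.

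Next, each disjunct $D_i = \exists \bar y_i.\, \bigwedge_j \alpha_{i,j}$ is compiled into a UCPDL+ conjunctive program $C_i$ on the variables of $D_i$, with one atom per literal: a relational $R(\bar z)$ becomes an r-atom of $C_i$ (or a p-atom $p?(z,z)$ when $\arity(R)=1$, or a p-atom $R(z,z')$ when $\arity(R)=2$); an equality $z=z'$ becomes the p-atom $\epsilon(z,z')$; a unary negation $\lnot\psi(z)$ becomes the p-atom $(\lnot\transUNTCECPDL(\psi))?(z,z)$, invoking the translation recursively on the unary $\psi$; and a $\TC$-atom becomes the p-atom $\transUNTCECPDL(\psi)^*(z,z')$, invoking the translation recursively on the binary $\psi$. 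Connectedness of $\uGraphC{C_i}$ is enforced by inserting $\Univ(z,z')$ p-atoms between disjoint components, which is free in $\UCPDLp$. Finally, the disjuncts are glued together: if $\phi$ has one free variable $x$, I set $\transUNTCECPDL(\phi) := \bigvee_i \tup{C_i[x,x]}$ (with $\lor$ realized via De Morgan); if $\phi$ has two free variables $x,y$, I set $\transUNTCECPDL(\phi) := \bigcup_i C_i[x,y]$ as a UCPDL+ program.

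Correctness is a routine structural induction matching the semantics of each $\UNTC$ construct with its UCPDL+ counterpart. The exponential-time bound follows because $|I|$ is at most $2^{O(\untcsize\phi)}$ and each $C_i$ has size polynomial in the recursively translated subformulas, whose sizes are themselves controlled by the inductive hypothesis. For the width bound, I would argue by induction on $\untcsize\phi$ that each literal $\alpha$ contributes at most $\untcsize\alpha$ to $\cqsize{C_i}$: a relational $R(\bar z)$ contributes $\arity(R)$, an equality contributes $1$, a unary-negated literal contributes $\cqsize{(\lnot\cdot)?} = 1$, and a $\TC$-literal contributes $\cqsize{\transUNTCECPDL(\psi)^*} = \cqsize{\transUNTCECPDL(\psi)} \leq \untcsize\psi$ by induction. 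Summing over $j$ and noting that the programs nested inside tests and under the Kleene star also meet the inductive bound, we obtain $\cqsize{C_i} \leq \untcsize\phi$ and hence $\Cqsize{\transUNTCECPDL(\phi)} \leq \untcsize\phi$. The main technical obstacle is precisely this telescoping of the width through the DNF expansion and the recursive translations: since a test $\phi?$ has $\cqsize = 1$ regardless of $\phi$, each unary-negated literal only consumes a constant width budget at its level, while the width of its inner translation is independently accounted for via the inductive hypothesis; carefully keeping these two accounts separate as one descends into the DNF is the delicate point of the argument.
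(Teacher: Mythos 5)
Your proposal follows essentially the same route as the paper's proof: an exponential-time normalization into a disjunctive form whose conjunctions have total literal size bounded by $\untcsize{\phi}$, followed by a compilation of each disjunct into a conjunctive program (with $\Univ$-atoms inserted for connectedness, and recursive translation of the unary-negated and binary subformulas), the width bound being obtained exactly by your ``two accounts'' observation that a test contributes width $1$ at its level while the nested programs are charged separately to the inductive hypothesis. One small point: if $\TC$ denotes the proper (irreflexive) transitive closure --- as the paper's converse translation of $\pi^*$ into $x{=}y \vee [\TC_{u,v}\dots](x,y)$ presupposes --- then your clause for $\TC$-literals should use $\transUNTCECPDL(\psi)\circ\transUNTCECPDL(\psi)^*$ rather than $\transUNTCECPDL(\psi)^*$, which does not affect the width since $\cqsize{\pi\circ\pi^*}=\cqsize{\pi}$.
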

\begin{proof}
As in \cite{jung2018querying} we first define a normal form for some $\UNTC$-formulas.
Our normal form is more convoluted that the one in \cite{jung2018querying} (and also the one in \cite{segoufin2013unary}) since we have to deal with formulas of $\UNTC$ with at most two free variables instead of just one as in the referenced works. The reason is the transitive closure, a formula of two free variables, where, furthermore, the order of these two variables matters. 
We first show that any $\UNTC$-formula with at most two free variables can be transformed into an equivalent one in normal form of "size@@untc" at most exponential. Then we show that there is a polynomial time translation from $\UNTC$-formulas in normal form and with at most two free variables to $\UCPDLp$-formulas or -programs, witnessing $\UNTC \lleq \UCPDLp$.

For a set of variables $X$ and a $\UNTC$-formula $\phi$, we use $\phi(X)$ to denote that the set of free variables of $\phi$ is exactly $X$ and we use $\phi(\subseteq X)$ to denote that the free variables of $\phi$ are among those in $X$.

\paragraph{Normal form} We define a normal form of $\UNTC$-formulas with at most two free variables that only allows for disjunctions of formulas $\phi_1$ and $\phi_2$ such that the cardinality of the set of free variables in $\phi_1$ or in $\phi_2$ is at most 2. 
For example, the following $\UNTC$-formula with free variable $x$ 
\[
\phi=\TC_{u,v}\left[\exists z_1,z_2.\left(R(z_1,z_2,u,v)\vee S(z_1,z_2,u,v)\right)\land T(u,v)\right](x,x)
\] 
is not in normal form because the disjunction is between formulas having altogether more than two free variables (namely, $z_1,z_2,u,v$), but the equivalent formula
\[
\TC_{u,v}\big[ \big(\exists z_1,z_2.(R(z_1,z_2,u,v)\land T(u,v))\big) \vee \big(\exists z_1,z_2.(S(z_1,z_2,u,v)\land T(u,v))\big) \big](x,x)
\] 
is in normal form because each disjunct has free variables only $u$ and $v$.

Formally, the normal form of $\UNTC$ with at most two free variables is defined through the following grammar:
\begin{align*}
    \phi(\subseteq\{x,y\})   \eqqdef{} & \varphi(\subseteq\{x,y\})\vee\varphi(\subseteq\{x,y\})
    \\&
                \mid\lnot\varphi(\subseteq\{x\}) 
    \\&
                \mid  [\TC_{u,v}\phi(\{u,v\})](x,y) 
    \\&
                \mid  [\TC_{u,v}\phi(\{u,v\})](x,x) 
    \\&
                \mid \exists z_1,\dots,z_m.\psi, 
\end{align*}
where $m\geq 0$, $z_i\notin\{x,y\}$ for all $i$, and $\psi$ is a conjunction of expressions of one of these forms:
\begin{enumerate}[i.]
    \item\label{item:NF_conj:1} $R(w_1,\dots w_n)$ for $R\in\Rels$,
    \item $w_1=w_2$,
    \item\label{item:NF_conj:3} $\varphi(\subseteq\{w_1,w_2\})$,
\end{enumerate}
where $w_1,w_2,\dots,w_n$ are among $x,y,z_1,\dots,z_m$. We allow for $m=0$ and in such case $\exists z_1,\dots,z_m.\psi$ is simply $\psi$. In the above grammar the non-terminal is the symbol $\phi$; the use of $\phi(\subseteq\{x,y\})$, $\phi(\subseteq\{x\})$, and $\phi(\{u,v\})$ is only to emphasise the properties of the set of free variables of $\phi$ that must hold in order to apply the grammatical rule. 
\AP
Formulas $\psi$ used in the grammar above are called ""NF-conjunctions"" and the expressions \ref{item:NF_conj:1}-\ref{item:NF_conj:3} above are called ""NF-atoms"" of $\psi$.

\AP
In analogy to the "conjunctive width" for $\UCPDLp$ "programs" we introduce the ""NF-conjunctive width"" $\intro*\cqsizenf{\psi}$ of a "NF-conjunction" $\psi$, defined as follows:
\begin{align*}
    \cqsizenf{\psi} \eqdef &\sum \set{\arity(R) : R(\bar w)  \text{ is an "NF-atom" of } \psi} + {}
    \\&
    |\set{w_1=w_2 : w_1=w_2\text{ is an "NF-atom" of } \psi }| + {}
    \\&
    \sum \set{\cqsizenf\phi : \phi\text{ is an "NF-atom" of } \psi }.
\end{align*}
We define $\intro*\Cqsizenf{\phi}$
for any $\UNTC$-formula $\phi$ in normal form to be the maximum "NF-conjunctive width" of a "NF-conjunction" therein (or 1 if it contains no "NF-conjunctions").

Disjunctions of formulas with more than two free variables can be eliminated, at the cost of a possible exponential blow-up in the length. More precisely, after renaming bound variables apart whenever needed, every forbidden disjunction (namely, a disjunction with at least three free variables) can be transformed into an equivalent one by repeatedly applying the following rewritings:
\begin{enumerate}
\item $(\psi_1\vee\psi_2)\land\psi_3 \rightsquigarrow (\psi_1\land\psi_3) \vee (\psi_2\land\psi_3)$;
\item $(\exists \bar x.\psi_1)\land\psi_2\rightsquigarrow\exists \bar x.(\psi_1\land\psi_2)$, assuming that no variable from $\bar x$ occurs free in $\psi_2$;
\item $\exists \bar x.(\psi_1\vee\psi_2)\rightsquigarrow(\exists \bar x.\psi_1)\vee(\exists \bar x.\psi_2)$; and
\item $\exists \bar x.(\exists \bar y.\psi)\rightsquigarrow\exists \bar x\bar y.\psi$.
\end{enumerate}

More concretely, the normalization is applied bottom-up, starting from an innermost forbidden disjunction. Given a subformula $\psi$ of the form $\psi_1\vee\psi_2$ such that the set of free variables of $\psi_1$ or $\psi_2$ has cardinality greater than $2$, we first rename bound variables apart whenever needed, so that no variable capture occurs in the subsequent rewritings. We then view the smallest surrounding formula built from $\psi$ using only conjunctions and existential quantifiers, and repeatedly apply Rules~(1)--(4) so as to push conjunctions and existential quantifiers across the disjunction until the latter is exposed as a disjunction of formulas each having at most two free variables. Intuitively, Rule~(1) distributes conjunction over disjunction, Rule~(2) pulls existential quantifiers outward across conjunctions when they do not bind variables free in the other conjunct, Rule~(3) distributes existential quantification over disjunction, and Rule~(4) merges consecutive existential blocks. Repeating this procedure for every forbidden disjunction yields an equivalent formula in normal form. Observe that this process may duplicate subformulas, and therefore may cause an exponential blow-up in the size of the formula.

Indeed, the translation from $\UNTC$ to $\UCPDLp$ may incur an exponential blow-up already on formulas of $\UNFO$, that is, even without using transitive closure. A simple example is given by the family
$$
\phi_n(x)\eqdef \exists y\,\exists z_1\dots\exists z_n.\bigwedge_{i=1}^n(R_i(x,z_i)\vee S_i(y,z_i)),
$$
where $R_1,\dots,R_n,S_1,\dots,S_n\in\Rels$ are pairwise distinct binary relation symbols. Intuitively, the free variable $x$ is fixed, while $y$ and $z_1,\dots,z_n$ are existentially quantified; for each $i\leq n$, one must choose between the atom $R_i(x,z_i)$ and the atom $S_i(y,z_i)$. A corresponding $\UCPDLp$-formula can be written as
$$
\bigvee_{\sigma\in\set{R,S}^n}\tup{C_\sigma[x_s,x_t]},
$$
where, for each $\sigma\in\set{R,S}^n$, we let
$$
C_\sigma \eqdef \set{\rho_1^\sigma,\dots,\rho_n^\sigma},
\qquad
\rho_i^\sigma \eqdef
\begin{cases}
R_i(x_s,z_i) & \text{if }\sigma(i)=R,\\
S_i(x_t,z_i) & \text{if }\sigma(i)=S.
\end{cases}
$$
Here $x_s$ plays the role of the free variable $x$, $x_t$ plays the role of the existential witness $y$, and the variables $z_1,\dots,z_n$ are the internal existential variables of the corresponding "conjunctive programs". Thus each choice of one disjunct for every $i\leq n$ gives rise to one "conjunctive program" $C_\sigma[x_s,x_t]$, so that in general one obtains a disjunction of $2^n$ different "conjunctive programs". As previously mentioned, this exponential blow-up is already present in the $\UNFO$-fragment, and it was already observed in \cite[proof of Lemma 4.1]{segoufin2013unary}, where it is described as being ``reminiscent of the transformation of propositional formulas into disjunctive normal form''.

\color{black}

The next lemma isolates the normalization step. It shows that every $\UNTC$-formula with at most two free variables can be transformed into an equivalent formula in normal form, while keeping the corresponding width measure under control. This will be crucial later, when we translate normal-form formulas into $\UCPDLp$ without losing the desired complexity bounds. Notice that while the "size@@untc" of the normal form $\tilde \phi$ obtained from $\phi$ can be of exponential "size@@untc" with respect to $\untcsize\phi$, the "NF-conjunctive width" of $\tilde\phi$ remains linear. 

\begin{lemma}\AP
Let $\phi$ be a $\UNTC$-formula of at most two free variables. There is $\tilde\phi$ in normal form such that $\phi$ is equivalent to $\tilde\phi$ and $\Cqsizenf{\tilde \phi}\leq \untcsize{\phi}$.
\end{lemma}
\begin{proof}
We proceed by structural induction of $\phi$.

\proofsubcase{$\phi=R(x,y)$, for $R\in\sigma$}
We have $\tilde\phi=R(x,y)$ ($=\exists\emptyset.R(x,y)$) and $\Cqsizenf{\tilde \phi}=2 = \untcsize{R(x,y)}$.

\proofsubcase{$\phi= x=y$}
We reason as above.

\proofsubcase{$\phi=\phi_1\land\phi_2$} 
Let $\tilde\phi_1$ and $\tilde\phi_2$ be formulas in normal form equivalent to $\phi_1$ and $\phi_2$ respectively. By inductive hypothesis, $\Cqsizenf{\tilde \phi_i}\leq \untcsize{\phi_i}$ for $i=1,2$. Now, the formula $\tilde\phi=\tilde\phi_1\land \tilde\phi_2$ (=$\exists\emptyset(\tilde\phi_1\land \tilde\phi_2)$) is in normal form and satisfies that $\Cqsizenf{\tilde \phi}\leq\Cqsizenf{\tilde \phi_1}+\Cqsizenf{\tilde \phi_2}\leq\untcsize{\phi_1}+\untcsize{\phi_2}\leq\untcsize{\phi}$. The case for disjunction is analogous.

\proofsubcase{$\phi=\lnot\varphi'$ and $\phi =[\TC_{u,v}\varphi'(u,v)](x,y)$} By inductive hypothesis there is a formula $\tilde\varphi'$ in normal form such that $\varphi'$ is equivalent to $\tilde\varphi'$ and $\Cqsizenf{\tilde\varphi'}\leq \untcsize{\varphi'}$. We can then define $\tilde\phi\eqdef\lnot\tilde\varphi'$ and $\tilde\phi\eqdef[\TC_{u,v}\tilde\varphi'(u,v)](x,y)$, both of which are in normal form and satisfy the desired properties.

\proofsubcase{$\phi=\exists \bar x. \rho$} Consider the syntactic tree $T_\rho$ of $\rho$ and the set $X$ of nodes of $T_\rho$ that encode a subformula of $\rho$ with at most 2 free variables or a leaf of the form $R(x_1,\dots,x_n)$ for some $R\in\sigma$. Take the set $W$ of maximal points of $X$, that is, the set of nodes $w\in X$ such that there is no $w'\in X$ with $w'$ a strict predecessor of $w$ in $T_\rho$. Suppose $W=\{w_1,\dots,w_n\}$, let $\rho_i$ be the subformula of $\rho$ represented in $T_\rho$ by the node $w_i$. If $\rho_i$ is a formula of two free variables different from an atom $R(x_1,x_2)$ and different from an equality of the form $x_1=x_2$, let $\tilde\rho_i$ be a formula in normal form equivalent to $\rho_i$ such that $\Cqsizenf{\tilde \rho_i}\leq \untcsize{\rho_i}$ whose existence follows from the inductive hypothesis. If $\rho_i$ is an atom $R(x_1,\dots,x_n)$, define $\tilde\rho_i \eqdef \rho_i$ (and so in this case $\untcsize{\tilde\rho_i}=n$). If $\rho_i$ is an equality $x_1=x_2$, define $\tilde\rho_i \eqdef \rho_i$ (and so in this case $\untcsize{\tilde\rho_i}=1$). We use rules (1) -- (4) to convert $\rho$ into an equivalent formula of the form 
$\bigvee_{j\leq m}\exists\bar x\bar y_j. \bigwedge_{i\leq k}\rho_{r(j,i)}$ where 
$r(j,i)\neq r(j,i')$ for $i\neq i'$. 
Finally we replace $\rho_{r(j,i)}$ by $\tilde\rho_{r(j,i)}$ and obtain $\tilde\phi\eqdef\bigvee_{j\leq m}\exists\bar x\bar y_j. \bigwedge_{i\leq k}\tilde\rho_{r(j,i)}$ that verifies that (i) it is in normal form, (ii) it is equivalent to $\phi$, and (iii) for each $j\leq m$, $\bigwedge_{i\leq k}\tilde\rho_{r(j,i)}$ is an "NF-conjunction". Then
$\Cqsizenf{\tilde \phi}=\max\{\cqsizenf{\bigwedge_{i\leq k}\tilde\rho_{r(j,i)}} \colon j\leq m\}$. By the definition of "size@@untc" and the considerations above regarding $\Cqsizenf{\tilde \rho_i}$ when $\rho_i$ is not an equality or an atom, and the ones regarding $\untcsize{\tilde\rho_i}$ when $\rho_i$ is an equality or an atom, we conclude $\Cqsizenf{\tilde \phi}\leq \sum_{1\leq i\leq n}\Cqsizenf{\tilde\rho_i}\leq\untcsize{\phi}$. 
\end{proof}

Once formulas are in normal form, the translation to $\UCPDLp$ becomes essentially syntax-directed. The key point is that formulas with one free variable will be translated into $\UCPDLp$-"formulas", while formulas with two ordered free variables will be translated into $\UCPDLp$-"programs". 

\paragraph{Translation}

We define a translation $\transUNTCECPDL$ from $\UNTC$-formulas with at most one free variable and in normal form to $\UCPDLp$-formulas, and a family of translations $\transUNTCECPDL^{(x,y)}$ indexed by an ordered pair of variables from $\UNTC$-formulas with two free variables $x$ and $y$ and in normal form to $\UCPDLp$-programs such that for any $\Rels$-structure $K$ and $u,v\in\dom{K}$:
\begin{enumerate}
\item if $\varphi$ is a $\UNTC$-sentence in normal form then $K\models\phi$ iff $K,u\models\transUNTCECPDL(\phi)$;
\item if $\varphi$ is a $\UNTC$-formula in normal form with unique free variable $x$ then $K\models\phi[x\mapsto u]$ iff $K,u\models\transUNTCECPDL(\phi)$; and
\item if $\varphi$ is a $\UNTC$-formula in normal form with unique free variables $x$ and $y$ ($x\neq y$), then $K\models\phi[x\mapsto u,y\mapsto v]$ iff $K,u,v\models\transUNTCECPDL^{(x,y)}(\phi)$.
\end{enumerate}

We use $\pi^+$ as a short for $\pi\circ\pi^*$. The translations are mutually recursively defined as follows:
\begin{align*}
    \transUNTCECPDL(\lnot \phi(\subseteq\{x\})) &\eqdef \lnot\transUNTCECPDL(\phi),
    \\
    \transUNTCECPDL(\phi_1(\subseteq\{x\})\vee\phi_2(\subseteq\{x\}))   &\eqdef \transUNTCECPDL(\phi_1) \vee \transUNTCECPDL(\phi_2),
    \\
    \transUNTCECPDL^{(x,y)}(\phi_1(\{x,y\})\vee\phi_2(\{x,y\}))   &\eqdef \transUNTCECPDL^{(x,y)}(\phi_1) \cup \transUNTCECPDL^{(x,y)}(\phi_2),
    \\
    \transUNTCECPDL^{(x,y)}(\phi_1(\{x,y\})\vee\phi_2(\subseteq\{x\}))   &\eqdef \transUNTCECPDL^{(x,y)}(\phi_1) \cup \left(\transUNTCECPDL(\phi_2)?\circ\Univ\right),
    \\
    \transUNTCECPDL^{(x,y)}(\phi_1(\{x,y\})\vee\phi_2(\subseteq\{y\}))   &\eqdef \transUNTCECPDL^{(x,y)}(\phi_1) \cup \left(\Univ\circ\transUNTCECPDL(\phi_2)?\right),
    \\
    \transUNTCECPDL^{(x,y)}(\phi_1(\subseteq\{x\})\vee\phi_2(\subseteq\{y\}))   &\eqdef \left(\transUNTCECPDL(\phi_1)?\circ\Univ\right) \cup \left(\Univ\circ\transUNTCECPDL(\phi_2)?\right),
    \\
    \transUNTCECPDL^{(x,y)}([\TC_{u,v}\phi(\{u,v\})](x,y)) &\eqdef \left(\transUNTCECPDL^{(u,v)}(\phi)\right)^+,%
    \\
    \transUNTCECPDL^{(x,y)}([\TC_{u,v}\phi(\{u,v\})](y,x)) &\eqdef \left\{\left(\transUNTCECPDL^{(u,v)}(\phi)\right)^+(z_1,z_2)\right\}[y,x],%
    \\
    \transUNTCECPDL([\TC_{u,v}\phi(\{u,v\})](x,x)) &\eqdef \tup{\left\{\left(\transUNTCECPDL^{(u,v)}(\phi)\right)^+(z_1,z_2)\right\}[x,x]},
    \\
    \transUNTCECPDL^{(x,y)}((\exists  z_1,\dots,z_m.\psi)(\{x,y\}))&\eqdef C[x,y],
    \\
    \transUNTCECPDL((\exists  z_1,\dots,z_m.\psi)(\subseteq\{x\}))&\eqdef \tup{C[x,x]},
    \end{align*}
where $C$ is the smallest set such that (we identify the "NF-conjunction" $\psi$ with the set of "NF-atoms" that it contains):
\begin{enumerate}
\item $R(w_1,\dots,w_n)\in C$ if $R(w_1,\dots,w_n)\in \psi$ with $R\in\Rels$,

\item $\epsilon(w_1,w_2)\in C$ if $w_1=w_2\in\psi$,

\item $\transUNTCECPDL(\phi)?(w,w) \in C$ if $\phi(\subseteq\{w\})\in \psi$,

\item $\transUNTCECPDL^{(w_1,w_2)}(\phi)(w_1,w_2) \in C$ if $\phi(\{w_1,w_2\})\in \psi$.
\end{enumerate}
If needed, we also add to $C$ atoms $\Univ(w_1,w_2)$ necessary to satisfy that $\uGraphC{C}$ is connected. Translations $\transUNTCECPDL$ and $\transUNTCECPDL^{(x,y)}$ are computable in polynomial time, and satisfy that for any $\varphi$ in normal form: if $\phi$ has one free variable then $\Cqsizenf{\phi}=\Cqsize{\transUNTCECPDL(\phi)}$, and if $\phi$ has two free variables $x,y$ then $\Cqsizenf{\phi}=\Cqsize{\transUNTCECPDL^{(x,y)}(\phi)}$.
\end{proof}

We now turn to the converse translation, from $\UCPDLp$ to $\UNTC$. In contrast with the previous direction, this translation is fully compositional and closely follows the semantics of the language: "formulas" are translated into $\UNTC$-formulas with one free variable, while "programs" are translated into $\UNTC$-formulas with two free variables. In particular, iteration is captured by transitive closure, and "conjunctive programs" are translated into existentially quantified conjunctions.

\begin{proposition}\AP\label{prop:ECPDL-UNTC}
    \AP
    $\UCPDLp \lleq \UNTC$ via a polynomial-time translation $\intro*\transECPDLUNTC: \UCPDLp \to \UNTC$.
\end{proposition}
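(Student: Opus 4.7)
I will define $\transECPDLUNTC$ by mutual structural recursion, producing for each $\UCPDLp$ "formula" $\phi$ a $\UNTC$-"formula" $\transECPDLUNTC(\phi)(x)$ with (at most) one free variable, and for each $\UCPDLp$ "program" $\pi$ a $\UNTC$-"formula" $\transECPDLUNTC(\pi)(x,y)$ with at most the two free variables $x,y$. The intended invariants are: $K,u\models\phi$ iff $K\models\transECPDLUNTC(\phi)[x\mapsto u]$, and $K,u,v\models\pi$ iff $K\models\transECPDLUNTC(\pi)[x\mapsto u, y\mapsto v]$.

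The clauses are the standard compositional ones: atomic "formulas" $p\in\Prop$ go to $p(x)$; "atomic programs" $a$ to $a(x,y)$ and $\bar a$ to $a(y,x)$; $\epsilon$ to $x=y$; $\Univ$ to $x=x\land y=y$; Booleans on "formulas" commute with the translation; $\tup{\pi}$ becomes $\exists y.\,\transECPDLUNTC(\pi)(x,y)$; composition becomes $\exists z.\,\transECPDLUNTC(\pi_1)(x,z)\land\transECPDLUNTC(\pi_2)(z,y)$; union becomes disjunction; a test $\phi?$ becomes $x=y\land\transECPDLUNTC(\phi)(x)$; the Kleene star $\pi^*$ becomes $[\TC_{u,v}\,\psi(u,v)](x,y)$, where $\psi(u,v)$ is obtained from $\transECPDLUNTC(\pi)(x,y)$ by renaming its two free variables to $u,v$; and a "conjunctive program" $C[x_s,x_t]$ with $\vars(C)=\{x_s,x_t,z_1,\dots,z_m\}$ becomes $\exists z_1\dots z_m.\bigwedge_{\alpha\in C}\transATOM(\alpha)$, where an "r-atom" $R(\bar w)$ is kept as $R(\bar w)$ and a "p-atom" $\pi(w,w')$ is replaced by $\transECPDLUNTC(\pi)(w,w')$ (with variable renaming), followed by a final renaming so that the result is a $\UNTC$-"formula" with free variables $x,y$ (namely $x:=x_s$, $y:=x_t$; if $x_s=x_t$ the translation yields a single-free-variable formula as expected from $\tup{C[x,x]}$).

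The correctness proof is a routine structural induction that I will carry out by simultaneously establishing the equivalences for "formulas" and "programs"; the only semantic subtleties are (i) the reflexive closure built into the $\TC$ operator of $\UNTC$ (since the $n=1$ case vacuously holds, $[\TC_{u,v}\psi](x,y)$ is exactly the \emph{reflexive} transitive closure, matching $\dbracket{\pi^*}_K$), and (ii) the renaming of bound variables inside the $\TC$ constructor, which is harmless. Polynomial time is immediate: each clause adds a constant amount of syntax on top of recursive calls on strictly smaller subexpressions, and the "conjunctive program" case is linear in the number of "atoms" of $C$.

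The main thing to check is that the output genuinely lies in $\UNTC$, \ie, the two syntactic restrictions are respected. The \emph{unary-negation} restriction is clear: the only place negation is introduced is $\transECPDLUNTC(\lnot\phi)=\lnot\transECPDLUNTC(\phi)$, and by the induction invariant $\transECPDLUNTC(\phi)$ has at most one free variable. The \emph{no-parameters} restriction on $\TC$ is the slightly subtler point and is the only place where a naive translation could fail: when translating $\pi^*$, one must verify that $\transECPDLUNTC(\pi)$ has \emph{exactly} the free variables $x,y$ (no extraneous parameters). This is maintained throughout by the invariant above, because every clause producing a "program" translation is closed in $\{x,y\}$: quantifiers $\exists z$ in the composition and "conjunctive program" cases are introduced freshly and bind the intermediate/interior variables, and Boolean combinations stay within $\{x,y\}$. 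Consequently $\psi(u,v)$ inside the $\TC$ has precisely $\{u,v\}$ as free variables, as required by $\UNTC$. Once these invariants are established, equivalence and the polynomial-time bound conclude the proof.
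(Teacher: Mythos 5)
Your proposal is correct and follows essentially the same route as the paper: a mutual structural recursion sending formulas to one-free-variable $\UNTC$-formulas and programs to two-free-variable $\UNTC$-formulas, with $\TC$ handling Kleene star, fresh existentials handling composition and "conjunctive programs", and the key check being that the formula under $\TC$ has exactly the free variables $u,v$ (the paper makes this automatic by indexing the translation with the variable names, $\transECPDLUNTC^{(x,y)}$, rather than renaming afterwards). The only cosmetic difference is at $\pi^*$: the paper writes $x=y\vee[\TC_{u,v}\cdots](x,y)$ explicitly, whereas you rely on the $n=1$ case of the $\TC$ semantics to supply reflexivity — which is consistent with the semantics as defined, though the explicit disjunct is the more robust choice.
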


\begin{proof}
For $x,y\in\Vars$ we define a translation $\transECPDLUNTC^{(x,y)}$ from $\UCPDLp$-programs to $\UNTC$-formulas with only free variables $x$ and $y$, and a translation $\transECPDLUNTC^{x}$ from $\UCPDLp$-formulas to $\UNTC$-formulas with unique free variable $x$ as follows:
    \begin{align*}
    \transECPDLUNTC^{x}(p)&\eqdef p(x),\\
    \transECPDLUNTC^{x}(\lnot\phi)&\eqdef\lnot\transECPDLUNTC^{x}(\phi),\\
    \transECPDLUNTC^{x}(\phi_1\land\phi_2)&\eqdef\transECPDLUNTC^{x}(\phi_1)\land\transECPDLUNTC^{x}(\phi_2),\\
    \transECPDLUNTC^{x}(\tup{\pi})&\eqdef\exists y.\transECPDLUNTC^{(x,y)}(\pi),\\
    \transECPDLUNTC^{(x,y)}(\epsilon)&\eqdef x=y,\\
    \transECPDLUNTC^{(x,y)}(a)&\eqdef a(x,y),\\
    \transECPDLUNTC^{(x,y)}(\bar a)&\eqdef a(y,x),\\
    \transECPDLUNTC^{(x,y)}(\pi_1\cup\pi_2)&\eqdef \transECPDLUNTC^{(x,y)}(\pi_1)\vee\transECPDLUNTC^{(x,y)}(\pi_2),\\
    \transECPDLUNTC^{(x,y)}(\pi_1\circ\pi_2)&\eqdef \exists z.\transECPDLUNTC^{(x,z)}(\pi_1)\land \transECPDLUNTC^{(z,y)}(\pi_2),\\
    \transECPDLUNTC^{(x,y)}(\phi?)&\eqdef x=y\land\transECPDLUNTC^{x}(\phi),\\
    \transECPDLUNTC^{(x,y)}(\pi^*)&\eqdef x=y\vee[\TC_{u,v}\transECPDLUNTC^{(u,v)}(\pi)](x,y),\\
    \transECPDLUNTC^{(x,y)}(\Univ)&\eqdef x=x\land y=y,\\
    \transECPDLUNTC^{(x,y)}(C[x_s,x_t])&\eqdef x=x_s\land y=x_t\land\exists x_1,\dots,x_n. \bigwedge\{ \transECPDLUNTC(\rho)\colon \rho\in C \},
    \end{align*} 
        where $p\in\Prop$, $a\in\Prog$, $\{x_1,\dots,x_n\}=\vars(C)\setminus\{x_s,x_t\}$, and in the last definition, for an "atom" $\rho$ we define its image through $\transECPDLUNTC$ as follows:
    \begin{align*}
    \transECPDLUNTC(R(z_1,\dots,z_m))&\eqdef R(z_1,\dots,z_m),\\
    \transECPDLUNTC(\pi(z,z'))&\eqdef \transECPDLUNTC^{(z,z')}(\pi),
    \end{align*}    
for an "r-atom" $R(z_1,\dots,z_m)$ and a "p-atom" $\pi(z,z')$, with $z,z',z_1,\dots,z_m\in \{x_s,x_t, x_1,\dots,x_n\}$.

It is clear that $\transECPDLUNTC^x$ and $\transECPDLUNTC^{(x,y)}$ are computable in polynomial time. We now justify the correctness of the translation. More precisely, we show by simultaneous induction on the structural complexity of formulas and programs that for every formula $\phi$, every program $\pi$ of $\UCPDLp$, every "$\Rels$-structure" $K$, and every $u,v\in\dom K$, we have
\begin{align*}
K,u\models\phi&\mbox{\quad iff\quad }K\models \transECPDLUNTC^{x}(\phi)[x\mapsto u],
\\
K,u,v\models\pi&\mbox{\quad iff\quad }K\models \transECPDLUNTC^{(x,y)}(\pi)[x\mapsto u,y\mapsto v].
\end{align*}

The Boolean cases and the cases for $\epsilon$, $a$, $\bar a$, $\cup$, $\circ$, $?$, and $\Univ$ follow immediately from the semantics of $\UCPDLp$ and the corresponding definitions of the translation. Let us only comment on the non-immediate cases.

For $\pi^*$, recall that $(u,v)\in \dbracket{\pi^*}_K$ iff either $u=v$, or there exist $n\geq 1$ and $u_0,\dots,u_n\in\dom K$ such that $u_0=u$, $u_n=v$, and $(u_i,u_{i+1})\in \dbracket{\pi}_K$ for every $0\leq i<n$. By the induction hypothesis, this is equivalent to saying that either $u=v$, or there is a finite $\transECPDLUNTC^{(z,z')}(\pi)$-path from $u$ to $v$ in $K$, which is in turn equivalent to
$$
K \models [\TC_{z,z'}\transECPDLUNTC^{(z,z')}(\pi)](x,y)[x\mapsto u,y\mapsto v].
$$
Hence
$$
K,u,v\models \pi^* \mbox{\quad iff\quad } K\models \transECPDLUNTC^{(x,y)}(\pi^*)[x\mapsto u,y\mapsto v].
$$

For the clause $C[x_s,x_t]$, suppose that $\vars(C)\setminus\set{x_s,x_t}=\set{x_1,\dots,x_n}$. By definition,
$
K,u,v\models C[x_s,x_t]
$
iff there exists an assignment $\alpha$ with $\alpha(x_s)=u$ and $\alpha(x_t)=v$ such that every "atom" $\rho\in C$ is satisfied by $K$ under $\alpha$. If $\rho$ is an "r-atom" $R(z_1,\dots,z_m)$, then by definition $\transECPDLUNTC(\rho)=R(z_1,\dots,z_m)$, so satisfaction is preserved trivially. If $\rho$ is a "p-atom" $\pi(z,z')$, then by the induction hypothesis,
$K,\alpha(z),\alpha(z')\models \pi$ iff $K\models \transECPDLUNTC^{(z,z')}(\pi)[\alpha],$
that is, iff $K\models \transECPDLUNTC(\rho)[\alpha]$. Therefore, $K,u,v\models C[x_s,x_t]$ iff
$$
K\models x=x_s\land y=x_t\land\exists x_1,\dots,x_n.\bigwedge\set{\transECPDLUNTC(\rho)\colon \rho\in C}[x\mapsto u,y\mapsto v],
$$
which is exactly
$$
K\models \transECPDLUNTC^{(x,y)}(C[x_s,x_t])[x\mapsto u,y\mapsto v].
$$
This completes the simultaneous induction, and therefore the correctness of the translation.
\color{black}
\end{proof}

As a consequence of \Cref{prop:ECPDL-UNTC,prop:UNTC-ECPDL} we obtain
\begin{corollary}\label{cor:UCPDLp-equiv-UNTC}
    $\UCPDLp\langsemequiv\UNTC$.
\end{corollary}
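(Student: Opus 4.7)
The plan is straightforward: invoke the two translation propositions just established and observe that together they give the definitional equivalence.

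More precisely, I would first recall that by \Cref{prop:UNTC-ECPDL}, the translation $\transUNTCECPDL$ witnesses $\UNTC \lleq \UCPDLp$: every $\UNTC$-formula with at most one free variable maps to an equivalent $\UCPDLp$-"formula", and every $\UNTC$-formula with exactly two free variables $x,y$ maps (via $\transUNTCECPDL^{(x,y)}$) to an equivalent $\UCPDLp$-"program". Conversely, \Cref{prop:ECPDL-UNTC} provides the translations $\transECPDLUNTC^x$ and $\transECPDLUNTC^{(x,y)}$ witnessing $\UCPDLp \lleq \UNTC$: every $\UCPDLp$-"formula" maps to an equivalent $\UNTC$-formula with a single free variable, and every $\UCPDLp$-"program" maps to an equivalent $\UNTC$-formula with two free variables.

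Putting these together, by the definition of $\langsemequiv$ for logics of different nature as given earlier in the manuscript (the extension of $\lleq$ allowing comparison between a FO-style logic with free variables and a logic with "formulas" and "programs"), having both $\UNTC \lleq \UCPDLp$ and $\UCPDLp \lleq \UNTC$ is precisely what it means for the two to be "equi-expressive", \ie, $\UCPDLp \langsemequiv \UNTC$.

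There is no real obstacle here since all the work was done in proving the two propositions; this corollary is a one-line conclusion. The only thing to be careful about is matching the arities of free variables correctly on each side (unary on the $\UCPDLp$-"formula" side against one free variable on the $\UNTC$ side, and binary on the $\UCPDLp$-"program" side against two free variables on the $\UNTC$ side), which is exactly how the translations were designed.
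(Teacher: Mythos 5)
Your proposal is correct and matches the paper exactly: the corollary is stated in the paper as an immediate consequence of \Cref{prop:UNTC-ECPDL} and \Cref{prop:ECPDL-UNTC}, combining the two directions $\UNTC \lleq \UCPDLp$ and $\UCPDLp \lleq \UNTC$ into $\langsemequiv$ via the cross-formalism definition of expressiveness given in \Cref{sec:relation-to-loop-icpdl}. Nothing further is needed.
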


\color{black}

\subsection{Satisfiability of \texorpdfstring{\UNTC}{UNFO*}}

We now turn to the satisfiability problem for $\UNTC$. Thanks to the translations established above, the results of \Cref{sec:sat}, for $\UCPDLp$ can be transferred to this setting. In particular, we obtain the following complexity bound.

\begin{thm}\label{thm:UNTC-sat}
    The "satisfiability problem" for $\UNTC$ is decidable, "2ExpTime"-complete.
\end{thm}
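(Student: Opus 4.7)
The plan is to establish both directions by leveraging the two translations connecting \UNTC{} and \UCPDLp{} (\Cref{prop:UNTC-ECPDL,prop:ECPDL-UNTC}) together with the satisfiability result for \UCPDLp{} (\Cref{thm:sat-cpdlp}) and the finer complexity bound of \Cref{prop:omega-sat-pb-2Exp}.

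For the lower bound, I would invoke the polynomial-time translation $\transECPDLUNTC : \UCPDLp \to \UNTC$ of \Cref{prop:ECPDL-UNTC}. Since \ICPDL{} satisfiability is "2ExpTime"-hard \cite{DBLP:journals/jsyml/LangeL05} and \ICPDL{} embeds into \UCPDLp{} via a polynomial-time translation (e.g.\ $\trThree$ from \Cref{ssec:ICPDL-CPDL-equivalence}), composing these two yields a polynomial-time reduction from \ICPDL{} satisfiability to \UNTC{} satisfiability, giving the "2ExpTime"-hardness.

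For the upper bound, the naive approach of translating a \UNTC{} formula $\phi$ of size $n$ to $\transUNTCECPDL(\phi) \in \UCPDLp$ and then invoking \Cref{thm:sat-cpdlp} directly would seem to give only a "3ExpTime" bound, because $\pdlsize{\transUNTCECPDL(\phi)}$ can be singly exponential in $n$. The key observation — and the main technical reason the argument works — is the extra content of \Cref{prop:UNTC-ECPDL}: although the translation is of exponential size, it preserves the "conjunctive width" in the sense that $\Cqsize{\transUNTCECPDL(\phi)} \leq \untcsize{\phi} = n$. So I would apply the sharper complexity bound of \Cref{prop:omega-sat-pb-2Exp}, namely that satisfiability of a \UCPDLp{} formula $\psi$ can be decided in time $\expfun(\pdlsize{\psi}^{\polyfun(\Cqsize{\psi})}) \cdot \polyfun(\cdot)$, combined with the polynomial-time reductions through unnested form, Kripke structures, and $\omega$-regular tree satisfiability.

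Plugging in $\psi = \transUNTCECPDL(\phi)$, we get $\pdlsize{\psi}^{\polyfun(\Cqsize{\psi})} \leq (2^{\polyfun(n)})^{\polyfun(n)} = 2^{\polyfun(n)}$, so the whole procedure runs in time $\expfun(2^{\polyfun(n)}) = 2^{2^{\polyfun(n)}}$, giving the desired "2ExpTime" upper bound.

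The main obstacle, and the conceptual reason the bound is tight, lies in the careful accounting of the width-preservation of $\transUNTCECPDL$: one must verify that the normal form construction in the proof of \Cref{prop:UNTC-ECPDL} — although it may blow up the syntactic size exponentially when distributing disjunctions over conjunctions and existentials — keeps each individual NF-conjunction of size bounded linearly in $\untcsize{\phi}$, and that this linearity is preserved along the inductive translation into conjunctive programs. Once this invariant is in place, the rest of the argument is a routine composition of the polynomial-time reductions enumerated in \Cref{ssec:main-sat-result}.
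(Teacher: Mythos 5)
Your proposal is correct, and for the upper bound --- the substantive part of the theorem --- it is exactly the paper's argument: translate via $\transUNTCECPDL$, observe that the conjunctive width of the (exponentially large) output stays bounded by $\untcsize{\phi}$, push the result through the same chain of polynomial-time reductions as in \Cref{ssec:main-sat-result}, and apply the finer bound of \Cref{prop:omega-sat-pb-2Exp} to land in time $2^{2^{\polyfun(n)}}$. The only divergence is the lower bound: you reduce from $\ICPDL$ satisfiability by composing $\trThree$ with $\transECPDLUNTC$, which works, whereas the paper more directly cites the \complexityclass{2ExpTime}-hardness of $\UNFO$ satisfiability, immediate because $\UNFO$ is syntactically a fragment of $\UNTC$.
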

\begin{proof}
    The lower bound follows from the "2ExpTime"-hardness of $\UNFO$ satisfiability \cite[Proposition~4.2]{segoufin2013unary}. It is worth stressing that hardness was shown to hold also on "Kripke structures" over a fixed signature.

    For the upper bound, given a $\UNTC$ formula $\phi$, we reduce it to the "satisfiability problem" for $\UCPDLp$ via the translation $\transUNTCECPDL$ of \Cref{prop:UNTC-ECPDL}.
    The $\UCPDLp$ "formula" $\phi' = \transUNTCECPDL(\phi)$ we obtain is an exponential formula such that $\Cqsize{\phi'} \leq \untcsize{\phi}$. Further, its "tree-width@@pdl" is bounded by $\untcsize{\phi}$ by \Cref{rk:treewidth-conjunctive-width}.
    Observe also that $\phi'$ has the same set of "atomic propositions" and "atomic programs" as $\phi$ ("ie", the same "relation names").
    
    By applying to $\phi'$ the same series of polynomial-time reductions as in the proof of \Cref{thm:sat-cpdlp} -- namely \Cref{prop:polyredux-simpleform}, \Cref{cor:treewidth-k-model-property} and \Cref{lem:redux-omega-tree-sat} -- we arrive to an instance $(\+T,\psi)$ of the "$\omega$-regular tree satisfiability" problem where: 
    \begin{enumerate}[(i)]
        \item $\Cqsize{\psi} \leq \untcsize{\phi}$, the "tree-width@@pdl" of $\psi$ is bounded by $\untcsize{\phi}$, and the number of distinct "atomic programs" and "atomic propositions" of $\psi$ is polynomial in $\Cqsize{\psi}$.
        \item $\sizeStates{\+T}$ is polynomial and $\sizeTrans{\+T}$ is (singly) exponential. This is because the reduction of \Cref{lem:redux-omega-tree-sat} does not blow-up in view of the bounds from the previous item.
        \item $\psi \in \CPDLp$ is of exponential "size@@pdl" but $\Cqsize{\psi}$ is polynomial. 
    \end{enumerate}
    Applying the bound of \Cref{prop:omega-sat-pb-2Exp}, this yields a doubly-exponential time procedure for the satisfiability of $\phi$.
\end{proof}

\section{Conclusions}
\label{sec:conclusions}
We have introduced an expressive logic $\UCPDLp$, which captures several known formalisms and is equivalent to $\UNTC$.
This seems to be a natural and well-behaved generalization, enjoying a good balance of algorithmic properties and expressive power.

\paragraph{Finite Satisfiability} Observe that in the satisfiability problem that we study here, the "Kripke structure"  may be infinite. It is a well-known open problem, even for the case of $\loopCPDL$, whether the \emph{finite}-satisfiability problem is decidable \cite[\S 7]{DBLP:journals/jsyml/GollerLL09}.
\begin{open}
    What is the decidability status of the finite-satisfiability problem for $\loopCPDL$, $\ICPDL$ and $\UCPDLp$?
\end{open}

\paragraph{Model checking}

The (finite) ""model checking"" problem for $\UCPDLp$ is the problem of, given a finite "structure" $K$, an "element@world" $w \in \dom{K}$, and a $\UCPDLp$ "formula" $\phi$, whether $K,w \models \phi$. This problem is known to be \ptime-complete for $\PDL$, $\ICPDL$, and many other variants \cite{DBLP:journals/japll/Lange06,DBLP:journals/jcss/FischerL79}. 

Mimicking what happens to "conjunctive queries", we show that if the "tree-width" of $\UCPDLp$ "formulas" is bounded, "model checking" is \ptime.
\begin{thm}\label{thm:modelchecking}
    For any class $\+G$ of connected "graphs":
    \begin{enumerate}
        \item \label{it:modelch:ptime}
        if $\+G \subseteq \Tw$ for some $k$, then the "model checking" problems for $\CPDLp(\+G)$ and $\UCPDLp(\+G)$ are \ptime-complete;
        \item \label{it:modelch:notptime}
         otherwise, the "model checking" problem for $\CPDLp(\+G)$ is not in \ptime, under the hypothesis that $\+G$ is recursively enumerable and $\wone \neq \fpt$. This holds even for the "positive" fragment of $\CPDLp(\+G)$.
    \end{enumerate}
\end{thm}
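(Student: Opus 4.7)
The plan is to handle the two items separately, using standard techniques adapted from the $\ICPDL$ "model checking" literature and from the theory of bounded-"tree-width" "conjunctive queries".

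For \Cref{it:modelch:ptime}, I will perform a bottom-up evaluation: for every "subexpression" $e$ of the input "formula" $\phi$, processed in order of syntactic complexity, compute and store $\dbracket{e}_K$ (a subset of $\dom K$ or $\dom K \times \dom K$ depending on whether $e$ is a "formula" or a "program"). All standard $\PDL$ cases (Boolean connectives, "atomic propositions" and "atomic programs", converse, union, composition, Kleene star, "test programs") run in polynomial time exactly as in classical $\PDL$ "model checking"; the $\Univ$ case is trivial since $\dbracket{\Univ}_K = \dom K \times \dom K$. The only novel case is a "conjunctive program" $C[x_s,x_t]$ with $\uGraph{C[x_s,x_t]} \in \Tw$: once the inductive hypothesis provides the interpretations of all "programs" occurring in "p-atoms" of $C$, computing $\dbracket{C[x_s,x_t]}_K$ amounts to evaluating a "conjunctive query" of "tree-width" at most $k$ on a finite "structure" of polynomial size in $|K|$, which is doable in time $|K|^{O(k)} \cdot |C|$ by standard Yannakakis-style dynamic programming on a width-$k$ "tree decomposition" of $\uGraph{C[x_s,x_t]}$. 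The \ptime lower bound is inherited from $\PDL$, a fragment of both $\CPDLp(\+G)$ and $\UCPDLp(\+G)$ whose "model checking" is already \ptime-hard.

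For \Cref{it:modelch:notptime}, I will reduce from Boolean "conjunctive query" evaluation restricted to CQs whose Gaifman graph lies in $\+G$. By a classical result of Grohe, since $\+G$ is recursively enumerable and has unbounded "tree-width", this restricted problem cannot be solved in \ptime{} unless $\wone = \fpt$. Given such a Boolean CQ $q$, I encode it as the "positive" $\CPDLp(\+G)$-"formula" $\phi_q = \tup{C_q[x,x]}$, where $C_q$ contains an "r-atom" $R(\bar y)$ for each $q$-atom of "arity" greater than $2$, a "p-atom" $R(y,z)$ for each binary $q$-atom, a "p-atom" $p?(y,y)$ for each unary $q$-atom, and $x$ is any variable in $\vars(C_q)$. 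Since $x_s = x_t = x$, we have $\uGraph{C_q[x,x]} = \uGraphC{C_q}$, which coincides with the Gaifman graph of $q$ and hence lies in $\+G$. For every finite "Kripke structure" $K$ and every world $w \in \dom K$, we have $K,w \models \phi_q$ if{f} $K \models q$, so a \ptime{} algorithm for "model checking" $\CPDLp(\+G)$ would yield a \ptime{} algorithm for Boolean CQ evaluation on CQs whose Gaifman graphs are in $\+G$.

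The main technical subtlety is aligning the particular version of Grohe's dichotomy being invoked with an arbitrary recursively enumerable class $\+G$ of graphs with unbounded "tree-width": the standard formulation talks about classes whose \emph{cores} have unbounded "tree-width", so one should either restrict $\+G$ to a suitable recursively enumerable sub-family of cores, or invoke a variant tailored to non-closed graph classes. Once this hardness is imported, both the upper and lower bounds follow from the above constructions in a routine way.
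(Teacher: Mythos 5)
Your proposal is correct and follows essentially the same route as the paper: a bottom-up dynamic-programming evaluation using polynomial-time evaluation of bounded-tree-width conjunctive queries for the upper bound (with \ptime-hardness inherited from modal logic/\PDL), and a reduction from Boolean CQ evaluation via $\tup{C_q[x,x]}$ for the lower bound. The "core" subtlety you flag is a non-issue here, since the paper invokes the Grohe–Schwentick–Segoufin result, which is stated directly for recursively enumerable classes of query graphs of unbounded tree-width rather than the later core-based dichotomy.
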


\begin{proof}
    \proofcase{\ref{it:modelch:ptime}.}
The procedure for showing that $\UCPDLp(\Tw)$ "model checking" is in \ptime is a classical dynamic programming algorithm. Given a "formula" $\phi$ and a finite "Kripke structure" $K$, we iteratively label all "worlds" of $K$ with "subexpressions" $\psi$ of $\phi$ based on the labeling of "subexpressions" of $\psi$. 
We will use a unary relation ("ie", a set) $U_\psi \subseteq \dom{K}$ for every "formula" $\psi \in sub(\phi)$ and a binary relation $B_\pi \subseteq \dom{K} \times \dom{K}$ for every "program" $\pi \in sub(\phi)$, which are all initialized in the empty relation.
We start by processing all "atomic propositions" $p \in \subexpr(\phi)$ and "atomic programs" $a \in \subexpr(\phi)$: we set
$U_p = p^K$ 
and $B_{a} = {\to_a}$. For the converse of "atomic programs" $\bar a \in sub(\phi)$ we set $B_{\bar a} = (\to_a)^{-1}$.
Now take any "formula" $\psi \in sub(\phi)$ such that all its "subexpressions" from $sub(\psi) \setminus \set\psi$ have already been processed. We process $\psi$ as follows:
\begin{itemize}
    \item If $\psi = \psi_1 \land \psi_2$, then set $U_\psi = U_{\psi_1} \cap U_{\psi_2}$
    \item if $\psi = \lnot\psi'$, then set $U_{\psi} = \dom{K} \setminus U_{\psi'}$,
    \item if $\psi = \tup{\pi}$, then set $U_{\psi} = \set{w \in \dom{K} : \exists w' s.t.\ (w,w') \in B_{\pi}}$.
\end{itemize}
It is easy to see that each one of these operations is in polynomial time (even linear with the right data structure).
Now take any "program" $\pi \in sub(\phi)$ such that all its "subexpressions" from $sub(\pi) \setminus \set\pi$ have already been processed. We process $\pi$ as follows:
\begin{itemize}
    \item If $\pi=\Univ$ then set $B_{\pi}=\dom{K} \times \dom{K}$,
    \item If $\pi = \pi_1 \star \pi_2$ for $\star \in \set{\circ,\cup}$, then set $B_\pi = B_{\pi_1} \star B_{\pi_2}$, 
    \item if $\pi = (\pi')^*$, then set $B_\pi = B_{\pi'}^*$,
    \item if $\pi = C[x_s,x_t]$, we evaluate $C[x_s,x_t]$ as if it were a "conjunctive query" on the already processed relations to populate $B_\pi$. 
\end{itemize}
It is clear that the first two items can be done in quadratic time. For the last item, it is well-known that the evaluation of "conjunctive queries" of "tree-width" $\leq k$ is in polynomial time \cite[Theorem~3]{DBLP:journals/tcs/ChekuriR00} (more precisely "LOGCFL"-complete \cite[Theorem~6.12]{DBLP:journals/jacm/GottlobLS01}), via a bottom-up processing of the "tree decomposition" of the query.

All in all, this yields an algorithm which is linear in $\phi$ and polynomial in $K$, where the degree of the polynomial is $k+1$ if $\phi \in \UCPDLg{\Tw}$. 

The lower bound comes from \ptime-hardness of "model checking" for modal logic (see "eg", \cite[Proposition 5]{DBLP:journals/japll/Lange06}).

\proofcase{\ref{it:modelch:notptime}.} This follows from the fact that a similar statement is known for "conjunctive queries": the "model checking" problem  for the class of Boolean "conjunctive queries" on binary relations whose underlying graph is in $\+G$ is not in \ptime unless $\fpt = \wone$ \cite[Corollary~19]{DBLP:conf/stoc/GroheSS01}.\footnote{The fact that here we have connected "conjunctive queries" plays no role, and the result of Grohe \cite{DBLP:conf/stoc/GroheSS01} holds also for connected "conjunctive queries".} 
We have that  a Boolean "conjunctive query" $q$ holds true in a "structure@@kripke" if{f} the "formula" $\tup{C[x,x]}$ is satisfied in some of its "worlds", where
$C$ is the set of atoms of $q$, and $x$ is any variable of~$q$.

This \ptime-reduction from "model checking" of "conjunctive queries" to "model checking" of $\CPDLp(\+G)$ implies that the "model checking" of $\CPDLp(\+G)$ cannot be in \ptime, under the hypothesis that $\wone \neq \fpt$.\footnote{A similar parallel can be made for parameterized "model checking", where the parameter is the size of the "formula". In our case we would obtain that parameterized "model checking" of $\CPDLp(\+G)$ is \wone-hard.}
\end{proof}

\color{black}

\paragraph{Infinite State Model Checking}
Infinite state model checking (\ie, "model checking" for some finitely represented infinite "Kripke structures") has been studied for $\CPDL$ \cite{DBLP:conf/csl/GollerL06} and $\ICPDL$ \cite{DBLP:journals/jsyml/GollerLL09}. While we believe that the upper bound results can be extended to $\UCPDLp$, we leave this for future work.

\paragraph{Constants} Since this work intends to capture some query languages such as "CQs" and "CRPQs", it would make sense to have access to \emph{constants}, which in this context are usually modelled as \emph{nominals} (\ie, a kind of "atomic propositions" which can hold true in at most one "world"). We believe that nominals can be treated easily in $\CPDLp$ at the expense of increasing the "tree-width" in the tree-like model property.

\section*{Acknowledgements}
This work was partially funded by the French-Argentinian IRP
\href{http://www.irp-sinfin.org/}{SINFIN}. Diego Figueira is partially funded by ANR QUID, grant ANR-18-CE400031. Santiago Figueira is partially funded by UBACyT, grant 20020190100021BA and by ANPCyT, grant PICT-2021-I-A-00838.

  \bibliographystyle{alphaurl}
  \bibliography{long,references}

\end{document}